\newcommand{\Dave}[1]{\todo[inline,color=red!40]{Donoho: #1}}
\newcommand{\Sun}[1]{\todo[inline,color=blue!40]{Sun: #1}}
\newcommand{\bR}{{\bf R}}
\DeclareMathAlphabet{\mathpzc}{OT1}{pzc}{m}{it}
\newcommand{\BEAS}{\begin{eqnarray*}}
\newcommand{\EEAS}{\end{eqnarray*}}
\newcommand{\BEA}{\begin{eqnarray}}
\newcommand{\EEA}{\end{eqnarray}}
\newcommand{\BEQ}{\begin{equation}}
\newcommand{\EEQ}{\end{equation}}
\newcommand{\BIT}{\begin{itemize}}
\newcommand{\EIT}{\end{itemize}}
\newcommand{\BNUM}{\begin{enumerate}}
\newcommand{\ENUM}{\end{enumerate}}
\newcommand{\BA}{\[\begin{array}{ll}}
\newcommand{\EA}{\end{array}\]}
\newcommand{\reals}{{\mbox{\bf R}}}
\newcommand{\integers}{{\mbox{\bf Z}}}
\newcommand{\Expect}{{\mathbb E}}
\newcommand{\Var}{\mathop{\bf var{}}}
\newtheorem{theorem}{Theorem}
\long\def\@makecaption#1#2{
   \vskip 9pt 
   \begin{small}
   \setbox\@tempboxa\hbox{{\bf #1:} #2}
   \ifdim \wd\@tempboxa > 5.5in
        \begin{center}
        \begin{minipage}[t]{5.5in}
        \addtolength{\baselineskip}{-0.95pt}
        {\bf #1:} #2 \par
        \addtolength{\baselineskip}{0.95pt}
        \end{minipage}
        \end{center}
   \else 
	\hbox to\hsize{\hfil\box\@tempboxa\hfil}  
   \fi
   \end{small}\par
}
\newcounter{oursection}
\newcounter{lecture}
\newtheorem{propo}{Proposition}[section]
\newtheorem{lemma}[propo]{Lemma}
\newtheorem{definition}[propo]{Definition}
\newtheorem{coro}[propo]{Corollary}
\def\cF{{\cal F}}
\def\cB{{\cal B}}
\def\cT{{\cal T}}
\def\integers{{\mathbb Z}}
\def\reals{{\mathbb R}}
\def\eps{{\varepsilon}}
\def\E{{\mathbb E}}
\def\Var{{\rm Var}}
\def\L0{{L_0}}
\def\<{\langle}
\def\>{\rangle}
\def\bX{{\mathbf X}}
\def\F{{\sf F}}
\def\F{{\sf F}}
\def\v*{v_0}
\def\T*{T_0}
\def\u*{u_0}
\def\F*{F_0}
\definecolor{olivegreen}{rgb}{0,0.6,0.4}
\newcommand{\bitem}{\begin{itemize}}
\newcommand{\eitem}{\end{itemize}}
\newcommand{\goto}{\to}
\newcommand{\beq}{\begin{equation}}
\newcommand{\eeq}{\end{equation}}
\def\Var{{\rm Var}}
\def\cI{{\cal I}}
\newcommand{\ajcomment}[1]{}
\newcommand{\labitem}[2]{%
\def\@itemlabel{\text{#1}}
\item
\def\@currentlabel{#1}\label{#2}}
\newcommand{\ignore}[1]{}
\newcommand{\nobibentry}[1]{{\let\nocite\ignore\bibentry{#1}}}
 \newcommand{\bx}{{\bf x}}
 \newcommand{\by}{{\bf y}}
 \newcommand{\be}{{\bf e}}
  \newcommand{\ba}{{\bf a}}
 \newcommand{\tbe}{\widetilde{\be}}
\newcommand{\bw}{{\bf w}}
\newcommand{\bv}{{\bf v}}
\title{Convex Sparse Blind Deconvolution}
\author{Qingyun Sun \and David Donoho
}
\begin{document}

\date{}

\maketitle

\begin{abstract}
In the {\it blind deconvolution problem}, we observe the convolution $\by=\ba \star \bx$ of an unknown filter $\ba$ and unknown signal $\bx$ and attempt to reconstruct the filter and signal. The problem  seems impossible in general, since there are seemingly many more unknowns in $\bx$ and $\ba$ than knowns  in $\by$. Nevertheless, this problem arises -- in some form -- in many application fields; and empirically, some of these fields have had success using heuristic methods -- even economically very important ones, in wireless communications and oil exploration.

Today's fashionable heuristic formulations pose 
non-convex optimization problems which are then attacked heuristically as well. 
The fact that blind deconvolution can be solved under some 
repeatable and naturally-occurring circumstances poses a theoretical puzzle.

To bridge the gulf between reported successes and theory's limited understanding, we exhibit a {\it convex} optimization problem that \-- assuming the signal to be recovered is {\it sufficiently sparse} \-- can convert a \it crude approximation to the true filter into a \it high-accuracy recovery of the true filter. 

Our proposed formulation is based on $\ell^1$
minimization of inverse filter outputs:
\[
 \begin{array}{ll}
 \underset{\bw \in l_1^k}{\mbox{minimize}}   & \| \bw \star \by\|_{\ell_1^N}\\
\mbox{subject to}  
& \langle \widetilde{\ba},  \bw^{\dagger} \rangle =1.
\end{array}
  \]
Minimization inputs include: the observed blurry signal $\by \in \bR^N$; and
$\widetilde{\ba} \in \bR^k$, an initial approximation of the true unknown filter
$\ba$. Here $\bw^{\dagger}$ denotes the time-reverse of $\bw$.
Let $\bw^*$ denote the minimizer. 

% \Dave{it would make much more sense to me to write this as $\langle \tilde{a}, w^\dagger\rangle =1$. You are overloading too many symbols at once. 
% And it's not needed.}

We give sharp guarantees on performance of $\bw^*$
assuming sparsity of $\bx$,
showing that, under favorable conditions,
our proposal precisely recovers the true inverse filter 
$\ba^{-1}$, up to shift and rescaling.

Specifically, in a large-$N$ analysis where $\bx$ is 
an $N$-long  realization of an IID Bernoulli-Gaussian signal with expected sparsity level $p$, we measure the approximation quality of the initial approximation $\widetilde{\ba}$ 
by considering $\tbe= \widetilde{\ba} \star \ba^{-1}$, 
which would be a Kronecker (aka $delta$) sequence if our approximation were perfect.
Under the gap condition
\[
\frac{|\tbe|_{(2)}}{|\tbe|_{(1)}} \leq 1-p,
\]
we show that, in the large-$N$ limit,
the $\ell^1$ minimizer $\bw^*$
perfectly recovers $\ba^{-1}$ to shift and scaling. 

Here the multiplicative gap 
$\frac{|\tbe|_{(2)}}{\tbe|_{(1)}} \leq 1$ denotes 
the ratio of the  first and second largest entries 
of $|\tbe|$,and is a natural measure of closeness 
between our approximate  $\delta$ sequence $\tbe=\widetilde{\ba} *\ba^{-1}$ 
and a true $\delta$ sequence .

In words, there is a sparsity/initial accuracy tradeoff: {\it the less accurate the initial approximation $\widetilde{\ba} \approx \ba$, the greater we rely on sparsity of $\bx$   to enable exact recovery.} To our knowledge this is the first reported tradeoff of this kind.  We consider it surprising that this tradeoff is independent of dimension $N$, i.e. that the gap condition does not demand increasingly stringent accuracy with increasing $N$.

We also develop finite-$N$ guarantees of the form $N\geq O(k\log(k))$,
for highly accurate reconstruction with high probability.
We further show stable approximation when the true inverse filter is infinitely long
(rather than  finite length $k$). And we extend our guarantees to the case where
the observations  are contaminated by  stochastic or adversarial noise, and show that the error is linearly bounded by the noise magnitude.

% \Dave{It's a bad idea to use $a$ for the filter and $\hat{a}$ for the approximation,
% this will only cause problems in a reader 
% understanding the abstract. If you were going to use
% a Greek symbol for anything, it would be for the filter, which is fundamental,
% while you would use a roman alphabet symbol for a dispensible artefact 
% like an initial approximation. 
% Hence I would suggest swapping $a$ and $\alpha$ for the abstract at least.

% Even better for the abstract would be using $A$ for the filter and $a$ for the approximation, or even better $\hat{A}$ for the approximation.}

% \Dave{Even better in the abstract would be to use $f$ for the filter; $f*X$ is
% much more typographically balanced. Namely, the sizes of the characters $f$ and $X$ are more similar. Moreover it is natural to say that $f$ is a filter.}

%
\end{abstract}

\tableofcontents
\pagebreak

\section{Introduction}
\subsection{Blind Deconvolution}
Suppose we are interested in an underlying time series $\bx = (x(t))$ which we cannot observe directly. What we can observe is  $\by = \ba * \bx$ where $\ba$ is an unknown `blurring' filter. Blind deconvolution is the problem of recovering $\bx$ merely from the observed $\by$ without knowing either $\ba$ or $\bx$. 

This problem occurs naturally in seismology and
digital communications
as well as astronomy, satellite imaging, and computer vision.

In its most ambitious form, the problem is literally impossible; there are simply too few data and too many unknowns. Indeed, imagine that $\bx$, $\by$ and $\ba$ all have $N$ entries; we observe only $N$ pieces of information ($\by$) but there are $2N$ unknowns ($\bx$ and $\ba$). 
Nevertheless, in  some (not all) fields, heuristic approaches
have occasionally 
led to consistent  success in isolated applications;
presumably such success stories exploit specialized assumptions -- although not always in
an explicit or recognized way, and not with rigorous understanding.

 This paper, in contrast,  will exhibit a set of 
assumptions enabling practical algorithms for blind deconvolution,
backed by rigorous theoretical analysis.

\subsection{The Promise of Sparsity}

{Central to our approach
is an assumption about the}
{\it sparsity of the signal $\bx$ to be recovered}, 
in which case the problem 
{can be called} \emph{sparse blind deconvolution}. Sparse signals {-- i.e. signals} having relatively few nonzero entries, arise frequently in many fields, including seismology, microscopy, astronomy, neuroscience spike identification. Even in more abstract settings such as representation learning for computer vision, 
it surfaces in recently popular research trends, 
such as single-channel convolutional dictionary learning   \cite{bristow2013fast, heide2015fast,zhang2017global,zhang2018structured}. 

The sparsity of $\bx$ \-- if it holds \-- would constrain the recovery problem significantly; and so possibly, sparsity can play a role
in enabling useful solutions to an otherwise hopeless problem.

An inspiring precedent can be found in modern commercial medical imaging, 
where sparsity of an image's wavelet coefficients enables 
MRIs from fewer observations than unknowns.
Taking fewer observations speeds up data collection,
a principle known as compressed sensing, which
already benefits tens of millions of patients yearly.

\subsection{Translating Heuristics into Effective Algorithms}

For sparsity to reliably enable blind deconvolution, there are two
apparent hurdles. First, develop an objective function which promotes
sparsity of the solution. Second, develop an algorithm 
which can reliably optimize the objective.

Many sparsity-promoting objectives have 
been proposed over the years; typically they 
imply non-convex optimization problems. Indeed
sparsity is quantified by the $\ell_0$
pseudo norm $\|\bx\|_0 = \#\{t: x(t) \neq 0\}$, which is the limit of
$\| \bx \|_p^p$ as $p \goto 0$
of concave $\ell_p$ pseudo-norms where $p <1$.

Traditionally, non-convex problems have been viewed
by mathematical scientists with skepticism;
for such problems, gradient descent and its various refinements
lack any guarantee of effectiveness. 
Still, the lack of guarantees has not stopped engineers from trying!

A noticeable success in blind signal processing
was scored in digital communications,
where blind equalization today 
benefits billions of smartphone users.
Blind equalization is a form of blind deconvolution 
where one exploits the known discrete-valued 
nature of the signal $\bx$ (for example the signal 
entries might take only two values $\{-1,1\}$).
Practitioners found that if an initial guess of 
the equalizer (i.e. our inverse filter $\ba^{-1}$) is `fairly good' 
(in engineer-speak `opening the eye' so that a `hint' 
of the `digital constellation` becomes
`visible'), then certain `discreteness-promoting' on-line gradient methods
can reliably `focus' the result better and better 
and allow reliable recovery.

Our work identifies an 
analogous phenomenon in the sparsity-promoting blind deconvolution
setting, however it exposes and crystallizes the phenomenon in a rigorous 
and dependably exploitable form.

Namely, we show that if sparsity of $\bx$ holds,
and if an initial guess of the filter $\ba$
is `fairly good' in a precise sense, 
then a specific convex optimization
algorithm will accurately recover 
both the filter and the original signal\footnote{As we explain below, recover means: recover up to rescaling and time shift.}.

In retrospect, our insights on sparsity-promoting blind deconvolution
can be cross-applied to explain the major successes of discreteness-promoting
blind equalization in modern digital communications.
Namely, a direct variation of our arguments 
provide a related convex optimization problem 
for discrete-valued signals which rigorously converts a 
a rough initial approximation into precise recovery.

In our view these new arguments clear away some persistent
fog, mystery and misunderstandings in blind signal processing; and pave
the way for future success stories.

%This work demonstrates that a convex formulation hiding behind the scene can explain the success of this poetic heuristic. Our mathematical formulation gives a crisp, clean view that clears all the fog in the claims of past practitioners. 

\subsection{Prior Work}
\begin{comment}
\paragraph{Non-convexity for blind deconvolution problems}
As both the sparse signal and the forward filter $\ba$ are unknown, the problem is bilinear in $\ba$ and $X$. And the expression $\ba*\bx$ has scaling and shift symmetry for $(\ba,\bx)$.
Therefore, blind deconvolution is often formulated as a non-convex problem. Lots of previous works \cite{kuo2019geometry, kuo2020geometry, lau2019short} studied the that problem of recovering short $a$ and sparse $x$ as an non-convex optimization problem.

\Dave{The previous paragraph is jumbled and also out of order. If it has any value or content it should come later.}
\end{comment}

\paragraph{Searching for an inverse filter that promotes desired output properties}
Instead of trying to recover $\ba$ and $\bx$ together from $\ba*\bx$, 
we could formulate this problem as looking for an approximate inverse  filter $\bw$ so that the output $\bw*\by$ exhibits {extremal} properties. 
 {Under this formulation, 
our goal is to find} $\bw\neq 0$, so that  
\begin{equation}
 \begin{array}{ll}
 \underset{\bw\neq 0}{\mbox{optimize}}   & J(\bw \star \by) \label{JoptBD}
\end{array}
\end{equation}
where the functional $J$ quantifies the properties we seek to promote.
(Depending on $J$, we might  {either prefer its maximum or minimum}  ).
 
 \newcommand{\bz}{{\bf z}}
 {Working} in
exploration seismology, 
Wiggins \cite{wiggins1978minimum} adopted this approach
with the normalized $4$-norm, $J(\bz) = J_{4,2} = {\|\bz\|_4}/{\|\bz\|_2}$
and gave a few successful data-processing case studies. 
His {successful} examples all
clearly exhibit sparsity, although this was not discussed at the time.
Other objectives 
considered at that time included $J_{2,1}(\bz) = \|\bz\|_2/\|\bz\|_1$
and $J_{\infty,2}(\bz) = \|\bz\|_\infty/\|\bz\|_2$ \cite{cabrelli1985minimum}.

It was fully understood at that time
that output property optimization could succeed in principle,
if one did not have to worry about an effective algorithm.
\cite{donoho1981minimum} showed that 
if the signal $\bx$ is a realization of
independent and identically distributed
entries from any nonGaussian distribution,  optimizing
$J$ of the output $\bw \star \by$ is successful in the large-$N$ setting \-- as long as the  functional  $J$ belongs to a certain large family of non-Gaussianity measures,
for example including $J_{4,2}$ and $J_{2,1}$ as well as many others. 
\footnote{Sparsity is of course a form of non-Gaussianity,
this very explicitly in the Bernoulli-Gaussian mixture model considered below.}

The issue left unresolved in those days was how to solve such
optimization problems. Indeed optimizations like $J_{4,2}$ are badly nonconvex,
as we see clearly by rewriting the $J_{4,2}$ problem as
\[
 \begin{array}{ll}
 \underset{\bw}{\mbox{maximize}}   &  
 \|\bw * \by\|_4 
 \\
\mbox{subject to} 
& \|\bw* \by\|_2 = 1.
\end{array}
  \]
The theory cited above derived favorable properties of a would-be procedure which truly finds the optimum of a badly nonconvex objective. It clarifies that  blind deconvolution is possible in principle but does not by itself help us algorithmically, i.e. in practice. In the intellectual climate of the time, solving badly nonconvex optimization problems was considered a pipe dream, a time-wasting charade for non-serious people.

Even  today, blind deconvolution 
continues to be studied as an non-convex optimization problem;
see recent work
\cite{kuo2019geometry, kuo2020geometry, lau2019short}, who study the  problem of recovering short $a$ and sparse $x$ 
  
  \paragraph{Blind equalization}
  In digital communications, 
  the transmitted signal $\bx$ can be viewed as 
  {\it discrete-alphabet valued;}
  for example, in PAM signaling, where $x(t) \in \{\pm 1\}$, and QAM signaling, where the signal alphabet has equally spaced points on unit sphere in complex space \cite{kennedy1992blind,ding2000fast}.  
  
  \cite{vembu1994convex} considered the non-convex problem
   \[
 \begin{array}{ll}
 \underset{\bw}{\mbox{maximize}}   &  
 \|\bw * \by\|_8 
 \\
\mbox{subject to} 
& \|\bw* \by\|_2 = 1
\end{array}
  \]
If the data $\by$ were preprocessed to be serially uncorrelated,
this optimization is effectively of the earlier form $J_{8,2}$.

The authors attack this nonconvex problem
using projected gradient descent and give suggestive 
experimental results. They apparently view the $\ell_8$ norm objective as an approximation of the  $\ell_\infty$ norm objective $\|\bw * \by\|_\infty $.
Later, \cite{ding2000fast} used linear programming to solve the $\infty$ norm problem directly.

 \paragraph{Searching for a projection with desired output properties} 
Here is another setting for property-promoting
output optimization.
We have a data matrix $Y \in \bR^{n,p}$ \-- which we think of as
$n$ points in $\bR^p$,
and we have a unit vector $w \in \bR^p$
called the projection direction.  Our
output vector $Y\cdot w$ contains the 
projection of the $n$-points on the projection
direction q; it has $n$ entries.
We seek `interesting' projections; i.e. directions where the 
projection displays some structure. We
adopt a functional $J$ which measures properties we seek to promote
in the output, and we seek to solve:
\begin{equation}
 \begin{array}{ll}
 \underset{\|w\|=1}{\mbox{optimize}}   & J(Y \cdot w). \label{JoptPP}
\end{array}
\end{equation}

This was implemented by  [Friedman and Tukey, 1974], who proposed a
functional that promotes `clumping' or 'clustering' of the output.
They called it {\it projection pursuit} and
were motivated by exploratory  high  dimensional  data  analysis;
for $p$-dimensional data involve $p>2$ and we can't 
easily get a visual sense of what's in the data. 
It was hoped at the time that looking at selected
low-dimensional projections might lead to better insights.
For the most part, such hopes for exploration of high-dimensional data
never materialized.

However, output optimization of this type has proven to be
useful in important problems in blind signal processing,
where $Y$ has known structure that can be exploited systematically.

In {\it blind source separation}
we observe $Y = XA$, $Y \in \bR^{n,p}$, $X \in \bR^{n,p}$ and
$A \in \bR^{p,p}$ is an invertible matrix. 
(We don't observe $X$ or $A$ separately).

Think of the $X$ matrix as containing columns giving
successive time samples of $p$ clean source signals, for example $p$
individual acoustic signals. These signals arrive
at array of $p$ spatially distributed acoustic sensors, each
of which records the acoustic information it receives.
The matrix $Y$ contains in its
columns what was obtained by each of the $p$ different sensors.
In general, each sensor receives information from each of the sources.
This is colorfully called the {\it cocktail party problem}, referring to
the setting where $X$ records the sources are human speakers at a cocktail party,
and $Y$ records what is heard at various locations in a room.
Each column of $Y$ then contains a
superposition of different speakers; while
we would prefer to separate these and pay attention to 
just the ones of most immediate interest to us.

In this separation problem, sparsity of the signal might be valuable.
Suppose that each individual speaker is listening quite a bit
and so not speaking much of the time. Then the columns of $X$
are each sparse. On the other hand, if there are many people
in the room, the room
as a whole may still always be noisy, and so each
column of $Y$ may be fully dense.
Assuming $A$ is invertible, and the vector $w$
obeys $Aw= e_j$, then $Y \cdot w$ extracts column $j$ of $X$,
which will be sparse. Hence, we may hope that the projection
pursuit principle, with an appropriate measure $J$, might
identify the `sparse' projections we seek.  

Ju Sun, Qu Qing, Yu Bai, John Wright, Zibulevsky and Pearlmutter \cite{sun2015complete,sun2016complete,bai2018subgradient,zibulevsky2000blind} proposed the optimization
problem
 \[
 \begin{array}{ll}
 \underset{w }{\mbox{minimize}}   & \| Y w\|_{1}\\
\mbox{subject to}  & \|w\|_2 =1.
\end{array}
  \]
Assuming the $Y$ data pre-processed so that $\frac{1}{n} Y'Y=I_p$ 
this is equivalent to projection pursuit (\ref{JoptPP}) 
applied to the objective $J_{2,1}$. Notably, this is 
again a highly non-convex optimization problem..
The authors proposed projected gradient descent
and recited some favorable empirical results.
 
There is a close relation between the
 blind deconvolution optimization (\ref{JoptBD}) 
 and the projection pursuit optimization (\ref{JoptPP}). Indeed, if the $Y$ matrix 
 is filled in from an observed time series appropriately, then 
 output optimization in 
 blind deconvolution and in projection pursuit are
 essentially identical. Namely, let $\by^{ser} = (y^{ser}(t))_{t=1}^N$ denote a time series of interest to us,
 and  $Y^{mat} = (Y^{mat}_{i,j})$ 
 denote an $n \times p$ matrix where $n = N-p$ constructed 
 using $\by^{ser}$ like so:
 \[
  Y^{mat}_{i,j} = y^{ser}(p+i-(j-1)) , \qquad 1 \leq i \leq n = N-p; \; 1 \leq j \leq p.
  \]
  Now suppose the filter vector 
  $\bw$ in the blind deconvolution optimization 
  and the projection direction $w$ in the projection pursuit optimization
  are chosen identically. Then the blind deconvolution output objective 
  $J(\bw \star \by^{ser})$ is identical to the projection pursuit objective
  $J(Y^{mat} w)$, except for
  possibly different treatment of the first $p$ entries of $\by^{ser}$.
 
 \paragraph{Convex Projection Pursuit}
 
 In view of the connection between blind deconvolution and
 projection pursuit, and in view of our results in this paper,
 it is quite interesting to consider the
 work of \cite{spielman2012exact} and \cite{gottlieb2010matrix}.
\begin{comment}
They call their problem sparsity-promoting projection pursuit,
 but they depart from the PP optimization framework (\ref{JoptPP}).
 \Sun{Their paper did not mention the name "projection pursuit", so I comment out this sentence.}
\end{comment}
They propose to solve the following 
linear-constrained convex optimization problem.
Given an $n \times p$ 
data matrix $Y$ and a constraint vector $u$,
they propose to solve:
\[
 \begin{array}{ll}
 \underset{w }{\mbox{minimize}}   & \|Y w\|_{1}\\
\mbox{subject to}  & u^T w =1.
\end{array}
  \]
As it turns out,  when the matrix $Y$ in this problem and
$\by$ in the time series deconvolution problem
are related by the  $Y^{mat}$-$\by^{ser}$
construction just mentioned, the objective we propose in this paper
is essentially identical, when $\tilde{\ba} = u$. As we will show,
our setting permits much more thorough studies and more penetrating analyses, {and we find that success in 
sparse blind deconvolution is more
broadly prevalent than one might have expected, based on
earlier analyses such as \cite{spielman2012exact}
or \cite{sun2015complete,sun2016complete,bai2018subgradient,zibulevsky2000blind} }.

% \Sun{Thanks so much for the section, I learned a lot from your writting! About this part: "They call their problem sparsity-promoting projection pursuit,
%  but they depart from the PP optimization framework". From what I learned by talking to one of the author, Ju Sun, they first studied the convex projection pursuit problem and call it sparse dictionary learning. But they found the sparsity level need to be of order $O(\frac{1}{\sqrt{n}})$, which from our phase transition result, is now much better understood. Later in John Wright's research group, their motivation to study the $2$-norm constrained non-convex problem is to get a constant order sparsity level. So  in the time order, they study the convex problem first, and study the non-convex problem after they hit a bottleneck on sparsity level. And they call this problem "dictionary learning" more often than "projection pursuit".
 
%  They proved that this non-convex function have good landscape on sphere (for us, this function is $V_1$). However, for the non-convex method, their sample complexity from proof is much higher than the convex formulation, as the concentration of Hessian is required.  }

\subsection{Mathematical setup}
\paragraph{Sequence Space, and Filtering }
To make our results concrete, let's discuss things formally.
Let $\bX$ denote the collection of {\it bilaterally infinite sequences}
$\bx = (x(t): t = 0, \pm 1, \pm 2, \dots)$; for short
we call such objects {\it bisequences}. Then $\ell_1(\integers) \subset \bX$
denotes the collection of bisequences 
obeying $\| \bx \|_1 = \sum_t |x(t)| < \infty.$
For a bisequence $\bx$ we denote time reversal operator
$t \leftrightarrow -t$ by $\bx^{\dagger}$.
For whole number $k > 0$ let $\bX_k$ denote the subspace
of bisequences supported in $-k \leq t \leq k$. We sometimes
abuse notation: for a bisequence $\bx$ we might write
$\bx = (1, .3)$ when we really mean $\bx = (\ldots,0,0,1,.3,0,0,\ldots)$.

Let $\star$ denote the convolution 
product on pairs of bisequences in $\ell_1(\integers)$
\-- $( \bx \star \by ) = \sum_u x(t-u)y(u)$. 
Let $\be_0$ denote the  `delta'
or `Kronecker' bisequence: $\be_0(t) = 1_{\{t=0\}}$;
$\be_0$ is the unit of convolution. The convolution inverse
of $\bx$ $\--$ $\bx^{-1}$ $\--$ is a bisequence
obeying $ \bx \star \bx^{-1}  = \be_0$. For example,
the filter $\bx = (\dots,0,1,1/2,1/4,1/8,\ldots)$ 
anchored at the time origin so $\bx(0)=1$, has inverse
$\bx^{-1} = (\dots,0,1,-1/2,0,\dots)$, again anchored at the time origin.
Abusing notation we may simply write $\bx^{-1} = (1,-1/2)$.

Our approach to blind deconvolution searches 
among candidates for a filter ($\equiv$ bisequence) that extremizes a certain objective function. We then
show that the extremal is in fact the desired inverse filter
to our (unknown) true underlying filter. Hence, it helps know
conditions under which an inverse filter actually exists! 

For a bisequence $\bx \in \ell_1(\integers)$,
we define the Fourier transform 
$\hat{\bx}(w) = \mathcal{F} \bx (w) \equiv \sum_t x(t) \exp\{ i 2\pi w t \}$.
For the inverse transform, we use
$ x(t) = (\mathcal{F}^{-1} \hat{\bx} ) (t) =  \frac{1}{2\pi} \int_0^{2\pi} \hat{\bx}(w) \exp\{ -i 2\pi w t \} dw$.

\begin{lemma}[Wiener's lemma]
If $\ba\in \ell_1(\integers)$, and also  
$(\mathcal{F} \ba)(\widetilde{w}) \neq 0$, $\forall \widetilde{w}\in T$,
then an inverse filter exists in $\ell_1(\integers)$. 
The bilaterally infinite sequence defined formally by
$$\ba^{-1} := \mathcal{F}^{-1}(\frac{1}{\mathcal{F} \ba})$$
exists as an element of $\ell_1(\integers)$
and obeys $\ba^{-1}*\ba = \be_0.$
\end{lemma}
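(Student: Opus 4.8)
The plan is to recognize this as Wiener's classical $1/f$ theorem and prove it through the Banach-algebra structure of $\ell_1(\integers)$ under convolution. Concretely, I would exhibit $(\ell_1(\integers),\star)$ as a commutative Banach algebra with unit $\be_0$, identify its characters (nonzero multiplicative linear functionals) with point evaluations of the Fourier transform on $T$, and then invoke the standard invertibility criterion: in a commutative unital Banach algebra an element is invertible exactly when no character annihilates it. Since the hypothesis asserts $(\mathcal{F}\ba)(\widetilde w)\neq 0$ for every $\widetilde w\in T$, no character kills $\ba$, so $\ba$ has a convolution inverse in the algebra, whose Fourier transform is forced to be $1/\mathcal{F}\ba$.

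First I would verify the algebra axioms. Young's inequality gives $\|\ba\star \mathbf{b}\|_1\le \|\ba\|_1\,\|\mathbf{b}\|_1$, so $\star$ is an associative, commutative, norm-submultiplicative product with two-sided unit $\be_0$; thus $(\ell_1(\integers),\star)$ is a commutative unital Banach algebra and $\mathcal{F}$ is an injective algebra homomorphism onto the Wiener algebra $A(T)=\mathcal{F}(\ell_1(\integers))$ with pointwise multiplication. Next I would pin down the characters. For a character $\chi$, set $z=\be_1$ (the unit shift, with $\mathcal{F}\be_1=e^{i2\pi w}$); since $z$ is invertible with $\|\be_1\|_1=\|\be_{-1}\|_1=1$ and every character has norm $1$, multiplicativity forces $|\chi(z)|\le 1$ and $|\chi(z)|^{-1}=|\chi(\be_{-1})|\le 1$, hence $|\chi(z)|=1$ and $\chi(z)=e^{i2\pi w_0}$ for a unique $w_0\in T$. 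Multiplicativity then gives $\chi(\be_n)=e^{i2\pi n w_0}$, and boundedness plus linearity extend this to $\chi(\ba)=\sum_n a(n)e^{i2\pi n w_0}=(\mathcal{F}\ba)(w_0)$. Finally, if $\ba$ failed to be invertible it would lie in a maximal ideal $M$; by Gelfand--Mazur the quotient $A/M$ is a complex Banach division algebra, hence isomorphic to $\mathbb{C}$, and the quotient map is a character vanishing on $\ba$ --- contradicting $(\mathcal{F}\ba)(w_0)\neq 0$. So there exists $\mathbf{b}=\ba^{-1}\in\ell_1(\integers)$ with $\ba\star\mathbf{b}=\be_0$, and taking Fourier transforms yields $\mathcal{F}\mathbf{b}=1/\mathcal{F}\ba$, confirming the formal formula.

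The main obstacle is the middle step: showing that \emph{every} character is a point evaluation, i.e. that the maximal-ideal space is exactly $T$ with no exotic characters. The modulus computation $|\chi(z)|=1$ is routine; the real content is that continuity together with multiplicativity on the dense span of the shifts $\{\be_n\}$ determines $\chi$ globally. As an alternative that sidesteps Gelfand theory entirely, I could run Wiener's localization argument: prove local invertibility near each $w_0$ with $f(w_0)\neq 0$ via a Neumann series $\sum u^n$ once $\|u\|_{A(T)}<1$, then patch the local inverses using a partition of unity $\{\phi_j\}\subset A(T)$ subordinate to a finite cover of the compact torus. In that route the delicate point shifts to arranging the local correction $u$ to be small in the $A(T)$-norm rather than merely in sup norm, which requires approximating $f$ by a trigonometric polynomial first.
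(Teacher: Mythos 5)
Your proposal is correct, but it does genuinely more work than the paper does. The paper never proves Wiener's theorem: in its supplementary section it quotes the classical statement (``Wiener's lemma on periodic functions'': if $f$ has an absolutely convergent Fourier series and never vanishes, then so does $1/f$) as a known black box, and obtains the $\ell_1(\integers)$ version purely by translation --- since $\mathcal{F}\ba$ is nonvanishing, $1/\mathcal{F}\ba$ lies in the Wiener algebra, so $\ba^{-1}:=\mathcal{F}^{-1}(1/\mathcal{F}\ba)\in\ell_1(\integers)$ and $\ba^{-1}\star\ba=\be_0$ follows from $\mathcal{F}$ being an algebra isomorphism onto $A(T)$. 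You instead supply a complete proof of the underlying theorem via Gelfand theory: $(\ell_1(\integers),\star)$ is a commutative unital Banach algebra, every character is a point evaluation $(\mathcal{F}\ba)(w_0)$ (your argument pinning $|\chi(\be_1)|=1$ via the invertibility of the shift is the standard and correct one), and non-invertibility would place $\ba$ in a maximal ideal whose quotient, by Gelfand--Mazur, yields a character annihilating $\ba$, contradicting the nonvanishing hypothesis. The paper's route buys brevity, which is appropriate for a standard cited fact; yours buys self-containedness and explains \emph{why} the result holds (the maximal ideal space of the Wiener algebra is exactly $T$), and your fallback localization-plus-partition-of-unity sketch is essentially Wiener's original proof. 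One footnote-level caveat you should add: Gelfand--Mazur requires complex scalars, so for the real-valued filters of the paper one works in the complexified algebra and then notes that the inverse of a real element is itself real, by uniqueness of inverses (conjugating an inverse of $\ba$ gives another inverse). This is routine and does not affect correctness.
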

In the engineering literature, we
say that the bisequence $\ba$ has so-called $Z$-transform 
$A(z)$, defined by:
\[
 \begin{array}{ll}
A(z) =  \sum_{t=-\infty}^{\infty} a_{t} z^{-t}.
\end{array}
  \]

Evaluating $A$ on the unit circle in the complex plane, at $z$ of the form $z=\exp\{ -i 2\pi w \}$, we see that  the 
$Z$-transform is effectively the Fourier transform $A = \cF \ba$. 
Applying Wiener's lemma, we see that, if 
$\ba \in \ell_1$ and $ \min_w |A(\exp\{ - 2\pi i w \}) | > 0$,
i.e. $A$ is never zero on the unit circle, then $\ba^{-1} \in \ell_1$ and $1/A$ is the $Z$-transform of $\ba^{-1}$.

% \Dave{Ideally should define the Z transform right here and make clear the realtion to the Fourier transform. ADDED.}
% \newcommand{\goto}{\rightarrow}
% \newcommand{\bw}{{\bf w}}
\paragraph{Finite-sample observation model and finite-length inverse filter }
In searching for an inverse filter, our  initial  results assume
existence of a {\it finite-length} inverse. Namely, we assume that $\ba\in \ell_1(\integers)$ is a forward filter with an inverse filter $\ba^{-1} \in \ell_1(\integers)$ supported in a centered window of radius $k$.

In practice we only have a finite dataset! Suppose that 
there is an underlying bisequence $\by \in \bX$
of the form $\by = \ba* \bx$, and let $\by^{[N]}$ denote the
restriction to an $N$-long centered window 
$\cT = \{ -T,\dots, T\}$ of radius $T$ and size $N=2T+1$.

Our goal is seemingly to 
recover $\bx$ or $\ba$ from the observed data $\by^{[N]}$.
However, in statistical theory we generally 
don't expect to exactly recover the true underlying 
representation (i.e. the generating $\ba$ and $\bx$) exactly.
Our goal is instead to find an inverse filter 
$\bw \in \ell_1^k$ such that the convolution $\bw*\by$ 
would be close\footnote{modulo time shift and rescaling}  to $\bx$, and where the closeness improves with
increasing data size $N \goto \infty$.

\paragraph{Finite-length filtering and practical algorithms}

While our analysis framework concerns bisequences (bilaterally infinite sequences), our data have finite length (as just mentioned). The algorithms we discuss are often {\it motivated by} convolutions on bisequences; however, they reduce {\it in practice} to truncated convolutions involving finite data windows. A certain ambiguity is helpful for efficient communication. Suppose we have $\by^{[N]}$, an N-long observed window of $\by$, and we also have $\bw$, a k-long filter, by which we mean a bisequence nonzero only within a 
fixed $k$-long window. We might encounter discussion both of 
$\bw \star \by^{[N]} $ as well as
$\bw \star \by$, both using the same filter $\bw$. In the first case, we would actually be thinking of $\by^{[N]}$ as zero-padded out to a bisequence, so that both situations involve bisequence convolutions. 

Finite $N$ effects are important in practice but tedious to discuss. It can be important to account for {\it end effects} in truncated convolution. In a setting where we initially
think to consider a norm $\| \bw \star \by^{[N]} \|_{\ell_p(\integers)}$, we might instead next think to rather consider the windowed norm
$\| \bw \star \by^{[N]} \|_{\ell_p(\{-T,\dots,T\})}$,
while finally we realize $\| \bw \star \by^{[N]} \|_{\ell_p(\{-T+k,\dots,T-k\})}$ is more correct for our purposes, as it includes only the terms which do not suffer from truncation of the convolution.
%\textbf{Remark} Circular Embedding trick: for linear convolution $\ba}x$ on finite observation where $a,x \in \cT$ of length $N$, we could embed the linear convolution into a circular convolution on the extended space by zero padding both of them to a large enough space of length $2N$. 

%\Dave{This last comment is bizarre. Why would that ever be in an introduction.}

% \textbf{Remark}: we will study the case when inverse filter is an infinite length filter, and provide tight approximation theory for the guarantee of our algorithm.

\paragraph{Algorithmic formulation}
%Now we propose a scalable convex optimization framework for blind deconvolution.

Under assumptions we will be making, the sequence $\bx$ underlying our observed data will be either exactly sparse -- having few nonzeros -- or approximately so. Moreover, there will be either an exact length $k$ inverse filter $\bw$, or approximately such. It follows that the filter output $\bw \star \by$ is sparse.
This suggests the would-be optimization principle 
\[
 \begin{array}{ll}
 \underset{\bw\neq 0}{\mbox{minimize}}   & \|\bw \star \by^{[N]}\|_{\ell_0(\{-T+k,\dots,T-k\}})
\end{array}
  \]
where the $\ell_0$ quasi-norm simply 
counts the number of nonzero entries. Unfortunately, this objective, though well-motivated, is not suitable for numerical optimization.

Inspired by this, we perform convex relaxation of the $\ell_0$ norm, replacing it with the $\ell_1$ norm, 
which is convex.

We also need to fix the scale to get a unique output. One might think to constrain $\|\bw\|_2 =1$, however, this would give a non-convex constraint and again is not suitable for effective algorithms.

%Here we propose a convex way to fix the scale symmetry by choosing a linear constraint that only allows non-zero $w$. 

%\paragraph{Transformed $\ell_\infty$ sphere constraint}
We instead suppose given a rough initial approximation  
$\widetilde{\ba}$ of the forward filter $\ba$, and impose 
an $\ell_\infty$ constraint on the 
`pseudo-delta' $\bw\star\widetilde{\ba}$,
forcing it to `peak' at target entry $t$.
\begin{comment}
\BEQ
 \begin{array}{ll}
& \|\widetilde{\ba}*\bw\|_\infty = 1.
\end{array}
% \tag{$P_{v}^T$ }
\EEQ 

This intuition for the choice is based on the fact that $a^{-1}*a = \be_0$, we know $(a^{-1}*a)_0 = 1$ is a constraint on the largest entry. Therefore, $\|\widetilde{\ba}*a^{-1}\|_\infty = 1$ would be true if  $\widetilde{\ba} = a$. 

Based on shift and scaling symmetry, without loss of generality, we could assume that   $\widetilde{\ba}* a^{-1}$ 
has $0$-th entry being the largest absolute value entry by shifting $\widetilde{\ba}$, and set its value to $1$ by scaling $\widetilde{\ba}$, then up to shift and scaling of $\widetilde{\ba}$, 
\[
(\widetilde{\ba}*a^{-1})_0 = \|\widetilde{\ba}*a^{-1}\|_\infty =1
\]

Although this transformed $\ell_\infty$ sphere constraint is still a non-convex constraint, we can focus on one surface by fixing the $t$-th coordinate of entry with largest magnitude:
\end{comment}
\BEQ
 \begin{array}{ll}
&  (\widetilde{\ba}*\bw)_t =1, \quad \|\widetilde{\ba}*\bw\|_\infty \leq 1.
\end{array}
% \tag{$P_{v}^T$ }
\EEQ
Combining these steps, we obtain a convex optimization problem 
associated to each possible target coordinate $t$:
\BEQ
 \begin{array}{ll}
 \underset{\bw \in \ell_1^k}{ \mbox{minimize} }   & \frac{1}{N-2k} \| \bw \star \by^{[N]}\|_{\ell_1(\{-T+k,T-k\})}\\
\mbox{subject to}  
&  (\widetilde{\ba}*\bw)_t =1, \quad \|\widetilde{\ba}*\bw\|_\infty \leq 1.
\end{array}
% \tag{$P_{v}^T$ }
\EEQ

\begin{comment}

\end{comment}In the case of perfect shift symmetry of convolution ( linear convolution on bisequence, or circular convolution for finite observation ), we could design the problem to fully utilize the shift symmetry of objective  $\frac{1}{N} \| w*Y\|_{\ell_1(\cT)}$, so that the convex problem is equivalent to a convex optimization problem with a simple linear constraint as the $\ell_\infty$ box constraint 
with a fixed location of largest entry at $0$ coordinate under shift:
\BEQ
 \begin{array}{ll}
\widetilde{\ba}^{T} w^{\dagger} = \langle \widetilde{\ba}, w^{\dagger}\rangle = (\widetilde{\ba}*w)_0  =1, 
\end{array}
\EEQ
where $w^{\dagger}$ is the time reversal of $w$.
\end{comment}

It will be convenient to reformulate slightly,  
hide consideration of end effects,
and force the peak to occur at target coordinate $t=0$.
Abusing notation somewhat, we then write:
\BEQ
 \begin{array}{ll}
 \underset{\bw \in \ell_1^k}{ \mbox{minimize} }   & \frac{1}{N} \| \bw* \by \|_{\ell_1^N}\\
\mbox{subject to}  
& \langle \widetilde{\ba},  \bw^{\dagger} \rangle =1, 
\end{array}
\tag{${C}_{\widetilde{\ba}}$ }
\EEQ

(Again $\bw^\dagger$ denotes time-reversal of $\bw$). In practice we
might truncate the convolution due to end effects, or truncate the window over which we take the norm, but we will hide such practical details in the coming material, for ease of exposition; they would not change our results.

\paragraph{Stochastic models for sparse signals}
Although our algorithms make sense in the absence of any theory,
our theoretical results concern properties of our algorithm for data generated under a probabilistic generative model i.e. a {\it stochastic signal} model. 

Let $X = (X_t)$ be a bisequence
of independent identically distributed random variables indexed by $t \in \integers$, having a common marginal CDF $F = F_X$, such that $F(x) = 1 - F(-x)$. One realization is then a sequence $\bx$ of the type discussed in earlier paragraphs.

We assume that $F$ has an atom at $0$ \-- $F = (1-p) H +p G$, where $H$ is the standard Heaviside distribution and $G$ is the standard Gaussian distribution.  We say that $F$ follows the \emph{Bernoulli$(p)$-Gaussian} model. 
Equivalently, $X_t$ is sampled IID from the \emph{Bernoulli$(p)$-Gaussian} distribution
$p N(0,1) +(1-p) \delta_0$. 

The iid process $X$ is of course ergodic. 
If $\bx$ denotes one realization of $X$, then in a window $(x_t)_{-T}^T$ of length $N$,  $\approx p N$ nonzero values  will occur,
for large $N$. Consequently, if $p \ll 1$,
realizations from $X$
will empirically be sparse.

\newcommand{\bb}{{\bf b}}
Let $Y = \ba \star X$ denote the random bisequence produced as the output of convolution of the random signal $X$
with deterministic filter $\ba \in \ell_1(\integers)$. More explicitly,
\[
    Y_t = \sum_u a(u) X_{t-u}.
\]
This defines formally a so-called {\it stationary linear process},  a classical object for which careful foundational results are well established.\footnote{In our case, we assume that $X$ is iid Bernoulli-Gaussian, so
$E |X_0| = p \cdot E |N(0,1)|$ is finite. Using this,
we can see that the sum in the display above, 
even though possibly containing an 
infinite number of terms, converges in various natural senses.}
Consider now filtering $Y$, by a length-$k$ filter 
$\bw$, producing the random bisequence $V = \bw \star Y$.
The {\it end-to-end} filter $\bb = \bw \star \ba$ 
is a well-defined element of $\ell_1(\integers)$; 
using it, we can represent 
the filtered output in terms of the underlying iid process $X$:
$V  = \bb \star X$. 
This representation shows that the filtered output series $V$ is itself a well-defined stationary linear process, and moreover, since $E |X_0| < 1$ and $\| \bb \|_{\ell_1} \leq \|\bw \|_{\ell_1} \cdot  \| \ba \|_{\ell_1} < \infty$, we have $E | V_0| < \| \bb \|_{\ell_1} < \infty$.

Any such stationary linear process is ergodic.
By the ergodic theorem, the
large-$N$ limit of the objective 
will almost surely  be an expectation over $X$:
\BEQ
 \begin{array}{ll}
\lim_{N\rightarrow \infty}\frac{1}{N} \| \bw*Y\|_{\ell_1^N}  = \lim_{N\rightarrow \infty}\frac{1}{N} \sum_{t\in \cT^N}|(\bb \star X)_t|  =  \Expect_X|(\bb \star X)_0|=\Expect|(\bw*Y)_0|.
\end{array}
\EEQ
Consequently, 
the large-$N$ properties of our proposed algorithm
are driven by properties of the following optimization problem
{\it in the population}:
 \[
 \begin{array}{ll}
 \underset{\bw }{\mbox{minimize}}   & \Expect|(\bw*Y)_0|\\
\mbox{subject to}  &  \langle \widetilde{\ba},  \bw^{\dagger} \rangle =1. 
\end{array}
  \]

\section{Main Results Overview}

% The $\ell_1$ norm or $\ell_2$ norm ball would still led to 

% If $G$ is some sub-Gaussian distribution instead, then we called it \emph{Bernoulli$(p)$-sub-Gaussian} model.

\subsection{Main Result $1$: Phase Transition Phenomenon for Sparse Blind Deconvolution}

We have at last defined a convex optimization problem 
at the population level, which we will now want to study in detail.
In our studies, we can make various choices of the sparsity parameter $p$, of the underlying forward filter $\ba$, and of the guess $\tilde{\ba}$. The tuple $(p,\ba,\tilde{\ba})$ defines in this way a kind of phase space. We can then study performance of the algorithm at different points in phase space. 

\newcommand{\ERP}{{\sc Exact}{\sc Recovery}}
Consider this performance property: 
\begin{quotation}
\ERP \, $\equiv$ \,  {\sl ``There is an unique solution of the population-based optimization problem --- modulo time shift and output rescaling --- and
this solution {\bf exactly} solves the blind deconvolution problem correctly. ''}
\end{quotation}
It probably seems too much to ask that such a property could {\sl ever} be true, i.e. could ever be true for even one 
choice of phase space tuple. After all the optimization problem doesn't have any apparent connection to blind deconvolution -- instead only to some sort of relaxation of the search for sparse output filters.

We will see that this phase space can be partitioned into two regions: one where the exact recovery property holds and its complement where the exact recovery property fails. Surprisingly the region where \ERP \  holds is nonempty and can be appreciable. And it can be described in a clear and insightful way.

\paragraph{Surprising phase transition for a special case of sparse blind deconvolution}

We first illustrate the phase transition phenomenon in possibly the most elementary situation. Consider the special case when the forward filter $\ba$ is the exponential decay filter:  $\ba = (1, s, s^2, s^3, \ldots )$ with $|s|\leq 1$; then the inverse filter is a basic short filter  $\ba^{-1} = ( 1, -s )$. 
\begin{theorem}[Population phase transition for exponential decay filter]
\label{thm: single root PT}
Consider a linear process $Y = \ba * X$ with
\BIT
\item
$\ba = (1, s, s^2, s^3, \ldots )$ where $|s|\leq 1$, so $\ba^{-1} = ( 1, -s )$;
\item 
$X_t$ is IID \emph{Bernoulli$(p)$-Gaussian}
$p N(0,1) +(1-p) \delta_0$.
\EIT
Consider as initial approximation 
$\widetilde{\ba} = \be_0 = ( 1, 0, 0,0, \ldots)$,
and the resulting 
fully specified population
optimization problem, 
with parameter tuple $(p,\ba,\tilde{\ba})$:
 \[
 \begin{array}{ll}
 \underset{\bw }{\mbox{minimize}}   & \Expect|(\bw*Y)_0|\\
\mbox{subject to}  & \bw_0 =1.
\end{array}
\tag{${P}_1(e_0)$ }
\label{bd:population}
  \]
Let $\bw^\star$ denote the (or simply some) solution.
Define the threshold
\[
p^\star  = 1-|s| .
\]
The property \ERP experiences a phase transition at $p = p^\star$:
\BIT
    \item provided $p<p^\star$, then $\bw^\star$ is uniquely defined and equal to $\ba^{-1}$ up to shift and scaling ; and
    \item provided $p> p^\star$, then $\bw^\star$ is not $\ba^{-1}$ up to shift and scaling .
\EIT
  \end{theorem}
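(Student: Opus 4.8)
The plan is to treat the population problem as what it is: minimization of a convex functional over an affine constraint, and to pin down the minimizer through the one-sided directional derivative at the candidate $\bw = \ba^{-1}$. First I would pass to the end-to-end filter $\bb = \bw \star \ba$, so that $\bw \star Y = \bb \star X$ and the objective becomes $\Phi(\bw) = \Expect|(\bb\star X)_0|$, which is manifestly convex in $\bw$, with the linear constraint $\bw_0 = 1$. At the candidate point $\bw = \ba^{-1}$ one has $\bb = \be_0$, hence $(\bw\star Y)_0 = X_0$ and $\Phi(\ba^{-1}) = \Expect|X_0| = p\sqrt{2/\pi}$. Because $\Phi$ is convex and the feasible set is the affine hyperplane $\{\bw_0 = 1\}$, it suffices to control the sign of the one-sided derivative $\Phi'(\ba^{-1};\mathbf{h})$ along feasible directions $\mathbf{h}$ (those with $h_0 = 0$): \ERP \ is equivalent to $\Phi'(\ba^{-1};\mathbf{h}) > 0$ for every such $\mathbf{h}\neq 0$.

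Next I would compute this derivative. Writing $\mathbf{g} = \mathbf{h}\star\ba$ and splitting the origin term of $(\mathbf{g}\star X)_0$ from the remainder $W := \sum_{u\neq 0} g(-u)X_u$, the key structural facts are that $(\ba^{-1}\star Y)_0 = X_0$ vanishes with probability $1-p$ and is an independent symmetric Gaussian with probability $p$, and that $W$ is independent of $X_0$ and symmetric. Differentiating under the expectation (justified by dominated convergence, since the difference quotients are bounded by the integrable $|(\mathbf{h}\star Y)_0|$) and using $\Expect[\operatorname{sign}(X_0)\mathbf{1}_{X_0\neq0}] = 0$, the cross term drops out and one obtains
\[
\Phi'(\ba^{-1};\mathbf{h}) = p\sqrt{2/\pi}\,g(0) + (1-p)\,\Expect|W|.
\]
The constraint $h_0 = 0$ translates, through $\mathbf{h} = \mathbf{g}\star\ba^{-1}$ and $\ba^{-1}=(1,-s)$, into the coupling $g(0) = s\,g(-1)$, which is exactly where the root $s$ enters.

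The heart of the matter is a lower bound on $\Expect|W|$. Conditioning on $X_{-1}$ and applying Jensen to the symmetric independent remainder $W - g(-1)X_{-1}$ gives $\Expect|W| \ge |g(-1)|\,\Expect|X_{-1}| = |g(-1)|\,p\sqrt{2/\pi}$. Substituting $g(0) = s\,g(-1)$ and this bound yields
\[
\Phi'(\ba^{-1};\mathbf{h}) \ge p\sqrt{2/\pi}\,|g(-1)|\,\bigl[(1-p) - |s|\bigr].
\]
For $p < p^\star = 1-|s|$ this is strictly positive whenever $g(-1)\neq 0$; when $g(-1)=0$ the coupling forces $g(0)=0$, so $\Phi'(\ba^{-1};\mathbf{h}) = (1-p)\Expect|W| > 0$ for any nonzero feasible $\mathbf{h}$. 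Hence $\ba^{-1}$ is the unique minimizer and exact recovery holds. For $p > p^\star$, the single feasible direction with $g(-1) = -\operatorname{sign}(s)$ and $g(0)=s\,g(-1)$ makes the bracket negative, so $\Phi'(\ba^{-1};\mathbf{h}) < 0$: this is a strict descent direction, so $\ba^{-1}$ is not optimal. Comparing with the only other feasible shift/scale representative of $\ba^{-1}$ (namely $w_{-1}=-1/s,\ w_0=1$, whose end-to-end filter $-\tfrac1s\be_{-1}$ gives the strictly larger value $\tfrac1{|s|}p\sqrt{2/\pi}$) shows that no representative of $\ba^{-1}$ up to shift and scaling can be the minimizer.

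The step I expect to be the main obstacle is establishing that the pure-$g(-1)$ direction is genuinely the worst case, i.e. that the elementary bound $\Expect|W|\ge|g(-1)|\,p\sqrt{2/\pi}$ really does capture the extremal trade-off and dominates the negative first-term contribution for every feasible direction simultaneously; this, together with the clean cancellation of constants, is precisely what makes the threshold $(1-p)=|s|$ sharp. The remaining technical care — justifying the interchange of derivative and expectation at the kink $\{X_0=0\}$, and handling infinitely supported $\ell_1$ directions (where $\Expect|W|<\infty$ follows from $\|\mathbf{g}\|_1\le\|\mathbf{h}\|_1\|\ba\|_1<\infty$) — is routine given integrability.
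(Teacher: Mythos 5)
Your proof is correct, and its skeleton --- testing one-sided directional derivatives of the convex population objective at the candidate $\ba^{-1}$, split on whether $X_0$ vanishes --- is the same first-order/KKT analysis the paper runs: your formula $\Phi'(\ba^{-1};\mathbf{h}) = p\sqrt{2/\pi}\,g(0) + (1-p)\Expect|W|$ is exactly the paper's directional derivative $p\,\beta_0 + (1-p)\,\Expect_I[\|\beta_I\|_2 \mid 0\notin I]$ after its change of variables $\psi = (\ba\star\bw)^\dagger$ and the Gaussian identity $\Expect|\langle X,\psi\rangle| = \sqrt{2/\pi}\,\E_I\|\psi_I\|_2$. Where you genuinely diverge is in how the worst feasible direction is evaluated. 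The paper normalizes $\beta_0=-1$ and reduces the question to the value of a derived projection-pursuit problem, characterizing the threshold by $\frac{p}{1-p}= val(Q_1(\tbe'))$; for this filter $\tbe=\tba\star\ba^{-1}=(1,-s)$, so $\tbe'$ is one-sparse, $val(Q_1(\tbe'))=p/|s|$, and the paper certifies that the one-sparse direction is extremal through its general machinery (set-valued subgradient expectations, the upper bound $p/\|\tbe'\|_\infty$, and the sharpness condition $|\tbe|_{(3)}\leq \Delta(\tbe)\,|\tbe|_{(2)}$, trivially satisfied here). You shortcut all of that with the conditioning/Jensen bound $\Expect|W|\geq |g(-1)|\,p\sqrt{2/\pi}$, which is a \emph{uniform} lower bound over all feasible directions --- so your stated worry that the pure-$g(-1)$ direction must be shown to be ``genuinely the worst case'' is moot for the recovery half; extremality only matters for $p>p^\star$, where your single explicit descent direction $g(-1)=-\operatorname{sign}(s)$, $g(0)=s\,g(-1)$ settles it. Your writeup also supplies two details the paper leaves implicit: uniqueness via strict positivity of every directional derivative combined with the convexity inequality $\Phi(\ba^{-1}+\mathbf{h})\geq \Phi(\ba^{-1})+\Phi'(\ba^{-1};\mathbf{h})$, and the explicit check that the one other feasible shift/scale representative under $\bw_0=1$ (namely $w_{-1}=-1/s$, $w_0=1$, with objective $p\sqrt{2/\pi}/|s|$) is also suboptimal, so failure of optimality of $\ba^{-1}$ really does rule out recovery ``up to shift and scaling.'' The trade-off: your argument is shorter and self-contained but tied to the one-sparsity of $\tbe'$; the Jensen step alone would in general only reproduce the upper bound $p^\star \leq 1-\Delta(\tbe)$, whereas the paper's $val(Q_1(\tbe'))$ formulation is what scales to the general-filter Theorems \ref{theorem:AltPopulationPhaseTransition} and \ref{thm:Population phase transition}, of which this theorem is the special case.
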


  \begin{figure}[!tb]
\centering
\includegraphics[width=.8\textwidth]{./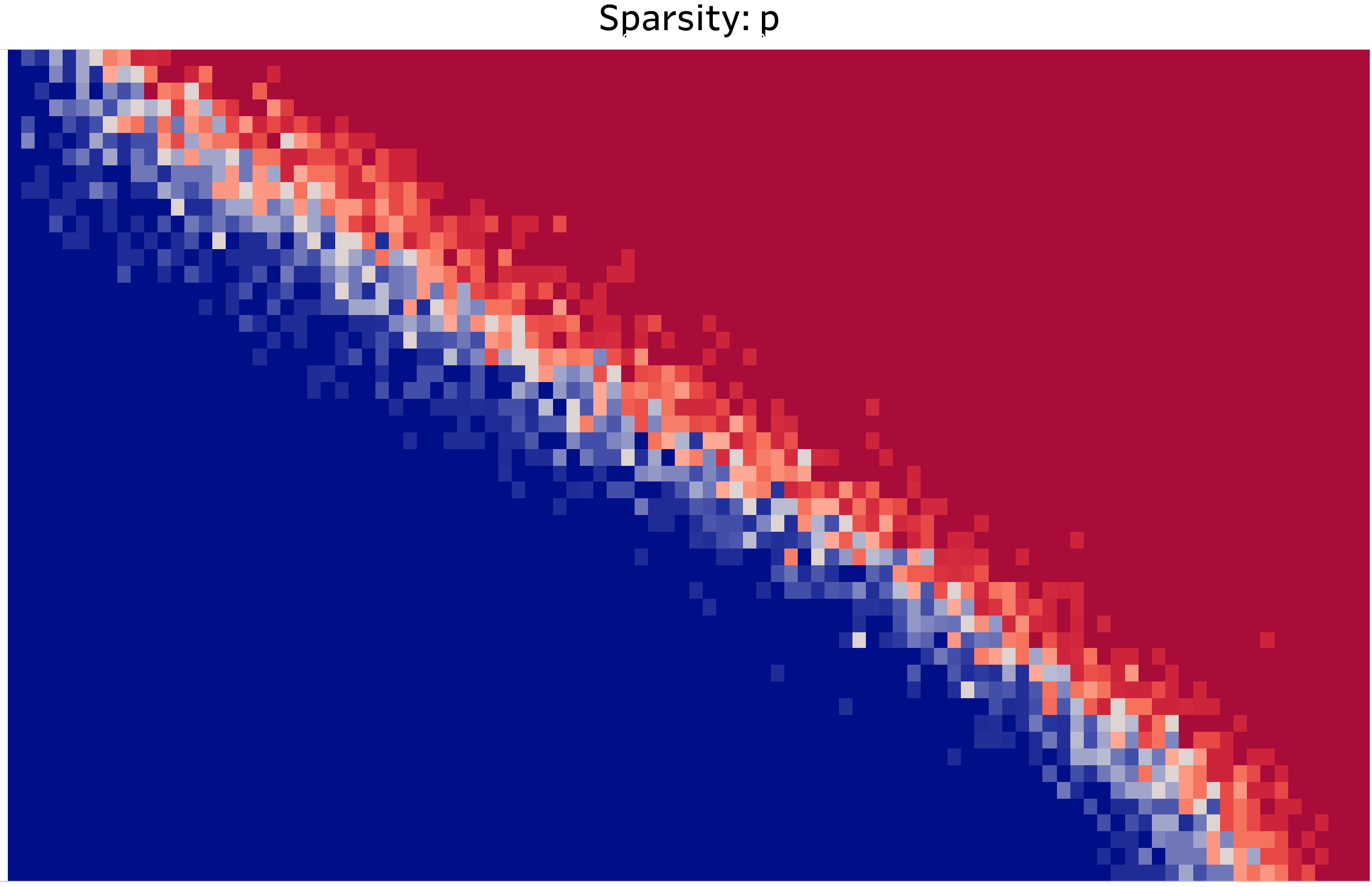}
\caption{Finite-sample phase transition diagram; ground truth filter is $\bw^\star =  ( 0, 1, -s )$ with $T=200$. Horizontal axis:  sparsity level $p$, ranging from $0.01$ to $0.99$; Vertical axis:  filter parameter $|s|$, ranging from $0$ to $0.99$.  The shaded attribute indicates the observed fraction of  
successful experiments at the given parameter combination from deep blue for 1.0 down to deep red for 0.0. Thus, the red region indicates failure of recovery and the blue region indicates success. The transition from success to failure is quite abrupt.}
\label{fig-PT-finite}
\end{figure}

We can empirically  observe that the population phase transition described in Theorem 1 
describes accurately the situation with finite-length signals.   The setting of Theorem 1,
defines a phase space by the numbers $(p,s)$. We can
sample the phase space according to a grid and then at each grid point, conducting a sequence of experiments like so:
\BIT
\item sample a realization of synthetic data $Y = \ba \star X$ according to the stochastic signal;
\item extract a window  $\by^{[N]}$ of size $N$ from within 
each generated $Y$; and
\item solve the resulting finite-$N$ optimization problem:
\[
 \begin{array}{ll}
 \underset{\bw }{\mbox{minimize}}   & \frac{1}{N}\|\bw*Y\|_1\\
\mbox{subject to}  & \bw_0 =1.
\end{array}
\tag{${P}^N_1(e_0)$ }
\label{bd:finite}
  \]
\EIT
% \Dave{There really should be labels on optimization problems,
% and there should be a finite-n problem with a label and
% a population problem with a label. 
% Otherwise the above is too mumbled. eg What exactly
% is the resulting finite-$N$ problem?}
Tabulating the fraction of instances with numerically precise recovery of the correct underlying inverse filter $\ba^{-1}$ 
and sparse signal $X$ across
grid points, we can make a heatmap of empirical success probability.

Specifically, we numerically run $20$ independent experiments, and choose an accuracy $\epsilon = 10^{-3}$ and count the number of successfully accurate recovery where the convex optimization solution $w$ satisfy $\|w^\star -w\|_1/\|w^\star\|_\infty<\epsilon$.

We do this in Figure~\ref{fig-PT-finite}; the reader will see there an empirical phase transition curve, produced by a logistic-regression calculation of the location in $p$ where $50\%$ success probability is achieved. 
We observe empirical behavior entirely consistent 
with $p^\star  = 1-|s|$.

\paragraph{Phase transition for sparse blind deconvolution with general filter: upper bound from deltaness discrepancy}
 {Consider now a more general situation where $\ba$ is a quite general filter, and $X$ is Bernoulli Gaussian. 
Our formal assumptions are:}

{\begin{enumerate}
 \item  [{\bf A1}]
$\ba\in l_1(\integers)$ is invertible:  $(\mathcal{F} \ba)(\widetilde{w}) \neq 0$, $\forall \widetilde{w}\in T$; thus $\ba^{-1}$ exists in $\ell_1(\integers)$.
\item [{\bf A2}] $X = (X_t)_{t \in \integers}$ is IID with marginal distribution  $pN(0,1) + (1-p) \delta_0$; and
\item [{\bf A3}] $Y = (Y_t)_{t \in \integers}$ is a linear process
obeying $Y = \ba \star X$.
\end{enumerate}}

{Consider the convex optimization problem 
 \[
 \begin{array}{ll}
 \underset{\bw \in l_1(\integers)}{\mbox{minimize}}   & \Expect|(\bw*Y)_0|\\
\mbox{subject to}  
& \langle \widetilde{\ba},  \bw^{\dagger} \rangle =1,
\end{array}
\tag{${P}_1(\widetilde{\ba})$ }
% \label{eqn:P1}
  \]
and let $\bw^\star$ denote any solution of the optimization problem. }

\newcommand{\tba}{\widetilde{\ba}}
\newcommand{\mgap}{\mbox{Gap}}
\newcommand{\lOneZ}{l_1(\integers)}
\newcommand{\lInfZ}{l_\infty(\integers)}
 {Our results establish the existence of a general
phase transition phenomenon and a precise quantification of it,
in terms of a {\it phase transition functional} .}

 {Define the following {\bf deltaness discrepancy} $\Delta(\bv)$:
\[
 \Delta(\bv) =  \inf_{\alpha,k} \frac{\| \bv - \alpha \delta_k \|_\infty}{\|\bv\|_\infty}.
\]
}

 {This discrepancy  indeed obeys
$\Delta(\be_0) = 0$. However, it is not restricted only to
deltaness concentrating at the origin; in fact,
$\Delta( \alpha \cdot \be_k) = 0$ for
each $\alpha > 0$, $k \in \integers$.
Also, $0 \leq  \Delta(\bv) \leq 1$.
At the other extreme,
$\Delta([...,0,0,1/2,1/2,0,0,...]) = 1$. 
Finally,
$\Delta$ is a continuous function on $\lOneZ$:
so that, 
setting $\eps(\bv,\bw) \equiv \frac{\|v-w\|_\infty}{\|w\|_\infty}$,
\[
   | \Delta(\bv) - \Delta(\bw) | \leq \frac{2 \cdot \eps}{1-\eps}.
\]
}
Indeed, we can give a more explicit form
for $\Delta$;  let $\widetilde{\be}:= \widetilde{\ba}*\ba^{-1}$,
\[
 \Delta(\tbe) = \frac{|\tbe|_{(2)}}{|\tbe|_{(1)}},
\]
where $|\tbe|_{(1)}$ denotes the largest entry
in $\tbe$ and $|\tbe|_{(2)}$ the second largest.
$\frac{|\tbe|_{(2)}}{|\tbe|_{(1)}}$ 
is a kind of multiplicative gap functional,
measuring the extent to which the second-largest entry in $\tbe$
is small compared to the largest entry. It is a natural measure of closeness 
between our approximate Kronecker delta $\tbe=\widetilde{\ba} * \ba^{-1}$ 
and true Kronecker delta  $\be_0$. 

	\begin{theorem}
	[ { Population (large-$N$) phase transition}: upper bound from {\bf deltaness discrepancy}]
	\label{theorem:AltPopulationPhaseTransition}

 {Under assumptions (A1)-(A3),}
\BIT
\item  {There is a functional
$\Pi^*$ defined on tuples $(\ba,\tba)$ in 
$\ell_1(\integers) \times \ell_1(\integers)$
with the property that, for $p > \Pi^*(\ba,\tba)$,
every solution of the optimization problem ${P}_1(\widetilde{\ba})$
is exactly the correct answer $\ba$, up to lateral shift and scaling.}

\item  {We have the upper bound:
\begin{equation}
    \label{eq:coro-upperbnd}
\Pi^*(\ba,\tba) \leq 1-\Delta(\tba*\ba^{-1}), \qquad (\ba,\tba) \in \ell_1(\integers) \times \ell_1(\integers).
\end{equation}
}
In words, the theorem is stating that {\it the less accurate the initial approximation $\widetilde{\ba} \approx \ba$, the greater we rely on sparsity of $X$ to allow exact recovery.}

% \item  {Equality is generic:
% \[
% \Pi^*(\ba,\tba) =  \Delta(\tba*\ba^{-1}),
% \]
% for each $\ba \in \lOneZ$,
% on an open dense subset of $\tba \in \lOneZ$.}

% \item  { In fact, 
% $\overline{\Pi}^* \equiv \Delta(\tba*\ba^{-1})$ 
% is the {\bf unique
% upper semicontinuous extension} of the Phase transition functional $\Pi^*$}.
\EIT
\end{theorem}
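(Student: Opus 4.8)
The plan is to pass to the end‑to‑end filter, recognize the population problem as a single convex program whose candidate optimum is a Kronecker delta, and then settle correctness by a dual‑certificate (optimality) argument. First, by assumption (A1) the map $\bw \mapsto \bb := \bw \star \ba$ is a bijection of $\ell_1(\integers)$ onto itself, and $\bw \star Y = \bb \star X$. Using $\bw = \bb \star \ba^{-1}$, the constraint becomes $\langle \tba, \bw^{\dagger}\rangle = (\tba \star \bw)_0 = (\widetilde{\be}\star\bb)_0 = \langle \widetilde{\be}^{\dagger}, \bb\rangle = 1$, where $\widetilde{\be} := \tba \star \ba^{-1}$; i.e. a single linear functional of the new variable. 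The key observation is that recovering $\ba^{-1}$ up to shift and scaling is \emph{exactly} the statement that the optimal $\bb$ is a scaled, shifted delta $\alpha \be_m$ (since $\be_m$ pulls back to a shift of $\ba^{-1}$). After relabeling indices I normalize so the peak of $\widetilde{\be}$ sits at the origin with value $|\widetilde{\be}|_{(1)}$, making $\bb^{\star} = \be_0 / |\widetilde{\be}|_{(1)}$ the candidate solution.

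Second, I would put the objective in closed form. Conditioning on the Bernoulli support pattern $\gamma = (\gamma_t)$, the scalar $(\bb\star X)_0$ is centered Gaussian with variance $\sum_u b_u^2 \gamma_u$, so $\Expect|(\bb\star X)_0| = \sqrt{2/\pi}\,\Expect_\gamma \|\bb \odot \gamma\|_2 =: \sqrt{2/\pi}\,\psi(\bb)$. The functional $\psi$ is an expectation of seminorms, hence a genuine norm for $p>0$, and satisfies the two‑sided bound $p\|\bb\|_\infty \le \psi(\bb) \le p\|\bb\|_1$ (lower bound by retaining one active coordinate, upper bound by subadditivity of the square root). In particular $\psi(\bb^{\star}) = p/|\widetilde{\be}|_{(1)}$, and the lower bound $\psi \ge p\|\cdot\|_\infty$ is attained \emph{only} at scaled deltas, which is what will deliver uniqueness.

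Third, I would certify optimality by convex duality. Since $\psi$ is a norm and the constraint is linear, $\bb^{\star}$ minimizes $\psi$ over $\{\langle \widetilde{\be}^{\dagger}, \bb\rangle = 1\}$ iff there is a multiplier $\lambda$ with $\lambda\,\widetilde{\be}^{\dagger} \in \partial\psi(\bb^{\star})$. I would compute $\partial\psi(\bb^{\star})$ by interchanging subdifferential and expectation: realizations with $\gamma_0 = 1$ are smooth and force the $0$‑th coordinate of any subgradient to equal $p$, pinning $\lambda = p/|\widetilde{\be}|_{(1)}$; realizations with $\gamma_0 = 0$ contribute the $\ell_2$‑unit balls supported on their active coordinates, so the admissible off‑peak parts of a subgradient are exactly the vectors with $\langle \bv, \cdot\rangle \le (1-p)\psi(\bv)$ for all $\bv$ supported off the origin, i.e. those with dual norm $\psi^{*}(\cdot) \le 1-p$. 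Hence the certificate exists iff $\tfrac{p}{|\widetilde{\be}|_{(1)}}\,\psi^{*}\!\big((\widetilde{\be}^{\dagger})_{\neq 0}\big) \le 1-p$; when the off‑peak mass is carried by a single entry one has $\psi^{*} = \tfrac1p\|\cdot\|_\infty$, and the condition collapses to $|\widetilde{\be}|_{(2)} \le (1-p)|\widetilde{\be}|_{(1)}$, i.e. $\Delta(\widetilde{\be}) \le 1-p$. Uniqueness then follows because equality in $\psi \ge p\|\cdot\|_\infty$ forces a single nonzero, so any other minimizer must again be a scaled shifted delta.

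The hard part will be exactly this certificate step: controlling the dual norm $\psi^{*}$ of the off‑peak part $(\widetilde{\be}^{\dagger})_{\neq 0}$ and showing the scalar gap condition suffices. This is delicate precisely because $\psi^{*}$ depends on the \emph{whole} off‑peak profile of $\widetilde{\be}$, not merely on its largest entry $|\widetilde{\be}|_{(2)}$: since $\psi(\bv)$ can be far smaller than $p\|\bv\|_1$ when $\bv$ spreads over many coordinates, the admissible set $\{\psi^{*} \le 1-p\}$ is not a simple $\ell_\infty$‑ball, and the naive estimate $\psi^{*} \le \tfrac1p\|\cdot\|_1$ is too lossy to recover the sharp threshold. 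I expect the real work to be a careful analysis of the Aumann‑integral / measurable‑selection problem defining $\partial\psi(\bb^{\star})$ — equivalently, an explicit construction of selections $\mathbf{g}_\gamma$ routing the target $\tfrac{p}{|\widetilde{\be}|_{(1)}}\widetilde{\be}^{\dagger}$ through the random supports — together with a justification of the interchange of expectation and subdifferential (via directional derivatives and dominated convergence). Everything else — the change of variables, the half‑normal identity, and the norm bounds — is routine.
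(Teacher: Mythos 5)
Your proposal is correct in outline and follows essentially the same route as the paper: the change of variables to the end-to-end filter $\psi = (\ba \star \bw)^\dagger$, the half-normal identity $\Expect|\langle X,\psi\rangle| = \sqrt{2/\pi}\,\Expect_I\|\psi_I\|_2$, and a subgradient certificate at the delta via $\partial\,\Expect_I\|(\be_0)_I\|_2 = p\,\be_0 + (1-p)\,{\cal B}_0$, which is exactly the paper's reduction to $Q_1(\tbe)$ and its KKT computation. One clarifying remark: the stated bound $\Pi^*(\ba,\tba)\le 1-\Delta(\tbe)$ already follows from the easy direction of your own certificate criterion, since testing the dual norm against a single off-peak coordinate gives $\psi^*\big((\tbe)_{\neq 0}\big) \ge |\tbe|_{(2)}/p$ (the paper equivalently plugs the one-sparse feasible point into $Q_1(\tbe')$ to get $val(Q_1(\tbe'))\le p/\Delta(\tbe)$), so the delicate control of $\psi^*$ over the whole off-peak profile that you flag as the hard part is needed only for the equality/sharpness statement (the paper's Theorem \ref{thm:Population phase transition} with its condition $\Delta(\tbe')\le\Delta(\tbe)$), not for this upper bound.
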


 {The reader might compare this with our earlier example
in Theorem \ref{thm: single root PT}, the special case of
geometrically decaying filters. 
In that example $\ba = (\dots,0,0,1,s,s^2,\dots)$ while
$\tba = (\dots,0,0,1,0,0,\dots)$. $\Delta(\tba*\ba^{-1})= \Delta((\dots,0,0,1,s,s^2,\dots)) = |s|$. In short, the upper bound $\Pi^* = 1- |s|$ deriving from this general viewpoint agrees precisely with the the exact answer $p^* = 1- |s|$ given in that earlier Theorem. }

When does equality hold in (\ref{eq:coro-upperbnd})?
For a vector $\bv$, let $\bv'$ denote the same vector,
except the largest-amplitude  entry is replaced by $0$. We will see
that a sufficient condition for equality is:
\[
   \Delta(\tbe') \leq \Delta(\tbe).
\]
Put another way, equality holds if 
\BEQ \label{eq:suffcondsharp}
|\tbe|_{(3)} \leq  \Delta(\tbe) \cdot |\tbe|_{(2)}.
\EEQ
The set of situations where this occurs is ample but not overwhelming.
It has relative
Lebesgue measure at least  $\Delta$. In the situation covered by  Theorem \ref{thm: single root PT},
$|\tbe|_{(3)}  = s^2$, $ |\tbe|_{(2)} = s$,
$\Delta = s$, and so equality holds in (\ref{eq:suffcondsharp}).
Hence, the general-filter result of Theorem 
\ref{theorem:AltPopulationPhaseTransition} along with the sufficient condition (\ref{eq:suffcondsharp})
imply as a special case
Theorem \ref{thm: single root PT}.

 {Now we state the following corollary to show that there is a substantial
region in the space of $(\ba,\tba)$ pairs where a meaningful 
sparsity-accuracy of initialization tradeoff exists, such that
 any sufficiently accurate initial guess results in
exact recovery \-- provided that the sparsity of the underlying object $X$ exceeds a threshold.}

\begin{coro}
 {Let $r \in (0,1/2)$ and suppose 
that assumptions (A1)-(A3)
hold with $p > \frac{r}{1-r}$. 
Normalize the problem so that 
$\| \ba^{-1} \|_\infty =1$.}

 {Consider $\tba$ in the $\ell_1$ metric ball
(\ref{eq:lonemetricball}) of radius $r$
about $\ba$:
\BEQ \label{eq:lonemetricball}
    \| \tba - \ba \|_1 \leq r .
\EEQ
}
 {Every solution of the optimization problem $P_1(\tba)$ 
achieves exact recovery of $\ba$ up to shift and rescaling.}
\end{coro}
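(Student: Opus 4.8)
The plan is to obtain the corollary as a direct consequence of Theorem~\ref{theorem:AltPopulationPhaseTransition}; the entire content is to convert the $\ell_1$ closeness hypothesis into a bound on the deltaness discrepancy $\Delta(\tba\star\ba^{-1})$ and then read off exact recovery. Set $\bv:=\tba-\ba$, so that $\|\bv\|_1\le r$, and use $\ba\star\ba^{-1}=\be_0$ to expand the end-to-end approximate delta:
\[
  \tbe \;=\; \tba\star\ba^{-1} \;=\; (\ba+\bv)\star\ba^{-1} \;=\; \be_0 + \bv\star\ba^{-1}.
\]
Thus $\tbe$ is the unit spike at the origin plus a perturbation $\bv\star\ba^{-1}$, and the whole problem reduces to showing that this perturbation is small enough that the spike still dominates.

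The key estimate is a single application of Young's convolution inequality together with the normalization $\|\ba^{-1}\|_\infty=1$:
\[
  \|\bv\star\ba^{-1}\|_\infty \;\le\; \|\bv\|_1\,\|\ba^{-1}\|_\infty \;\le\; r.
\]
Here the restriction $r\in(0,1/2)$ enters essentially. Since $|\tbe(0)|\ge 1-r>1/2$ while every off-origin entry obeys $|\tbe(j)|\le r<1/2$, the largest-magnitude entry of $\tbe$ is the one at the origin, with $|\tbe|_{(1)}\ge 1-r$, and the second largest is controlled by the perturbation, $|\tbe|_{(2)}\le r$. Consequently
\[
  \Delta(\tbe)=\frac{|\tbe|_{(2)}}{|\tbe|_{(1)}}\le \frac{r}{1-r}.
\]

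Finally I would feed this bound into Theorem~\ref{theorem:AltPopulationPhaseTransition}. That theorem guarantees exact recovery once $(p,\ba,\tba)$ lies on the correct side of the phase transition governed by the deltaness discrepancy, the threshold being controlled through (\ref{eq:coro-upperbnd}) by $1-\Delta(\tbe)$; the estimate $\Delta(\tbe)\le\frac{r}{1-r}$ together with the assumed relation between $p$ and $r$ is exactly what is needed to certify that $(p,\ba,\tba)$ is in the exact-recovery region, so that every solution of $P_1(\tba)$ equals $\ba^{-1}$ up to shift and scaling, i.e. recovers $\ba$ up to shift and rescaling. I expect the only real obstacle to be the middle step: confirming that the origin entry remains the unique dominant coordinate of $\tbe$, which is precisely where $r<1/2$ is used and which explains why no information about the support or decay of $\ba^{-1}$ is required. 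The remaining pieces --- Young's inequality, the elementary rearrangement of the ratio, and the appeal to the phase-transition theorem --- are routine.
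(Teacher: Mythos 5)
Your proposal is correct and follows essentially the same route as the paper's own proof, which likewise bounds $\| \tbe - \be_0\|_\infty \leq \| \ba^{-1} \|_\infty \cdot \| \tba-\ba\|_1 = r$ (your Young's-inequality step applied to $\tbe-\be_0=(\tba-\ba)\star\ba^{-1}$), deduces $|\tbe|_{(1)} \geq 1-r$ and $|\tbe|_{(2)} \leq r$, hence $\Delta(\tbe)\leq \frac{r}{1-r}$, and then appeals to Theorem~\ref{theorem:AltPopulationPhaseTransition}. The only difference is cosmetic: you make explicit the role of $r<1/2$ in guaranteeing that the origin entry remains the dominant coordinate of $\tbe$, a point the paper's proof leaves implicit.
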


 {\textbf{Proof}:
Since 
\[
  \| \tbe - \be_0\|_\infty \leq \| \ba^{-1} \|_\infty \cdot \| \tba-\ba\|_1 = r,
\]
we have $|\tbe|_{(1)} \geq 1-r$ and $|\tbe|_{(2)} \leq r.$ This 
implies that $\Pi^* \leq \frac{r}{1-r}$. Now apply theorem
\ref{theorem:AltPopulationPhaseTransition}.
\qed
}

\paragraph{Phase transition for sparse blind deconvolution with general filter from optimization point of view}
Theorem 
\ref{theorem:AltPopulationPhaseTransition} has provided a upper bound of the threshold $p^\star$ for phase transition with clear mathematical meaning.   Now we present the main phase transition theorem with the exact $p^\star$ for blind deconvolution of general inverse filter. This statement is from optimization point of view that connects blind deconvolution problem to a classical projection pursuit problem. 

% To state the theorem,
% define the operator $'$ that 
% sets the largest entry in a bisequence to zero,
% while preserving the other entries. Operating on bisequence $\bv$
% it yields
%  \[
%  \bv' \equiv \bv - v_{t_m}\be_{i_m}
%  \]
%  where $t_m = \arg\max_t |v(t)|$;  if there are multiple largest entries, $t_m$ is the smallest such index.
% Also, for, $k \in \{ 1, 2, 3 , \dots \}$, 
% let $|\bv|_{(k)}$ denote
% the $k$-th largest amplitude entry in bisequence 
% $\bv$.

Let $I$ denote an iid Bernoulli(p) bisequence.
For a bisequence $\bw$ let $ \bw \cdot I$ denote the 
elementwise multiplication
of $\bw$ by $I$.
Define the optimization problem
\[
 \begin{array}{ll}
 \underset{\bw}{\mbox{minimize}}   & \E_I \|\bw \cdot I\|_2\\
\mbox{subject to}  
& \langle   \bv, \bw\rangle =1
\end{array}
\tag{$Q_1(\bv)$ }
  \]

	\begin{theorem}[Population (large-$N$) phase transition: upper and lower bound]
	\label{thm:Population phase transition}

% Consider  a linear process $Y = \ba \star X$ where
% \BIT
% \item
%  $X = (X_t)_{t \in \integers}$ is IID with marginal distribution  $pN(0,1) + (1-p) \delta_0$; and
%  \item
% $\ba\in l_1(\integers)$ is invertible:  $(\mathcal{F} \ba)(\widetilde{w}) \neq 0$, $\forall \widetilde{w}\in T$; thus $\ba^{-1}$ exists in $\ell_1(\integers)$.
% \EIT

% Consider the convex optimization problem 
%  \[
%  \begin{array}{ll}
%  \underset{\bw \in l_1(\integers)}{\mbox{minimize}}   & \Expect|(\bw*Y)_0|\\
% \mbox{subject to}  
% & \langle \widetilde{\ba},  \bw^{\dagger} \rangle =1,
% \end{array}
% \tag{${P}_1(\widetilde{\ba})$ }
% \label{eqn:P1}
%   \]
	Under assumptions (A1)-(A3),consider the solution $\bw^\star$ of the convex optimization problem ${P}_1(\widetilde{\ba})$. There is a threshold $p^\star > 0$,
\BIT
\item  $\bw^\star$ is $\ba^{-1}$ up to time shift and rescaling provided $p<p^\star$; and
\item   $\bw^\star$ is not $\ba^{-1}$ up to time shift and rescaling, provided $p> p^\star$.
\EIT

The threshold $p^\star$ obeys
 \[
\frac{p}{1-p}= val(Q_1(\tbe')).
\] 
where $\widetilde{\be}:= \widetilde{\ba}*\ba^{-1}$.
As defined in theorem \ref{theorem:AltPopulationPhaseTransition}, $p^\star = \Pi^*(\ba,\tba) $.

We have an upper bound and lower bound of $val(Q_1(\tbe'))$,
\[
 \frac{p}{\Delta(\tbe)}  \geq val(Q_1(\tbe')) \geq  p \cot\angle(\tbe, \be_0) 
\]
explicitly, the upper bound can be expressed as
\[
1- \tan\angle(\tbe, \be_0)   \leq p^\star \leq 1- \frac{|\tbe|_{(2)}}{|\tbe|_{(1)}} = 1-\Delta(\tbe)
\]
Namely,
\begin{equation}
    \label{eq:coro-upperbnd-lowerbnds}
1- \tan\angle(\tba*\ba^{-1}, \be_0) \leq \Pi^*(\ba,\tba) \leq 1-\Delta(\tba*\ba^{-1}), \qquad (\ba,\tba) \in \ell_1(\integers) \times \ell_1(\integers).
\end{equation}
Additionally, the upper bound 
takes equality if
% \[
% 	\frac{|\tbe|_{(3)}}{|\tbe|_{(2)}} \leq \frac{|\tbe|_{(2)}}{|\tbe|_{(1)}}
% 	\]
% 	Namely, 
	\[
   \Delta(\tbe') \leq \Delta(\tbe).
\]
% 	\[
% 	p\leq 1- \frac{|\tbe|_{(3)}}{|\tbe|_{(2)}}
% 	\]
% Here the multiplicative gap $\frac{|\tbe|_{(2)}}{|\tbe|_{(1)}}$ denotes 
% the ratio of the  first and second largest entries 
% of $|\tbe|$,and is a natural measure of closeness 
% between our approximate Kronecker delta $\tbe=\widetilde{\ba} * \ba^{-1}$ 
% and true Kronecker delta  $\be_0$. 

%Furthermore, after change of variable $\psi := (\ba}w)^{\dagger}$ and $\tbe:= \widetilde{\ba}*a^{-1}$, the optimization problem  ${P}_1(\widetilde{\ba})$ can be reduced to a convex projection pursuit problem $Q_1(\tbe)$ up to a constant factor $\sqrt{\frac{2}{\pi}} $ on objective:
%\[
% \begin{array}{ll}
% \underset{\psi}{\mbox{minimize}}   & \E_I \|\psi\cdot I\|_2\\
%\mbox{subject to}  
%& \langle   \tbe, \psi\rangle =1
%\end{array}
%\tag{$Q_1(\tbe)$ }
%  \]
	\end{theorem}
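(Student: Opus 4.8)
The plan is to change variables from the sought inverse filter $\bw$ to the end-to-end filter $\bb := \bw\star\ba$, under which exact recovery of $\ba^{-1}$ (up to shift and scaling) becomes the statement that the optimal $\bb$ is a scaled Kronecker spike. First I would rewrite the constraint: since $\tba\star\bw = \tbe\star\bb$ with $\tbe = \tba\star\ba^{-1}$, the linear constraint $\langle\tba,\bw^{\dagger}\rangle = 1$ becomes exactly $\langle\tbe^{\dagger},\bb\rangle = 1$. Next I would evaluate the objective. Writing $X_t = I_t G_t$ with $I$ iid Bernoulli$(p)$ and $G$ iid standard Gaussian, conditioning on $I$ makes $(\bb\star X)_0$ a centered Gaussian with variance $\|\bb\cdot I\|_2^2$, so by the Gaussian absolute-moment identity $\Expect|(\bw\star Y)_0| = \Expect|(\bb\star X)_0| = \sqrt{2/\pi}\,\E_I\|\bb\cdot I\|_2$. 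Thus $P_1(\tba)$ is, up to the irrelevant constant $\sqrt{2/\pi}$, exactly the problem of minimizing $\E_I\|\bb\cdot I\|_2$ subject to $\langle\tbe^{\dagger},\bb\rangle = 1$; this is the promised bridge to $Q_1$, with exact recovery corresponding to $\bb$ being a spike located at the peak of $\tbe$.

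With this reduction, normalize so that $|\tbe|_{(1)} = 1$ and place the peak of $\tbe$ at the origin; the candidate optimum is $\bb^{\star} = \delta_0$, whose objective value is $\E_I|I_0| = p$. I would certify its optimality by the first-order condition for the convex program: $\bb^{\star}$ is optimal iff $\lambda\,\tbe^{\dagger}\in\partial f(\bb^{\star})$ for some $\lambda$, where $f(\bb) = \E_I\|\bb\cdot I\|_2$. The key computation is the subdifferential of $f$ at this single spike. On $\{I_0 = 1\}$ (probability $p$) the weighted seminorm $\|\bb\cdot I\|_2$ is smooth at $\bb^{\star}$ and contributes the fixed gradient $\delta_0$; on $\{I_0 = 0\}$ (probability $1-p$) we sit at a zero of the seminorm restricted to the active support, so the subgradient is any $\ell_2$-ball vector supported on $\{t\neq 0 : I_t = 1\}$. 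Interchanging expectation and subdifferential, every element of $\partial f(\bb^{\star})$ has $0$-th coordinate equal to $p$, which forces $\lambda = p$; its off-peak coordinates range over $\{(1-p)\,\E_J[\bV_J]\}$, where $J$ is Bernoulli$(p)$ on $\integers\setminus\{0\}$ and $\bV_J$ ranges over measurable selections of $\ell_2$-ball vectors supported on $\mathrm{supp}(J)$.

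The decisive observation is that the set $\mathcal{K}$ of attainable conditional means $\E_J[\bV_J]$ is convex with support function $\bb\mapsto\E_J\|\bb\cdot J\|_2$, hence is precisely the unit ball of the norm dual to the $Q_1$-objective; membership is therefore governed by $val(Q_1)$ through the equivalence $\bz\in\mathcal{K}\iff val(Q_1(\bz))\geq 1$. The off-peak part of the certificate requires $(1-p)\E_J[\bV_J] = p\,\tbe'^{\dagger}$, i.e. $\tfrac{p}{1-p}\tbe'^{\dagger}\in\mathcal{K}$; by homogeneity of $val$ this holds iff $\tfrac{p}{1-p}\leq val(Q_1(\tbe'))$. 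A strict inequality gives an interior certificate and hence uniqueness of the spike minimizer (exact recovery), while violation means the spike is not even optimal, so $\bw^{\star}\neq\ba^{-1}$; this produces the phase transition at the threshold $p^{\star}$ defined by $\tfrac{p^{\star}}{1-p^{\star}} = val(Q_1(\tbe'))$, identifying $p^{\star} = \Pi^*(\ba,\tba)$. The two explicit bounds then drop out of this defining equation. The upper bound $val(Q_1(\tbe'))\leq p/\Delta(\tbe)$ comes from the feasible single spike placed at the peak of $\tbe'$, whose height is $|\tbe|_{(2)} = \Delta(\tbe)$; the lower bound $val(Q_1(\tbe'))\geq p\cot\angle(\tbe,\be_0)$ comes from Jensen, $\E_I\|\bb\cdot I\|_2\geq\|\E_I[\bb\cdot I]\|_2 = p\|\bb\|_2$, followed by Cauchy--Schwarz. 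Substituting into $\tfrac{p^{\star}}{1-p^{\star}} = val(Q_1(\tbe'))$ yields $1-\tan\angle(\tbe,\be_0)\leq p^{\star}\leq 1-\Delta(\tbe)$.

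I expect the main obstacle to be the rigorous subdifferential step in the infinite-dimensional sequence space: justifying the interchange of expectation and subdifferential by a measurable-selection (Aumann-type) argument, identifying $\partial f(\bb^{\star})$ exactly at the nonsmooth spike, and computing the support function of $\mathcal{K}$ so that membership translates cleanly into $val(Q_1)$. Upgrading mere optimality to uniqueness from a strict certificate, and confirming genuine non-optimality on the other side of the threshold, also require care. Finally, the sharpness clause---equality in the upper bound when $\Delta(\tbe')\leq\Delta(\tbe)$ (equivalently $|\tbe|_{(3)}\leq\Delta(\tbe)\,|\tbe|_{(2)}$)---I would obtain by re-running the spike certificate one level down on $Q_1(\tbe')$: the single spike at the peak of $\tbe'$ solves $Q_1(\tbe')$ exactly, forcing $val(Q_1(\tbe')) = p/\Delta(\tbe)$ and hence $p^{\star} = 1-\Delta(\tbe)$. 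Verifying that the reduced certificate is valid under precisely this gap condition is the delicate bookkeeping step, and I consider it the most technical part of the argument.
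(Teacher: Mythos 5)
Your proposal is correct and is essentially the paper's own argument: the same change of variables to the end-to-end filter reducing $P_1(\tba)$ to minimizing $\E_I\|\psi\cdot I\|_2$ subject to $\langle\tbe,\psi\rangle=1$, the same subgradient computation at the spike obtained by conditioning on $I_0$ (the paper's $p\,\be_0+(1-p)\,{\cal B}_0$ together with the support-function interchange lemma), the same threshold identity $\tfrac{p}{1-p}=val(Q_1(\tbe'))$, the same one-sparse feasible point giving the upper bound $val(Q_1(\tbe'))\leq p/\Delta(\tbe)$, and the same recursive one-level-down spike certificate for the sharpness condition $\Delta(\tbe')\leq\Delta(\tbe)$. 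Your dual-ball packaging of the KKT condition (membership of $\tfrac{p}{1-p}\tbe'$ in the set with support function $\E_J\|\cdot_J\|_2$) and your Jensen-plus-Cauchy--Schwarz derivation of the lower bound $p\cot\angle(\tbe,\be_0)$ are only streamlined restatements of the paper's normalized directional-derivative condition and of its geometric argument via $V_1(\beta_{(0)})\geq p$.
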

	
We comment that theorem 	\ref{thm:Population phase transition} imply previous two theorems. Clearly theorem \ref{thm: single root PT} is a special case for exponential decay forward filter, and theorem \ref{theorem:AltPopulationPhaseTransition} can also be viewed as the natural upper bound side of the functional
$\Pi^*$  in theorem 	\ref{thm:Population phase transition}.

% \Dave{This result fails to deliver on the reader's investment of time to this point. We are on page 14. The reader gets something quite non-tangible (value of an optimization problem which itself requires deciphering) and then all sorts of ancillary details that only add to the deciphering. It would be better to state the most insightful and easily appreciated result first, otherwise essentially no-one will continue reading beyond this moment.}

% \Dave{Also spotlighting this way of looking at the problem pigeon holes the paper as providing detailed insights of very limited generality and unlikely connection to problems most readers could care about.}

\subsection{Main Result $2$: Finite Observation Window, Finite-length Inverse}

With a finite observation window of length $N$ we (surprisingly) still can have exact recovery, starting as soon as $N \geq \Omega(k\log(k))$.

\paragraph{Finite-observation phase transition}

Let 
% $\bV=\bV_k$ 
$\ell_1^k$
denote the collection of bisequences vanishing outside a centered window of radius $k$.

	\begin{theorem}[Finite observation window, finite-length inverse filter]
	\label{thm:fin_guarantee}
	Suppose $Y = \ba \star X$, where:
	\begin{enumerate}
 \item  [{\bf A1}]
$\ba\in l_1(\integers)$ is invertible and the inverse has finite-length:  $(\mathcal{F} \ba)(\widetilde{w}) \neq 0$, $\forall \widetilde{w}\in T$; thus $\ba^{-1}$ exists in $\ell_1(\integers)$; furthermore, $\ba^{-1} \in \ell_1^k$ vanishes off a centered window of length $k$.
\item [{\bf A2}] $X = (X_t)_{t \in \integers}$ is IID with marginal distribution  $pN(0,1) + (1-p) \delta_0$; 
\item [{\bf A3}] $Y = (Y_t)_{t \in \integers}$ is a linear process
obeying $Y = \ba \star X$. Suppose we observe a window $(Y_t)_{t\in \cT}$ of length $N = |\cT|$. 
\end{enumerate}

Consider the convex optimization problem
\BEQ
 \begin{array}{ll}
 \underset{\bw\in \ell_1^k}{  \mbox{minimize} }   & \frac{1}{N}\|\bw*Y\|_{\ell_1^N}\\
 \mbox{subject to}  
& (\widetilde{\ba}* \bw)_0 =1.
\end{array}
\tag{${P}_{1}^{N,k}(\widetilde{\ba})$}
\label{P1_finite}
\EEQ
and let $\bw^\star$  denote any solution of the optimization problem.

	Our result establishes that there exist $\epsilon>0, \delta>0$, so that when the number of observations $N$ satisfies
		\[
N \geq  k\log(\frac{k}{\delta}) ( \frac{C\kappa_\ba}{\epsilon}  )^2,
\]
and the sparsity level $p$ obeys $\frac{1}{N}\leq p < p^\star - \delta_p(N,\epsilon)$,
		then 	with probability exceeding $1-\delta$, $\bw^\star$   is $\ba^{-1}$ up to rescaling and time shift. In the statement, $\kappa_\ba$ is the condition number of the circular matrix with its first column being $\ba$; $C$ is a positive constant independent of $N$ and $k$.
	\end{theorem}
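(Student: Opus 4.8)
The plan is to certify exact recovery through a dual (subgradient) certificate for the finite-$N$ linear program, and then to upgrade the population-level optimality supplied by Theorem \ref{thm:Population phase transition} to the finite sample by concentration. First I would record that $P_1^{N,k}(\widetilde{\ba})$ is a linear program over the $(2k+1)$-dimensional space $\ell_1^k$: writing the objective as $\frac{1}{N}\|M\bw\|_1$, where $M$ is the $N\times(2k+1)$ matrix with $(t,j)$ entry $Y_{t-j}$, and the constraint as $\langle c,\bw\rangle=1$ with $c=\widetilde{\ba}^\dagger$. Since $\ba^{-1}\in\ell_1^k$ and $\ba^{-1}\star Y=X$ on the interior window, the candidate solution produces the exactly sparse residual $X$. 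The point $\bw^\star=\ba^{-1}$ (suitably normalized) is then the unique minimizer precisely when there is a vector $g\in\bR^N$ with $g_t=\mathrm{sign}(X_t)$ on $\mathrm{supp}(X)$, $|g_t|<1$ strictly off the support, and $\frac{1}{N}M^\top g=\lambda c$ for some scalar $\lambda$; strictness off the support, together with nondegeneracy of the active rows, yields uniqueness.

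Second, I would import the population certificate. The proof of Theorem \ref{thm:Population phase transition} identifies the population optimality of $\ba^{-1}$ with $\frac{p}{1-p}=\mathrm{val}(Q_1(\widetilde{\be}'))$ and, for $p<p^\star$, furnishes a strict population certificate: the conditional-expectation template $g^{\mathrm{pop}}$ obeys the analogue of $\frac{1}{N}M^\top g=\lambda c$ in expectation, with the off-support entries bounded away from $\pm1$ by a margin $\eta=\eta(p^\star-p)>0$. Passing to the end-to-end coordinates $\bb=\bw\star\ba$ (so that $\bw\star Y=\bb\star X$ and the constraint becomes one on $\langle\widetilde{\be},\bb^\dagger\rangle$) exposes this margin explicitly, and is where the condition number $\kappa_\ba$ of the circulant of $\ba$ enters: the change of variables between $\bw$-space and $\bb$-space, and its effect on both the margin and the noise scaling, is controlled by $\kappa_\ba$.

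Third, I would instantiate the certificate on the observed data and concentrate. The validity conditions reduce to showing that the empirical correlations $\frac{1}{N}\sum_t Y_{t-j}\,g_t$ lie within $\eta/\kappa_\ba$ of their population values, for the $O(k)$ coordinates $j$ and for the off-support entries. Because $\ba^{-1}$ has finite length $k$, each such average is a sum of random variables with correlation range $O(k)$; I would decompose into $O(k)$-separated blocks (or invoke a Bernstein inequality for weakly dependent sums) to obtain, for the half-normal and sign statistics of the Bernoulli-Gaussian $X$, a deviation of order $\sqrt{k\log(k/\delta)/N}$ after a union bound over the $2k+1$ coordinate conditions and an $\epsilon$-net over the sign degrees of freedom. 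Requiring this deviation below $\eta/\kappa_\ba$ produces the stated sample size $N\geq k\log(k/\delta)(C\kappa_\ba/\epsilon)^2$, with $\epsilon$ proportional to the margin and $\delta_p(N,\epsilon)$ chosen so that $p<p^\star-\delta_p$ guarantees $\eta$ dominates the fluctuation; on this event the certificate is valid and $\bw^\star=\ba^{-1}$ exactly.

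The main obstacle I anticipate is twofold. The serial dependence of the linear process means the concentration step is not a black-box Hoeffding bound; I must exploit the finite length $k$ of the inverse filter to localize correlations and make the block or martingale argument rigorous, and I must control the supremum over the continuum of admissible off-support certificate values rather than a single average. More conceptually, the claim is \emph{exact}, not approximate, recovery at finite $N$: continuity of the objective is not enough, and the argument must instead show that the hard, combinatorial LP optimality event holds with high probability. This is exactly what the strict population margin $\eta>0$ buys \-- it leaves room for the empirical fluctuations so that the optimality inequalities remain satisfied and the minimizer snaps to $\ba^{-1}$ rather than merely landing near it.
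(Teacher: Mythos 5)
Your overall strategy---certify optimality of $\ba^{-1}$ at finite $N$ by a KKT condition carrying a strict population margin, then concentrate the empirical quantities around their population values---is the same skeleton as the paper's proof, but the paper runs it in primal form, and your dual-certificate implementation has a genuine gap at its center. An exact dual certificate requires the stationarity equation $\frac{1}{N}M^\top g=\lambda c$ to hold \emph{exactly}; approximate satisfaction certifies nothing for an LP. So your step ``the validity conditions reduce to showing that the empirical correlations $\frac{1}{N}\sum_t Y_{t-j}\,g_t$ lie within $\eta/\kappa_\ba$ of their population values'' does not suffice as stated: after instantiating the population template $g^{\mathrm{pop}}$ on the data you hold only an approximate certificate, and you would still need to correct it---solve the $2k+1$ residual equations in the off-support entries of $g$ and show the correction keeps $\|g\|_\infty<1$ strictly---which requires a quantitative lower bound on the conditioning of the map from off-support $g$-values to those $2k+1$ correlations, a step your outline neither states nor supplies. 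The clean repair is exactly the paper's move: by convex duality, existence of an exact certificate is equivalent to nonnegativity of the one-sided directional derivative of the empirical objective at $\be_0$ in every feasible direction $\beta$, and it is this quantified-over-$\beta$ statement that one concentrates. The paper computes the empirical directional derivative in closed form (Theorem~\ref{thm:fin_deriv}), splits it into the three terms $\frac{1}{N}\|X\|_{1,J_N}\beta_0$, $\frac{1}{N}\|X*\beta_{(0)}\|_{1,\cT-J_N}$, and a sign-correlation remainder, and bounds the supremum of the deviation over directions by symmetrization plus Talagrand's contraction principle (Lemma~\ref{lem:tail_W}), exploiting that the supremum of the resulting linear functional over $a*B_1(V_k)$ is attained at the $k$ vertices of the $\ell_1$-ball---which is how the factor $k\log(k/\delta)$ enters without any net.

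A second, smaller defect: your $\epsilon$-net ``over the sign degrees of freedom'' is aimed at the wrong set. The off-support certificate values live in an $N$-dimensional cube, so a net over them is exponentially large in $N$; what must be discretized, if one nets at all, is the $O(k)$-dimensional set of unit directions $\beta$ with $\tbe^\top\beta=0$. Your blocking/weak-dependence Bernstein plan for the serially correlated sums is workable and is morally the same as the paper's moment-Bernstein bound for $\sum_j\varepsilon_j X_{j-s}$, and your use of $\kappa_\ba$ to transfer the margin between $\bw$-coordinates and the end-to-end coordinates $\bb=\bw\star\ba$ matches the paper's $\lambda_m(\ba),\lambda_M(\ba)$ bookkeeping in $\mu_{\mathrm{min}}$ (Lemma~\ref{lem:mu_min}) and Theorem~\ref{thm:concen_W}, as does your margin band $\delta_p(N,\epsilon)$. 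With the single-certificate step replaced by the directional-derivative equivalence (or completed by an explicit correction-and-conditioning argument), your proof would go through and deliver the same rate $N\geq k\log(k/\delta)\left(\frac{C\kappa_\ba}{\epsilon}\right)^2$.
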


\subsection{Main Result $3$: Stability Guarantee with Finite Length Inverse Filter}
The results so far concern the ideal setting when true inverse filter has a known length $k$ and we use $k$ to set up a correctly matched optimization problem. In practice we do not know $k$ and $k$ might even be infinite.

We can provide practical guarantees even when the inverse filter is an infinite length inverse. To develop these, we must be more technical about the situation. We assume that $\ba$ has a Z-transform having $N_{-}$ roots and $N_{+}$ poles $(s_{i})$ inside the unit circle and we construct a finite length approximation $\bw$ to  $\ba^{-1}$, in fact of length
$r(N_{-}+N_{+})$.
This approximation has error 
$\|\bw*\ba-\be_0\|_2= O(\max_i |s|_{i}^{r}  )$. 

Since the objective value $\E_I\|\bw\cdot I\|_2$ of this approximation $\bw$ is an upper bound of the optimal value of the optimization solution $\bw^\star$, we could use the objective value upper bound  to derive a upper bound for $\|\bw^\star*\ba-\be_0\|_2= O(
\max_i |s|_{i}^{r}  )$ when $p<p^\star$, where the constant of this upper bound is determined by the Bi-Lipschitz constant of the finite difference of objective $\E_I\|\bw\cdot I\|_2-\E_I\|(\be_0)\cdot I\|_2$.

\paragraph{Approximation theory for infinite length inverse filter}
\begin{theorem}[Approximation theory for infinite length inverse filter based on roots of Z-transform]
\label{thm:RootNorm}
Let the finite-length forward filter $\ba$ have a Z-transform with roots inside the unit circle, namely $s_{k}:= e^{-\rho_k + i \varphi_k }$ with $|s_k|<1$ and $\rho_k>0$
for $k \in \{-N_{-}, \ldots -1, 1, \ldots N_{+}\}$.  Let $\cI = \{-N_{-},\ldots, -1, 1, \ldots, N_{+} \}$ as the set of all the possible indexes.
\[
 \begin{array}{ll}
A(z) &=  \sum_{i=-N_{-}}^{N_{+}} a_{i} z^{-i}
\\
&= c_0\prod_{j=1}^{N_{-}} (1- s_{-j} z) \prod_{i=1}^{N_{+}} (1- s_{i} z^{-1});
\end{array}
  \]
here $c_0$ is a constant ensuring $a_0=1$.

Then for a scalar $r$
% vector index $r=(r_{-N_-}, \ldots, r_{-1}, r_{1},\ldots, r_{N_+} )$
, we construct an approximate inverse filter $\bw^r$ with Z-transform
\[
 \begin{array}{ll}
W(z) 
&=  \frac{1}{c_0}\prod_{j=1}^{N_{-}} (\sum_{\ell_j=0}^{r-1} s^{\ell_j}_{-j} z^{-\ell_j} ) \prod_{i=1}^{N_{+}} (\sum_{\ell_i=0}^{r-1} s^{\ell_i}_{i} z^{\ell_i} )\\
&=  \frac{1}{c_0}\prod_{j=1}^{N_{-}} (1- (s_{-j} z)^{r})(1- s_{-j} z)^{-1} \prod_{i=1}^{N_{+}}(1- (s_{i}  z^{-1})^{r}) (1- s_{i} z^{-1})^{-1}.
\end{array}
\]
We have
\[
 \begin{array}{ll}
 \|\bw^r*\ba-\be_0\|_2^2  
 &= \sum_{n\in \{1,2,3,\ldots,|\cI| \} }\sum_{ k_1, \ldots, k_n  \in \cI } \prod_{i\in [n]}|s_{k_i}|^{2r }
\end{array}
  \]
%   Let $\gamma^m := \min_{i}\big(r_{i} |\rho_{i}| \big) =  \min\{r_{-N_-}|\rho_{-N_-}|, \ldots, r_{-1}|\rho_{-1}|, r_{1}|\rho_{1}|, \ldots, r_{N_+}|\rho_{N_+}|\}.$
as $r \rightarrow \infty$, this converges to zero at an exponential rate, determined by the slowest decaying term,
\[
 \begin{array}{ll}
\|\bw^r*\ba-\be_0\|_2
&=  O( |s|_{(1)}^{r}  ) , \quad r \rightarrow \infty.
\end{array}
  \]
  where $|s|_{(1)}$ is the largest absolute value root.
 \end{theorem}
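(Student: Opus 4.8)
The plan is to reduce everything to a single algebraic identity for the \emph{end-to-end} Z-transform $W(z)A(z)$, and then read off both the squared-norm expansion and its decay rate from that identity.

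First I would observe that the truncated geometric sums defining $W$ telescope against the factors of $A$. Since $\sum_{\ell=0}^{r-1}x^{\ell}=(1-x^{r})/(1-x)$, in the product $W(z)A(z)$ the denominators $(1-s_{-j}z)^{-1}$ and $(1-s_{i}z^{-1})^{-1}$ cancel the matching factors of $A(z)$, and the constants $c_0$ and $1/c_0$ cancel as well. Writing $\zeta_k:=s_k z$ for $k<0$ and $\zeta_k:=s_k z^{-1}$ for $k>0$, this leaves the clean identity
\[
W(z)A(z)=\prod_{k\in\cI}\bigl(1-\zeta_k^{\,r}\bigr),
\]
so the error filter $\bw^{r}*\ba-\be_0$ has Z-transform $\prod_{k\in\cI}(1-\zeta_k^{\,r})-1$, a Laurent polynomial with only finitely many (at most $2^{|\cI|}-1$) nonzero monomials.

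Next I would expand the product over subsets. Each nonempty $S\subseteq\cI$ contributes one monomial $(-1)^{|S|}c_S\,z^{\,r\,d_S}$, where $c_S:=\prod_{k\in S}s_k^{\,r}$ and $d_S$ equals the number of negative indices minus the number of positive indices in $S$ (coming from the $z$ versus $z^{-1}$ assignment in $\zeta_k$). Collecting monomials by their power of $z$ and applying Parseval on the unit circle — equivalently, summing squared coefficients of the Laurent polynomial — yields the squared norm. The diagonal part of this computation, the contribution of each subset against itself (where $|c_S|^2=\prod_{k\in S}|s_k|^{2r}$), is exactly
\[
\sum_{n=1}^{|\cI|}\ \sum_{\substack{S\subseteq\cI\\ |S|=n}}\ \prod_{k\in S}|s_k|^{2r},
\]
which is the stated formula organized by $n=|S|$. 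The asymptotic rate then follows immediately: every term is a product $\prod_{k\in S}|s_k|^{2r}$ with all $|s_k|<1$, hence decays geometrically in $r$; the slowest-decaying terms are the singletons $|s_k|^{2r}$, the largest of which is $|s|_{(1)}^{2r}$. Since $\cI$ is finite, summing the finitely many terms preserves the rate, so $\|\bw^{r}*\ba-\be_0\|_2^2=O(|s|_{(1)}^{2r})$ and therefore $\|\bw^{r}*\ba-\be_0\|_2=O(|s|_{(1)}^{r})$.

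The main obstacle is the $\ell_2$ bookkeeping. Distinct subsets $S\neq S'$ can share the same exponent $d_S=d_{S'}$ — for instance all negative-index singletons sit at power $z^{r}$ — so the squared norm is \emph{not} literally the sum of squared monomial magnitudes: one must first group monomials by their power of $z$, which produces cross terms $2\,\mathrm{Re}(c_S\overline{c_{S'}})$ beyond the diagonal sum above. For the exact identity one would track these collisions explicitly; for the decay rate they are harmless, since each cross term satisfies $|2\,\mathrm{Re}(c_S\overline{c_{S'}})|\le 2|c_S||c_{S'}|\le 2|s|_{(1)}^{2r}$ (as $|S|+|S'|\ge 2$), and thus affects only the constant, not the exponential rate. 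No convergence issue arises, because $|s|_{(1)}<1$ (all roots lie strictly inside the unit circle) and the number of contributing monomials is finite.
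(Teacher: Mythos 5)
Your proposal follows essentially the same route as the paper's proof: both cancel the truncated geometric sums against the factors of $A(z)$ to obtain the end-to-end identity $W(z)A(z)=\prod_{k\in\cI}(1-\zeta_k^{\,r})$, expand the product minus one into finitely many monomials, and compute $\|\bw^r*\ba-\be_0\|_2^2$ by Fourier isometry on the unit circle, reading the exponential rate off the surviving terms. Where you differ is in the $\ell_2$ bookkeeping, and there your version is the more careful one. The paper asserts that $|\Phi(e^{2\pi i t})|^2$ is independent of $t$ and justifies the exact squared-norm identity by an oscillation integral of the form $\int_{-1/2}^{1/2}\exp\bigl(i\,[(\upsilon_{k_1}(t)-\upsilon_{k_1}(t))+\cdots]\bigr)\,dt=1$; but that computation pairs each expansion term only with itself, i.e.\ it verifies exactly the diagonal contribution. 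Cross terms between distinct subsets sharing the same power of $z$ — for instance two roots on the same side of the origin, whose singleton terms both sit at $z^{r}$ — survive the integration and contribute $2\,\mathrm{Re}\bigl(s_k^r\,\overline{s_{k'}^r}\bigr)$, so the claimed exact equality can fail in general; this is precisely the collision issue you flag. As you observe, each such cross term is bounded in magnitude by $2|s|_{(1)}^{2r}$ and there are only finitely many, so the asymptotic conclusion $\|\bw^r*\ba-\be_0\|_2=O(|s|_{(1)}^{r})$ is unaffected. Your proof of the rate statement is therefore complete, and your caveat about the exact identity is not a gap in your argument but a genuine correction to the paper's.
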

 
% \Dave{What is $\cI$?}
 
 \paragraph{Stability for infinite length inverse filter}
 \begin{theorem}[Stability for infinite length inverse filter]
 \label{thm:bd_stab}
 Let $\ba\in V_{N_-, N_+} $ be a forward filter with all the roots of Z-transform strictly in the unit circle. Let $\bw^\star \in V_{(r-1)N_-, (r-1)N_+} $ be the solution of the convex optimization problem. Let $\bw^r$ be the constructed filter in previous theorem with a uniform vector index $(r, \ldots, r, r,\ldots, r)$.
  Then provided $p<p^\star$, as 
  $r \rightarrow \infty$, 
  \[
 \begin{array}{ll}
  \| \bw^\star*\ba -\be_0\|_2 
   \leq
    O(
 |s|_{(1)}^{r}  ), \quad r \rightarrow \infty.
\end{array}
  \]
  where $|s|_{(1)}$ is the largest absolute value root.
In words, \textbf{the Euclidean distance between $\bw^\star*\ba$ and $\be_0$ converges to zero at an exponential rate as the approximation length is allowed to increase.} 
 \end{theorem}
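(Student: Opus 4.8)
The plan is to pass to the \emph{end-to-end} filter and reduce the claim to a sharp (linearly growing) error bound for the population objective around its minimizer $\be_0$, and then play a cheap test-vector upper bound against that error bound. First I would change variables. Writing $\bb = \bw \star \ba$ for the end-to-end filter and recalling that $X_t = I_t G_t$ with $I$ Bernoulli$(p)$ and $G$ standard Gaussian independent, conditioning on $I$ gives
\[
\Expect|(\bw \star Y)_0| = \Expect|(\bb \star X)_0| = \sqrt{2/\pi}\,\E_I\|\bb \cdot I\|_2 ,
\]
so the objective is, up to the fixed constant $\sqrt{2/\pi}$, the functional $f(\bb) := \E_I\|\bb\cdot I\|_2$ that already appears in $Q_1$. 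Under this substitution the correct answer $\bw=\ba^{-1}$ corresponds to $\bb=\be_0$, the constraint $\langle\tba,\bw^\dagger\rangle=1$ becomes a single linear constraint on $\bb$ through $\tbe=\tba\star\ba^{-1}$ (so that, with the peak-at-origin normalization $\tbe_0=|\tbe|_{(1)}=1$, the target $\be_0$ is feasible), and the quantity to be controlled is exactly $\|\bw^\star\star\ba-\be_0\|_2=\|\bb^\star-\be_0\|_2$, where $\bb^\star=\bw^\star\star\ba$.

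Next I would produce the upper bound on the attained value using the explicit construction of Theorem \ref{thm:RootNorm}. The filter $\bw^r$ lies in the truncation $V_{(r-1)N_-,(r-1)N_+}$, the same finite-length space in which $\bw^\star$ is sought, and after rescaling by the factor $1/(\tbe\star\bb^r)_0 = 1+O(|s|_{(1)}^r)$ needed to meet the constraint it is feasible. Since $f$ is $1$-Lipschitz in $\|\cdot\|_2$ (masking only shrinks a norm) and $\|\bb^r-\be_0\|_2=O(|s|_{(1)}^r)$ by Theorem \ref{thm:RootNorm}, we get $f(\bb^r)-f(\be_0)=O(|s|_{(1)}^r)$. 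Optimality of $\bw^\star$ over the truncated feasible set then sandwiches the value as
\[
f(\be_0) \;\leq\; f(\bb^\star) \;\leq\; f(\bb^r),
\]
where the left inequality holds because the truncated feasible set sits inside the full $\ell_1$ feasible set, on which $\be_0$ is the global minimizer whenever $p<p^\star$ (Theorem \ref{thm:Population phase transition}). Note that $\be_0$ itself is \emph{not} attainable by a finite $\bw$ in the infinite-inverse setting, so the gap $f(\bb^\star)-f(\be_0)$ is genuinely positive; we have just shown it is $O(|s|_{(1)}^r)$.

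The crux is the matching \emph{linear} error bound: there is a constant $c=c(p^\star-p)>0$, independent of $r$, with $f(\bb)-f(\be_0)\geq c\,\|\bb-\be_0\|_2$ for every feasible $\bb$. The one-sided directional derivative of $f$ at $\be_0$ along a feasible direction $\mathbf{h}$ is
\[
f'(\be_0;\mathbf{h}) \;=\; p\,\mathbf{h}_0 + (1-p)\,\E_I\|\mathbf{h}'\cdot I\|_2 ,
\]
where $\mathbf{h}'$ is $\mathbf{h}$ with its $0$-th coordinate deleted; eliminating $\mathbf{h}_0$ via the linearized constraint $\langle\tbe,\mathbf{h}^\dagger\rangle=0$ turns the positivity of $f'(\be_0;\cdot)$ over feasible directions into precisely the inequality $\tfrac{p}{1-p}\leq \operatorname{val}(Q_1(\tbe'))$, which is the characterization of $p^\star$ in Theorem \ref{thm:Population phase transition}. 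When $p<p^\star$ this holds strictly, giving $f'(\be_0;\mathbf{h})\geq c\|\mathbf{h}\|_2$ with $c$ governed by the margin $\operatorname{val}(Q_1(\tbe'))-\tfrac{p}{1-p}$; since $f'(\be_0;\cdot)$ is positively homogeneous and $f$ is convex, the first-order inequality $f(\bb)-f(\be_0)\geq f'(\be_0;\bb-\be_0)$ upgrades this to the claimed global error bound on the feasible affine set. Combining with the upper bound, $c\,\|\bb^\star-\be_0\|_2\leq f(\bb^\star)-f(\be_0)=O(|s|_{(1)}^r)$, which is the assertion.

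The main obstacle is this third step, and specifically keeping the sharpness constant $c$ \emph{uniform in} $r$. The ambient dimension of the truncated problem grows with $r$, so a dimension-dependent error bound would be worthless; the point to argue carefully is that the margin originates from the infinite-dimensional quantity $\operatorname{val}(Q_1(\tbe'))-\tfrac{p}{1-p}$, fixed once $(\ba,\tba,p)$ are fixed, and hence the lower bound on $f'(\be_0;\mathbf{h})$ holds with one and the same $c$ over all finitely supported feasible directions simultaneously. A secondary but routine matter is checking that the rescaling of $\bw^r$ to feasibility and the $1$-Lipschitz comparison between the $\ell_2$ truncation error and the change in $f$ each cost only a $1+O(|s|_{(1)}^r)$ factor, and so do not disturb the exponential rate $|s|_{(1)}^r$.
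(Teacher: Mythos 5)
Your proposal is correct and takes essentially the same route as the paper's own proof: an upper bound on the attained objective obtained by playing the constructed truncated inverse $\bw^r$ of Theorem~\ref{thm:RootNorm} against the optimality of $\bw^\star$, combined with a lower bound on the objective gap that is linear in $\|\bw^\star\star\ba-\be_0\|_2$, which the paper packages as positivity and boundedness of the finite-difference ratio $B(\be_0,\phi)$ (its bi-Lipschitz theorem) whenever $p<p^\star$. If anything, your explicit handling of the feasibility rescaling of $\bw^r$ and of the uniformity in $r$ of the sharpness constant $c$ is more careful than the paper's one-line invocation of the bi-Lipschitz property at that step.
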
 
 
%  \Dave{I don't understand the roots of Z transform comment. Did you think clearly about this? I think what you mean is no roots are on the unit circle, some number are inside and some number are outside. Right now it looks very doubtful.}

\subsection{Main Result $4$: Robustness Against Stochastic Noise and Adversarial Noise}
We now extend the previous analysis of an exactly sparse model of $X$, exactly observed, to the more practical setting of approximate sparsity and observation noise. We consider two cases: first, we add stochastic noise as an independent Gaussian linear process, and second, we add adversarial noise with bounded $\ell_\infty$ norm.  

In each scenario, since the noisy objective value at $\be_0$ is an upper bound on the optimal value of the optimization solution $\bw^\star$, we use this upper bound on the objective value  to derive an upper bound of $\|\bw^\star*\ba-\be_0\|_2$ when $p<p^\star$.
%, where the constant of this upper bound is determined by the Bi-Lipschitzness constant of finite difference of objective $\E_I\|\bw\cdot I\|_2-\E_I\|(\be_0)\cdot I\|_2$. 
The upper bound shows that the distance $\|\bw^\star*\ba-\be_0\|_2$ is bounded by the (appropriately measured)
magnitude of input noise in both cases.

\paragraph{Robustness under stochastic noise }
 \begin{theorem}[Robustness against Gaussian Linear Process noise]
 \label{thm:rand_exact_rob}
 We consider a Gaussian linear process $Z = \sigma \cdot \bb \star G$ where: $\sigma > 0$ denotes the noise level;   $\bb$ is
 a bisequence having  unit $\ell_2$ norm $\|\bb\|_2=1$; and $G$ is a standard Normal iid bisequence. Suppose that we observe:
 \[
\begin{array}{ll}
Y&= \ba* (X+ Z).
\end{array}
\]
 
 \newcommand{\wpsi}{{\bf w}}
 Let $\wpsi^\star$ be the solution of the convex optimization problem in Eq.(4.1);
 %for the moving average random noisy model
%  \[
%\begin{array}{ll}
%Y&= \ba* (X+ \sigma \bb*G),
%\end{array}
%\]

\label{thm:Bound_Normal_Noise}
\[
 \E_I  \| (\wpsi^\star)\cdot I \|_2 -\E_I  \| (\be_0)\cdot I \|_2 \leq  (1-p) \sigma +
 p(\sqrt{1+ \sigma^2 }-1).
\]
%When $\sigma\leq 1,$
%\[
%(1-p) \sigma +
% p(\sqrt{1+ \sigma^2 }-1)  \leq \sigma;
%\]
%therefore, 
When $p<p^\star$, $\sigma\leq 1$, there exists a constant $C$,
\[
  \| \bw^\star*\ba -\be_0\|_2 = \|\wpsi^\star -\be_0\|_2 \leq C \sigma.
\]
In words, \textbf{the Euclidean distance between $\bw^\star \star \ba$ and $\be_0$ is bounded linearly by the magnitude of stochastic noise}.
\end{theorem}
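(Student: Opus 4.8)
The plan is to reduce everything to the \emph{end-to-end} filter $\wpsi = \bw\star\ba$ and to exploit that $\be_0$ (corresponding to $\bw=\ba^{-1}$) is a feasible point whose objective value is available in closed form. First I would record that, writing $X_t = I_t N_t$ with $I$ an iid Bernoulli$(p)$ indicator process and $N$ standard Gaussian, and using independence of $G$ from $(I,N)$, the filtered observation decomposes as
\[
(\bw\star Y)_0 = (\wpsi\star X)_0 + \sigma(\wpsi\star\bb\star G)_0 .
\]
Conditional on $I$, the first term is mean-zero Gaussian with variance $\|\wpsi\cdot I\|_2^2$ (stationarity of $I$ makes the index reversal immaterial in expectation) and the second is an independent mean-zero Gaussian with variance $\sigma^2\|\wpsi\star\bb\|_2^2$. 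Hence the population objective equals $\sqrt{2/\pi}\,J_\sigma(\wpsi)$ with
\[
J_\sigma(\wpsi) := \E_I\sqrt{\|\wpsi\cdot I\|_2^2 + \sigma^2\|\wpsi\star\bb\|_2^2},
\]
while the noiseless objective of $Q_1$ is $\sqrt{2/\pi}\,J(\wpsi)$ with $J(\wpsi):=\E_I\|\wpsi\cdot I\|_2$. The common factor $\sqrt{2/\pi}$ plays no role and I drop it.

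Next I would prove the first inequality by a sandwich. The elementary bound $\sqrt{a^2+c}\ge a$ gives $J(\wpsi)\le J_\sigma(\wpsi)$ for every $\wpsi$. Since $\wpsi^\star=\bw^\star\star\ba$ minimizes the noisy objective $J_\sigma$ over the feasible affine set, and $\be_0$ is feasible after the normalization $\tbe_0=1$, I have $J_\sigma(\wpsi^\star)\le J_\sigma(\be_0)$. I then evaluate $J_\sigma(\be_0)$ exactly: $\be_0\cdot I = I_0\,\be_0$ with $I_0\in\{0,1\}$, and $\be_0\star\bb=\bb$ with $\|\bb\|_2=1$, so $J_\sigma(\be_0)=\E_I\sqrt{I_0+\sigma^2}=p\sqrt{1+\sigma^2}+(1-p)\sigma$, whereas $J(\be_0)=\E_I\sqrt{I_0}=p$. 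Chaining $J(\wpsi^\star)\le J_\sigma(\wpsi^\star)\le J_\sigma(\be_0)$ and subtracting $J(\be_0)=p$ gives exactly
\[
J(\wpsi^\star)-J(\be_0)\le (1-p)\sigma+p\bigl(\sqrt{1+\sigma^2}-1\bigr),
\]
which is the first claim.

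For the second claim I would convert this objective-value gap into a distance bound using a sharp-minimum (bi-Lipschitz) estimate valid strictly below threshold. When $p<p^\star$, Theorem~\ref{thm:Population phase transition} shows $\be_0$ is the unique constrained minimizer of $J$, and the strict inequality $\frac{p}{1-p}<val(Q_1(\tbe'))$ supplies a positive margin in the optimality condition. I would upgrade this margin to the linear growth bound
\[
J(\wpsi)-J(\be_0)\ge \mu\,\|\wpsi-\be_0\|_2
\]
for all feasible $\wpsi$, with $\mu=\mu(p,p^\star,\tbe)>0$; this $\mu$ is precisely the lower bi-Lipschitz constant of the finite-difference functional underlying Main Result~3 (cf.\ Theorem~\ref{thm:bd_stab}). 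Combining with the first claim and using, for $\sigma\le1$, the estimate $\sqrt{1+\sigma^2}-1=\sigma^2/(\sqrt{1+\sigma^2}+1)\le\sigma/2$, the right-hand side is at most $(1-p)\sigma+p\sigma/2\le\tfrac32\sigma$, whence
\[
\|\wpsi^\star-\be_0\|_2\le \frac{1}{\mu}\Bigl((1-p)\sigma+p(\sqrt{1+\sigma^2}-1)\Bigr)\le \frac{3}{2\mu}\,\sigma = C\sigma,
\]
which is the second claim since $\wpsi^\star-\be_0=\bw^\star\star\ba-\be_0$.

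The main obstacle is the sharp-minimum estimate $J(\wpsi)-J(\be_0)\ge\mu\|\wpsi-\be_0\|_2$ with an explicit positive $\mu$ for $p<p^\star$. The difficulty is that $J$ is only positively homogeneous and piecewise-smooth (an expectation of a random Euclidean norm of a masked vector), so ordinary strong convexity is unavailable; I expect to extract linear growth from the subdifferential of $J$ at $\be_0$ together with the active linear constraint, showing the constraint direction is separated from the subdifferential by a gap proportional to $val(Q_1(\tbe'))-\frac{p}{1-p}$. Establishing uniformity of $\mu$ over the whole feasible set, rather than only locally near $\be_0$, is the most delicate point, and is where convexity of $J$ and the restriction $\sigma\le1$ (keeping $\wpsi^\star$ in a bounded neighborhood) are used.
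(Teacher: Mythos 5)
Your proposal matches the paper's proof essentially step for step: the same reduction to the end-to-end filter $\wpsi=\bw\star\ba$ with conditional-Gaussian objective $\E_I\sqrt{\|\wpsi\cdot I\|_2^2+\sigma^2\|\wpsi\star\bb\|_2^2}$, the same sandwich chain $F_0(\wpsi^\star)\le F_\sigma(\wpsi^\star)\le F_\sigma(\be_0)$ with the exact evaluation $F_\sigma(\be_0)\propto p\sqrt{1+\sigma^2}+(1-p)\sigma$, and the same conversion of the objective gap to $\|\wpsi^\star-\be_0\|_2$ via the positive lower bound on the finite-difference ratio $B(\be_0,\cdot)$ for $p<p^\star$ (the paper's bi-Lipschitz technical tool, which you correctly identify as the only delicate ingredient). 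The argument is correct and is the paper's argument.
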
  

\newcommand{\bc}{{\bf c}}
\paragraph{Robustness under adversarial noise }
 \begin{theorem}[Robustness under adversarial noise with $\ell_p$ norm bound]
\label{thm:adv_exact_rob}
Suppose that an adversary chooses a disturbance bisequence
$\bc$ subject to the constraint:	
\[
	\| \bc\|_\infty \leq \eta;
\]
and perturbs the observation process $Y$ via:
	\[
	\begin{array}{ll}
	Y&= \ba* (X+  \bc) .
	\end{array}
	\]
\newcommand{\apsi}{{\bf v}}
Let $\bw^\star$ denote the solution of the convex optimization problem~\ref{eqn:P1}.
Define $\apsi^\star = (\ba*\bw^\star)^{\dagger}$, then $\apsi^\star$ satisfies the following bound on objective difference:
 \[
 \begin{array}{ll}
 \E_I  \| (\apsi^\star)\cdot I \|_2 -\E_I  \| (\be_0)\cdot I \|_2 \leq 
 p\sqrt{\frac{2}{\pi}}(\mathcal{R}(\eta) - 1)+(1-p)\eta.
\end{array}
  \]  
 
Therefore, when $p<p^\star$, there exists a constant $C'$, so that
 \[
 \| \bw^\star*\ba -\be_0\|_2 = \|\apsi^\star -\be_0\|_2  \leq C'  \eta, \forall \eta > 0.
 \]
 In words, \textbf{the Euclidean distance between $\bw^\star \star \ba$ and $\be_0$ is at most proportional to the magnitude of the adversarial noise}.

\end{theorem}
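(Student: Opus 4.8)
The plan is to sandwich the optimal objective value between a lower bound expressed through the noiseless functional $F(\bb):=\E_I\|\bb\cdot I\|_2$ and an upper bound obtained by evaluating the objective at the ideal inverse filter, and then to convert the resulting objective gap into an $\ell_2$ error bound via a sharpness (error-bound) inequality that is available precisely because $p<p^\star$. Throughout I write $\bb^\star:=\ba*\bw^\star$, so that $\apsi^\star=(\bb^\star)^\dagger$; since the support indicator $I$ is i.i.d.\ and hence invariant under time reversal, $\E_I\|\apsi^\star\cdot I\|_2=\E_I\|\bb^\star\cdot I\|_2=F(\bb^\star)$, and likewise $\|\apsi^\star-\be_0\|_2=\|\bb^\star-\be_0\|_2$. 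Because $\be_0$ is feasible (it corresponds to $\bw=\ba^{-1}$, which satisfies the linear constraint), optimality of $\bw^\star$ gives $J(\bw^\star)\le J(\ba^{-1})$, where $J$ denotes the population objective $\E_X|(\bw*Y)_0|$ under the contaminated observations $Y=\ba*(X+\bc)$.

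For the lower bound I would use that $\bw*Y=\bb*X+\bb*\bc$, where $\bb*\bc$ is a deterministic bisequence. Conditioning on $I$, the term $(\bb*X)_t$ is a symmetric (mean-zero Gaussian) random variable, so for the deterministic shift $m_t=(\bb*\bc)_t$ the convexity-plus-symmetry inequality $\E|(\bb*X)_t+m_t|\ge \E|(\bb*X)_t|$ holds coordinatewise; averaging over $I$ and over the window then yields $J(\bw^\star)\ge \sqrt{2/\pi}\,F(\bb^\star)$. For the upper bound I evaluate at $\bb=\be_0$, where $\bw*Y=X+\bc$ and the objective decouples across coordinates: with $\mathcal{R}(\eta):=\E|N(\eta,1)|/\sqrt{2/\pi}$, on each signal-present site the contribution is $\E|N(c_t,1)|\le \E|N(\eta,1)|=\sqrt{2/\pi}\,\mathcal{R}(\eta)$ (this expectation is increasing in $|c_t|$), while on each signal-absent site it is $|c_t|\le\eta$. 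Since sites are signal-present with probability $p$, this gives $J(\ba^{-1})\le p\sqrt{2/\pi}\,\mathcal{R}(\eta)+(1-p)\eta$, and subtracting the noiseless value at $\be_0$ produces exactly the stated objective-difference bound, $p\sqrt{2/\pi}(\mathcal{R}(\eta)-1)+(1-p)\eta$.

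The final step converts this objective gap into the distance bound. When $p<p^\star$, the population phase-transition analysis shows that $\be_0$ is the unique minimizer of $F$ over the feasible affine set, and in fact with a strictly positive optimality margin. Because $F$ is convex and positively homogeneous of degree one, this margin upgrades to a linear-growth (Bi-Lipschitz lower) bound $F(\bb)-F(\be_0)\ge \mu\,\|\bb-\be_0\|_2$ for all feasible $\bb$, with $\mu=\mu(p^\star-p)>0$. Combining with the objective-difference bound, and using the elementary estimate $\mathcal{R}(\eta)-1\le\sqrt{\pi/2}\,\eta$ (so that the entire right-hand side is at most a constant multiple of $\eta$ for every $\eta>0$, not just for small $\eta$), gives $\mu\,\|\bb^\star-\be_0\|_2=O(\eta)$, i.e.\ $\|\bw^\star*\ba-\be_0\|_2\le C'\eta$ with $C'$ proportional to $1/\mu$.

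I expect the main obstacle to be establishing the sharpness constant $\mu>0$ and tracking its dependence on the gap $p^\star-p$: this is the quantitative heart of the argument, and it must be inherited from (or proved alongside) the strict-optimality characterization of $\be_0$ used in the population phase-transition theorem, since without a uniformly positive margin a vanishing objective gap need not force a vanishing $\ell_2$ error. A secondary technical point is the treatment of the deterministic term $\bb*\bc$ inside the ergodic time-averaged objective; the key is that both the symmetrization lower bound and the coordinatewise worst-case upper bound are insensitive to how the fixed perturbation is distributed across time, which is exactly what lets the estimate hold uniformly over all admissible $\bc$ with $\|\bc\|_\infty\le\eta$ and thereby cover the adversarial (worst-case) choice.
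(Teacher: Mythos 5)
Your proposal matches the paper's proof essentially step for step: the same sandwich $F_0(\be_0)\le F_0(\psi^\star)\le F(\psi^\star,\bc)\le F(\be_0,\bc)\le F_\eta(\be_0)$ (optimality of $\bw^\star$ at the fixed adversarial $\bc$, symmetry/Jensen to discard the deterministic shift $\bb\star\bc$, and the coordinatewise folded-Gaussian evaluation $F_\eta(\be_0)=p\sqrt{2/\pi}\,\mathcal{R}(\eta)+(1-p)\eta$), followed by the same conversion of the objective gap into $\|\bw^\star\star\ba-\be_0\|_2$ through the positive lower bound on the finite-difference quantity $B(\be_0,\phi)$ that the paper establishes for $p<p^\star$. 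Your one deviation --- the global linear estimate $\mathcal{R}(\eta)-1\le\sqrt{\pi/2}\,\eta$ in place of the paper's quadratic bounds stated on bounded ranges --- is correct (it is just the triangle inequality $\E|G+\eta|\le\E|G|+\eta$) and in fact justifies the ``for all $\eta>0$'' claim slightly more cleanly than the paper does.
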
  

\textbf{Remark}: Here $\mathcal{R}(\eta)$
is the folded Gaussian mean, for standard Gaussian $G$:
\[
\mathcal{R}(\eta) :=  \sqrt{\frac{\pi}{2}} E_G |\eta+  G| = \exp\{-\eta^2/2\} + \sqrt{\frac{\pi}{2}}\eta \left(1 - 2\Phi\left(-\eta\right) \right).
\]  
$\mathcal{R}(\eta)-1$ is an even function that is monotonically non-decreasing for $\eta \geq 0$ with quadratic upper and lower bound: there exists constants $C_1\leq C_2$,
  \[
C_1 \eta^2 \leq \mathcal{R}(\eta)-1 \leq C_2 \eta^2, \forall \eta.
\]

\newcommand{\bu}{{\bf u}}

\subsection{Non-convex Blind deconvolution initialization method}
Now we study how to get the initial guess $\widetilde{\ba}$.

\paragraph{}
First, let $C_Y$ denote the circular embedding of $Y$;  we can define $\Bar{Y} = (C_Y C_Y^T)^{-1/2}Y$, then from $Y=\ba*X$, we get $C_Y = C_a C_X$, then
\[
C_{\Bar{Y}} =  (C_Y C_Y^T)^{-1/2}C_Y = (C_a C_X C_X^T  C_a^T)^{-1/2}C_a C_X
\]
as $X$ are IID Bernoulli-Gaussian, 
we know approximately 
\[
C_{\Bar{Y}} =  (C_Y C_Y^T)^{-1/2}C_Y = (C_a C_X C_X^T  C_a^T)^{-1/2}C_a C_X \approx (C_a C_a^T)^{-1/2}C_a C_X
\]
We can also define $C_{\Bar{a}}=(C_a C_a^T)^{-1/2}C_a$.

Let $S_{\kappa_1 \times \kappa_2, K}$ be Riemannian manifold of $K$-channel $\kappa_1 \times \kappa_2$ convolution dictionary with row norm $1$. 

Our non-convex algorithm objective is
\[
 \begin{array}{ll}
 \underset{A, X}{\mbox{minimize}}   & \|Y-(A*X+b) \|_2^2+ \lambda \|X\|_1\\
\mbox{subject to}  
& A \in S_{\kappa_1 \times \kappa_2, K},
\end{array}
\label{eqn:NC}
  \]
  And the algorithm is alternating between proximal gradient method on solving $X$ given $A$ with Backtracking line search for updating step size of proximal gradient on $X$, and  projected gradient method on solving $A$ given $X$ with line search for updating the step size of Riemannian gradient on $A$.
\begin{algorithm}[ht]
\LinesNumbered
\SetAlgoLined
% \KwInput{ }
Input: $Y$ of size $n_1 \times n_2$  estimated upper bound of kernel size of forward filter $a$: $\kappa_1 \times \kappa_2$, estimated number of kernel channels $K$.\;

Create zero tensor
$A^{\mbox{init}}$ be of size $3\kappa_1 \times 3\kappa_2  \times K$, \;

for $k \in 1, \ldots, K:$\;

random uniformly choose index $i_{1,k},i_{2,k}$; assign
\[
A^{\mbox{init}}[\kappa_1: 2*\kappa_1, \kappa_2: 2*\kappa_2, k] = 
Y[i_{1,k}:i_{1,k}+\kappa_1,i_{2,k}: i_{2,k}+\kappa_2]/\| Y[i_{1,k}:i_{1,k}+\kappa_1,i_{2,k}: i_{2,k}+\kappa_2]\|_2
\]
  \;
  
Let $X^{\mbox{init}}$ be zeros of shape $n_1 \times n_2 \times K$; 
$b^{\mbox{init}}$ be $\mbox{mean}(\Vec{Y})$.  \;

for $\mbox{iter} \in 1, \ldots,  \mbox{MaxIter}:$\;

Alnternating the two steps:\;

Given $A$ fixed, take a descent step on $X$ via proximal gradient descent;
Backtracking for update $X$ and update stepsize $t$;\;

Given $X$ fixed, take a Riemannian gradient step on $A$, use line search for updating the stepsize of Riemannian gradient on $A$. \;

Given $A, X$ fixed, update the bias $b$ \;

if $\|A -A^{\mbox{prev}\|_2}<\epsilon$ and $\|X -X^{\mbox{prev}\|_2}<\epsilon$, stop.
%  \While{While condition}{
%   instructions\;
%   \eIf{condition}{
%   instructions1\;
%   instructions2\;
%   }{
%   instructions3\;
%   }
%  }
 \caption{Non-convex Blind deconvolution initialization method}
\end{algorithm}

\begin{algorithm}[ht]
\LinesNumbered
\SetAlgoLined
% \KwInput{ }
Input: $Y$ of size $n_1 \times n_2$, estimated upper bound of kernel size of forward filter $a$: $\kappa_1 \times \kappa_2$.\;

\eIf{Initialization of forward filter $\Tilde{a}$ not provided}{
Let $\Tilde{a}$ be the solution of the previous algorithm with estimated number of kernel channels $K=1$ and forward kernel size $\kappa_1 \times \kappa_2$. 
}
{Use initialization guess of forward filter $\Tilde{a}$ \;}
Find inverse filter $\bw$ by
\[
 \begin{array}{ll}
 \underset{\bw \in l_1(\integers)}{\mbox{minimize}}   & \|\bw*Y\|_1\\
\mbox{subject to}  
& \langle \widetilde{\ba},  \bw^{\dagger} \rangle =1,
\end{array}
\tag{${P}_1(\widetilde{\ba})$ }
\label{eqn:P1}
  \]
  \;
%  \While{While condition}{
%   instructions\;
%   \eIf{condition}{
%   instructions1\;
%   instructions2\;
%   }{
%   instructions3\;
%   }
%  }
 \caption{Full blind deconvolution algorithm}
\end{algorithm}

\section{Numerical Experiments}

% \subsection{Numerical Demonstration of Blind Deconvolution Algorithm in Images}
% % 
% We set up a ground truth inverse filter as $f^\star = a^{\star,-1} = (0.5, 1, 0.5 ) $, and numerically run $10$ independent experiments of finding the inverse filter by the convex problem 
% \[
%  \begin{array}{ll}
%  \underset{\bw \in l_1(\integers)}{\mbox{minimize}}   & \|\bw*Y\|_1\\
% \mbox{subject to}  
% & \langle \widetilde{\ba},  \bw^{\dagger} \rangle =1,
% \end{array}
% \tag{${P}_1(\widetilde{\ba})$ }
% \label{eqn:P1}
%   \]

% We choose an accuracy $\epsilon = 10^{-3}$ and count the number of successfully accurate recovery of $a^{-1}$ where $\|w^\star -a^{-1}\|_1/\|a^{-1}\|_\infty<\epsilon$. We see a consistent sharp transition of success rate from $0$ to $1$ in a narrow parameter region.

% For example, for $\Tilde{a}=e_0$,  
% \BEQ
%  \begin{array}{ll}
%   \underset{w\in V}{ \mbox{minimize} }  &  \|w*Y\|_{1}\\
%  \mbox{subject to}  
% &  w_0 =1.
% \end{array}
% \tag{$P_1(e_0)$ }
% \EEQ
% \begin{figure} 
% \centering
% \includegraphics[width=.7\textwidth]{./}
% \caption{The finite sample phase transition phenomenon, where $a^{-1} = (0.5, 1, 0.5)$,  $x-$axis is the sparsity level $p$, $y-$axis is the empirical probability $P(\|w^\star -a^{-1}\|_1/\|a^{-1}\|_\infty<10^{-3})$. Red part is failure, Blue part is success.}
% \label{fig-PT}
% \end{figure}

Now we look at numerical experiments of $2-$D image blind deconvolution for different choices of $X^\star$ and $a^{-1,\star}$.

\clearpage
\begin{figure} 
\centering
\includegraphics[width=.2\textwidth]{./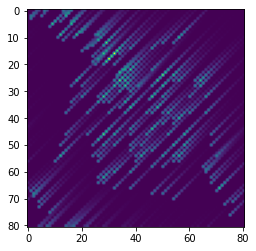}
\includegraphics[width=.2\textwidth]{./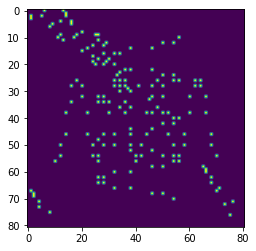}
\includegraphics[width=.2\textwidth]{./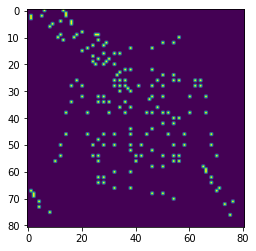}
\includegraphics[width=.2\textwidth]{./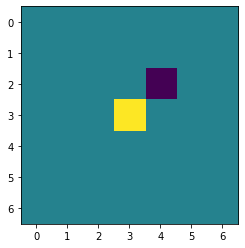}
\caption{
Sparse point: $Y$, solved $X$, true $X$, solved $f$. 
}
\label{fig-2D-Sparse}
\end{figure}
\BIT
\item $X$: $2-$D Bernoulli Gaussian IID of size $80 \times 80$.
\item $f^\star$: $1-$D filter $(1, -s, 0)$, $s=0.9$ centered and rotated by $45$ degree.
\item $Y =a^\star*X$, where the filter inverse $a^\star = f^{\star,-1} $ is defined by discrete Fourier transform on $80 \times 80$ grid.  
\item
The plots in order, first row:
$Y = a^\star * X^\star$, $X^{\mbox{cvx}}$,  $X^\star$, $f^{\mbox{cvx}}= f^\star$.  
\EIT 
\clearpage

\begin{figure} 
\centering
\includegraphics[width=.2\textwidth]{./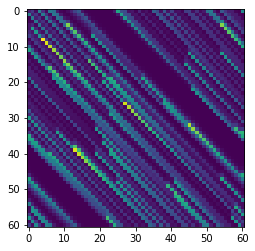}
\includegraphics[width=.2\textwidth]{./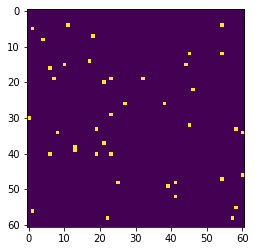}
\includegraphics[width=.2\textwidth]{./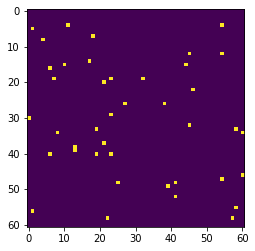}
\includegraphics[width=.2\textwidth]{./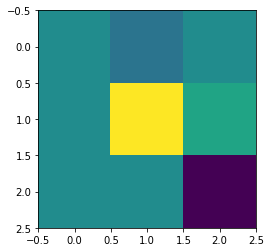}
\centering
\includegraphics[width=.2\textwidth]{./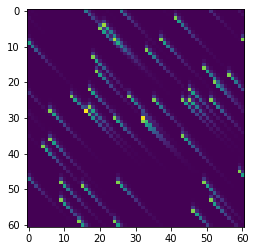}
\includegraphics[width=.2\textwidth]{./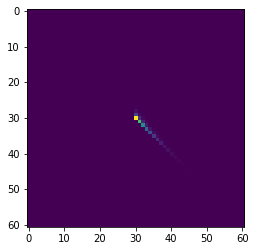}
\includegraphics[width=.2\textwidth]{./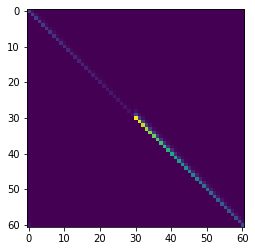}

\centering
\includegraphics[width=.2\textwidth]{./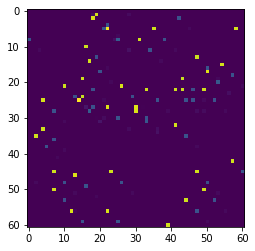}
\includegraphics[width=.2\textwidth]{./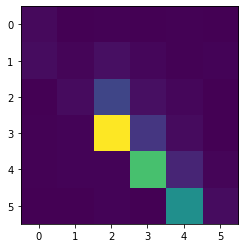}
\includegraphics[width=.2\textwidth]{./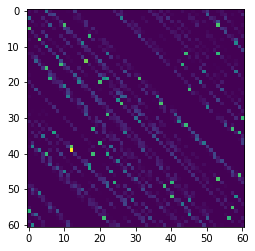}
\includegraphics[width=.2\textwidth]{./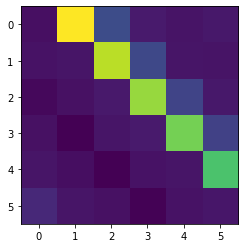}
\caption{Sphere: $Y$, solved $X$, true $X$, solved $f$;
previous $X$, previous $a$ as initialization, $a$ truth;
$X$ solved by non-convex alternating formulation, $a$ solved by non-convex alternating  formulation.
}
\label{fig-2D-Sphere-BD}
\end{figure}

\BIT
\item $X$: $2-$D Bernoulli Gaussian IID of size $60 \times 60$.
\item $f^\star$: $1-$D filter $(1, -s, 0)$ with $s=0.99$ centered and rotated by $45$ degree.
\item $Y =a^{\star}*X$, where the filter inverse is defined by discrete Fourier transform on $60 \times 60$ grid.
\item
$a^{\mbox{init}}$ = $f^{\mbox{init},-1}$, where $f^{\mbox{init}}$ is $1-$D filter $(1, -s_{\mbox{init}}, 0)$ with $s_{\mbox{init}}=0.8$ centered and rotated by $45$ degree.
\item $X^{\mbox{init}} = a^{\mbox{init}} * X$ is the initial guess $a^{\mbox{init}}$ convolve with $X^\star$.
\item $a^{\mbox{ncvx}}$ is the forward solved by alternating non-convex algorithm.
\item $X^{\mbox{ncvx}}$ is the initial guess of $X$ solved by alternating non-convex algorithm.
\item
The plots in order, first row:
$Y = a^\star * X^\star$, $X^{\mbox{cvx}}$,  $X^\star$, $f^{\mbox{cvx}}= f^\star$.  
\item
second row:
$X^{\mbox{init}}$, $a^{\mbox{init}}$, $a^{\star}$;
\item
third row:
$X^{\mbox{nc, init}}$, $a^{\mbox{nc,init}}$, $X^{\mbox{nc}}$, $a^{\mbox{nc}}$;
\EIT 

\clearpage

\begin{figure} 
\centering
\includegraphics[width=.2\textwidth]{./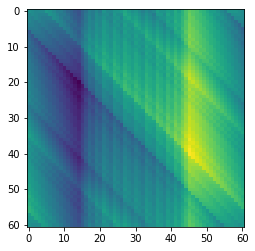}
\includegraphics[width=.2\textwidth]{./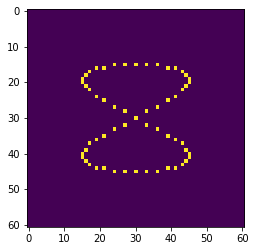}
\includegraphics[width=.2\textwidth]{./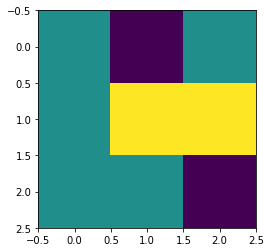}
\includegraphics[width=.2\textwidth]{./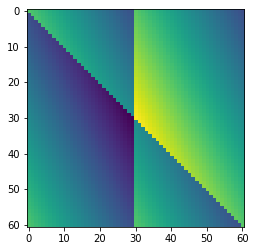}
\centering
\includegraphics[width=.2\textwidth]{./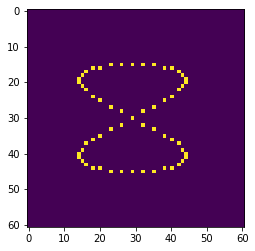}
\includegraphics[width=.2\textwidth]{./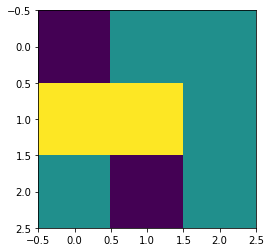}
\centering
\includegraphics[width=.2\textwidth]{./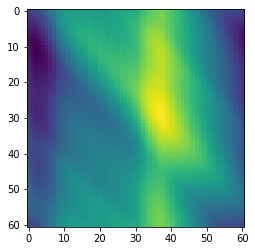}
\includegraphics[width=.2\textwidth]{./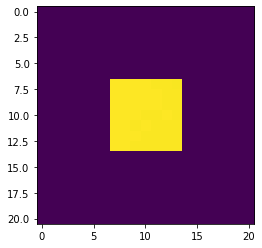}
\caption{Sphere: $Y$, solved $X$, true $X$, solved $f$;
previous $X$, previous $a$ as initialization, $a$ truth;
$X$ solved by non-convex alternating formulation, $a$ solved by non-convex alternating  formulation
}
\label{fig-2D-Lis-new2}
\end{figure}

\BIT
\item $X$: $2-$D centered shaped $8$ with frame size $15$ on size $60 \times 60$ grid.
\item $f^\star$: convolution of two $1-$D filters: 
$(1, -s_1, 0)$ with $s_1=0.9995$ centered and rotated by $45$ degree and 
$(1, -s_2, 0)$ with $s_2=0.9995$ centered and rotated by $0$ degree.
\item $Y =a^{\star}*X$, where the filter inverse is defined by discrete Fourier transform on $60 \times 60$ grid.
\item $a^{\mbox{nc}}$ is the forward solved by alternating non-convex algorithm.
\item $X^{\mbox{nc}}$ is the initial guess of $X$ solved by alternating non-convex algorithm.
\item
The plots in order, first row:
$Y = a^\star * X^\star$,   $X^\star$, $ f^\star$, $a^\star$.  
\item
second row:
$X^{\mbox{cvx}}$, $f^{\mbox{cvx}}$ ,$X^{\mbox{nc}}$, $a^{\mbox{nc}}$;

\EIT 
\clearpage

\begin{figure} 
\centering
\includegraphics[width=.2\textwidth]{./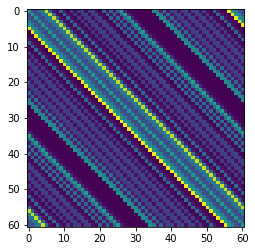}
\includegraphics[width=.2\textwidth]{./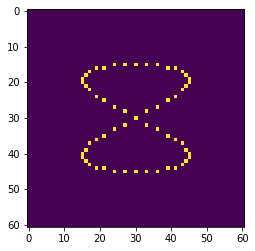}
\includegraphics[width=.2\textwidth]{./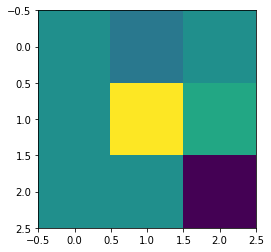}
\centering
\includegraphics[width=.2\textwidth]{./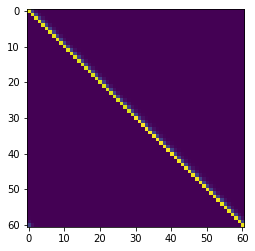}
% \centering
\includegraphics[width=.2\textwidth]{./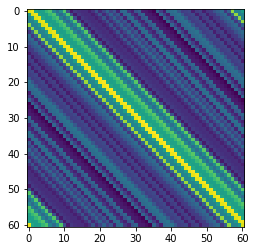}
\includegraphics[width=.2\textwidth]{./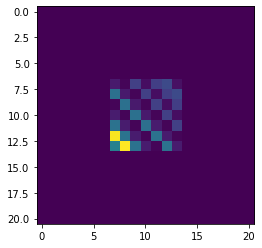}
\caption{ $Y$, solved $X$, true $X$, solved $f$;
previous $X$, previous $a$ as initialization, $a$ truth;
$X$ solved by non-convex alternating formulation, $a$ solved by non-convex alternating  formulation.
}
\label{fig-2D-Lis-new1}
\end{figure}

\BIT
\item $X$: $2-$D centered shaped $8$ on size $60 \times 60$ grid.
\item $f^\star$: convolution of two $1-$D centered filters:
$(1, -s_1, 0)$, $s_1 = 0.999$, rotated by $45$ degree, .
$(1, -s_12, 0)$, $s_2 = 0.9$, rotated by $0$ degree.
\item $Y =f^{\star,-1}*X$, where the filter inverse is defined by discrete Fourier transform on $60 \times 60$ grid.
\item $a^{\mbox{nc}}$ is the forward solved by alternating non-convex algorithm.
\item $X^{\mbox{nc}}$ is the initial guess of $X$ solved by alternating non-convex algorithm.
\item
The plots in order, first row:
$Y = a^\star * X^\star$,   $X^\star$, $ f^\star$, $a^\star$.  
\item
second row:
$X^{\mbox{cvx}}$, $f^{\mbox{cvx}}$ ,$X^{\mbox{nc}}$, $a^{\mbox{nc}}$;

\EIT 
\clearpage

\begin{figure} 
\centering
\includegraphics[width=.2\textwidth]{./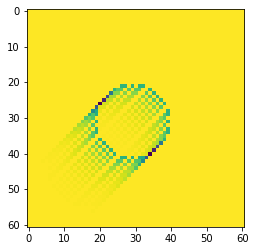}
\includegraphics[width=.2\textwidth]{./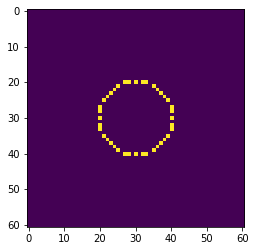}
\includegraphics[width=.2\textwidth]{./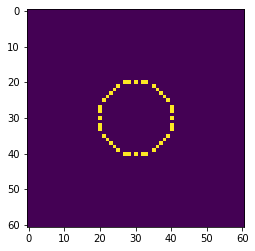}
\includegraphics[width=.2\textwidth]{./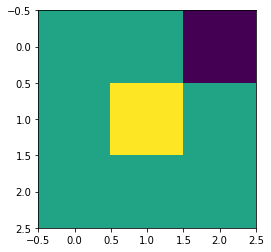}
\centering
\includegraphics[width=.2\textwidth]{./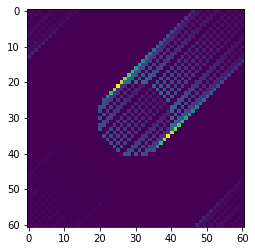}
\includegraphics[width=.2\textwidth]{./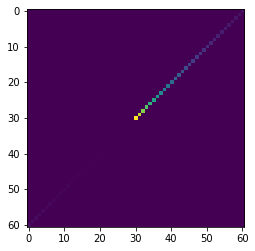}
\includegraphics[width=.2\textwidth]{./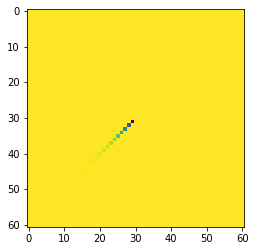}

\centering
\includegraphics[width=.2\textwidth]{./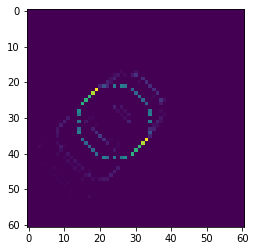}
\includegraphics[width=.2\textwidth]{./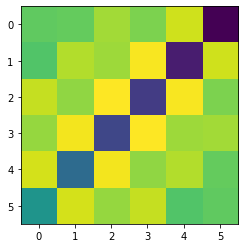}
\caption{
Sphere: $Y$, solved $X$, true $X$, solved $f$;
previous $X$, previous $a$ as initialization, $a$ truth;
$X$ solved by non-convex alternating formulation, $a$ solved by non-convex alternating  formulation.
}
\label{fig-2D-Sphere-BD2}
\end{figure}

\BIT
\item $X$: $2-$D circle centered with radius $10$, on size $60 \times 60$ grid.
\item $f^\star$: $1-$D filter $(1, -s, 0)$ centered and rotated by $45$ degree.
\item $Y =f^{\star,-1}*X$, where the filter inverse is defined by discrete Fourier transform on $60 \times 60$ grid.
\item $a^{\mbox{ncvx}}$ is the forward solved by alternating non-convex algorithm.
\item $X^{\mbox{ncvx}}$ is the initial guess of $X$ solved by alternating non-convex algorithm.
\item
The plots in order, first row:
$Y = a^\star * X^\star$, $X^{\mbox{cvx}}$,  $X^\star$, $f^{\mbox{cvx}}= f^\star$.  
\item
second row:
$X^{\mbox{init}}$, $a^{\mbox{init}}$, $a^{\star}$;
\item
third row:
$X^{\mbox{nc}}$, $a^{\mbox{nc}}$, $f^{\mbox{nc}}$;
\EIT

\clearpage

\begin{figure} 
\centering
\includegraphics[width=.2\textwidth]{./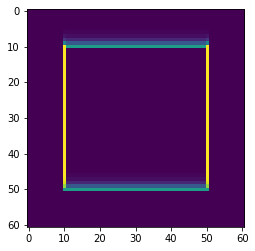}
\includegraphics[width=.2\textwidth]{./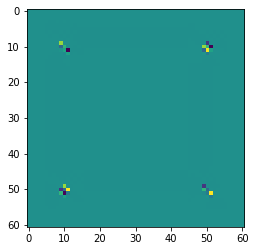}
\includegraphics[width=.2\textwidth]{./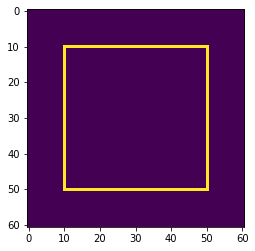}
\includegraphics[width=.2\textwidth]{./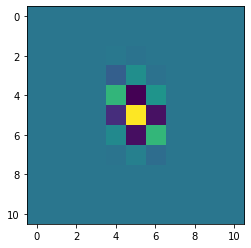}
\caption{Failure mode-square, TV norm: $Y$, solved $X$, true $X$, solved $f$,  }
\label{fig-2D-Square}
\end{figure}

\BIT
\item $X$: $2-$D centered shaped square with frame size $40But $ on size $60 \times 60$ grid.
\item $f^\star$: convolution of two $1-$D filters: 
$(1, -s_1, 0)$ with $s_1=0.9995$ centered and rotated by $45$ degree and 
$(1, -s_2, 0)$ with $s_2=0.9995$ centered and rotated by $0$ degree.
\item $Y =a^{\star}*X$, where the filter inverse is defined by discrete Fourier transform on $60 \times 60$ grid.
\item $a^{\mbox{nc}}$ is the forward solved by alternating non-convex algorithm.
\item $X^{\mbox{nc}}$ is the initial guess of $X$ solved by alternating non-convex algorithm.
\item
The plots in order, first row:
$Y = a^\star * X^\star$,   $X^\star$, $ f^\star$, $a^\star$.  
\item
second row:
$X^{\mbox{cvx}}$, $f^{\mbox{cvx}}$ ,$X^{\mbox{nc}}$, $a^{\mbox{nc}}$;

\EIT 
\clearpage

\begin{figure} 
\centering
\includegraphics[width=.2\textwidth]{./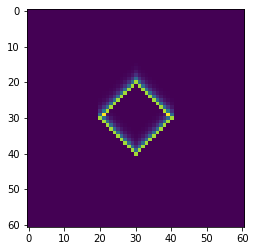}
\includegraphics[width=.2\textwidth]{./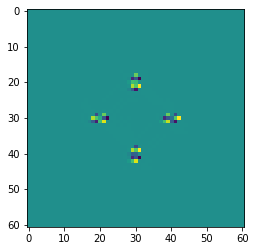}
\includegraphics[width=.2\textwidth]{./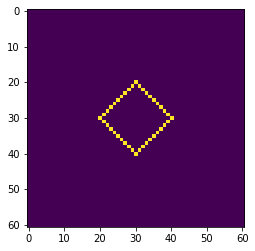}
\includegraphics[width=.2\textwidth]{./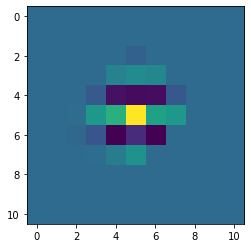}
\caption{Failure mode-diamond, TV norm: $Y$, solved $X$, true $X$, solved $f$,  }
\label{fig-2D-diamond}
\end{figure}

\clearpage

% \begin{figure} 
% \centering
% \includegraphics[width=.2\textwidth]{./}
% \includegraphics[width=.2\textwidth]{./}
% \includegraphics[width=.2\textwidth]{./}
% \includegraphics[width=.2\textwidth]{./}
% \centering
% \includegraphics[width=.2\textwidth]{./}
% \includegraphics[width=.2\textwidth]{./}
% \includegraphics[width=.2\textwidth]{./}
% \caption{Calcium imaging, using TV norm and weighted Haar L1 norm ($5$ levels with weights $2^j, j=0, \ldots 4$: $Y$, true $X$, TV norm solved $X$,Weighted Haar L1 norm solved $X$, true $f$, TV norm solved $f$,Weighted Haar L1 norm solved $f$ ,  }
% \label{fig-2D-Cal_img}
% \end{figure}

% \BIT
% \item $Y$: $2-$D image whose Haar transform are IID Bernoulli on size $60 \times 60$ grid.
% \item $f^\star$:  $1-$D filters: 
% $(1, -s_1, 0)$ with $s_1=0.9$ centered and rotated by $45$ degree.
% \item $Y =a^{\star}*X$, where the filter inverse is defined by discrete Fourier transform on $60 \times 60$ grid.

% \item
% The plots in order, first row:
% $Y = a^\star * X^\star$,   $X^\star$, $ f^\star$, $a^\star$.  
% \item
% second row:
% $X^{\mbox{cvx}}$, $f^{\mbox{cvx}}$ ,$X^{\mbox{nc}}$, $a^{\mbox{nc}}$;

% \EIT 
% \clearpage

\begin{figure} 
\centering
\includegraphics[width=.2\textwidth]{./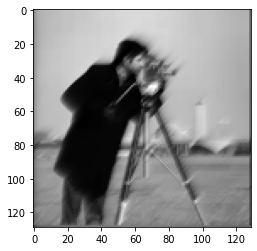}
\includegraphics[width=.2\textwidth]{./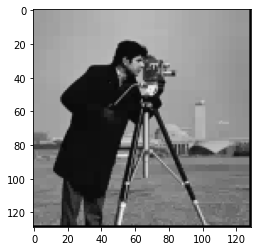}
\includegraphics[width=.2\textwidth]{./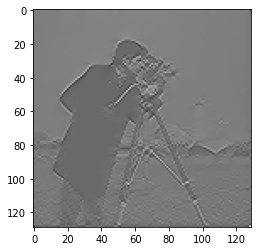}
\includegraphics[width=.2\textwidth]{./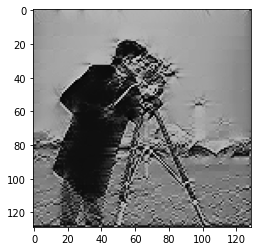}
\caption{`Cameraman', using TV norm and weighted Haar L1 norm ($5$ levels with weights $2^j, j=0, \ldots 4$: $Y$, true $X$, TV norm solved $X$,Weighted Haar L1 norm solved $X$, true $f$, TV norm solved $f$,Weighted Haar L1 norm solved $f$ ,  }
\label{fig-2D-Camera}
\end{figure}
\BIT
\item $X$: camera man figure, on size $128 \times 128$ grid.
\item $f^\star$: $1-$D filter $(1, -s, 0)$ centered and rotated by $45$ degree.
\item $Y =f^{\star,-1}*X$, where the filter inverse is defined by discrete Fourier transform on $60 \times 60$ grid.
\item
$X^{\mbox{TV}}$ is the solution of convex problem with TV norm objective.
\item
$X^{\mbox{Haar}}$ is the solution of convex problem use weighted Haar L1 norm with $5$ levels as objective, where the weighted Haar L1 norm  weights  level  $j$ coefficient by with weights $2^j, j=0, \ldots 4$.
\item
The plots in order, first row:
$Y = a^\star * X^\star$, $X^\star$,$X^{\mbox{TV}}$,  $X^{\mbox{Haar}}$.  
\EIT 
\clearpage

\section{ Techinical Overview and Proof Sketch for Phase Transition Theorems }

% \redit{Qingyun: From here on, I would  delete the result below unless, and selectively put some of them in the start for proof of each sections after the introduction}

% %%%%%%%%%%%%%%%%%%%%

\subsection{Main Result $1$: Population (Large-$N$) Phase Transition }

\paragraph{Sketch of proof ideas for Theorem~\ref{thm:Population phase transition}}
% The full proof is in later section. 

Here we highlight some of the key ideas in the proof:
\BIT
\item
\textbf{Change of variable.}
Rewrite the population version of our convex sparse blind deconvolution problem, with the population objective 
$\Expect\frac{1}{N} \| \bw*Y\|_{\ell_1(\cT)} =\Expect|(\bw*Y)_0|=\Expect|(\bw*\ba*X)_0|=\Expect_X|\langle X, (\bw*\ba)^{\dagger} \rangle|$ due to  the ergodic property of stationary process and shift invariance, and $(\widetilde{\ba}*\bw)_0=((\widetilde{\ba}*\ba^{-1})*(\ba*\bw))_0  = \langle (\widetilde{\ba}*\ba^{-1})^{\dagger}, \bw*\ba\rangle$, the convex problem becomes 
\[
 \begin{array}{ll}
 \underset{\bw}{\mbox{minimize}}   & \Expect_X|\langle X ,  (\ba*\bw)^{\dagger} \rangle|\\
\mbox{subject to}  
& \langle \widetilde{\ba}*\ba^{-1} , (\ba*\bw)^{\dagger}\rangle =1,
\end{array}
 \]
 \newcommand{\apsi}{{\bf v}}
Let $\apsi$ denote  the time reversed version of $\ba*\bw$: $\apsi := (\ba*\bw)^{\dagger}$, and
  let $\tbe:= \widetilde{\ba}*\ba^{-1}$, then by previous assumptions, 
  $\tbe(0)=1$, $\tbe' = \tbe-\be_0$.
  
 Now we arrive at a simple and fundamental population convex problem: 
 \[
 \begin{array}{ll}
 \underset{\apsi}{\mbox{minimize}}   &\Expect_X|\langle X ,  \apsi  \rangle|\\
\mbox{subject to}  
& \langle \tbe ,  \apsi  \rangle  =1.
\end{array}
  \]
  \item
\textbf{Expectation using Gaussian.  }
Since $X$ follows Bernoulli-Gaussian IID probability model $X_t = I_t G_t$, we nest the expectation over $I_t$ outside the expectation over Gaussian $G_t$, for which we use $E |N(0,1)| = \sqrt{\frac{2}{\pi}}$:
\[
\Expect_X|\langle X ,  \apsi  \rangle|
= \E_I \E_G {|\sum_{t\in \integers}  I_t G_t \apsi(t)| }
= \sqrt{\frac{2}{\pi}} \cdot\E_I \|\apsi\cdot I\|_2
\]
\item
\textbf{KKT condition for $\be_0$.}
%Now we state the general phase transition with light assumption. Here we don't need to even assume that $X$ are independent, we just need to know the existence of one sparse element $X_0$.
Let $\apsi^\star$ denote the solution of the optimization problem:
\[
 \begin{array}{ll}
 \underset{\psi}{\mbox{minimize}}   & \sqrt{\frac{2}{\pi}} \cdot\E_I \|\apsi\cdot I\|_2\\
\mbox{subject to}  
& \langle   \tbe, \apsi\rangle =1
\end{array}
\tag{$Q_1(\tbe)$ }
  \]
  
To prove that $\apsi^\star=\be_0$, i.e. $\be_0$ solves ($Q_1(\tbe)$), we calculate the directional finite difference at $\be_0$. Then $\be_0$ solves this convex problem if the directional finite difference at $\apsi = \be_0$ is non-negative at every direction $\beta$ on unit sphere where $\tbe^T \beta =0$:
	 \[
	 \Expect\|(\be_0+\beta)\cdot I\|_2 - \Expect\|(\be_0)\cdot I\|_2 \geq 0.
	 \]

\item
\textbf{Conditional expectation at one sparse element $X_0$.} 
We decompose the objective into a sum of terms, 
conditioning on whether $I_0 = 1_{\{X_0\neq 0\}}$ is zero or not:
	 \[
	 \begin{array}{ll}
	    \Expect\|(\be_0+\beta)\cdot I\|_2 - \Expect\|(\be(0))\cdot I\|_2
	    &= p (1+\beta_0)  + (1-p) \nabla_{\beta}{\Expect_I[\|(\be_0+\beta)'\|_{\ell_2(I-\{0\})} \mid I_0 =0]} -p\\
	    &= p \beta_0  + (1-p) \Expect_{I'}[\|\beta'\|_{\ell_2(I')} ].
	 \end{array}
	\]
This will be non-negative in case either $\beta(0) > 0$, or else $\beta(0) < 0$ but
	\[
	\frac{p}{1-p} \leq  \frac{\Expect_{I'}\|\beta'\|_{\ell_2(I')}}{ |\beta_0|}
	\]
	for all $\beta $ that satisfy $ \tbe^T \beta =0$.
\item
\textbf{Reduction to $val(Q_1 (\tbe')).$}	
	We normalize the direction sequence $\beta$ so that $\beta(0)= -1$; 
	using $\tilde{e}(0)=1$, we obtain a lowerbound:
\[
 \inf_{\beta(0)=-1, \langle  \tbe, \beta\rangle  =0} \Expect_{I'}\|\beta'\|_{\ell_2(I')}
 =\inf_{ \beta(0)=-1, \beta(0) \tilde{e}(0)-\langle  \tbe', \beta'\rangle  =0} \Expect_{I'}\|\beta'\|_{\ell_2(I')}
= \inf_{\langle   \tbe', \beta'\rangle  =1} \E_{I'} \|\beta'\|_{\ell_2(I')} = val(Q_1 (\tbe'))
\]	
Here $Q_1(\tbe')$ is the optimization problem:
\[
 \begin{array}{ll}
 \underset{\beta \in l_1(\integers)}{\mbox{minimize}}   &\E_{I'} \|\beta'\|_{\ell_2(I')}\\
\mbox{subject to}  
& \langle  \tbe', \beta'\rangle  =1
\end{array}
  \]

\item 
\textbf{The explicit phase transition condition with upper and lower bound.}
We have shown the existence of  $p^\star$ so that for all $p<p^\star$, the KKT condition is satisfied. And we have represented $p^\star$ as the optimal value of a derived optimization problem $val(Q_1 (\tbe'))$. The following lemma finds simple upper and lower bounds for $p^\star=val(Q_1 (\tbe'))$. 
	\begin{lemma}[Explicit phase transition condition with upper and lower bound]
	\label{lemma:Condition phase transition}
The threshold $p^\star$ determined by
 \[
\frac{p}{1-p}= val(Q_1(\tbe')).
\]
obeys an upper bound and lower
\[
 \frac{p}{\|\tbe'\|_\infty}  \geq val(Q_1(\tbe')) \geq  \E_I\|\tbe'\cdot I\|_{2}^{-1}
\]
where $\|\tbe'\|_\infty  = \frac{|\tbe|_{(2)}}{|\tbe|_{(1)}}  $.
	Additionally, the upper bound 
is sharp if and only if
	\[
	\frac{p}{1-p} \leq val(Q_1(\frac{\tbe'}{\|\tbe'\|_\infty}))=val(Q_1(\tbe''))/\|\tbe'\|_\infty
	\]
	therefore, the upper bound 
holds with equality 
	\[
	p^\star= 1- \frac{|\tbe|_{(2)}}{|\tbe|_{(1)}}
	\]
if 
		\[
	p\leq 1- \frac{|\tbe|_{(3)}}{|\tbe|_{(2)}}
	\]
	Therefore, if 
		\[
	\frac{|\tbe|_{(3)}}{|\tbe|_{(2)}} \leq \frac{|\tbe|_{(2)}}{|\tbe|_{(1)}}
	\]
	then
	\[
	p^\star= 1- \frac{|\tbe|_{(2)}}{|\tbe|_{(1)}}.
	\]
	
	\end{lemma}

The above narrative gives a sketch of our result and its proof. The upper bound $ 1- \frac{|\tbe|_{(2)}}{|\tbe|_{(1)}}$ of $p^\star$ generalized the previous special case of exponential decay filter in theorem~\ref{thm: single root PT} with $p^\star = 1-|s|$.
\item
\begin{lemma}[Geometric lower bound $val(Q_1(\tbe'))$ for the phase transition condition]
\label{lemma: geo lower}
    \[
val(Q_1(\tbe'))
\geq p \cot\angle(\tbe, \be_0)
\]
where
\[
\cot\angle(\tbe, \be_0) = \frac{1}{\|\tbe'\|_2} 
\]
\end{lemma}
\item
Using the technical background provided in theorem~\ref{thm:BGPropLemmaNorm}, we can further provide a tighter upper bound to compute phase transition $p^\star$.

\subsection{Technical Tools:  Landscape of  Expected Homogeneous Function over Bernoulli Support on Sphere}

Let 
\[
V_k(\bu) := \frac{\E_J\|\bu\cdot J\|_2^k}{\|\bu\|^k_{2}} 
\]
where $\bu\in \reals^N$, $J$ is a Bernoulli sequence indexed from $1$ to $N$.

\paragraph{Expectation $V_1(\bu):= \E_J \|\bu\cdot J\|_2^{1}$ on sphere.}

% \Dave{I would not call this functional $V$. You are already using $V$ for a subspace of bisequences. It would be better to use a greek letter like $\Psi$ or $\Lambda$.}
% \Sun{Could I keep V here and replace the previous $V_k$ in sequence vector space to $\ell_1^k$ }

% \Dave{I would not call the bernoulli object $I$ here. Elsewhere $I$ is a bisequence. Here $I$ is a vector of $N$ elements. Better to introduce $J$ for this purpose.}

\begin{theorem} \label{thm:BGPropLemmaNorm}
Let $\bu \in \reals^N$, then
\[
p \leq V_{1}(\bu) \leq V_{1}(\frac{\sum_{j\in [N]}\pm \be_j}{\sqrt{N}} ) \leq \sqrt{p} .
\]
the lower bound is approached by on-sparse vectors $\bu \in \{ \pm \be_i, i\in [N]\}$, and the upper bound is approached by 
$\bu  \in \{  \frac{1}{\sqrt{N}}\sum_{j\in [N]}\pm \be_j\}$.

Furthermore, all the stationary points of $V_{1}(\bu)$ are $\{\frac{\sum_{i\in J} \pm \be_i}{\sqrt{N_J}}\}$ for different support $J\subset \cT$, where $\{ \pm \be_i, i\in [N]\}$ are the global minimizers, and $\{  \frac{1}{\sqrt{N}}\sum_{j\in [N]}\pm \be_j\}$ are the global maximizers. And for $J$ with $1<N_J<N$, $\{\frac{\sum_{i\in J} \pm \be_i}{\sqrt{N_J}}\}$ are saddle points with value
\[
V_1(\frac{\sum_{i\in J} \pm \be_i}{\sqrt{N_J}}) = \E_I \sqrt{\frac{\sum_{i\in J} 1_{I_i}}{N_J}} = \sum_{j=0}^{N_J} (1-p)^{N_J-j} p^{j} {N_J \choose j}\sqrt{\frac{j}{N_J}}.
\]
\end{theorem}
To support geometric intuition, Figure~\ref{fig-landscape-V1}
visualizes $V_1$ on the $2-$dimensional sphere.

\begin{figure} 
\includegraphics[width=.7\textwidth]{./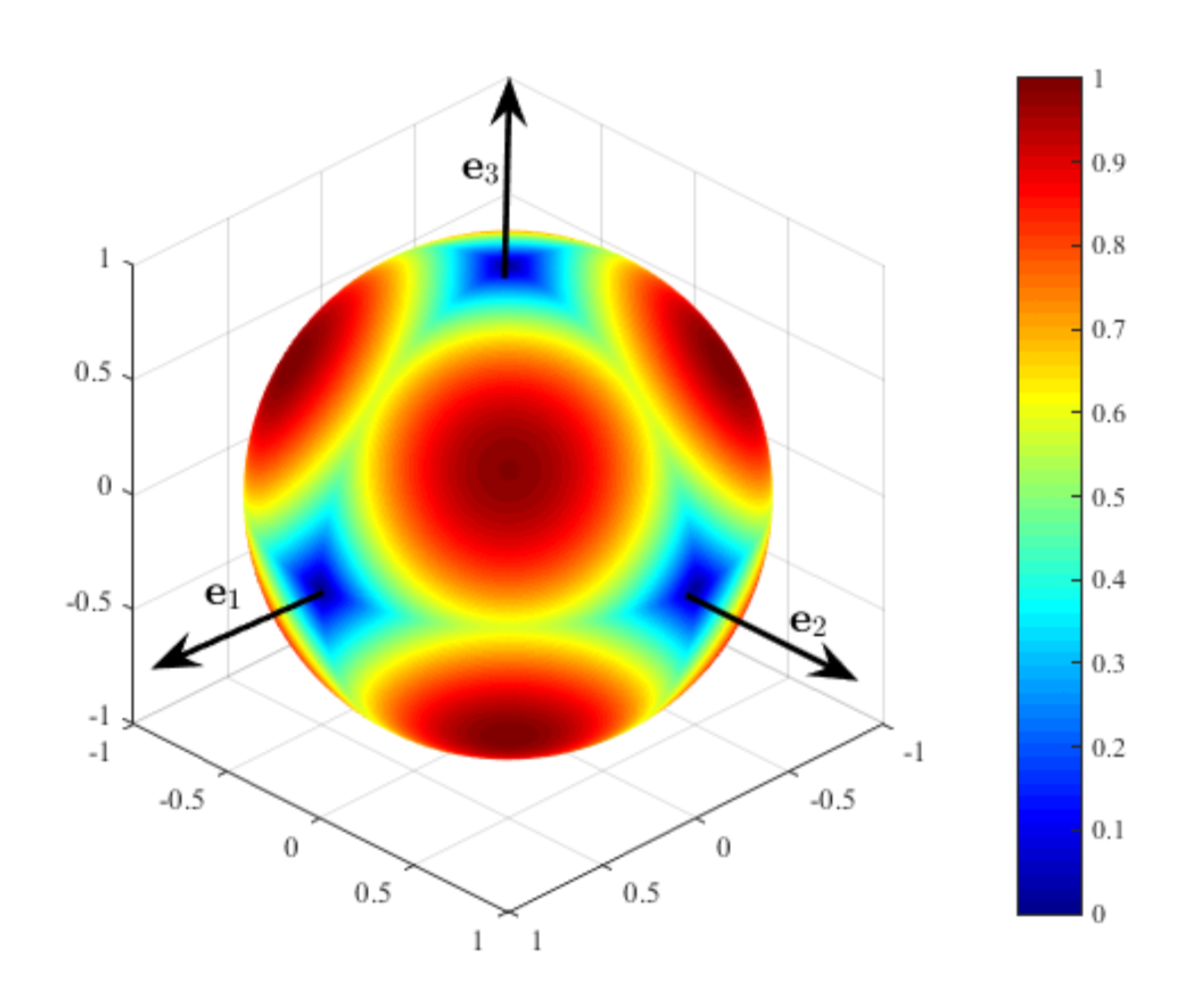}
\caption{The value of $V_1$ on a two-dimensional sphere in $\reals^3$, normalized by the affine transform to send the value in $[0,1].$  
}
\label{fig-landscape-V1}
\end{figure}

\paragraph{Expectation $V_{2k}(\bu):= \E_J \|\bu\cdot J\|_2^{2k}$ on sphere.}
\begin{theorem} \label{thm:BGPropLemmaSub_k}
For $k\geq 2$, let $\bu$ denote a vector in $\reals^N$, then
\[
p^k \leq V_{2k}(\frac{\sum_{j\in [N]}\pm \be_j}{\sqrt{N}} ) \leq V_{2k}(\bu) \leq p.
\]
the upper bound is approached by one-sparse vectors $\bu \in \{ \pm \be_i, i\in [N]\}$, and the lower bound is approached by 
$\bu  \in \{  \frac{1}{\sqrt{N}}\sum_{j\in [N]}\pm \be_j\}$.

Furthermore, all the stationary points of $V_{2k}(\psi)$ are $\{\frac{\sum_{i\in J} \pm \be_i}{\sqrt{N_J}}\}$ for different support $J\subset \cT$, where $\{ \pm \be_i, i\in [N]\}$ are the set of global maximizers, and $\{  \frac{1}{\sqrt{N}}\sum_{j\in [N]}\pm \be_j\}$ are the set of all the global minimizers. And for $J$ with $1<N_J<N$, $\{\frac{\sum_{i\in J} \pm \be_i}{\sqrt{N_J}}\}$ are saddle points with value
\[
V_{2k}(\frac{\sum_{i\in J} \pm \be_i}{\sqrt{N_J}}) = \E_J (\frac{\sum_{i\in J} 1_{I_i}}{N_J})^k = \sum_{j=0}^{N_J} (1-p)^{N_J-j} p^{j} {N_J \choose j}(\frac{j}{N_J})^k.
\]
\end{theorem}

\paragraph{Expectation $V_{-1}(\bu):= \E_J \|\bu\cdot J\|_2^{-1}$ on sphere.}
\begin{theorem}
\label{thm:TaylorSqrtInvExpect}
\beq 
  V_{-1}(\bu) 
%   = \frac{\E_J  \| \bu \cdot J \|^{-1}_{2} }{\| \bu \|^{-1}_{2}  
%   = \E_J (1- \frac{  \| \bu \cdot ([N]-J) \|_{2}}{\| \bu \|_{2})^{-1/2}
  \geq p^{-1/2} .
\eeq
\end{theorem}

\subsection{Exact representation of $val(Q_1)$}
\paragraph{Exact representation from support  analysis} 
Let $|\tbe|$ be the entry-wise absolute value of $\tbe$. We can rank the entries of $|\tbe|$ to be $|\tbe|_{(1)}, |\tbe|_{(2)}, |\tbe|_{(3)},\ldots$, then the entries of $\tbe'$ will be ranked as $|\tbe|_{(2)}, |\tbe|_{(3)},\ldots$. By scaling we have $|\tbe|_{(1)}=1$. We define $\tbe_{S_m}$ as the vector that only keep the entries $ \tbe_{(2)}, \tbe_{(3)},\ldots, \tbe_{(m+1)}$ of $\tbe$, and send the rest of entries to zero. 
\begin{lemma}[Exact representation of $val(Q_1(\tbe'))$ via support decomposition and $V_1$ function]
\label{lemma: tigher upper}
     \[
val(Q_1(\tbe'))= \inf_{ m\in \{1,2,\ldots, n_e-1\}}val(Q_1^{S_m})
% \leq \inf \{ \frac{p}{|\tbe|_{(2)}}, \frac{V_1(\tbe'_{S_2})}{\|\tbe'_{S_2}\|_2}, \frac{V_1(\tbe'_{S_3})}{\|\tbe'_{S_3}\|_2}, \ldots, \frac{V_1(\tbe')}{\|\tbe'\|_2}   \}
\]

% Therefore, intuitively,
%   \[
% val(Q_1(\tbe'))
% \leq 
% \inf \{ \frac{p}{|\tbe|_{(2)}},  \sqrt{p} \cot\angle(\tbe, \be_0) \}
% \]
\end{lemma}

\EIT

%  \subsection{Tighter Upper and Lower Bound from Refined Analysis} 
%  \paragraph{Optimality by support}
\textbf{Proof: }
Assume that $\tbe$ has $n_e$ non-zero entry on support $S_{\tbe}$, we can rank the absolute value of entries of $\tbe$ to be $|\tbe|_{(1)}, |\tbe|_{(2)}, |\tbe|_{(3)},\ldots,|\tbe|_{(n_e)}$, then the entries of $\tbe'$ will be ranked as $|\tbe|_{(2)}, |\tbe|_{(3)},\ldots,|\tbe|_{(n_e)}$. 

We know the optimal solution of $(\beta')^\star$ of $\inf_{\langle   \tbe', \beta'\rangle  =1} \E_{I'} \|\beta'\|_{\ell_2(I')}$ must have support $S_{\beta^\star}$ that satisfy $S_{\star}\subset S_{\tbe'}$. From the symmetry of objective, we know if $(\beta')^\star$ is $m$ sparse, then $m\leq n_e-1$ and, its support must be on the top $m$ entries $|\tbe|_{(2)}, |\tbe|_{(3)},\ldots,|\tbe|_{(m+1)}$, we call this support $S_m$, and we know the corresponding entries of $(\beta')^\star$ would have the same sign as entries of $\tbe$. 

We can define the $m$ sparse optimization problem for a random support function on the subset of $S_m$: $J_m \subset S_m$. Let $\beta$ be supported on $S_m$ and each entry positive, then
\[
val(Q_1^{S_m}) :=\inf_{ \beta: \sum_{j=1}^{m}  |\tbe|_{(j+1)} \beta_j  =1, \beta_j> 0} \E_{J_m} \|\beta \cdot J_m\|_{2} 
\]
The linear equality constraint comes from
\[
\langle  |\tbe'_{S_m}|, \beta'\rangle =\sum_{j=1}^{m}  |\tbe|_{(j+1)} \beta_j   =1
\]
Then by breaking the case to find minimum over at most $n_e -1$ non-zero entries  to  the minimum of the cases to find minimum over exactly $m$ non-zero entries, and also use sign symmetry, we have 
\[
val(Q_1(\tbe'))= \inf_{ m\in \{1,2,\ldots, n_e-1\}}val(Q_1^{S_m})
\]

% The 
% Let $z_j:= \beta_j |\tbe|_{(j+1)}$, from the symmetry of objective and the order on $|\tbe|_{(j+1)}$, we know the solution must satisfy $0<z_m \leq z_{m-1} \leq \ldots \leq z_1$. Then we can recast the optimization problem as 
% \[
% val(Q_1^{S_m}) := 
% \inf_{ z: \sum_{j=1}^{m}  z_j=1, 0<z_m \leq z_{m-1} \leq \ldots \leq z_1 } \E_{B} \sqrt{\sum_{j}B_j \frac{ 1}{|\tbe|^2_{(j+1)}} z_j^2 } 
% \]
% As a special case, when $m=1$, $val(Q_1^{S_1}) = \frac{p}{|\tbe|_{(2)}}$.

\paragraph{Tighter decomposition on $val(Q_1)$}
Now we can prove a more refine upper bound:
\begin{lemma}[Tighter upper bound on $val(Q_1)$]
% \ref{lemma: tigher upper}
    \[
val(Q_1(\tbe'))= \inf_{ m\in \{1,2,\ldots, n_e-1\}}val(Q_1^{S_m})
= \inf_{ m\in \{1,2,\ldots, n_e-1\}} [V_1(\tbe'_{S_m}) \cot\angle(\tbe_{S_m}, \be_0)]
\]
% where
% \[
% \cot\angle(\tbe_{S_m}, \be_0) =  \frac{1}{\|\tbe'_{S_m}\|_2}.
% \]
where $V_1$ function takes value in $[p, \sqrt{p})$.

More specifically, let the unit vector along the direction of $|\tbe'_{S_m}|$ be 
\[
u^m = |\tbe'_{S_m}|/\|\tbe'_{S_m} \|_2
\]
\BEAS
val(Q_1^{S_m}) 
&=& \cot\angle(\tbe_{S_m}, \be_0)  V_1(|\tbe'_{S_m}|) C(u^m)
\EEAS
where 
\[
C(u^m):=\frac{1}{V_1(u^m)}\inf_{ \beta^m:  \beta^m_j> 0} \frac{V_1(\beta^m)}{\cos\angle(\beta^m, u^m)}  = \frac{1}{E_{J_m}\|u^m\cdot J_m\|_2}\inf_{ \beta^m:  \beta^m_j> 0} \frac{E_{J_m}\|\beta^m\cdot J_m\|_2}{\langle  \beta^m, u^m\rangle }
\]
And we can prove that
\[
C(u^m)= 1
\]
\end{lemma}

\begin{proof}[Proof of lemma~\ref{lemma: tigher upper}]
From 
\[
val(Q_1^{S_m}) :=\inf_{ \beta: \sum_{j=1}^{m}  |\tbe|_{(j+1)} \beta_j  =1, \beta_j> 0} \E_{J_m} \|\beta\cdot J_m\|_{2} 
\]

For a fixed $m$, after re-ranking the entries by absolute value, let $S_m$ be the support so that only the top $m$ entries $|\tbe|_{(2)}, |\tbe|_{(3)},\ldots,|\tbe|_{(m+1)}$ are non-zero,  let 
\[
|\tbe'_{S_m}| = (0, |\tbe|_{(2)}, |\tbe|_{(3)},\ldots,|\tbe|_{(m+1)}, 0, \ldots, 0)
\]
\[
\|\tbe'_{S_m}\|_2^2 = |\tbe|^2_{(2)}+ |\tbe|^2_{(3)}+\ldots+|\tbe|^2_{(m+1)} 
\]
let the unit vector along the direction of $|\tbe'_{S_m}|$ be 
\[
u^m = |\tbe'_{S_m}|/\|\tbe'_{S_m} \|_2
\]
so that $u^m_j>0$, $j \in S_m$.
then 
\[
val(Q_1^{S_m}) := \frac{1}{\|\tbe'_{S_m}\|_2}\inf_{ \beta:  \beta_j> 0} \{ \E_{J_m} \| (\beta/\|\beta \|_{2}) \cdot J_m\|_{2} \cdot \frac{1}{\sum_{j=1}^{m}  u^m_j \frac{\beta_j }{\|\beta \|_{2}}} \}
\]
From previous definition, since $\beta$ is supported on $S_m$, we denote it as $\beta^m$, then
$$V_1(\beta^m)= \E_{J_m} \| (\beta/\|\beta \|_{2}) \cdot J_m\|_{2}, $$
$$V_1(\tbe'_{S_m}) = V_1(|\tbe'_{S_m}|) = V_1(u^m).$$

Geometrically,
\[
\cos\angle(\beta^m, u^m) = \sum_{j=1}^{m}  u^m_j \frac{\beta_j }{\|\beta \|_{2}}
\]
\[
\cot\angle(\tbe_{S_m}, \be_0) =  \frac{1}{\|\tbe'_{S_m}\|_2}
\]
\[
val(Q_1^{S_m}) =  \cot\angle(\tbe_{S_m}, \be_0) \inf_{ \beta:  \beta_j> 0} \{V_1(\beta^m) \frac{1}{\cos\angle(\beta^m, u^m)} \}
\]

Since $\beta^m = u^m$ is a feasible point of the  linear equality constraint 
\[
\langle  |\tbe'_{S_m}|, \beta'\rangle =\sum_{j=1}^{m}  |\tbe|_{(j+1)} \beta_j   =1,
\]
we have an upper bound 
\[
val(Q_1^{S_m}) \leq \cot\angle(\tbe_{S_m}, \be_0) V_1(\tbe'_{S_m})
\]
On the other hand,
\BEAS
val(Q_1^{S_m}) 
&=& \cot\angle(\tbe_{S_m}, \be_0) \inf_{ \beta:  \beta_j> 0} V_1(\beta^m) \cdot \frac{1}{\cos\angle(\beta^m, u^m)}
\\
&=& \cot\angle(\tbe_{S_m}, \be_0)  V_1(|\tbe'_{S_m}|) \inf_{ \beta:  \beta_j> 0} \frac{V_1(\beta^m)}{V_1(u^m)} \cdot \frac{1}{\cos\angle(\beta^m, u^m)}
\EEAS
The lower bound is given by finding the lower bound of
\[
C(u^m):=\frac{1}{V_1(u^m)}\inf_{ \beta^m:  \beta^m_j> 0} \frac{V_1(\beta^m)}{\cos\angle(\beta^m, u^m)}  = \frac{1}{E_{J_m}\|u^m\cdot J_m\|_2}\inf_{ \beta^m:  \beta^m_j> 0} \frac{E_{J_m}\|\beta^m\cdot J_m\|_2}{\langle  \beta^m, u^m\rangle }
\]
First, we know upper bound $C(u^m)\leq 1$ by plug in $\beta^m = u^m$ as feasible point. 
% Second, we have a loose lower bound $C(u^m) \geq p^{-1/2}$ based on the fact that $V_1 \in [p, \sqrt{p}]$.

This value $C(u^m)$ can be restricted to the $m$-dimensional subspace supported on $S_m$, therefore its value is independent of ambient space dimension, only dependent on the non-zero entries of $u^m$, without loss of generosity we only need to consider the case when $u^m$ is a $m$-dimensional dense vector lie on the first  quadrant (all positive non-zero entries) of the unit sphere.

If $u^m$ approaches one-sparse point by taking the rest of coordinate close to zero, then $\beta^m$ should equal $u^m$.

On the other extreme, if $u^m$ is has $m$ equal entries $\frac{1}{\sqrt{m}}$, via numerical simulation we could check that \[
C(u^m)=1
\]
We conjecture that $C(u^m)=1$.We only need to prove
$C(u^m)\geq 1$.
To prove the lower bound, we only need to show that, on positive quadrant of $m$-dimensional unit sphere, the function 
$$R_{u^m}(\beta^m)=\frac{1}{E_{J_m}\|u^m\cdot J_m\|_2}\frac{E_{J_m}\|\beta^m\cdot J_m\|_2}{\langle  \beta^m, u^m\rangle }\geq 1 $$
and takes minimum value $1$
  at $\beta^m=u^m$.

In general, by symmetry, expect our guess $\beta^m = u^m$, for any $u^m$ on  we could infer that the $2$ other types of points to check on positive quadrant of unit sphere are $ \beta^m = (0, \sqrt{1-\epsilon^2}, \sqrt{\epsilon},0\ldots,  0)$ and $\beta^m = (\frac{1}{\sqrt{m}}, \ldots, \frac{1}{\sqrt{m}}).$ 
 
First , for $\beta^m = (\frac{1}{\sqrt{m}}, \ldots, \frac{1}{\sqrt{m}})$, then $\langle  \beta^m, u^m\rangle=  \frac{1}{\sqrt{m}} 1^T u^m \leq 1$ by Cauchy inequality,  we have
\[
R_{u^m}(\beta^m) = \frac{1}{E_{J_m}\|u^m\cdot J_m\|_2} \frac{E_{J_m}\|\beta^m\cdot J_m\|_2}{\langle  \beta^m, u^m\rangle} \geq \frac{E_{J_m}\|(\frac{1}{\sqrt{m}}, \ldots, \frac{1}{\sqrt{m}})\cdot J_m\|_2}{E_{J_m}\|u^m\cdot J_m\|_2}\geq 1
\]
The  last  inequality comes from the saddle point property of $V_1$ at point $(\frac{1}{\sqrt{m}}, \ldots, \frac{1}{\sqrt{m}})$.
 since $V_1((\frac{1}{\sqrt{m}}, \ldots, \frac{1}{\sqrt{m}})) = \sum_{j=0}^{m} (1-p)^{m-j} p^{j} {m \choose j}\sqrt{\frac{j}{m}} = \sqrt{p} - \sum_{j=m+1}^{\infty} (1-p)^{m-j} p^{j} {m \choose j}\sqrt{\frac{j}{m}}  \in [p, \sqrt{p}).$

Second, for $ \beta^m = (0, \sqrt{1-\epsilon^2}, \sqrt{\epsilon},0\ldots,  0) $, if $\beta^m$ approaches one-sparse point by taking the rest of coordinate close to zero, by continuity of function $R_{u^m}(\beta^m)$, we can consider the limit point on the boundary $ \beta^m = (0, 1, 0,0\ldots,  0) $, then
\[
R_{u^m}(\beta^m) = \frac{1}{E_{J_m}\|u^m\cdot J_m\|_2} \frac{E_{J_m}\|\beta^m\cdot J_m\|_2}{\langle  \beta^m, u^m\rangle} \geq \frac{p \|u^m\|^2_2}{E_{J_m}\|u^m\cdot J_m\|_2 \|u^m\|_\infty}
\]

We can first compute gradient of $R_{u^m}(\beta^m)$, show that its directional gradient are all non-negative at $u^m$. 

% then $E_{J_m}\|u^m\cdot J_m\|_2 = \sum_{j=0}^{m} (1-p)^{m-j} p^{j} {m \choose j}\sqrt{\frac{j}{m}}  $
% \[
% C(u^m)  = \frac{\sqrt{m}}{\sum_{j=0}^{m} (1-p)^{m-j} p^{j} {m \choose j}\sqrt{\frac{j}{m}} }\inf_{ \beta^m:  \beta^m_j> 0, \| \beta^m \|_2=1} \frac{E_{J_m}\|\beta^m\cdot J_m\|_2}{  \| \beta^m\|_1 }
% \]

% if $\beta^m$ approaches one-sparse point by taking the rest of coordinate close to zero, then $\langle  \beta^m, u^m\rangle\approx \frac{1}{\sqrt{m}}, E_{J_m}\|u^m\cdot J_m\|_2 \approx p$,  we have
% \[
% C(u^m) = \frac{1}{E_{J_m}\|u^m\cdot J_m\|_2}\inf_{ \beta^m:  \beta^m_j> 0} \frac{E_{J_m}\|\beta^m\cdot J_m\|_2}{\langle  \beta^m, u^m\rangle} \leq \frac{1}{\sum_{j=0}^{m} (1-p)^{m-j} p^{j} {m \choose j}\sqrt{\frac{j}{m}} } \frac{p}{\frac{1}{\sqrt{m}}}.
% \]
% where $\frac{1}{\sum_{j=0}^{m} (1-p)^{m-j} p^{j} {m \choose j}\sqrt{\frac{j}{m}} } \frac{p}{\frac{1}{\sqrt{m}}} \in (p^{-1/2} \sqrt{m},\sqrt{m}]$ since $\sum_{j=0}^{m} (1-p)^{m-j} p^{j} {m \choose j}\sqrt{\frac{j}{m}}  \in [p, \sqrt{p}).$

\end{proof}

% \subsection{Main Result $2$: Finite Observation Window, Finite-length Inverse}

% With a finite observation window of length $N$ we (surprisingly) still can have exact recovery, starting as soon as $N \geq \Omega(k\log(k))$.

% \paragraph{Finite-observation phase transition}

% %\newcommand{\bV}{{\bf V}}

% Let 
% % $\bV=\bV_k$ 
% $\ell_1^k$
% denote the collection of bisequences vanishing outside a centered window of radius $k$.

\subsection{Technical Tool: Relation between Finite Difference of Objective and Euclidean Distance }
 \paragraph{Bi-Lipschitzness of finite difference of objective}
As an important proof tool, we study functional $B$ that allows us to bound the Euclidean distance $d_2(\psi, \be_0) := \| \psi - \be_0\|_2$ by the objective difference $\E_I  \| \psi\cdot I \|_2- \E_I  \| (\be_0)\cdot I \|_2$  : 
\[
B(\be_0,\phi) := \frac{\E_I  \| (\be_0+\phi)\cdot I \|_2 - \E_I  \| (\be_0)\cdot I \|_2}{\|\phi\|_2} 
\]
When we rescale $\phi$ to $\beta:= \frac{\phi}{\|\phi \|_2}$, we have
\[
B(\be_0,t \beta) = \frac{\E_I  \| (\be_0+t \beta)\cdot I \|_2 - \E_I  \| (\be_0)\cdot I \|_2}{t} 
\]
From the definition of directional derivative,
\[
\nabla_{\phi}{\E_I  \| (\be_0+\phi)\cdot I \|_2}\mid_{\phi=0}=\lim_{t\rightarrow 0^+}\frac{\E_I  \| (\be_0+t \beta)\cdot I \|_2 - \E_I  \| (\be_0)\cdot I \|_2}{t}=\lim_{t\rightarrow 0^+}B(\be_0,t \beta).
\]

We have upper and lower bound on their difference.
This upper and lower bound allows us to connect objective $\E_I  \| \psi\cdot I \|_2$ and the $2-$norm of $ \psi - \be_0$:
\begin{theorem}[Bi-Lipschitzness of finite difference of objective near $\be_0$ for linear constraint]
\label{thm:bound_diff}
We have upper and lower bound of $B(\be_0, \phi)$ :
\BEAS
0\leq B(\be_0,t \beta) -\lim_{t\rightarrow 0^+}B(\be_0,t \beta) 
\leq 
  \frac{t}{2}p\Big( 
  \beta_0^2 
 +
 p  (1-\beta_0^2) \Big) \leq \frac{pt}{2}.
\EEAS
This leads to
\[
0 \leq B(\be_0, \phi) -	 \nabla_{\phi}{\E_I  \| (\be_0+\phi)\cdot I \|_2}\mid_{\phi=0}\leq \frac{p}{2}\|\phi \|_2.
\]
Therefore, when $p<p^\star$, $\nabla_{\phi}{\E_I  \| (\be_0+\phi)\cdot I \|_2}\geq \epsilon(p,p^\star)>0$, 
we have $B(\be_0, \phi) \geq \epsilon(p,p^\star)>0$, which allows us to bound difference of objective by Euclidean distance .
Reversely, 
$1/B(\be_0, \phi) \leq  1/\epsilon(p,p^\star)$, which allows us to bound Euclidean distance by difference of objective.

\end{theorem}

\section{Conclusion}
In this paper, we proposed a novel {\it convex} optimization problem for sparse {\it blind deconvolution problem} based on $\ell^1$
minimization of inverse filter outputs:
\[
 \begin{array}{ll}
 \underset{\bw \in l_1^k}{\mbox{minimize}}   & \frac{1}{N} \| \bw \star \by\|_{\ell_1^N}\\
\mbox{subject to}  
& \langle \widetilde{\ba},  \bw^{\dagger} \rangle =1.
\end{array}
  \]
Assuming the signal to be recovered is {\it sufficiently sparse}, the algorithm can convert a crude approximation to the filter into a high-accuracy recovery of the true filter.   
  
We present four main results. 

First, in a large-$N$ analysis where $\bx$ is a realization of an IID Bernoulli-Gaussian signal with expected sparsity level $p$, we measure the approximation quality of $\widetilde{\ba}$ 
by considering $\tbe= \widetilde{\ba} \star \ba^{-1}$, 
which would be a Kronecker sequence if our approximation were perfect.
Under the condition
\[
\frac{|\tbe|_{(2)}}{|\tbe|_{(1)}} \leq 1-p,
\]
we show that, in the large-$N$ limit,
the $\ell^1$ minimizer $\bw^*$
perfectly recovers $\ba^{-1}$ to shift and scaling. 

% Here $\frac{|\tbe|_{(2)}}{\tbe|_{(1)}}$ denotes 
% the ratio of the  first and second largest entries 
% of $|\tbe|$,and is a natural measure of closeness 
% between our approximate Kronecker $\tbe=\widetilde{\ba} *\ba^{-1}$ 
% and true Kronecker  $e_0$. 
In words {\it the less accurate the initial approximation $\widetilde{\ba} \approx \ba$, the greater we rely on sparsity of $\bx$.}

Second, we develop finite-$N$ guarantees of the form $N\geq O(k\log(k))$,
for highly accurate reconstruction with high probability.

Third, we further show stable approximation when the true inverse filter is infinitely long
(rather than length $k$), we show that the approximation error decrease exponentially as the approximation length growth.

Last, we extend our guarantees to the case where
the observation contain stochastic or adversarial noise, we show that in both stochastic or adversarial noise cases, the approximation error growth linearly as a function of noise magnitude.

\section{Supplementary: Landscape of  Expected Homogeneous Function over Bernoulli Support on Sphere}
In the following, we study the expected landscape for projection pursuit on sphere: $V_1$, $V_{2k}$, $V_{-1}$ 

 First, we calculate $\E| \psi^T X| $ for $X_t$ IID sampled from Bernoulli Gaussian $p N(0,1) +(1-p) \delta_0.$ Let $X_i = B_i Z_i$ be IID Bernoulli Gaussian, where $B_i$ is sampled from Bernoulli with parameter $p$, $Z_i$ is sampled from $N(0,1)$. Let $I$ be the support where $B_i =1$.

From now on, we denote three equivalent notation, and interchange them for the convenience of each context:
\[
\|\psi\|_{\ell_2(I)} = \|\psi_I\|_2 = \|\psi\cdot I\|_2
\]
\subsection{Expectation of Inner Product for Sparse Signal}
In the following, we study in detail the quantity  
\[
E{| \psi^T X| } = E{|\sum_{i} \psi_i X_i| }.
\]
We know when $X_i$ have variance $\sigma^2$,
\[
   E (\sum_i \psi_i X_i)^2 =  \sum_i   \psi_i^2 E(X_i)^2 = \sigma^2 \|\psi\|_2^2.
\]
The ratio indicates the sparsity level of the random variable $ \sum_i \psi_i X_i$
\[
\frac{ E{|\sum_{i} \psi_i X_i| }}{\sqrt{E (\sum_i \psi_i X_i)^2}} = \frac{ E{|\sum_{i} \psi_i X_i| }}{\sigma^2 \|\psi\|_2} 
\]
Now, we consider $E{|\sum_{i} \psi_i X_i| }$ in the general symmetric setting, where a lower bound can be derived.

\paragraph{Exact calculation for Bernoulli Gaussian}
% \paragraph{Exact calculation for Gaussian}
\[
\E_X{|\sum_{i} \psi_i X_i| }  
= \E_B \E_G {|\sum_{i} \psi_i B_i G_i| }
= \E_I \E_G {|\sum_{i\in I} \psi_i G_i| }
\]
Leveraging the fact that linear transforms of Gaussians are also Gaussian, we get
\[
\E_G {|\sum_{i\in I} \psi_i G_i| } = 
\sqrt{\frac{2}{\pi}} \cdot\|\psi_I\|_2.
\]

\paragraph{Upper and lower bound for symmetric distribution}
It is worth commenting that we could still calculate the upper and lower bound of $\E{| \psi^T X| }$ in terms of $\E_I\| \psi_I\|_2$ for $X_t$ IID sampled from any Bernoulli symmetric $p G +(1-p) \delta_0$, where $G$ is a symmetric distribution. 
% Dave Comment:
% Put the generalization in the end.
% Focus on Bernoulli Gaussian for the main part, and then talk about extension.

% \paragraph{General symmetric distribution case}
\begin{lemma} \label{lem:sym_Khintchine_inequality}
If we only know $\Xi_i$ are IID sampled from a symmetric distribution $F$ with unit variance, then we already have a lower bound,  
\[
     1/\sqrt{2}  \| a\|_2 \leq \E |\sum_i a_i \Xi_i| \leq \|a\|_2.
\] 
\end{lemma}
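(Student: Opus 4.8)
The plan is to recognize this as a Khintchine-type inequality for the random sum $S := \sum_i a_i \Xi_i$, and to treat the two inequalities separately, with the lower bound carrying essentially all of the difficulty. The upper bound is immediate and robust: since $F$ is symmetric each $\Xi_i$ has mean zero, and by hypothesis $\E \Xi_i^2 = 1$, so independence kills the cross terms and $\E S^2 = \sum_i a_i^2 \E\Xi_i^2 = \|a\|_2^2$. Jensen's inequality (equivalently Cauchy--Schwarz) then gives $\E|S| \le (\E S^2)^{1/2} = \|a\|_2$. This step uses only symmetry, independence, and unit variance.

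For the lower bound I would interpolate between the first and fourth moments. Writing $S^2 = |S|^{2/3}\,|S|^{4/3}$ and applying H\"older with exponents $3/2$ and $3$ yields $\E S^2 \le (\E|S|)^{2/3}(\E S^4)^{1/3}$, and hence
\[
\E|S| \ge \frac{(\E S^2)^{3/2}}{(\E S^4)^{1/2}} = \frac{\|a\|_2^{3}}{(\E S^4)^{1/2}}.
\]
Expanding the fourth moment using independence and the vanishing of odd terms gives
\[
\E S^4 = 3\|a\|_2^4 + (\kappa-3)\sum_i a_i^4, \qquad \kappa := \E\Xi_i^4 .
\]
Since $\sum_i a_i^4 \le \|a\|_2^4$, a bound on the kurtosis $\kappa$ translates directly into the constant: for example $\kappa \le 3$ forces $\E S^4 \le 3\|a\|_2^4$ and therefore $\E|S| \ge \|a\|_2/\sqrt{3}$. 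This is a clean, self-contained route, but it does not reach the sharp $1/\sqrt2$.

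To recover the stated constant $1/\sqrt2$ I would pass to the conditionally-sign-randomized representation $\Xi_i = \epsilon_i |\Xi_i|$, where the $\epsilon_i$ are independent symmetric signs, condition on the magnitudes $(|\Xi_i|)$, and invoke the sharp Khintchine constant for sign sums (the extremal case being the two equal-weight signs, which is precisely where $1/\sqrt2$ is attained). The subtlety is that after conditioning one is left with $\E \sqrt{\sum_i a_i^2 |\Xi_i|^2}$, and Jensen bounds this \emph{above} by $\|a\|_2$, the wrong direction; so a concentration/moment estimate for $\sum_i a_i^2|\Xi_i|^2$ must be reinserted to convert the conditional bound into the unconditional one.

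The hard part, and the point I would flag most carefully, is the hypothesis itself. Pure symmetry plus unit variance is not enough for \emph{any} positive universal lower bound: a heavy-tailed symmetric law placing its variance on rare large values makes $\E|S|/\|a\|_2$ arbitrarily small already for a single term. Thus the literal statement needs a kurtosis or tail restriction, under which the fourth-moment argument above (or its sign-randomized refinement) delivers the bound; the constant $1/\sqrt2$ is exactly the two-point extremal value, and for the symmetric laws of interest here (e.g.\ Gaussian, where $\E|S| = \sqrt{2/\pi}\,\|a\|_2$) the inequality holds comfortably.
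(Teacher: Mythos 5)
Your proposal does not prove the lemma, but for the right reason: the lemma as stated is false, and your final paragraph identifies exactly why. To make your counterexample concrete: take the symmetric three-point law $P(\Xi = \pm M) = 1/(2M^2)$, $P(\Xi = 0) = 1 - 1/M^2$, which has unit variance; with a single nonzero coefficient $a = (1,0,\dots,0)$ one gets $\E|\sum_i a_i \Xi_i| = \E|\Xi| = 1/M$, which falls below $\|a\|_2/\sqrt{2}$ as soon as $M > \sqrt{2}$ and tends to zero. So symmetry plus unit variance admits no universal positive lower-bound constant, and some additional hypothesis (a kurtosis bound, or magnitudes that are a.s.\ constant as in the Rademacher case) is genuinely required. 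Your two partial results are both correct: the upper bound via $\E S^2 = \|a\|_2^2$ and Jensen is exactly the paper's argument, and your H\"older interpolation $\E|S| \geq (\E S^2)^{3/2}/(\E S^4)^{1/2}$ together with $\E S^4 = 3\|a\|_2^4 + (\kappa - 3)\sum_i a_i^4$ is a sound repair that yields $\E|S| \geq \|a\|_2/\sqrt{\max(3,\kappa)}$ under a fourth-moment assumption.

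The comparison with the paper's own proof is where your review is most valuable: the paper takes precisely the route you sketch in your third paragraph --- factor $\Xi_i = s_i|\Xi_i|$ with independent symmetric signs, condition on the magnitudes, and apply Khintchine to the conditional sign sum --- and it fails at exactly the step you flagged. Conditional Khintchine legitimately gives $\E|S| \geq \frac{1}{\sqrt{2}}\,\E\bigl(\sum_i a_i^2 |\Xi_i|^2\bigr)^{1/2}$, but the paper then passes to $\frac{1}{\sqrt{2}}\sum_i a_i^2 \E|\Xi_i|^2$, which both drops the square root (its displayed conclusion even reads $\frac{1}{\sqrt 2}\|a\|_2^2$ rather than $\frac{1}{\sqrt 2}\|a\|_2$, a dimensional inconsistency with the lemma) and, once the root is restored, amounts to $\E\sqrt{Z} \geq \sqrt{\E Z}$ --- the reverse of Jensen, false in general, and maximally violated by the heavy-tailed law above. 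The conditional argument closes only when $|\Xi_i|$ is a.s.\ constant, i.e.\ the Rademacher case, where $\bigl(\sum_i a_i^2|\Xi_i|^2\bigr)^{1/2} = \|a\|_2$ identically and $1/\sqrt{2}$ is the sharp Khintchine constant. Since the paper's downstream uses involve Bernoulli--Gaussian signals and, in the finite-sample concentration argument, an honest Rademacher symmetrization, the intended conclusions survive, but the lemma's hypothesis should be strengthened (e.g.\ to a kurtosis bound, with the constant degrading as in your fourth-moment bound) rather than claimed for all symmetric unit-variance laws.
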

\begin{proof}[Proof of Lemma \ref{lem:sym_Khintchine_inequality}]
The upper bound come from Cauchy inequality.
 
Now we derive the lower bound. We write $\Xi_i = \sigma_i |\Xi_i|$, where $s_i =\pm 1$ with equal probability since $\Xi_i$ are symmetric RV, and $\{s_i,|\Xi_i|\}$  are all independent random variables. We could use Khintchine inequality,

\begin{eqnarray*}
\E_\Xi |\sum_i a_i \Xi_i| &=& E_{|\Xi_i|}E_{s_i} |\sum_i a_i s_i |\Xi_i||\\
&\geq&  
  1/\sqrt{2}  \sum_i a_i^2  E_{|\Xi_i|} |\Xi_i|^2 \\
  &=&   1/\sqrt{2}  \|a\|_2^2.
\end{eqnarray*}
\end{proof}

Let $\psi_I = a$, we have a corollary for Bernoulli Symmetric case,
\[
 \E_I \|\psi_I\|_2  \geq \E_X{|\sum_{i} \psi_i X_i| }  
\geq \sqrt{\frac{1}{2}} \cdot \E_I \|\psi_I\|_2.
\]

\begin{theorem}
Let $X_i = B_i \Xi_i$ be IID, where $B_i$ is sampled from Bernoulli with parameter $p$, let $I$ be the support where $B_i =1$. And $\Xi_i$ is variance $1$, sampled from: 
\begin{itemize}
    \item $N(0,1)$
    % \item sub-Gaussian with sub-Gaussian norm $K = \|Z\|_{\Psi_2}$
    \item general symmetric distribution with variance $1$.
\end{itemize}
% Here sub-Gaussian norm is defined as
% $
% \|X\|_{\Psi_2} := \sup_{r\geq 1} r^{-1/2}  (E|X|^r)^{1/r}.
% $

Then 
\begin{itemize}
    \item(for Bernoulli Gaussian:)
    \[
\E_X{|\sum_{i} \psi_i X_i| }  
= \sqrt{\frac{2}{\pi}} \cdot \E_I \|\psi_I\|_2
\]
%     \item (for Bernoulli sub-Gaussian:) 
%      \[
% \E_X{|\sum_{i} \psi_i X_i| }  
% \geq  3^{-3/2}C^{-3} K^{-3}  \cdot (1-(1-p)^{\|\psi\|_0})^{1/2}   \cdot  \|\psi\|_2.
% \]
    \item (for Bernoulli symmetric R.V. :) 
    \[
\E_I \|\psi_I\|_2 \geq \E_X{|\sum_{i} \psi_i X_i| }  
\geq \sqrt{\frac{1}{2}} \cdot \E_I \|\psi_I\|_2.
\]
\end{itemize}

\end{theorem}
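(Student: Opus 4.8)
The plan is to prove both statements by the same two-stage device. First I would condition on the random Bernoulli support $I = \{i : B_i = 1\}$; because the mask $B$ is independent of the magnitude variables, this factorizes the outer expectation as $\E_X = \E_I \E_G$ (respectively $\E_I \E_\Xi$), reducing each claim to a one-dimensional statement about the surviving coefficients $\psi_I$. All the analytical content has in fact already been assembled in the preceding paragraphs, so the theorem is really a packaging of those two computations together with Lemma \ref{lem:sym_Khintchine_inequality}.

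For the Bernoulli--Gaussian case I would write
\[
\E_X {|\sum_i \psi_i X_i|} = \E_B \E_G {|\sum_i \psi_i B_i G_i|} = \E_I \E_G {|\sum_{i \in I} \psi_i G_i|},
\]
and then observe that, conditionally on $I$, the inner sum is a linear combination of independent standard Gaussians, hence itself Gaussian with mean zero and variance $\sum_{i \in I}\psi_i^2 = \|\psi_I\|_2^2$. Thus $\sum_{i\in I}\psi_i G_i \stackrel{d}{=} \|\psi_I\|_2 \cdot N(0,1)$, and the elementary identity $\E|N(0,1)| = \sqrt{2/\pi}$ gives the exact conditional value $\sqrt{2/\pi}\,\|\psi_I\|_2$. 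Taking the outer expectation over $I$ yields the stated equality $\E_X|\psi^T X| = \sqrt{2/\pi}\,\E_I\|\psi_I\|_2$.

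For the Bernoulli--symmetric case the same conditioning gives $\E_X|\psi^T X| = \E_I \E_\Xi |\sum_{i\in I}\psi_i \Xi_i|$. Here I would simply apply Lemma \ref{lem:sym_Khintchine_inequality} with coefficient vector $a = \psi_I$ (the surviving coordinates), which bounds the inner conditional expectation by
\[
\tfrac{1}{\sqrt 2}\,\|\psi_I\|_2 \;\le\; \E_\Xi {|\sum_{i\in I}\psi_i \Xi_i|} \;\le\; \|\psi_I\|_2,
\]
the upper bound being Cauchy--Schwarz and the lower bound being Khintchine's inequality. Since these inequalities hold for every realization of $I$, monotonicity of the outer expectation over $I$ preserves them and produces exactly the claimed two-sided bound.

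The argument is essentially mechanical once the conditioning is set up; the only point demanding care is the justification of the factorization $\E_X = \E_I \E_G$ (resp. $\E_I\E_\Xi$), which rests on the independence of the Bernoulli mask $B$ from the magnitude variables and on the fact that conditioning on $I$ freezes precisely which coefficients $\psi_i$ contribute to the sum. If anything is a genuine obstacle it lives upstream --- namely the proof of Lemma \ref{lem:sym_Khintchine_inequality} and the invocation of Khintchine's inequality supplying the sharp $1/\sqrt 2$ constant --- but that result is already available to us, so no further work is needed at this stage.
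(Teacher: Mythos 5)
Your proposal is correct and follows essentially the same route as the paper: condition on the Bernoulli support to factor $\E_X = \E_I \E_G$ (resp.\ $\E_I\E_\Xi$), use Gaussian stability with $\E|N(0,1)| = \sqrt{2/\pi}$ for the exact Bernoulli--Gaussian identity, and invoke the Khintchine/Cauchy--Schwarz lemma conditionally with $a = \psi_I$ for the symmetric case. The paper assembles the theorem from precisely these ingredients, so no further comparison is needed.
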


\subsection{Expectation over Bernoulli Support}

We previously considered the identity
\beq \label{eq:startingPoint}
    \E | \sum_{i \in \cT}  \psi_i X_i | = \sqrt{\frac{2}{\pi}} \cdot \E_I \| \psi \|_{\ell_2(I)},
\eeq
where $X$ is a Bernoulli-Gaussian RV $BG(p;0,1)$,
and where $I$ is a random subset of the domain $\cT$
determined by Bernoulli-$p$ coin tossing.

% Comment:
Let $J \subset \cT$ be the support of $X$, where $  N:=|\cT|$, $N_J:=|J|$. Let $\psi_J : = \psi\cdot 1_J$ denote the elementwise product of vector $\psi$ with the indicator vector of subset $J$. Then we can consider our problem on the space $\cT$ by default, and rewrite for simplicity
\[
\| \psi_J \|_{2} : = \| \psi_J \|_{\ell_2(\cT)} = \| \psi \|_{\ell_2(J)}.
\]

This leads us to consider the following ratio:
\[
  V^J_k(\psi) := \frac{\| \psi \|_{\ell_2(J)}^{k}}{\| \psi \|_{\ell_2(\cT)}^{k}},
 \]
 and its expectation over all Bernoulli random subset,
\[
  V_k(\psi) := \frac{\E_I \| \psi \|_{\ell_2(I)}^{k}}{\| \psi \|_{\ell_2(\cT)}^{k}},
 \]
 where $I$ is again a random subset.
%  The following lemma shows that $\E_I \| \psi \|_{\ell_2(I)}^{2k} \approx \| \psi \|_{\ell_2(\cT)}^{2k}$.
We remark that the lower bound of $V_k(\psi)$ is the optimal value of the optimization problem:
 \begin{eqnarray*}
    \min&   \E_I  \| \psi_I \|^k_2\\
    \mbox{ subject to }&   \| \psi \|_2 = 1,
\end{eqnarray*}
and the upper bound of $V_k(\psi)$ is the optimal value of the optimization problem
 \begin{eqnarray*}
    \max&   \E_I  \| \psi_I \|^k_2\\
    \mbox{ subject to }&   \| \psi \|_2 = 1.
\end{eqnarray*}

\begin{lemma} \label{lem:BGPropLemmaSub_fix}
Let $\psi$ denote a vector in $\reals^N$. For fixed support $J\subset [N]$ then for any $k\neq 0$, the deterministic quantity $V^J_{k}(\psi)$ satisfies:
\[
0 \leq V^J_{k}(\psi) \leq 1.
\]
The upper bound is approached by the vectors $\psi  \in \{  \frac{1}{\sqrt{N}}\sum_{j\in J}\pm e_j\}$, where  $\| \psi_J \|_2^k/\|  \psi \|_2^k =1$,
 and the lower bound is approached by 
any $\psi$ that has all of its entries on $J$ to be zero.
\end{lemma}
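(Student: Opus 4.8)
The plan is to collapse the entire statement onto a single scalar and then invoke nothing more than monotonicity of sums of non-negative terms together with monotonicity of the power map on $[0,1]$. First I would isolate the quantity that governs everything, namely the ratio
\[
r \;=\; r(\psi,J) \;=\; \frac{\|\psi\|_{\ell_2(J)}}{\|\psi\|_{\ell_2(\cT)}},
\]
defined whenever $\psi \neq 0$ (the degenerate case $\psi=0$ being excluded or handled by convention). By the very definition of $V^J_k$ we have $V^J_k(\psi)=r^k$, so the whole lemma becomes a statement about the scalar $r$ and the exponent.

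Second, I would show $0\le r\le 1$. Non-negativity is immediate, since numerator and denominator are Euclidean norms. For the upper bound I would use only that $J\subseteq\cT$, so that
\[
\|\psi\|_{\ell_2(J)}^2 \;=\; \sum_{j\in J}\psi_j^2 \;\le\; \sum_{t\in\cT}\psi_t^2 \;=\; \|\psi\|_{\ell_2(\cT)}^2 ,
\]
the inequality being nothing but the omission of the non-negative terms indexed by $\cT\setminus J$. Taking square roots gives $r\le 1$.

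Third, I would pass from $r$ to $V^J_k=r^k$. For $k>0$ the map $x\mapsto x^k$ is non-decreasing on $[0,1]$ and fixes the endpoints, so $r\in[0,1]$ forces $r^k\in[0,1]$, which is exactly the asserted bound. The extremal descriptions then fall out of tracking when each inequality on $r$ is tight: $r=1$ (hence $V^J_k=1$) precisely when $\sum_{t\in\cT\setminus J}\psi_t^2=0$, i.e. when $\psi$ is supported inside $J$ --- which includes the stated vectors $\tfrac{1}{\sqrt{N}}\sum_{j\in J}\pm e_j$; and $r=0$ (hence $V^J_k=0$) precisely when $\psi_j=0$ for every $j\in J$.

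The only point genuinely requiring care --- and the one I would flag explicitly --- is the sign of the exponent. The bound $0\le V^J_k\le 1$ uses $k>0$ essentially: for $k<0$ one instead has $V^J_k=r^k\ge 1$ with the inequalities reversed, so the statement as written is meant for the positive-exponent regime relevant to the quantities $V_{2k}$ studied above. Beyond respecting this convention there is no real obstacle here; the argument is pure monotonicity and involves no estimation.
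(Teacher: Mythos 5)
Your proposal is correct, and it is worth noting that the paper never actually writes out a standalone proof of this lemma: the subsequent ``Proofs of Upper and Lower Bounds'' section proves only the expectation-level statements about $V_1$ and $V_{2k}$, using a projected-subgradient stationary-point analysis and citations to \cite{bai2018subgradient}, and treats the present deterministic lemma as immediate. Your reduction to the scalar $r=\|\psi\|_{\ell_2(J)}/\|\psi\|_{\ell_2(\cT)}$, with $0\le r\le 1$ by dropping the non-negative terms on $\cT\setminus J$ and then $V^J_k=r^k\in[0,1]$ by monotonicity of the power map, is exactly the elementary argument the paper leaves implicit, and your characterization of the extremals ($r=1$ iff $\mathrm{supp}(\psi)\subseteq J$, which contains the stated sign vectors by scale invariance of the ratio; $r=0$ iff $\psi_J=0$) is sharper than the ``approached by'' phrasing of the statement. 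Your flag on the exponent is also a genuine catch rather than pedantry: as written the lemma claims ``any $k\neq 0$,'' but for $k<0$ one has $r^k\ge 1$ (and $r^k$ undefined when $\psi_J=0$), so the bounds reverse --- which is in fact consistent with the paper's own later Theorem~\ref{thm: TaylorSqrtInvExpect}, where $V_{-1}(\psi)\ge p^{-1/2}\ge 1$. So the statement should read $k>0$, and your proof, restricted to that regime, is complete; the only cosmetic point is to state explicitly the convention excluding $\psi=0$ (and, for the lower bound, that $\psi$ must be nonzero off $J$ so the denominator is positive), which you already do.
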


\begin{lemma} \label{lem:BGPropLemmaSub_2}
Let $\psi$ denote a vector in $\reals^N$.
\[
V_2(\psi) =   p.
\]
\end{lemma}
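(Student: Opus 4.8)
The plan is to exploit the fact that squaring the $\ell_2(I)$ norm turns it into a plain sum over coordinates, so that the only randomness enters through the indicator of each coordinate's membership in the Bernoulli subset $I$. Concretely, I would start from the definition
\[
V_2(\psi) = \frac{\E_I \| \psi \|_{\ell_2(I)}^{2}}{\| \psi \|_{\ell_2(\cT)}^{2}},
\]
and rewrite the numerator by expanding the restricted norm coordinatewise:
\[
\| \psi \|_{\ell_2(I)}^{2} = \sum_{i \in I} \psi_i^2 = \sum_{i \in \cT} \psi_i^2 \, 1_{\{i \in I\}}.
\]

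Next I would apply linearity of expectation over the random support $I$. Since $I$ is generated by independent Bernoulli$(p)$ coin tosses, each coordinate satisfies $\E_I\, 1_{\{i \in I\}} = \Prob[i \in I] = p$, so
\[
\E_I \| \psi \|_{\ell_2(I)}^{2} = \sum_{i \in \cT} \psi_i^2 \, \E_I\, 1_{\{i \in I\}} = p \sum_{i \in \cT} \psi_i^2 = p\,\| \psi \|_{\ell_2(\cT)}^{2}.
\]
Dividing by the denominator $\| \psi \|_{\ell_2(\cT)}^{2}$ (which is nonzero for any nonzero $\psi$) immediately yields $V_2(\psi) = p$, independently of $\psi$.

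I do not anticipate a genuine obstacle here: the result is an immediate consequence of linearity of expectation, made possible precisely because the exponent $k=2$ matches the square in the norm, collapsing the problem to a sum of independent single-coordinate contributions. The only point requiring a modicum of care is the precise meaning of the random set $I$ — namely that membership of distinct coordinates is decided by independent Bernoulli$(p)$ trials, so that $\E_I\, 1_{\{i \in I\}} = p$ for each $i$; independence is not even needed for the first-moment computation, only the marginal inclusion probability. It is worth remarking that this clean identity is special to $k=2$: for general $k$ the restricted norm no longer separates additively across coordinates, which is exactly why Lemma~\ref{lem:BGPropLemmaSub_fix} can only bound $V^J_k$ between $0$ and $1$ rather than evaluate it in closed form.
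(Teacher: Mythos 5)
Your proof is correct and matches the paper's (implicit) treatment: the paper states this lemma without a dedicated proof precisely because it follows from the same coordinatewise expansion and linearity of expectation you give, with $\E_I\, 1_{\{i\in I\}} = p$ per coordinate (the same first-moment fact the paper uses later when noting that $\sum_{i\in J} 1_{I_i}/N_J$ has mean $p$). Your remarks on why $k=2$ is special and on independence being unnecessary for the first moment are accurate but not needed.
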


\begin{theorem} 
% \label{thm:BGPropLemmaNorm}
Let $\psi$ denote a vector in $\reals^N$, then
\[
p \leq V_{1}(\psi) \leq V_{1}(\frac{\sum_{j\in [N]}\pm e_j}{\sqrt{N}} ) \leq \sqrt{p} .
\]
the lower bound is approached by on-sparse vectors $\psi \in \{ \pm e_i, i\in [N]\}$, and the upper bound is approached by 
$\psi  \in \{  \frac{1}{\sqrt{N}}\sum_{j\in [N]}\pm e_j\}$.

Furthermore, all the stationary points of $V_{1}(\psi)$ are $\{\frac{\sum_{i\in J} \pm e_i}{\sqrt{N_J}}\}$ for different support $J\subset \cT$, where $\{ \pm e_i, i\in [N]\}$ are the global minimizers, and $\{  \frac{1}{\sqrt{N}}\sum_{j\in [N]}\pm e_j\}$ are the global maximizers. And for $J$ with $1<N_J<N$, $\{\frac{\sum_{i\in J} \pm e_i}{\sqrt{N_J}}\}$ are saddle points with value
\[
V_1(\frac{\sum_{i\in J} \pm e_i}{\sqrt{N_J}}) = \E_I \sqrt{\frac{\sum_{i\in J} 1_{I_i}}{N_J}} = \sum_{j=0}^{N_J} (1-p)^{N_J-j} p^{j} {N_J \choose j}\sqrt{\frac{j}{N_J}}.
\]
\end{theorem}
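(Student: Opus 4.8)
The plan is to separate the two-sided bound, which is elementary, from the finer stationary-point classification, which carries the real content. For the bounds I would work on the unit sphere $\|\psi\|_2=1$, using that $V_1$ is homogeneous of degree $0$. The upper bound is immediate from Jensen's inequality applied to the concave map $z\mapsto\sqrt z$: since Lemma~\ref{lem:BGPropLemmaSub_2} gives $\E_I\|\psi\|_{\ell_2(I)}^2=p\|\psi\|_2^2$, we get $\E_I\|\psi\|_{\ell_2(I)}=\E_I\sqrt{\|\psi\|_{\ell_2(I)}^2}\le\sqrt{\E_I\|\psi\|_{\ell_2(I)}^2}=\sqrt{p}\,\|\psi\|_2$. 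The lower bound is equally short: because $\|\psi\|_{\ell_2(I)}\le\|\psi\|_2$ pointwise, we have $\|\psi\|_{\ell_2(I)}\ge\|\psi\|_{\ell_2(I)}^2/\|\psi\|_2$, and taking expectations gives $\E_I\|\psi\|_{\ell_2(I)}\ge p\|\psi\|_2$. Equality in the lower bound forces $\|\psi\|_{\ell_2(I)}\in\{0,\|\psi\|_2\}$ almost surely, i.e. $\psi=\pm e_i$, recovering the stated minimizers.

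For the stationary points I would fix a support $J$ with $N_J=|J|$ and work on the smooth stratum where exactly the coordinates in $J$ are nonzero. Writing $u_j=\psi_j^2$ and $f(\psi)=\E_I\|\psi\|_{\ell_2(I)}$, direct differentiation gives $\partial f/\partial\psi_j=\psi_j\,G_j(u)$ with $G_j(u)=\E_I[\,1_{\{j\in I\}}(\sum_{i\in I}u_i)^{-1/2}\,]$, so the constrained stationarity condition $\nabla f=\lambda\psi$ on the sphere reduces to $G_j(u)=\lambda$ for every $j\in J$. The crux is the rigidity statement $G_j(u)=G_k(u)\Rightarrow u_j=u_k$. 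I would prove it by conditioning on the membership of $j$ and $k$ and letting $T$ denote the random subset over the remaining coordinates with mass $S_T$; the cases where both or neither of $j,k$ are selected cancel, leaving
\[
G_j(u)-G_k(u)=p(1-p)\,\E_T\!\left[(u_j+S_T)^{-1/2}-(u_k+S_T)^{-1/2}\right].
\]
Since $x\mapsto(x+S_T)^{-1/2}$ is strictly decreasing and $0<p<1$, this is strictly negative when $u_j>u_k$, forcing all $u_j$ equal on $J$. Hence every stationary point equals $\tfrac{1}{\sqrt{N_J}}\sum_{j\in J}\pm e_j$, and the value formula follows from $\|\psi\|_{\ell_2(I)}^2=|I\cap J|/N_J$ with $|I\cap J|\sim\mathrm{Bin}(N_J,p)$.

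Finally I would classify these points by monotonicity of $\phi(m):=\E\sqrt{\mathrm{Bin}(m,p)/m}$ together with a one-sided local analysis. Monotonicity of $\phi$ in $m$ follows from a convex-order argument: deleting a uniformly random index exhibits the sample mean $\bar X_{m+1}$ as a conditional average of a copy of $\bar X_m$, so $\E\sqrt{\cdot}$ is non-decreasing in $m$ by concavity of $\sqrt{\cdot}$; thus $N_J=1$ gives the global minimum value $p$ and $N_J=N$ the global maximum $V_1(\tfrac{1}{\sqrt N}\sum\pm e_j)$. For $1<N_J<N$ I would exhibit opposite-sign directions: restricted to the $J$-subsphere the uniform vector is the maximizer (the same theorem in dimension $N_J$), giving a descent direction when $N_J\ge 2$; while turning on a new coordinate $k\notin J$ gives an ascent direction, because the singleton subset $I=\{k\}$ contributes a nonsmooth term $p(1-p)^{N_J}|\psi_k|$ to $f$, so the one-sided directional derivative along $e_k$ equals $p(1-p)^{N_J}>0$. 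The coexistence of both directions is exactly the saddle property, and it occurs precisely in the range $1<N_J<N$.

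The main obstacle I anticipate is the rigidity lemma $G_j=G_k\Rightarrow u_j=u_k$: the entire downstream structure (uniqueness of the stationary shapes and the min/max/saddle trichotomy) rests on the clean leave-two-out cancellation and the strict monotonicity of $(x+S_T)^{-1/2}$, which is also exactly where $0<p<1$ is used. A secondary technical point is that $V_1$ is only piecewise smooth, being nondifferentiable wherever a coordinate vanishes, so ``stationary point'' must be interpreted stratum-by-stratum and the saddle claim argued through one-sided directional derivatives rather than a Hessian; fortunately the $|\psi_k|$ kink makes the relevant directional derivative strictly positive, which simplifies that step rather than complicating it.
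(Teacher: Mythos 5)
Your proposal is correct, but it takes a genuinely different route from the paper. The paper's own proof only verifies the \emph{easy} directions directly: it checks achievability at $\pm e_i$, gets $V_1 \leq \sqrt{p}$ at the uniform vector by Jensen, computes the value $\sum_{j=0}^{N_J}(1-p)^{N_J-j}p^j\binom{N_J}{j}\sqrt{j/N_J}$ on fixed supports, and shows the candidate points $\frac{1}{\sqrt{N_J}}\sum_{i\in J}\pm e_i$ are stationary by computing that the projected subgradient $(I-qq^T)\,\E_I\,\partial\|q_I\|_2$ vanishes there; the hard converse --- that there are \emph{no other} stationary points, and the min/max/saddle trichotomy --- is outsourced entirely to Proposition 3.3 and (the proof of) Theorem 3.4 of the cited work of Bai et al. You instead supply self-contained replacements for exactly those outsourced ingredients: (i) your leave-two-out cancellation $G_j(u)-G_k(u)=p(1-p)\,\E_T[(u_j+S_T)^{-1/2}-(u_k+S_T)^{-1/2}]$ is a clean rigidity lemma forcing equal squared entries on the support, which is the content the paper borrows for ``all other points are not stationary''; (ii) your exchangeability/convex-order argument that $\bar X_{m+1}$ is a conditional average of $\bar X_m$ gives monotonicity (in fact \emph{strict} monotonicity, since for $0<p<1$ the deleted-index values are non-constant with positive probability) of $\phi(m)=\E\sqrt{\mathrm{Bin}(m,p)/m}$, which orders the stationary values and pins down the global extremizers; and (iii) the one-sided derivative $p(1-p)^{N_J}>0$ along a fresh coordinate, paired with the within-subsphere maximality of the uniform point (an induction on dimension), yields the saddle classification through directional derivatives rather than a smooth second-order test. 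Your elementary lower bound $\E_I\|\psi_I\|_2 \geq \E_I\|\psi_I\|_2^2/\|\psi\|_2 = p\|\psi\|_2$, with its equality analysis identifying $\pm e_i$, is also more direct than anything in the paper. Two small points to make explicit in a write-up: the deduction that the global maximizer is the full-support uniform vector uses that the maximum over the compact sphere is attained at a point that is stationary \emph{within its own stratum} (smoothness of $V_1$ on each stratum makes this immediate, but it should be said), and your stratum-by-stratum notion of stationarity should be reconciled with the subdifferential notion used by the paper --- this is harmless because for subsets $I$ disjoint from the support the subdifferential contributes a ball containing $0$, so within-stratum stationarity is equivalent to $0$ lying in the projected expected subdifferential. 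In exchange for being longer, your argument removes the external dependency and sharpens the monotonicity to strict, which the paper's statement implicitly needs to distinguish minimizers, saddles, and maximizers by value.
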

Remark:
Specifically,
\[
V_1( e_i) = p.
\]
\[
V_1(\frac{ e_i+e_j}{\sqrt{2}}) = p(\sqrt{2}+(1-\sqrt{2}p)).
\]

\begin{theorem} 
% \label{thm:BGPropLemmaSub_k}
For $k\geq 2$, let $\psi$ denote a vector in $\reals^N$, then
\[
p^k \leq V_{2k}(\frac{\sum_{j\in [N]}\pm e_j}{\sqrt{N}} ) \leq V_{2k}(\psi) \leq p.
\]
the upper bound is approached by on-sparse vectors $\psi \in \{ \pm e_i, i\in [N]\}$, and the lower bound is approached by 
$\psi  \in \{  \frac{1}{\sqrt{N}}\sum_{j\in [N]}\pm e_j\}$.

Furthermore, all the stationary points of $V_{2k}(\psi)$ are $\{\frac{\sum_{i\in J} \pm e_i}{\sqrt{N_J}}\}$ for different support $J\subset \cT$, where $\{ \pm e_i, i\in [N]\}$ are the set of global maximizers, and $\{  \frac{1}{\sqrt{N}}\sum_{j\in [N]}\pm e_j\}$ are the set of all the global minimizers. And for $J$ with $1<N_J<N$, $\{\frac{\sum_{i\in J} \pm e_i}{\sqrt{N_J}}\}$ are saddle points with value
\[
V_{2k}(\frac{\sum_{i\in J} \pm e_i}{\sqrt{N_J}}) = \E_I (\frac{\sum_{i\in J} 1_{I_i}}{N_J})^k = \sum_{j=0}^{N_J} (1-p)^{N_J-j} p^{j} {N_J \choose j}(\frac{j}{N_J})^k.
\]
\end{theorem}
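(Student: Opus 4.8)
The plan is to normalize $\|\psi\|_2 = 1$ (the ratio $V_{2k}$ is scale invariant) and set $q_i = \psi_i^2$, so that $q$ ranges over the probability simplex and
\[
V_{2k}(\psi) = \E_I \Big( \sum_{i\in I} q_i \Big)^{k} = \E\big[ S^{k} \big], \qquad S := \sum_i q_i B_i,
\]
where $B_i$ are IID Bernoulli$(p)$ and $I = \{i : B_i = 1\}$. Since $q_i \ge 0$ and $\sum_i q_i = 1$, the variable $S$ always lies in $[0,1]$. The upper bound is then immediate: for $k \ge 1$ we have $S^{k} \le S$ pointwise, so $\E[S^k] \le \E[S] = p \sum_i q_i = p$, with equality exactly when $S \in \{0,1\}$ almost surely, i.e. when $q = e_i$ for some $i$ (equivalently $\psi = \pm e_i$). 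This proves $V_{2k}(\psi)\le p$ and identifies the maximizers.

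For the lower bound I would show that $f(q) := \E[S^k]$ is Schur-convex on the simplex, so that it is minimized at the most spread-out point $q = (1/N,\dots,1/N)$ and maximized at the vertices $e_i$. The function $f$ is symmetric in $q$ (the $B_i$ are exchangeable), so by the Schur--Ostrowski criterion it suffices to check $(q_i - q_j)(\partial_i f - \partial_j f) \ge 0$. Writing $S = S' + q_i B_i + q_j B_j$ with $S' = \sum_{l\ne i,j} q_l B_l$ and conditioning on $(B_i,B_j)$, only the mixed events contribute:
\[
\partial_i f - \partial_j f = k\,\E\big[S^{k-1}(B_i - B_j)\big] = k\,p(1-p)\,\E_{S'}\big[(S'+q_i)^{k-1} - (S'+q_j)^{k-1}\big].
\]
Because $x\mapsto x^{k-1}$ is increasing on $[0,\infty)$ for $k\ge 2$, the bracket has the same sign as $q_i - q_j$, giving the required inequality; hence $f$ is Schur-convex. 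The minimizer is therefore the uniform vector, for which $S = |I|/N$ and $\E[S^k] = \sum_{j} \binom{N}{j} p^j (1-p)^{N-j} (j/N)^k$; Jensen's inequality applied to the convex map $x \mapsto x^k$ gives $\E[(|I|/N)^k] \ge (\E[|I|/N])^k = p^k$, with convergence to $p^k$ as $N \to \infty$ by concentration of $|I|/N$ around $p$. This establishes the full chain $p^k \le V_{2k}(\text{uniform}) \le V_{2k}(\psi) \le p$.

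For the stationary-point analysis I would compute, on the sphere $\|\psi\|_2 = 1$, the gradient $\partial_{\psi_l} V_{2k} = 2k\,\psi_l\, \E[S^{k-1}B_l]$ with $S = \sum_i \psi_i^2 B_i$. Riemannian stationarity ($\nabla V_{2k}\parallel \psi$) forces $\E[S^{k-1}B_l]$ to be constant across all $l$ in the support $J$. Repeating the conditioning computation above for two support indices $l,m$ shows $\E[S^{k-1}(B_l-B_m)] = p(1-p)\,\E[(S''+\psi_l^2)^{k-1} - (S''+\psi_m^2)^{k-1}]$, which is sign-matched to $\psi_l^2 - \psi_m^2$ by the \emph{strict} monotonicity of $x\mapsto x^{k-1}$; equality thus forces $\psi_l^2 = \psi_m^2$ throughout $J$. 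Hence every stationary point is a balanced sign vector $\frac{1}{\sqrt{N_J}}\sum_{i\in J}\pm e_i$, and substituting $\psi_i^2 = 1/N_J$ yields the stated Binomial value via $|I\cap J|\sim \mathrm{Binomial}(N_J,p)$.

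Finally, for the classification I would read off the two extreme cases ($N_J=1$ global maxima with value $p$, $N_J=N$ the global minimum) from the Schur-convexity and the majorization ordering, and certify that the intermediate points ($1<N_J<N$) are saddles by exhibiting one ascent and one descent direction: within the face $\{\mathrm{supp}\subseteq J\}$ the balanced point is the Schur-convex minimizer, so unbalancing the magnitudes strictly increases $V_{2k}$ (ascent), while turning on a fresh coordinate $l\notin J$ spreads the mass onto a larger support and strictly decreases the Schur-convex objective (descent). The main obstacle I anticipate is precisely this saddle certification together with the careful handling of the sphere constraint through the substitution $q_i = \psi_i^2$: turning on a new coordinate is a second-order perturbation in $\psi$ (since $q_l = \psi_l^2$), so one must track the normalization $\sum_i q_i = 1$ and argue at second order, and one must verify that the monotonicity of $x\mapsto x^{k-1}$ is genuinely strict at the relevant arguments (including when some $\psi_i^2$ vanish). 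This mirrors the argument for the earlier $V_1$ theorem, where $x\mapsto x^{1/2}$ is concave rather than convex and all extremal roles are reversed.
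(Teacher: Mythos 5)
Your proposal is correct, but it proceeds by a genuinely different and more self-contained route than the paper. The paper treats the landscape theorems by reformulating them as sphere-constrained projection pursuit and outsourcing the hard part: its written proof (given only for the $V_1$ case, with the $V_{2k}$ case left as analogous) verifies by a projected (sub)gradient computation that the balanced vectors $\frac{1}{\sqrt{N_J}}\sum_{i\in J}\pm e_i$ are stationary, and then cites Proposition 3.3 and Theorem 3.4 of \cite{bai2018subgradient} for the classification of extrema and for the claim that no other stationary points exist. You instead make the whole argument elementary and internal: the substitution $q_i=\psi_i^2$ turns $V_{2k}$ into $\E[S^k]$ with $S=\sum_i q_i B_i$ \emph{linear} in $q$, the upper bound $V_{2k}\leq p$ with its exact equality cases falls out of the pointwise inequality $S^k\leq S$ on $[0,1]$, and the lower bound plus the entire stationary-point classification follow from Schur-convexity via the Schur--Ostrowski criterion together with Jensen for $p^k$ --- including the uniqueness statements (via strictness of $x\mapsto x^{k-1}$) that the paper only inherits from the citation. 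A further point in your favor is that for $2k\geq 2$ the objective is a polynomial in $\psi$, so your gradient computations are unconditionally legitimate, whereas the cited machinery is built for the nonsmooth $V_1$ objective; the one delicate step you correctly flag --- that activating a coordinate off the support is first-order degenerate on the sphere --- is handled by your own conditioning identity, since $\E[S^{k-1}B_l]$ for $l\notin J$ equals $p\,\E[S^{k-1}]$, which is strictly smaller than the common on-support value $\E[S^{k-1}B_i]=p\,\E[S^{k-1}\mid B_i=1]$, giving a strictly negative second-order change and hence the descent direction needed for the saddle certification. What the paper's route buys is brevity and uniformity across $V_1$ and $V_{2k}$; what yours buys is a complete, citation-free proof with explicit equality cases.
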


\subsection{Upper and Lower bound on Expectation of Norm over Bernoulli Support}
\paragraph{Tight bound on $V_1$ and $V_{2k}$ using mean and variance of $\frac{\sum_{i\in J} 1_{I_i}}{N_J}$}

Now notice that 
\[
\frac{\sum_{i\in J} 1_{I_i}}{N_J} 
\]
has mean $p$ and variance $\frac{p(1-p)}{N_J}$.
We could define a zero mean unit variance random variable $g^J$ as a function of $\{1_{I_i}\mid i\in J\}$ 
\[
g^J := \left(\frac{\sum_{i\in J} 1_{I_i}}{N_J}-p\right)/ \left( \sqrt{\frac{p(1-p)}{N_J}} \right)
\]
namely, $\E_I \left( g^I\right)=0, \E_I \left( g^I\right)^{2}=1$.
then
\[
\frac{\sum_{i\in J} 1_{I_i}}{N_J} := p + \sqrt{\frac{p(1-p)}{N_J}} g^J = p(1 + \sqrt{\frac{1-p}{p N_J}} g^J)
\]
then
\[
\E_I \sqrt{\frac{\sum_{i\in J} 1_{I_i}}{N_J}} = \E_I\sqrt{p- \sqrt{\frac{p(1-p)}{N_J}}g^I} 
= \sqrt{p} \E_I\sqrt{1+ \sqrt{\frac{1-p}{pN_J}}g^I}  
\]
\[
\E_I \left(\frac{\sum_{i\in J} 1_{I_i}}{N_J}\right)^k 
= p^k \E_I\left({1+ \sqrt{\frac{1-p}{pN_J}}g^I}\right)^k 
\]
\begin{lemma}
\label{TaylorSqrt}
For $ 0 \leq x\leq 1,$
\begin{align}
(1-x)^{1/2}
=1-\sum_{\ell=0}^\infty\frac2{\ell+1}\binom{2\ell}{\ell}\left(\frac x4\right)^{\ell+1}
\end{align}
\end{lemma}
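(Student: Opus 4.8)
The plan is to recognize the claimed series as the ordinary Taylor (generalized binomial) expansion of $(1-x)^{1/2}$ in disguise, and to reduce the identity to a single algebraic simplification of binomial coefficients together with a short convergence check at the endpoint $x=1$.

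First I would invoke the generalized binomial theorem, valid for $|x| < 1$:
\[
(1-x)^{1/2} = \sum_{n=0}^{\infty}\binom{1/2}{n}(-x)^n = 1 + \sum_{n=1}^{\infty}\binom{1/2}{n}(-1)^n x^n.
\]
The core computational step is to put the coefficient $\binom{1/2}{n}(-1)^n$ in closed form. Expanding the falling factorial in $\binom{1/2}{n}$ and collecting signs gives the standard identity
\[
\binom{1/2}{n}(-1)^n = -\frac{1}{4^n(2n-1)}\binom{2n}{n}, \qquad n \geq 1,
\]
so that
\[
(1-x)^{1/2} = 1 - \sum_{n=1}^{\infty}\frac{1}{4^n(2n-1)}\binom{2n}{n}\,x^n.
\]

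Next I would match this against the claimed form by the substitution $n = \ell+1$. It then remains only to verify the elementary identity
\[
\frac{1}{2n-1}\binom{2n}{n} = \frac{2}{n}\binom{2n-2}{n-1},
\]
which follows at once from $\binom{2n}{n} = \frac{2(2n-1)}{n}\binom{2n-2}{n-1}$ (itself obtained by writing $\binom{2n}{n}$ as $\frac{(2n)(2n-1)}{n^2}\binom{2n-2}{n-1}$). After re-indexing, the right-hand side becomes exactly $\sum_{\ell=0}^{\infty}\frac{2}{\ell+1}\binom{2\ell}{\ell}(x/4)^{\ell+1}$, as required. As a sanity check, recognizing $\frac{1}{\ell+1}\binom{2\ell}{\ell}$ as the Catalan number $C_\ell$ and using $\sum_\ell C_\ell u^\ell = \frac{1-\sqrt{1-4u}}{2u}$ with $u = x/4$ reproduces the identity instantly.

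The one point requiring care beyond pure algebra is the endpoint $x = 1$, which lies on the boundary of the disk of convergence of the series. Using the asymptotic $\binom{2\ell}{\ell} \sim 4^\ell/\sqrt{\pi\ell}$, the $\ell$-th term at $x=1$ is of order $\ell^{-3/2}$, so the series converges absolutely at $x=1$; Abel's theorem then extends the identity from $[0,1)$ to the closed interval $[0,1]$, where both sides equal $0$ at $x=1$. I expect this endpoint and convergence bookkeeping to be the only genuinely delicate part; everything else is a direct coefficient comparison.
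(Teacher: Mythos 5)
Your proposal is correct and follows essentially the same route as the paper: the generalized binomial theorem followed by an algebraic simplification of the coefficient $\binom{1/2}{n}$ into the Catalan-type form $\frac{2}{\ell+1}\binom{2\ell}{\ell}4^{-(\ell+1)}$, differing only in the intermediate closed form you pass through. Your additional treatment of the endpoint $x=1$ (absolute convergence via $\binom{2\ell}{\ell}\sim 4^\ell/\sqrt{\pi\ell}$ and Abel's theorem) is a point of rigor the paper's proof silently skips, since the paper asserts the identity on all of $[0,1]$ while only justifying it on the open disk.
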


\begin{lemma}[Taylor expansion for $V_1$]
\label{TaylorSqrt_Expect}
Using Taylor expansion, asymptotically when $N_J\rightarrow \infty$, 
\BEAS
\frac{1}{\sqrt{p}}\E_I \sqrt{\frac{\sum_{i\in J} 1_{I_i}}{N_J}} 
&=& \E_I\left({1+ \sqrt{\frac{1-p}{pN_J}}g^I}\right)^{1/2} \\
&=& 1-\sum_{\ell=0}^\infty\frac2{\ell+1}\binom{2\ell}{\ell}2^{-2(\ell+1)}\left( \sqrt{\frac{1-p}{pN_J}}\right)^{\ell+1}\E_I \left( g^I\right)^{\ell+1}\\
&=& 1- 2^{-3}({\frac{1-p}{pN_J}}) +O\left( ({\frac{1-p}{pN_J}})^2 \right).
\EEAS
\end{lemma}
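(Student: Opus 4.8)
The plan is to reduce the statement to a moment expansion of a standardized binomial sum. Writing $u := \sqrt{\frac{1-p}{pN_J}}$ and using the representation $\frac{\sum_{i\in J}1_{I_i}}{N_J} = p\,(1 + u\,g^J)$ established just above the lemma, the quantity in question is exactly $\E_I (1 + u\,g^J)^{1/2}$, where $g^J$ has mean $0$ and variance $1$. First I would substitute $x = -u\,g^J$ into the binomial series of Lemma \ref{TaylorSqrt}, so that $(1+u g^J)^{1/2} = 1 - \sum_{\ell \ge 0}\frac{2}{\ell+1}\binom{2\ell}{\ell}(-u g^J/4)^{\ell+1}$, and then take expectations term by term. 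The $\ell=0$ term is proportional to $\E_I g^J = 0$ and drops out, while the $\ell=1$ term contributes $-\tfrac{1}{8}u^2\,\E_I (g^J)^2 = -2^{-3}\tfrac{1-p}{pN_J}$, which is precisely the announced leading correction. This produces the formal series written in the lemma and identifies its first two orders.

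The second step is to control the remaining terms and establish the stated $O\big((\tfrac{1-p}{pN_J})^2\big)$ rate. The key input is the moment behaviour of $g^J$, the standardized sum of $N_J$ i.i.d. $\mathrm{Bernoulli}(p)$ indicators. By the classical moment asymptotics for standardized sums, the even moments satisfy $\E_I (g^J)^{2k} = O(1)$ while the odd moments are small, $\E_I (g^J)^{2k+1} = O(N_J^{-1/2})$, the $m=3$ case being the skewness $\frac{1-2p}{\sqrt{N_J\,p(1-p)}}$. Since $u = O(N_J^{-1/2})$, the generic term $u^{m}\,\E_I (g^J)^{m}$ is then $O(N_J^{-k})$ for $m=2k$ and $O(N_J^{-(k+1)})$ for $m=2k+1$; in particular every term with $m\ge 3$ is $O(N_J^{-2})$, which gives the remainder estimate after the explicit second-order term.

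The main obstacle is making the term-by-term interchange of expectation and summation rigorous, because the binomial series for $(1-x)^{1/2}$ has radius of convergence $1$ whereas the random argument $u\,g^J$ is not confined to $(-1,1)$: on the rare event that $\sum_{i\in J}1_{I_i}$ is atypically small the factor $1+u g^J$ approaches $0$, where the square root is non-smooth, and on such events $u^m\,\E_I|g^J|^m$ need not be uniformly small in $m$. I would therefore not sum the formal series directly but instead split the expectation at, say, $\{|u\,g^J| \le 1/2\}$. On this bulk event $(1+t)^{1/2}$ is analytic with uniformly bounded derivatives, so a finite Taylor expansion with Lagrange remainder justifies the two-term expansion and yields the $O(N_J^{-2})$ control. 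On the complementary tail event I would use that $0 \le (1 + u\,g^J)^{1/2} \le p^{-1/2}$ is bounded together with a Chernoff bound for the binomial sum, which shows the tail has probability $O(e^{-cN_J})$ — exponentially small, hence absorbed into the $O(N_J^{-2})$ remainder; the atom at $\sum_{i\in J}1_{I_i}=0$, of probability $(1-p)^{N_J}$, is swept up in the same estimate. Combining the bulk and tail contributions yields both the convergent series and the leading-order expansion asserted in the lemma.
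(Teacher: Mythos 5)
Your proposal is correct, and its first half---substituting into the binomial series of Lemma~\ref{TaylorSqrt}, killing the $\ell=0$ term via $\E_I\, g^J = 0$, and extracting the $-2^{-3}\frac{1-p}{pN_J}$ correction from the $\ell=1$ term---is exactly the computation in the paper's proof of Lemma~\ref{TaylorSqrt_Expect}. Where you genuinely depart from the paper is in the treatment of the remainder. The paper stops at the formal level: it takes expectations of the infinite series term by term without addressing that $u\,g^J$ (with $u=\sqrt{\tfrac{1-p}{pN_J}}$) ranges over $[-1,\,p^{-1}-1]$, which exits the series' radius of convergence whenever $p<1/2$, and it then invokes a CLT normal approximation to evaluate the higher moments and produce a full asymptotic series with Gaussian-moment coefficients. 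Your bulk/tail decomposition at $\{|u\,g^J|\le 1/2\}$---a finite Taylor expansion with Lagrange remainder on the bulk, where $(1+t)^{1/2}$ has uniformly bounded derivatives, combined with the bound $0\le (1+u\,g^J)^{1/2}\le p^{-1/2}$ and a Chernoff estimate showing the tail has probability $O(e^{-cN_J})$---supplies precisely the justification the paper omits, and your direct use of binomial moment asymptotics (even standardized moments $O(1)$, odd ones $O(N_J^{-1/2})$, with the $m=3$ skewness $\tfrac{1-2p}{\sqrt{N_J\,p(1-p)}}$ confirming the $O(N_J^{-2})$ size of the cubic term) is more elementary and self-contained than the paper's normal-approximation appeal. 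One detail worth writing out if you flesh this in: restricting to the bulk event perturbs the moment identities, e.g.\ $\E_I[\mathbf{1}_{\mathrm{bulk}}\, g^J]\ne 0$ exactly, but Cauchy--Schwarz against the tail probability bounds these perturbations by $O(e^{-cN_J/2})$, so they are absorbed into the stated remainder as you claim; the trade-off relative to the paper is only that your argument certifies the two-term expansion with an $O\bigl((\tfrac{1-p}{pN_J})^2\bigr)$ error rather than the paper's (heuristic) all-orders series.
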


\begin{lemma}[Taylor expansion for $V_{2k}$]
\label{TaylorSqrt_Expect_k}
Using Taylor expansion, asymptotically when $N_J\rightarrow \infty$, 
\BEAS
\frac{1}{p^k}\E_I ({\frac{\sum_{i\in J} 1_{I_i}}{N_J}})^k  
&=& \E_I\left({1+ \sqrt{\frac{1-p}{pN_J}}g^I}\right)^{k} \\
&=& 1 +\sum_{\ell=1}^{k} \binom{k}{\ell} ({\frac{1-p}{pN_J}})^{\ell/2} \E_I (g^I)^{\ell}\\
&=& 1 + \frac{k(k-1)}{2} ({\frac{1-p}{pN_J}}) +O\left( ({\frac{1-p}{pN_J}})^2 \right).
\EEAS
\end{lemma}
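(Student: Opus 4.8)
The plan is to exploit the fact that here the exponent $k$ is a positive integer, so --- unlike the companion Lemma~\ref{TaylorSqrt_Expect}, where the fractional power $1/2$ forced us to invoke the infinite Taylor series of Lemma~\ref{TaylorSqrt} --- the binomial theorem terminates and gives an \emph{exact, finite} expansion. Starting from the representation established just above the statement,
\[
\frac{\sum_{i\in J} 1_{I_i}}{N_J} = p\left(1+ \sqrt{\tfrac{1-p}{pN_J}}\, g^J\right),
\]
I would raise both sides to the $k$-th power and apply the binomial theorem to obtain
\[
\frac{1}{p^k}\left(\frac{\sum_{i\in J} 1_{I_i}}{N_J}\right)^k = \sum_{\ell=0}^{k}\binom{k}{\ell}\left(\tfrac{1-p}{pN_J}\right)^{\ell/2}(g^J)^\ell .
\]
Because this is a finite sum, taking $\E_I$ termwise is immediate (no convergence issues, since all moments of a binomial are finite), which yields exactly the middle line of the claim.

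Next I would evaluate the low-order terms using the moments of the standardized variable $g^J$ (written $g^I$ in the statement), which is just a centred and scaled $\mathrm{Binomial}(N_J,p)$: by construction $\E_I g^J = 0$ and $\E_I (g^J)^2 = 1$. Hence the $\ell=0,1,2$ terms collapse to $1 + 0 + \binom{k}{2}\frac{1-p}{pN_J}$, producing the advertised leading correction $\frac{k(k-1)}{2}\frac{1-p}{pN_J}$. It then remains only to verify that the tail $\sum_{\ell=3}^{k}$ is $O\!\left((\tfrac{1-p}{pN_J})^2\right)$.

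The main obstacle --- the one genuinely nontrivial step --- is controlling the $\ell=3$ term at the correct order. A crude bound using only $\E_I (g^J)^3 = O(1)$ would give $O(N_J^{-3/2})$ and \emph{fail} to reach the stated $O(N_J^{-2})$ remainder. The resolution is that the skewness of a standardized binomial decays: the third central moment of $\mathrm{Binomial}(N_J,p)$ equals $N_J\, p(1-p)(1-2p)$, so
\[
\E_I (g^J)^3 = \frac{1-2p}{\sqrt{N_J\,p(1-p)}} = O\!\left(N_J^{-1/2}\right).
\]
Multiplying this by $\binom{k}{3}\left(\tfrac{1-p}{pN_J}\right)^{3/2}$ yields a contribution of exact order $N_J^{-2}$. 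All remaining terms are automatically benign: for $\ell=4$ one uses that $\E_I (g^J)^4 \to 3$ is bounded, giving $O(N_J^{-2})$, and for $\ell\ge 5$ the prefactor $\left(\tfrac{1-p}{pN_J}\right)^{\ell/2}=O(N_J^{-5/2})$ together with uniformly bounded standardized moments makes the contribution $o(N_J^{-2})$. Summing these bounds with the explicit leading terms gives the claim. The one point requiring care is therefore the interplay between the $N_J^{-3/2}$ prefactor and the $N_J^{-1/2}$ decay of the skewness --- exactly the non-Gaussian effect that the fixed-$p$, large-$N_J$ asymptotics must absorb.
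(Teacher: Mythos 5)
Your proposal is correct and takes essentially the same route the paper intends: the middle line of the lemma is precisely the finite binomial expansion you write down, followed by termwise expectation against the moments of the standardized binomial $g^J$, with $\E_I g^J = 0$ and $\E_I (g^J)^2 = 1$ producing the leading correction $\frac{k(k-1)}{2}\frac{1-p}{pN_J}$. Your handling of the $\ell=3$ term via the skewness decay $\E_I (g^J)^3 = \frac{1-2p}{\sqrt{N_J\,p(1-p)}} = O\bigl(N_J^{-1/2}\bigr)$ supplies the one detail the paper leaves implicit, and it is exactly what is needed for the stated $O\bigl((\tfrac{1-p}{pN_J})^2\bigr)$ remainder, since a crude $O(1)$ bound on the third moment would only give $O(N_J^{-3/2})$.
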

Therefore, we know what when $N_J>>1$, 
$V_1(\frac{\sum_{i\in J} 1_{I_i}}{N_J})$ is close to its upper bound $\sqrt{p}$, 
and for $k\geq 2$, 
$V_k(\frac{\sum_{i\in J} 1_{I_i}}{N_J})$ is close to its lower bound $p^k$. Namely 
\[
\E_I \sqrt{\frac{\sum_{i\in J} 1_{I_i}}{N_J}} \approx \sqrt{p},
\]
\[
\E_I (\frac{\sum_{i\in J} 1_{I_i}}{N_J})^k \approx p^k.
\]

% Using central limit theorem, when $N_J\rightarrow \infty$,
% \[
% \frac{\sum_{i\in J} 1_{I_i}}{N_J} \sim N(p,  \frac{p(1-p)}{N_J})
% \]
% we can further calculate the asymptotic using moments of Gaussian.
% \begin{lemma}[Taylor expansion and Gaussian approximation for $V_1$]
% \label{TaylorSqrt_Expect}
% and using Taylor expansion, asymptotically when $N_J\rightarrow \infty$, 
% \BEAS
% \frac{1}{\sqrt{p}}\E_I \sqrt{\frac{\sum_{i\in J} 1_{I_i}}{N_J}} 
% &\approx&  \E_G\sqrt{1- \sqrt{\frac{1-p}{pN_J}}G} \\
% &=& 1-\sum_{s=1}^\infty\frac1{s}\frac{(4s-2)!}{(2s-1)!(2s-1)!}2^{-4s} \frac{(2s-1)!}{2^{s-1}(s-1)!} (p^{-1}-1)^{s}  N_J^{-s}
% \EEAS
% \end{lemma}

% \begin{lemma}[Taylor expansion and Gaussian approximation for $V_{2k}$]
% \label{TaylorSqrt_Expect_k}
% and using Taylor expansion, asymptotically when $N_J\rightarrow \infty$, 
% \BEAS
% \frac{1}{p^k}\E_I ({\frac{\sum_{i\in J} 1_{I_i}}{N_J}})^k  
% &\approx&  \E_G({1+ \sqrt{\frac{1-p}{pN_J}}G})^k \\
% &=& 1 +\sum_{\ell=1}^{k} \binom{k}{\ell} ({\frac{1-p}{pN_J}})^{\ell/2} \E_G G^{\ell}\\
% &=& 1 +\sum_{s=1}^{\left\lfloor{k/2}\right\rfloor} \binom{k}{2s} ({\frac{1-p}{pN_J}})^{s} (2s-1)!!
% \EEAS
% \end{lemma}

To gain geometric intuition, we visualize $V_1$ in $2-$dimensional sphere in figure~\ref{fig-landscape-V1}.

% \begin{figure} 
% \includegraphics[width=.7\textwidth]{./figures_BD/Ju_landscape.png}
% % \includegraphics[width=.45\textwidth]{./}
% \caption{The value of $V_1$ on a two-dimensional sphere in $\reals^3$, normalized by the affine transform to send the value in $[0,1].$  
% }
% \end{figure}

\begin{figure} 
\centering
\includegraphics[width=.7\textwidth]{./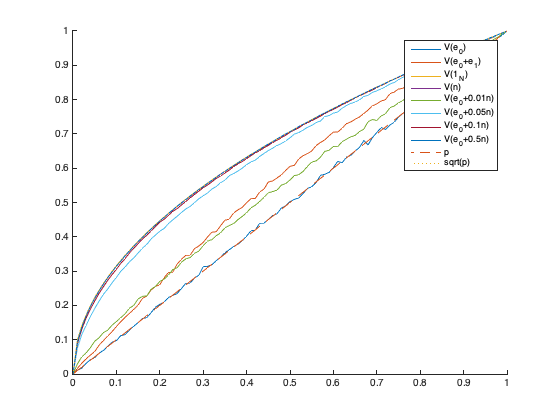}
\caption{The value of $V_1(\psi)$ for different $\psi$, $x-$axis is the sparsity parameter $p$, $y-$axis is $V_1(\psi)$. }
% \label{fig-V1}
\end{figure}

\begin{figure} 
\centering
\includegraphics[width=.7\textwidth]{./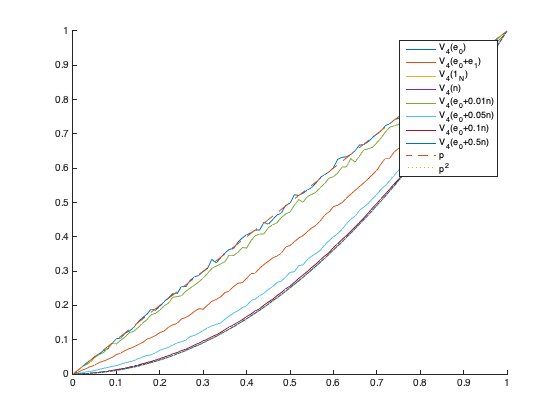}
\caption{The value of $V_4(\psi)$ for different $\psi$, $x-$axis is the sparsity parameter $p$, $y-$axis is $V_4(\psi)$. }
% \label{fig-V4}
\end{figure}

\begin{figure} 
\centering
\includegraphics[width=.7\textwidth]{./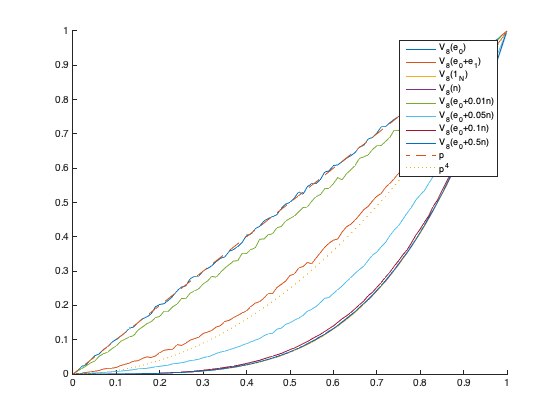}
\caption{The value of $V_8(\psi)$ for different $\psi$, $x-$axis is the sparsity parameter $p$, $y-$axis is $V_8(\psi)$. }
\label{fig-V4}
\end{figure}

\subsection{Proofs of Upper and Lower Bounds}

\begin{proof}[Proof of Theorem~\ref{thm:BGPropLemmaNorm}]
First, we show that the upper and lower bound value we give is achievable.
\BIT
\item 
When $\psi = e_0$,  
\[
\E_I  \| (e_0)_I \|_2=p\| (e_0)\|_2+(1-p)\| 0\|_2 = p.
\]
\item
When $\psi = \frac{1}{\sqrt{N}}\sum_{j\in [N]}\pm e_j,$ then  
\[
V_1(\psi) = \E_I \sqrt{\frac{\sum_{i\in [N]} 1_{I_i}}{N}} \leq   \sqrt{\E_I \frac{\sum_{i\in [N]} 1_{I_i}}{N}} = \sqrt{p}.
\]
The inequality comes from Jensen's inequality, since square root function is concave.
\item
For fixed support $J$, when $\psi = \frac{1}{\sqrt{N_J}}\sum_{i\in J} e_i,$  then $\| \psi_J \|_2/\|  \psi \|_2 =1.$
However, the expectation is going to be smaller for $p<1$. 
\[
V_1(\psi) = \E_I \sqrt{\frac{\sum_{i\in J} 1_{I_i}}{N_J}} = \sum_{k=0}^{N_J} (1-p)^{N_J-k} p^{k} {N_J \choose k}\sqrt{\frac{k}{N_J}}.
\]
\EIT

This problem can be reformulated as projection pursuit with sphere constraint,
 \begin{eqnarray*}
    \min&   \E_I  \| \psi_I \|_2\\
    \mbox{ subject to }&   \| \psi \|_2 = 1.
\end{eqnarray*}
Then from \cite{bai2018subgradient} Proposition $3.3$, we get the result. 

The main idea of the proof is to calculate the projected gradient
for $q  \in \{\frac{\sum_{i\in J} \pm e_i}{\sqrt{N_J}}\}$, when $N_J = M,$
\[
e_j^T  \E_I  \partial\| (q)_I \|_2 = q_j [\sum_{k=0}^{M} (1-p)^{M-k} p^{k} {M \choose k}\sqrt{\frac{k}{M}}] ,
\]
therefore,
\[
  \E_I  \partial\| (q)_I \|_2 =  [\sum_{k=0}^{M} (1-p)^{M-k} p^{k} {M \choose k}\sqrt{\frac{k}{M}}] q,
\]
so
\[
(I-q q^T)  \E_I  \partial\| q_I \|_2 = [\sum_{k=0}^{M} (1-p)^{M-k} p^{k} {M \choose k}\sqrt{\frac{k}{M}}]q -[\sum_{k=0}^{M} (1-p)^{M-k} p^{k} {M \choose k}\sqrt{\frac{k}{M}}]q  = 0.
\]
then $q \in \{\frac{\sum_{i\in J} \pm e_i}{\sqrt{N_J}}\}$ are stationary points.
 
The other direction (all other points are not stationary) is implied by (the proof of) Theorem $3.4$ in  \cite{bai2018subgradient}.
\end{proof}

\begin{proof}[Proof of Lemma~\ref{TaylorSqrt}]
From generalized binomial theorem, we know that for $ 0 \leq x\leq 1,$
\beq
(1-x)^{1/2}
=\sum_{\ell=0}^\infty\binom{1/2}{\ell}(-x)^\ell.
\eeq

\begin{align}
\binom{1/2}{\ell}
&=\frac{\frac12(\frac12-1)(\frac12-2)\cdots(\frac12-\ell+1)}{\ell!}\\
&=\frac{(-1)^{\ell-1}}{2^\ell\ell!}1\cdot3\cdot5\cdots(2\ell-3)\\
&=\frac{(-1)^{\ell-1}}{2^\ell\ell!}\frac{(2\ell-2)!}{2^{\ell-1}(\ell-1)!}\\
&=\frac{(-1)^{\ell-1}}{\ell2^{2\ell-1}}\binom{2\ell-2}{\ell-1},
\end{align}
then
\begin{align}
(1-x)^{1/2}
&=1-\sum_{\ell=1}^\infty\frac2\ell\binom{2\ell-2}{\ell-1}\left(\frac x4\right)^\ell\\
&=1-\sum_{\ell=0}^\infty\frac2{\ell+1}\binom{2\ell}{\ell}\left(\frac x4\right)^{\ell+1}.
\end{align}
\end{proof}

\begin{proof}[Proof of Lemma~\ref{TaylorSqrt_Expect}]
\BEAS
\E_I\sqrt{1- \sqrt{\frac{1-p}{pN_J}}g^I}  
&=& 1-\sum_{\ell=0}^\infty\frac2{\ell+1}\binom{2\ell}{\ell}2^{-2(\ell+1)}\left( \sqrt{\frac{1-p}{pN_J}}\right)^{\ell+1}\E_I \left( g^I\right)^{\ell+1}\\
&=& 1- 2^{-3}({\frac{1-p}{pN_J}}) - O(({\frac{1-p}{pN_J}})^2)
\EEAS
Using the central limit theorem, when $N_J\rightarrow \infty$, we have normal approximation for $g^J$ so that
\[
\frac{\sum_{i\in J} 1_{I_i}}{N_J} \sim N(p,  \sqrt{\frac{p(1-p)}{N_J}}),
\]
then for $G\sim N(0,1),$ asymptotically when $N_J\rightarrow \infty$,
\[
\E_I \sqrt{\frac{\sum_{i\in J} 1_{I_i}}{N_J}} \approx  \sqrt{p} \E_G\sqrt{1- \sqrt{\frac{1-p}{pN_J}}G} =
\sqrt{p} \left( 1- 2^{-3}(1-p)^{2}  N_J^{-1} + O((1-p)^{4}  N_J^{-2})\right),
\]
where
\BEAS
\E_I\sqrt{1- \sqrt{\frac{1-p}{pN_J}}g^I}  
&\approx &\E_G\sqrt{1- \sqrt{\frac{1-p}{pN_J}}G}\\
&=& 1-\sum_{\ell=0}^\infty\frac2{\ell+1}\binom{2\ell}{\ell}2^{-2(\ell+1)}\left( \sqrt{\frac{1-p}{pN_J}}\right)^{\ell+1}\E_G \left( G\right)^{\ell+1}\\
&=& 1-\sum_{\ell \mbox{ is odd}}^\infty\frac2{\ell+1}\binom{2\ell}{\ell}2^{-2(\ell+1)}\left( \sqrt{\frac{1-p}{pN_J}}\right)^{\ell+1} (\ell)!!\\
&=& 1-\sum_{s=1}^\infty\frac1{s}\binom{4s-2}{2s-1}2^{-4s} (2s-1)!! (p^{-1}-1)^{s}  N_J^{-s}\\
&=& 1-\sum_{s=1}^\infty\frac1{s}\frac{(4s-2)!}{(2s-1)!(2s-1)!}2^{-4s} \frac{(2s-1)!}{2^{s-1}(s-1)!} (p^{-1}-1)^{s}  N_J^{-s}\\
&=& 1-\sum_{s=1}^\infty\frac{(4s-2)!}{(2s-1)!(s)!}2^{-5s+1}  (p^{-1}-1)^{s}  N_J^{-s}.
\EEAS
\end{proof}

\subsection{Bound on Harmonic Expectation }
\paragraph{Bounds on  $V_{-1}(\psi)=\frac{\E_I  \| \psi \|^{-1}_{\ell_2(I)} }{\| \psi \|^{-1}_{\ell_2(\cT)}}$}
\begin{lemma}
\label{TaylorSqrtInv}
For $ 0 \leq x\leq 1,$
\begin{align}
(1-x)^{-1/2}
=\sum_{\ell=0}^\infty\frac{1}{2^{2\ell}}\binom{2\ell}{\ell}\left(x\right)^{\ell}
\end{align}
\end{lemma}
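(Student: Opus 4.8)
The plan is to mirror exactly the computation carried out for the $(1-x)^{1/2}$ expansion in Lemma~\ref{TaylorSqrt}, only with the exponent $-1/2$ in place of $1/2$. First I would invoke the generalized binomial theorem, which for $|x|<1$ yields
\[
(1-x)^{-1/2} = \sum_{\ell=0}^\infty \binom{-1/2}{\ell}(-x)^\ell,
\]
so the whole task reduces to putting the generalized binomial coefficient $\binom{-1/2}{\ell}$ into a clean closed form. This is the analogue of the middle step in Lemma~\ref{TaylorSqrt}, where $\binom{1/2}{\ell}$ was rewritten in terms of a central binomial coefficient.

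Next I would expand the falling factorial in the numerator of $\binom{-1/2}{\ell}$, namely $(-\tfrac12)(-\tfrac32)\cdots(-\tfrac{2\ell-1}{2})$. Factoring out the sign and the powers of two, this product equals $\frac{(-1)^\ell}{2^\ell}\,(2\ell-1)!!$, where $(2\ell-1)!!=1\cdot 3\cdot 5\cdots(2\ell-1)$. I would then apply the standard double-factorial identity $(2\ell-1)!!=\frac{(2\ell)!}{2^\ell \ell!}$ to obtain
\[
\binom{-1/2}{\ell} = \frac{(-1)^\ell}{2^\ell \ell!}\cdot\frac{(2\ell)!}{2^\ell \ell!} = \frac{(-1)^\ell}{2^{2\ell}}\binom{2\ell}{\ell}.
\]
Substituting this back and noting that $(-1)^\ell(-x)^\ell = x^\ell$, the alternating signs cancel and the series collapses to the claimed form $\sum_{\ell=0}^\infty \frac{1}{2^{2\ell}}\binom{2\ell}{\ell}x^\ell$.

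There is essentially no hard analytic obstacle here; the computation is routine and self-contained. The only point requiring a word of care is the endpoint $x=1$ stated in the lemma: since $\frac{1}{2^{2\ell}}\binom{2\ell}{\ell}\sim (\pi\ell)^{-1/2}$ by Stirling, the series diverges at $x=1$, exactly matching the blow-up of $(1-x)^{-1/2}$ there, so the identity holds as a genuine convergent power series for $0\le x<1$ and as an equality of $+\infty$ in the limit $x\uparrow 1$. I would therefore state the result for $0\le x<1$ (or note this limiting interpretation) rather than claim ordinary convergence on the closed interval.
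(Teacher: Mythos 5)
Your proposal is correct and follows essentially the same route as the paper's proof: both invoke the generalized binomial theorem and reduce $\binom{-1/2}{\ell}$ to $\frac{(-1)^\ell}{2^{2\ell}}\binom{2\ell}{\ell}$ via the double-factorial identity $(2\ell-1)!!=\frac{(2\ell)!}{2^\ell \ell!}$. Your closing remark that convergence genuinely holds only for $0\le x<1$ (with divergence at $x=1$ matching the blow-up of $(1-x)^{-1/2}$) is a small but valid correction to the lemma's stated range, which the paper does not address.
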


\begin{proof}
From generalized binomial theorem, we know that for $ 0 \leq x\leq 1,$
\beq
(1-x)^{-1/2}
=\sum_{\ell=0}^\infty\binom{-1/2}{\ell}(-x)^\ell
\eeq

\begin{align}
\binom{-1/2}{\ell}(-1)^\ell
&=\frac{-\frac12(-\frac12-1)(-\frac12-2)\cdots(-\frac12-\ell+1)}{\ell!}\\
&=\frac{1}{2^\ell\ell!}1\cdot3\cdot5\cdots(2\ell-1)\\
&=\frac{1}{2^\ell\ell!}\frac{(2\ell)!}{2^{\ell}(\ell)!}\\
&=\frac{1}{2^{2\ell}}\binom{2\ell}{\ell}
\end{align}
\end{proof}

\begin{theorem}
\label{thm: TaylorSqrtInvExpect}
\beq 
  V_{-1}(\psi) = \frac{\E_I  \| \psi \|^{-1}_{\ell_2(I)} }{\| \psi \|^{-1}_{\ell_2(\cT)}} \geq p^{-1/2} \geq 1+\frac{1}{2}(1-p) \geq 1 .
\eeq
\end{theorem}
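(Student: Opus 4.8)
The plan is to verify the chain of four quantities from left to right, noting that only the leftmost inequality $V_{-1}(\psi) \ge p^{-1/2}$ carries real content; the remaining two are elementary. Throughout I would normalize so that $\|\psi\|_{\ell_2(\cT)} = 1$, which is harmless since $V_{-1}$ is scale-invariant in $\psi$ (numerator and denominator of the defining ratio carry the same power of the scale), and I would abbreviate the random quantity $S := \|\psi\|_{\ell_2(I)}^2 = \sum_i \psi_i^2 \mathbf{1}_{\{i \in I\}}$, so that $V_{-1}(\psi) = \E_I[S^{-1/2}]$.

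For the main inequality I would appeal to Jensen's inequality. The key observation is that, since $I$ is an i.i.d.\ Bernoulli$(p)$ random subset, $\E_I[S] = \sum_i \psi_i^2 \Pr(i \in I) = p\sum_i \psi_i^2 = p$ under the normalization. The map $x \mapsto x^{-1/2}$ is convex on $(0,\infty)$ (its second derivative $\tfrac34 x^{-5/2}$ is positive), so Jensen gives $\E_I[S^{-1/2}] \ge (\E_I[S])^{-1/2} = p^{-1/2}$, which is exactly $V_{-1}(\psi) \ge p^{-1/2}$. The one point needing a word of care is that $S$ can vanish with positive probability (whenever the random support $I$ misses the support of $\psi$), so $V_{-1}$ may in fact be $+\infty$; this only strengthens the inequality, and Jensen remains valid in the extended reals, so no separate argument is required.

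For the second inequality $p^{-1/2} \ge 1 + \tfrac12(1-p)$ I would substitute $x = 1-p \in [0,1]$ and invoke Lemma~\ref{TaylorSqrtInv}, which expands $(1-x)^{-1/2} = \sum_{\ell \ge 0} 2^{-2\ell}\binom{2\ell}{\ell} x^\ell$ with all coefficients nonnegative. Keeping the $\ell=0$ and $\ell=1$ terms (namely $1$ and $\tfrac12 x$) and discarding the remaining nonnegative terms yields $(1-x)^{-1/2} \ge 1 + \tfrac{x}{2}$, i.e.\ $p^{-1/2} \ge 1 + \tfrac12(1-p)$. The final inequality $1 + \tfrac12(1-p) \ge 1$ is immediate from $p \le 1$.

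The proof has essentially no hard step; the only genuine decision is to recognize that convexity of $x^{-1/2}$ together with the elementary mean computation $\E_I[S] = p$ settles the substantive bound in one line. If one instead wanted a self-contained estimate that avoids Jensen — matching the Taylor-expansion style of the surrounding lemmas — then the main obstacle would be controlling $\E_I[S^{-1/2}]$ through the concentration of $S$ about its mean $p$, which is delicate precisely because of the small-$S$ (near-empty-support) tail; the Jensen route sidesteps this entirely, and is the approach I would take.
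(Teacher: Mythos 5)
Your proof is correct, but it takes a genuinely different (and shorter) route than the paper's. The paper first observes the pointwise bound $\|\psi\|^{-1}_{\ell_2(J)}/\|\psi\|^{-1}_{\ell_2(\cT)}\geq 1$, then rewrites the random ratio as $(1-\rho_{\cT-I})^{-1/2}$ with $\rho_{\cT-I}=\|\psi\|^2_{\ell_2(\cT-I)}/\|\psi\|^2_{\ell_2(\cT)}$, expands $(1-\rho)^{-1/2}$ via Lemma~\ref{TaylorSqrtInv} into a power series with nonnegative coefficients, and lower-bounds each moment by $\E_J\,\rho_{\cT-J}^{\ell}\geq(1-p)^{\ell}$, quoting its earlier landscape bound on $V_{2\ell}$ with $p$ replaced by $1-p$; resumming the series then gives exactly $(1-(1-p))^{-1/2}=p^{-1/2}$, and the tail inequalities follow by truncating the same series, as in your argument. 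Your single global application of Jensen to the convex map $x\mapsto x^{-1/2}$, using $\E_I S=p$ under the normalization $\|\psi\|_{\ell_2(\cT)}=1$, collapses the entire middle of that argument into one line, and it is equivalent in strength: the paper's termwise moment bounds $\E\rho^{\ell}\geq(\E\rho)^{\ell}$ are themselves Jensen applied before resummation. Your explicit handling of the degenerate event where the random support misses $\mathrm{supp}(\psi)$ — so $S=0$ with positive probability, $V_{-1}$ may be $+\infty$, and the inequality persists in the extended reals — is actually cleaner than the paper's parenthetical assumption $\rho_J<1$. What the series route buys the paper is stylistic consistency with the surrounding Taylor-expansion lemmas and a moment-by-moment representation it reuses for asymptotic refinements of $V_1$ and $V_{2k}$; for the theorem as stated, nothing sharper comes out of it, and your argument is the more economical proof.
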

\begin{proof}
For any fixed support $J\subset \cT,$ 
\beq 
 \frac{ \| \psi \|^{-1}_{\ell_2(J)} }{\| \psi \|^{-1}_{\ell_2(\cT)}} 
 = \sqrt{ \frac{ \| \psi \|^2_{\ell_2(\cT)} }{\| \psi \|^2_{\ell_2(J)}}} \geq 1 .
\eeq
Therefore, 
\beq 
\frac{\E_I  \| \psi \|^{-1}_{\ell_2(I)} }{\| \psi \|^{-1}_{\ell_2(\cT)}}  
 \geq   1 .
\eeq

On the other hand, let 
\beq 
\E_I (\frac{ \| \psi \|_{\ell_2(I)} }{\| \psi \|_{\ell_2(\cT)}} )^{-1}
 = \E_I ( 1- \frac{ \| \psi \|^2_{\ell_2(\cT-I)} }{\| \psi \|^2_{\ell_2(\cT)}})^{-1/2}.
\eeq

For a fixed true subset $J \subset \cT$ define 
the ratio $\rho_{\cT -J} \equiv  \frac{ \| \psi \|^2_{\ell_2(\cT-J)} }{\| \psi \|^2_{\ell_2(\cT)}}$; it 
obeys $0 \leq \rho_J \leq 1$. Assume $\rho_J < 1$

\begin{align}
(1-\rho_{\cT -J})^{-1/2}
=\sum_{\ell=0}^\infty\frac{1}{2^{2\ell}}\binom{2\ell}{\ell}\left(\rho_{\cT -J}\right)^{\ell}
\end{align}
Now with $J$ a random subset as earlier, we
induce a random variable $\rho_I$.
\begin{align}
\E_J(1-\rho_{\cT -J})^{-1/2}
=\sum_{\ell=0}^\infty\frac{1}{2^{2\ell}}\binom{2\ell}{\ell}\E_J\left(\rho_{\cT -J}\right)^{\ell}
\end{align}
We apply the bound on $V_{2\ell}$ for $\ell\geq 2$, where $p$ in the final formula is replaced by $ 1-p$.

We obtain for $\ell\geq 2$
\[
 (1-p)^\ell    \leq \E_J \rho_{\cT -J}^{\ell} \leq    (1-p),
\]
\begin{align}
\E_J(1-\rho_{\cT -J})^{-1/2}
&=\sum_{\ell=0}^\infty\frac{1}{2^{2\ell}}\binom{2\ell}{\ell}\E_J\left(\rho_{\cT -J}\right)^{\ell}
\\
&\geq  \sum_{\ell=0}^\infty\frac{1}{2^{2\ell}}\binom{2\ell}{\ell} (1-p)^\ell\\
%&=1- \sum_{\ell=0}^\infty\frac{2^{-(2\ell+1)}}{\ell+1}\binom{2\ell}{\ell} [\sum_{n=\ell}^{N} (1-p)^n p^{N-n} {N \choose n} ]\\
&=(1-(1-p))^{-1/2}\\
&= p^{-1/2}\\
\geq 1+\frac{1}{2}(1-p) 
\end{align}

\end{proof}

\section{Supplementary: Background for Convex Blind Deconvolution Problem}
\subsection{Technical Background: Wiener's Lemma and Inverse Filter}
\paragraph{Fourier transform and inverse filter}
The discrete-time Fourier transform $\mathcal{F}$ is defined by $\mathcal{F} \ba $ where
\[
 (\mathcal{F} \ba)(\omega) : = \sum_{n=-\infty}^\infty a_n\, e^{2\pi i n \omega}, \quad  \omega\in T = [-\frac{1}{2}, \frac{1}{2}];
\]
the inverse Fourier transform $\mathcal{F}^{-1}$ is defined by
\[
(\mathcal{F}^{-1} f)(n) =  \int_{-\frac{1}{2}}^\frac{1}{2} f(\omega) e^{-2\pi i n \omega}  d\omega, \quad n\in \integers.
\]
Now, our condition on the filter  $\ba$ is:
\[a\in l_1(\integers) ,  
(\mathcal{F} \ba)(\omega) \neq 0, \forall \omega\in T.
\]
In the following theorem, we show that this condition would provide the existence of an inverse filter in  $l_1(\integers)$.

First, we present the standard Wiener's lemma.
\begin{lemma}{\textbf{Wiener's lemma on periodic functions:}}
Assume that a function $f$ on unit circle has an absolutely converging Fourier series, and $f(t)\neq 0$ for all $t \in T$, then $1/f$ also has an absolutely convergent Fourier series.
\end{lemma}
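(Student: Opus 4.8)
The plan is to realize the functions with absolutely convergent Fourier series as a commutative Banach algebra and to read off invertibility from its characters. First I would introduce the \emph{Wiener algebra} $A(T)$ consisting of all $f(t) = \sum_{n} c_n e^{2\pi i n t}$ with $\|f\|_A := \sum_n |c_n| < \infty$, equipped with pointwise multiplication. Since the Fourier coefficients of a product are the convolution of the two coefficient sequences, Young's inequality gives $\|fg\|_A \leq \|f\|_A \|g\|_A$, so $A(T)$ is a commutative unital Banach algebra whose unit is the constant function $1$. In this language the hypothesis says exactly that $f \in A(T)$ is nowhere zero on $T$, and the conclusion is that $f$ is invertible in $A(T)$, i.e. $1/f \in A(T)$.

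The key step is to identify every nonzero multiplicative linear functional (character) $\phi : A(T) \to \mathbb{C}$ with evaluation at a point. Let $z(t) = e^{2\pi i t}$; then $z$ is invertible in $A(T)$ with inverse $z^{-1}(t) = e^{-2\pi i t}$, and $\|z\|_A = \|z^{-1}\|_A = 1$. Every character is contractive, $|\phi(g)| \leq \|g\|_A$, so setting $\lambda = \phi(z)$ forces $|\lambda| \leq 1$ and $|\phi(z^{-1})| \leq 1$, while multiplicativity gives $\lambda \cdot \phi(z^{-1}) = \phi(1) = 1$; hence $|\lambda| = 1$ and we may write $\lambda = e^{2\pi i t_0}$ for some $t_0 \in T$. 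Then $\phi(z^n) = \lambda^n = e^{2\pi i n t_0}$ for all $n \in \integers$, so $\phi$ agrees with evaluation at $t_0$ on every trigonometric polynomial. Because trigonometric polynomials are dense in $A(T)$ (the Fourier series converges in the $\|\cdot\|_A$ norm) and $\phi$ is $\|\cdot\|_A$-continuous, it follows that $\phi(g) = g(t_0)$ for all $g \in A(T)$.

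Finally I would invoke the standard Gelfand invertibility criterion for a commutative unital Banach algebra: an element is invertible if and only if it lies in no maximal ideal, equivalently if and only if $\phi(f) \neq 0$ for every character $\phi$ (this uses Gelfand--Mazur to identify the quotient by a maximal ideal with $\mathbb{C}$). Combined with the preceding step, $f$ is invertible in $A(T)$ precisely when $f(t_0) \neq 0$ for all $t_0 \in T$, which is our hypothesis. Hence $1/f \in A(T)$, i.e. $1/f$ has an absolutely convergent Fourier series, as claimed.

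The main obstacle is the character-identification step together with the abstract invertibility criterion; the remaining ingredients (the Banach-algebra axioms, the norm bounds, density of trigonometric polynomials) are routine. If one prefers to avoid the Gelfand--Mazur machinery, an alternative self-contained route is Wiener's original localization argument (or D.\,J.\ Newman's short variant): using $\min_t |f(t)| = \delta > 0$ and the smallness of the high-frequency tail $\sum_{|n| > M} |c_n|$, one constructs a local $A(T)$-inverse of $f$ on a neighborhood of each point by a Neumann series, and then patches these local inverses together with a finite smooth partition of unity, the compactness of $T$ guaranteeing a finite subcover and hence a global inverse in $A(T)$.
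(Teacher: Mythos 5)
Your proposal is correct, but there is nothing in the paper to compare it against: the paper states this lemma explicitly as ``the standard Wiener's lemma'' and supplies no proof, immediately passing to the $\ell_1(\mathbb{Z})$ consequence (existence of the inverse filter $\ba^{-1} = \mathcal{F}^{-1}(1/\mathcal{F}\ba)$) via the identification of absolutely summable coefficient sequences with the Wiener algebra $A(T)$ under convolution. The argument you give is the canonical Gelfand-theoretic proof, and each step checks out: submultiplicativity of $\|\cdot\|_A$ from Young's inequality makes $A(T)$ a commutative unital Banach algebra; characters are automatically bounded with norm at most $1$ (a standard fact worth citing explicitly); the pincer $|\phi(z)| \leq 1$, $|\phi(z)^{-1}| = |\phi(z^{-1})| \leq 1$ forces $\phi(z)$ onto the unit circle and hence pins $\phi$ to evaluation at some $t_0$; density of trigonometric polynomials in the $A$-norm is immediate from absolute convergence of the tail; and the Gelfand--Mazur invertibility criterion then converts ``nowhere zero'' into ``invertible in $A(T)$.'' Your concluding remark correctly identifies the trade-off with the elementary route (Wiener's original localization, or Newman's short Neumann-series-plus-partition-of-unity variant): that route avoids all Banach-algebra machinery at the cost of a more hands-on patching argument, while the Gelfand proof is shorter given the standard functional-analytic toolkit. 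Either would serve as a complete proof of the lemma the paper leaves unproved.
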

Then we can see clearly that the Fourier series version of the previous lemma would guarantee the existence of an inverse filter in  $l_1(\integers)$.
\begin{lemma}{\textbf{Wiener's lemma on $l_1(\integers)$ sequences :}}
If  $\ba\in l_1(\integers) ,  
(\mathcal{F} \ba)(\omega) \neq 0, \forall \omega\in T,$ we could define the inverse filter of  $\ba$ as
$\ba^{-1} := \mathcal{F}^{-1}(\frac{1}{\mathcal{F} \ba})$
so that  $\ba^{-1}*a = \be_0.$ Here $e_0$ is the sequence with $1$ at $0$ coordinate and $0$ elsewhere.
From Wiener's lemma on $\mathcal{F} \ba$,    $\ba^{-1} \in l_1(\integers)$ .
\end{lemma}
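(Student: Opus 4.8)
The plan is to reduce the entire statement to the standard Wiener's lemma on periodic functions stated immediately above, and then dispatch the algebraic identity $\ba^{-1} \star \ba = \be_0$ by routine Fourier bookkeeping. The key observation is that the hypothesis $\ba \in \ell_1(\integers)$ is \emph{precisely} the assertion that $f := \mathcal{F}\ba$ has an absolutely convergent Fourier series: by definition $(\mathcal{F}\ba)(\omega) = \sum_n a_n e^{2\pi i n \omega}$, and $\sum_n |a_n| = \|\ba\|_1 < \infty$ is exactly absolute convergence, with Fourier coefficients $a_n$. The hypothesis $(\mathcal{F}\ba)(\omega) \neq 0$ for all $\omega \in T$ says $f$ is nowhere vanishing on the circle. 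Hence the standard Wiener's lemma applies to $f$ and yields that $1/f$ also has an absolutely convergent Fourier series. Writing $\ba^{-1} := \mathcal{F}^{-1}(1/f)$ for the sequence of its Fourier coefficients, absolute convergence means exactly $\sum_n |(\ba^{-1})_n| < \infty$, i.e.\ $\ba^{-1} \in \ell_1(\integers)$. This already settles existence and $\ell_1$ membership.

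Next I would verify the convolution identity. Since both $\ba$ and $\ba^{-1}$ lie in $\ell_1(\integers)$, Young's inequality guarantees that $\ell_1(\integers)$ is closed under convolution, so $\ba^{-1} \star \ba$ is itself an element of $\ell_1(\integers)$ and its transform is well defined \emph{before} any computation. I would then invoke the convolution theorem $\mathcal{F}(\ba^{-1} \star \ba) = (\mathcal{F}\ba^{-1}) \cdot (\mathcal{F}\ba)$, whose proof is an interchange of summation order (Fubini--Tonelli) licensed by absolute summability of both sequences. By construction $\mathcal{F}\ba^{-1} = 1/f$ and $\mathcal{F}\ba = f$, so the product equals the constant function $1$ on $T$. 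Since $\mathcal{F}\be_0 \equiv 1$ as well, and the discrete-time Fourier transform is injective on $\ell_1(\integers)$ (each entry is recovered by the inversion formula $(\mathcal{F}^{-1} g)(n) = \int_T g(\omega) e^{-2\pi i n \omega}\, d\omega$), two $\ell_1$ sequences sharing the same transform must coincide. Therefore $\ba^{-1} \star \ba = \be_0$.

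I do not expect a substantive obstacle: the entire analytic difficulty---continuity of inversion in the Wiener algebra, i.e.\ that non-vanishing of an absolutely convergent Fourier series forces its reciprocal to again have an absolutely convergent Fourier series---is already packaged inside the standard Wiener's lemma, which I am permitted to assume. The only care needed is to match the sign convention in the exponent of $\mathcal{F}$ with that of $\mathcal{F}^{-1}$, to confirm that the convolution theorem is applied via genuine absolute convergence rather than merely formally, and to record that Fourier inversion supplies injectivity on $\ell_1(\integers)$. The single point worth stating explicitly is that $\ell_1(\integers)$ is a convolution algebra, so that $\ba^{-1} \star \ba$ makes sense as an $\ell_1$ sequence and one is entitled to compare transforms in the first place.
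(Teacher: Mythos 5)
Your proposal is correct and follows essentially the same route as the paper: both reduce the statement to the standard Wiener's lemma on periodic functions applied to $f = \mathcal{F}\ba$, defining $\ba^{-1} = \mathcal{F}^{-1}(1/f)$ and reading off $\ell_1$ membership from absolute convergence of the Fourier series of $1/f$. The only difference is that you spell out the routine bookkeeping the paper leaves implicit---that $\ell_1(\integers)$ is a convolution algebra, the convolution theorem via Fubini, and injectivity of $\mathcal{F}$ on $\ell_1(\integers)$ to conclude $\ba^{-1}\star\ba = \be_0$---which is a sound and complete filling-in of the same argument.
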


\subsection{Change of Variable and Reduction to Projection Pursuit}
Rewrite the population version of our convex sparse blind deconvolution problem, with the population objective 
$\Expect\frac{1}{N} \| \bw*Y\|_{\ell_1(\cT)} =\Expect|(\bw*Y)_0|=\Expect|(\bw*\ba*X)_0|=\Expect_X|\langle X, (\bw*\ba)^{\dagger} \rangle|$ due to  the ergodic property of stationary process and shift invariance, and $(\widetilde{\ba}*\bw)_0=((\widetilde{\ba}*\ba^{-1})*(\ba*\bw))_0  = \langle (\widetilde{\ba}*\ba^{-1})^{\dagger}, \bw*\ba\rangle$, the convex problem becomes 
\[
 \begin{array}{ll}
 \underset{\bw}{\mbox{minimize}}   & \Expect_X|\langle X ,  (\ba*\bw)^{\dagger} \rangle|\\
\mbox{subject to}  
& \langle \widetilde{\ba}*\ba^{-1} , (\ba*\bw)^{\dagger}\rangle =1,
\end{array}
 \]
Let $\psi$ denote  the time reversed version of $\ba*\bw$: $\psi := (\ba*\bw)^{\dagger}$, and
  let $\tbe:= \widetilde{\ba}*\ba^{-1}$, then by previous assumptions, 
  $\tbe_0=1$, $\tbe' = \tbe-\be_0$.
  
 Now we arrive at a simple and fundamental population convex problem: 
 \[
 \begin{array}{ll}
 \underset{\psi}{\mbox{minimize}}   &\Expect_X|\langle X ,  \psi  \rangle|\\
\mbox{subject to}  
& \langle \tbe ,  \psi  \rangle  =1.
\end{array}
  \]
  \textbf{Expectation using Gaussian.  }
Since $X$ follows Bernoulli-Gaussian IID probability model $X_t = I_t G_t$, we nest the expectation over $I_t$ outside the expectation over Gaussian $G_t$, for which we use $E |N(0,1)| = \sqrt{\frac{2}{\pi}}$:
\[
\Expect_X|\langle X ,  \psi  \rangle|
= \E_I \E_G {|\sum_{t\in \integers}  I_t G_t \psi(t)| }
= \sqrt{\frac{2}{\pi}} \cdot\E_I \|\psi\cdot I\|_2
\]

\subsection{Technical background: Directional Derivative and Projected Subgradient }
\paragraph{Exact calculation of subgradient for phase transition }
In this section, using sub-gradient and directional derivative, we compute the KKT condition of our problem rigorously.
\begin{lemma}
Let $J \subset [n]$ be the support of $X$, let $\psi_J : = \psi\cdot 1_J$ denote the elementwise product of vector $\psi$ with the indicator vector of subset $J$.  Let $B_J$ denote the central section of the euclidean  ball $B(\reals^n)$,
where the slice is produced the linear space $\mbox{span} \{e_i : i\in J \}$.
Alternatively, we may write
$B_J: = \{ v_J: \| v_J\|_2 \leq 1\}$.  The set-valued subgradient
operator applied to $\|\psi_J\|$
evaluates as follows:
 \[
 \begin{array}{ll}
\partial_{\psi}    \| \psi_J \|
&=  N(\psi_J).
\end{array}
  \]
Here $N$ maps $\reals^n$ into  subsets of $\reals^n$,
and is given by:
\[
N(\psi_J):= 
\begin{cases}
    \psi_J/\|\psi_J\|_2, &  \psi_J\neq 0 \\
    B_J, &  \psi_J = 0
  \end{cases}.
\]  
\end{lemma}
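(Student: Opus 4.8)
The plan is to recognize the function $f(\psi):=\|\psi_J\|_2$ as the Euclidean norm composed with the coordinate projection onto $J$, and then invoke the subdifferential calculus for a convex function precomposed with a linear map. Write $P_J$ for the orthogonal projection of $\reals^n$ onto $\mathrm{span}\{e_i : i\in J\}$, so that $P_J\psi=\psi_J$ and $f(\psi)=g(P_J\psi)$ with $g(u)=\|u\|_2$. Since $P_J$ is self-adjoint and idempotent ($P_J^\top=P_J$, $P_J^2=P_J$) and $g$ is finite and continuous on all of $\reals^n$ (so the relative-interior hypothesis of the chain rule holds trivially, with no domain obstruction), the subdifferential factors as
\[
\partial_\psi f(\psi)=P_J^\top\,\partial g(P_J\psi)=P_J\,\partial g(\psi_J).
\]
The remaining input is the textbook subdifferential of the Euclidean norm: $\partial g(u)=\{u/\|u\|_2\}$ when $u\neq 0$, and $\partial g(0)=\{v:\|v\|_2\leq 1\}$, the full unit ball of $\reals^n$. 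Specializing to the two cases then gives the claim.

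In the case $\psi_J\neq 0$, the map $\partial g$ is a singleton and $P_J(\psi_J/\|\psi_J\|_2)=\psi_J/\|\psi_J\|_2$, because $\psi_J$ already lies in the range of $P_J$; hence $\partial_\psi f(\psi)=\{\psi_J/\|\psi_J\|_2\}=N(\psi_J)$. In the case $\psi_J=0$, we get $\partial_\psi f(\psi)=P_J B$ with $B$ the full unit ball, and I would verify $P_J B=B_J$ directly: for $v\in B$ we have $\|P_J v\|_2\leq\|v\|_2\leq 1$ so $P_J v\in B_J$, while every $u\in B_J$ satisfies $u=P_J u$ with $u\in B$, giving the reverse inclusion. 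Thus $\partial_\psi f(\psi)=B_J=N(\psi_J)$.

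I expect the only genuine content to be the non-smooth point $\psi_J=0$, and I would be prepared to replace the chain rule there with a direct verification from the definition of subgradient, to keep the argument self-contained. A vector $v$ is a subgradient at $\psi$ with $\psi_J=0$ iff $\|\phi_J\|_2\geq\langle v,\phi-\psi\rangle$ for all $\phi\in\reals^n$. Since the left-hand side depends only on $\phi_J$ while the off-support coordinates $(\phi-\psi)_{J^c}$ range freely, the inequality forces $v$ to be supported on $J$; restricting to $\phi_J=u$ in the subspace then reduces the condition to $\|u\|_2\geq\langle v,u\rangle$ for all such $u$, which by testing $u=v$ and Cauchy--Schwarz is equivalent to $\|v\|_2\leq 1$, i.e. $v\in B_J$. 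This direct route is where the real (albeit light) work lies; the smooth case is immediate once one notes that away from $\psi_J=0$ the function $f$ coincides with a genuinely differentiable map of the $J$-coordinates alone.
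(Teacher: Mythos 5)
Your proof is correct. The paper in fact states this lemma without proof, treating it as a standard fact of convex analysis, so there is no in-paper argument to compare against; your write-up supplies exactly the justification the paper leaves implicit. Both of your routes are sound: the chain rule $\partial(g\circ P_J)(\psi)=P_J^\top\,\partial g(P_J\psi)$ applies with no constraint-qualification issue because $g=\|\cdot\|_2$ is finite and continuous on all of $\reals^n$, and your identification $P_JB=B_J$ (projection of the full unit ball equals the central slice) is verified correctly in both inclusions. Your direct verification at the kink $\psi_J=0$ is also right, and is arguably the better thing to keep: writing $\phi=\psi+h$, the inequality $\|h_J\|_2\ge\langle v,h\rangle$ with $h_{J^c}$ ranging freely forces $v_{J^c}=0$, and the reduced condition $\|h_J\|_2\ge\langle v,h_J\rangle$ is equivalent to $\|v\|_2\le 1$ by Cauchy--Schwarz together with the test vector $h_J=v$, giving precisely $B_J$. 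One cosmetic remark: the lemma writes $N(\psi_J)=\psi_J/\|\psi_J\|_2$ as a vector in the smooth case, and you correctly read this as the singleton $\{\psi_J/\|\psi_J\|_2\}$, consistent with the fact that $\psi\mapsto\|\psi_J\|_2$ is genuinely differentiable wherever $\psi_J\neq 0$.
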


\begin{definition}
Consider a probability space containing just the possible  
outcomes  $J \subset [n]$.
Let $S_J$, $J \subset [n]$,  denote a closed compact subset of $\reals^n$.
Let $I$ be a random subset of
$[n]$ drawn at random from this probability space with probability $\pi_J = P\{ I = J \}$
of elementary event $J$.
Consider the set-valued random variable $S \equiv S_J$. We define its expectation as
\[
 {\Expect} S  := \sum_J \pi_J \cdot S_J.
\]
On the right side, we mean the closure of the 
compact set produced by all sums of the form
\[
 \sum_J \pi_J \cdot s_J
\]
where each $s_J \in S_J$.
\end{definition}

\begin{lemma}
For any $\psi \in \reals^n$, and $X = (X_i)_{i=1}^n$ with $X_i \sim_{iid} BG(p,0,1)$.
Let now $I \subset [n]$ be the random support of $X$.
Let $\partial_\psi $ denote the set-valued subgradient
operator.
  \[
 \begin{array}{ll}
\partial_\psi [E{|\sum_{i} \psi_i X_i| }  ]
%&=  {\Expect}\,{\mbox{\rm sign}(\psi^T X) X }  \\
&= \sqrt{\frac{2}{\pi}} \cdot {\Expect}_I [N(\psi_I)].
\end{array}
  \]
\end{lemma}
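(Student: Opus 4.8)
The plan is to reduce the computation to the scalar identity $E|\sum_i \psi_i X_i| = \sqrt{2/\pi}\, E_I\|\psi_I\|_2$ established earlier (Eq.~\eqref{eq:startingPoint}), and then differentiate term-by-term using the subgradient sum rule. Writing the expectation over the random support $I$ explicitly as a finite sum over the $2^n$ subsets $J \subset [n]$, we have
\[
f(\psi) := E\Big|\sum_i \psi_i X_i\Big| = \sqrt{\tfrac{2}{\pi}} \sum_{J \subset [n]} \pi_J\, \|\psi_J\|_2,
\]
where $\pi_J = P\{I = J\} = p^{|J|}(1-p)^{n-|J|}$. This exhibits $f$ as a nonnegative finite combination of the convex functions $\psi \mapsto \|\psi_J\|_2$, each of which is finite-valued and continuous on all of $\reals^n$.

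First I would invoke the Moreau--Rockafellar sum rule for subdifferentials: for finitely many proper convex functions whose effective domains have overlapping relative interiors, the subdifferential of the sum equals the Minkowski sum of the subdifferentials, together with $\partial(\lambda g) = \lambda\,\partial g$ for $\lambda \ge 0$. Applying this yields
\[
\partial_\psi f(\psi) = \sqrt{\tfrac{2}{\pi}} \sum_{J \subset [n]} \pi_J\, \partial_\psi \|\psi_J\|_2.
\]
Next I would substitute the preceding lemma, which evaluates $\partial_\psi \|\psi_J\|_2 = N(\psi_J)$, where $N$ is the normalization map returning the singleton $\psi_J/\|\psi_J\|_2$ when $\psi_J \neq 0$ and the sliced ball $B_J$ otherwise. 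Finally, recognizing the resulting combination as exactly the set-valued expectation from the definition, namely $\sum_J \pi_J\, N(\psi_J) = \Expect_I[N(\psi_I)]$, gives the claim.

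The main obstacle is justifying the sum rule in its set-valued form and matching it to the paper's definition of $\Expect$ of a set-valued random variable. The constraint qualification is in fact vacuous here: because every $\|\psi_J\|_2$ is finite and continuous on all of $\reals^n$, all effective domains equal $\reals^n$ and their relative interiors trivially intersect, so the sum rule holds with \emph{equality} (not mere inclusion) at every point $\psi$. The only point needing care is that the Minkowski sum produced by the sum rule coincides with the closure-of-sums construction used to define $\Expect S$; since there are finitely many summands and each $N(\psi_J)$ is already compact and convex, the Minkowski sum is automatically compact and convex, so no closure or limiting argument is required and the two notions agree verbatim.
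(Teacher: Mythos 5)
Your proposal is correct and follows essentially the route the paper intends: the paper states this lemma without a displayed proof, having just set up exactly the ingredients you use — the identity $E|\psi^T X| = \sqrt{2/\pi}\,\E_I\|\psi_I\|_2$, the evaluation $\partial_\psi\|\psi_J\|_2 = N(\psi_J)$, and the definition of the set-valued expectation $\Expect S = \sum_J \pi_J S_J$. Your explicit justifications — the Moreau--Rockafellar sum rule holding with equality because every $\|\psi_J\|_2$ is finite and continuous on all of $\reals^n$, and the observation that a finite Minkowski sum of compact convex sets is already closed so the closure in the paper's definition is vacuous — are precisely the details the paper leaves implicit, and both are sound.
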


\paragraph{Subgradient and directional derivative at $e_0$}
Now we focus on $e_0$. Note that for any subset $J \subset [n]$, $(\be_0)_J$ is either the zero
vector or else $e_0$. Hence $N((\be_0)_J)$
is either $B_{J}$ or $\{\be_0\}$. What drives
this dichotomy is whether $0 \in J$ or not.

\begin{lemma}  
\[
 \begin{array}{ll}
\Expect_I [N((\be_0)_I) ]
&= p \be_0 + \sum_{J,0\not\in J } \pi_{J} B_{J} \\
& = p \be_0 + (1-p) {\cal B}_0,
\end{array}
  \]
where
\[
 \begin{array}{ll}
{\cal B}_0 = \Expect_{I}[ B_{I} | 0 \not \in I ] = 
(1-p)^{-1} \cdot \sum_{J,0\not\in J } {\pi_{J}} B_{J}.
\end{array}
\]
 
\end{lemma}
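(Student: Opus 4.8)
The plan is to exploit the fact that the masked vector $(\be_0)_J = \be_0 \cdot 1_J$ takes only two possible values as $J$ ranges over subsets of $[n]$, so that the set-valued operator $N$ produces only two kinds of output. First I would record the dichotomy already flagged before the statement: since $\be_0$ is supported on the single coordinate $\{0\}$, we have $(\be_0)_J = \be_0$ when $0 \in J$ and $(\be_0)_J = 0$ when $0 \notin J$. Feeding each case into the definition of $N$ and using $\|\be_0\|_2 = 1$, I obtain
\[
N((\be_0)_J) =
\begin{cases}
\{\be_0\}, & 0 \in J,\\
B_J, & 0 \notin J.
\end{cases}
\]

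Next I would substitute this into the definition of the set-valued expectation $\Expect_I[S] = \sum_J \pi_J S_J$ and split the sum according to whether $0 \in J$:
\[
\Expect_I[N((\be_0)_I)] = \sum_{J:\,0 \in J} \pi_J \{\be_0\} + \sum_{J:\,0 \notin J} \pi_J B_J.
\]
The first group is a Minkowski sum of scaled singletons, which collapses to the single point $\big(\sum_{J:\,0 \in J}\pi_J\big)\be_0$. Since each coordinate lies in $I$ independently with probability $p$, we have $\sum_{J:\,0 \in J}\pi_J = P\{0 \in I\} = p$, so the first group contributes exactly $p\,\be_0$. This establishes the first displayed equality.

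For the second equality I would normalize the remaining sum. Because $\{I = J\} \subseteq \{0 \notin I\}$ whenever $0 \notin J$, and $P\{0 \notin I\} = 1-p$, the conditional law satisfies $P\{I = J \mid 0 \notin I\} = \pi_J/(1-p)$ for such $J$; hence the definition
\[
{\cal B}_0 = \Expect_I[B_I \mid 0 \notin I] = (1-p)^{-1}\sum_{J:\,0 \notin J}\pi_J B_J
\]
is consistent, giving $\sum_{J:\,0 \notin J}\pi_J B_J = (1-p){\cal B}_0$ and so the second equality.

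The computation is essentially bookkeeping; the only point requiring care is the set-valued arithmetic. I would be explicit that the Minkowski sum of the singleton terms genuinely collapses to a point, so that $p\,\be_0$ enters as a mere translation vector rather than a nontrivial set, and would note that the closure appearing in the definition of $\Expect$ is harmless here: since $n$ is finite, $\sum_{J:\,0 \notin J}\pi_J B_J$ is a \emph{finite} Minkowski sum of compact Euclidean balls and is therefore already compact.
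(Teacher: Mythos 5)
Your proposal is correct and takes essentially the same route as the paper's proof: the same case split on whether $0 \in J$, the collapse of the singleton terms to $p\,\be_0$ via $\sum_{J:\,0\in J}\pi_J = p$, and the $(1-p)$ normalization identifying the remaining sum with ${\cal B}_0$. Your extra care about the Minkowski-sum arithmetic and the harmlessness of the closure merely makes explicit what the paper's three-line computation leaves implicit.
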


\begin{proof}
\begin{eqnarray*}
\Expect_I [N((\be_0)_I) ] &=& \sum_{J } \pi_{J} B_{J} \\
&=&  \sum_{J, 0 \in J}   \pi_{J} \be_0 +  \sum_{J,0\not\in J }  \pi_{J} B_{J}\\
&=& p \cdot \be_0 + (1-p) \cdot {\cB}_0 .
\end{eqnarray*}
 \end{proof}
To compute with $\Expect_I [N((\be_0)_I) ]$,
we need:
\begin{lemma} \label{eq:ComputeExpect}
For each fixed $\beta \in \reals^n$:
\[
  \sup_{b \in {\cal B}_0} \langle \beta, b \rangle
   = \Expect_{I}[ \|\beta_I\|_2 | 0 \not \in I  ].
  \]
\end{lemma}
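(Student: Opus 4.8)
The plan is to recognize the claimed identity as the evaluation of the \emph{support function} of the convex body $\mathcal{B}_0$ in the direction $\beta$, and then to exploit that support functions are additive under Minkowski sums and homogeneous under nonnegative scaling. Recall that for a compact set $A \subset \reals^n$ the support function is $h_A(\beta) := \sup_{a \in A}\langle \beta, a\rangle$; the left-hand side of the claim is exactly $h_{\mathcal{B}_0}(\beta)$, so the task reduces to computing this support function explicitly.

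First I would unwind the definitions. The outcome space consists of subsets $J \subset [n]$, which is a \emph{finite} set, so by the preceding lemma the set-valued conditional expectation is the finite Minkowski combination $\mathcal{B}_0 = \Expect_I[B_I \mid 0 \notin I] = (1-p)^{-1}\sum_{J:\,0\notin J}\pi_J B_J$ of the compact convex sets $B_J$. For such finite combinations the support function satisfies
\[
h_{\mathcal{B}_0}(\beta) = (1-p)^{-1}\sum_{J:\,0\notin J}\pi_J\, h_{B_J}(\beta),
\]
since the supremum of $\langle \beta, \cdot\rangle$ over a Minkowski sum is attained by optimizing each summand independently, and scaling a set by the factor $\pi_J \geq 0$ scales its support function by the same factor.

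Next I would compute the support function of a single slice. For fixed $J$ the body $B_J = \{v_J : \|v_J\|_2 \leq 1\}$ lives in the coordinate subspace $\mathrm{span}\{e_i : i \in J\}$, so $\langle \beta, v_J\rangle = \langle \beta_J, v_J\rangle$ with $\beta_J = \beta\cdot 1_J$, and maximizing over the unit $\ell_2$ ball of that subspace gives $h_{B_J}(\beta) = \|\beta_J\|_2$. Substituting this into the displayed identity yields
\[
h_{\mathcal{B}_0}(\beta) = (1-p)^{-1}\sum_{J:\,0\notin J}\pi_J\,\|\beta_J\|_2 = \Expect_I[\|\beta_I\|_2 \mid 0 \notin I],
\]
which is precisely the right-hand side of the claim.

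The only point needing a little care — and the nearest thing to an obstacle — is justifying the additivity of the support function over the scaled finite Minkowski combination, together with the closure operation baked into the definition of the set-valued expectation $\Expect S$. This is routine once one observes that the index set is finite and every $B_J$ is compact and convex, so the combination is itself compact and convex, and the supremum of a linear functional over a sum splits into a sum of suprema with no measurability or convergence concerns. The closure is harmless because a set and its closure share the same support function, $h_A = h_{\overline{A}}$. Everything else is a direct substitution.
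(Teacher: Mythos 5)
Your proof is correct and is essentially the paper's argument in standard support-function packaging: the additivity $h_{\sum_J \pi_J B_J}(\beta) = \sum_J \pi_J h_{B_J}(\beta)$ that you invoke is exactly what the paper proves by hand, exhibiting the explicit maximizer $b^* = \sum_{0 \notin J} \pi_J \beta_J / \|\beta_J\|_2$ for the lower bound and applying Cauchy--Schwarz summand-by-summand for the upper bound (and the paper itself notes the lemma is an instance of the set-expectation/support-function interchange theorem it cites). One small advantage of your formulation is that it silently handles the degenerate case $\beta_J = 0$, where the paper's explicit maximizer is formally undefined.
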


\begin{proof}
Note that in the definition of the set 
\[
{\cal B}_0 = \Expect_{I}[ B_{I} | 0 \not \in I ],
\]
 each term $B_J$ obeys the bound
$\sup_{g_J \in B_J} \|g_J\|_2 = 1 $.
Now, given the fixed vector $\beta \in \reals^n$, define:
\begin{eqnarray*}
    b^* &=& \Expect_{I}[ \frac{\beta_I}{\| \beta_I \|_2} | 0 \not \in I  ]\\
     &=&  \sum_{0 \not \in J} \pi_J \frac{\beta_J }{\|\beta_J\|_2}.
 \end{eqnarray*}
 Since each term in this sum has Euclidean norm at most $1$, $b^* \in {\cB}_0$.
Now
\begin{eqnarray*}
\langle  \beta, b^* \rangle &=&
\langle \beta , \sum_{0 \not \in J} \pi_J \frac{\beta_J }{\|\beta_J\|_2} \rangle \\
&=& \sum_{0 \not \in J} \pi_J \langle \beta , \frac{\beta_J }{\|\beta_J\|_2} \rangle \\
&=& \sum_{0 \not \in J} \pi_J \|\beta_J\|_2 .
 \end{eqnarray*}
On the other hand, for any fixed 
vector $g = \sum \pi_J g_J $,
with each $g_J \in B_J$,
we have
\begin{eqnarray*}
\langle  \beta, g \rangle &=&  \sum_{0 \not \in J} \pi_J \langle \beta , g_J  \rangle \\
&\leq& \sum_{0 \not \in J} \pi_J  \|\beta_J \|_2 \|g_J\|_2 \\
& \leq & \sum_{0 \not \in J} \pi_J  \|\beta_J \|_2 
= \langle  \beta, b^* \rangle.
 \end{eqnarray*}
\end{proof}

Lemma (\ref{eq:ComputeExpect}) 
can be viewed as a special instance of
Theorem \textbf{A.15} from \cite{bai2018subgradient}:
\begin{lemma}[Interchangeability of set expectation and support function] 
Suppose a random compact set $S \subset \reals^n$ is integrably bounded and the underlying probability space is non-atomic, then $\E[S]$ is a convex set and for any fixed vector $\beta \in \reals^n,$
 \begin{equation} \label{eq:SetExpectEquality}
 \sup_{g\in \Expect S} \langle\beta, g\rangle =  \sup_{g\in  S} \Expect\langle\beta, g\rangle .
 \end{equation}
\end{lemma}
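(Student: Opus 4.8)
The plan is to treat the two assertions separately: the support-function identity, which is essentially a restatement of the definition of the Aumann (selection) expectation, and the convexity of $\Expect S$, which is the genuinely substantive claim and is where non-atomicity enters. First I would fix notation by recalling that the set expectation is the Aumann integral
\[
\Expect S = \{ \Expect g : g \text{ an integrable measurable selection of } S \},
\]
where a selection satisfies $g(\omega) \in S(\omega)$ almost surely. Integrable boundedness guarantees every selection is integrable and that $\Expect S$ is bounded, while measurability of the random set $S$ (part of the meaning of ``random compact set'') guarantees, via the Kuratowski--Ryll-Nardzewski selection theorem, that measurable selections exist, so $\Expect S \neq \emptyset$.

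For the identity I would read the right-hand side $\sup_{g\in S}\Expect\langle\beta,g\rangle$ as a supremum over integrable selections $g$ of $S$. Every element of $\Expect S$ has the form $\Expect g$ for such a selection, and since $\beta$ is a fixed deterministic vector, linearity of the inner product and of expectation give $\langle\beta,\Expect g\rangle = \Expect\langle\beta,g\rangle$. Taking suprema over all selections yields
\[
\sup_{g\in\Expect S}\langle\beta,g\rangle = \sup_{g\text{ sel.}}\langle\beta,\Expect g\rangle = \sup_{g\text{ sel.}}\Expect\langle\beta,g\rangle,
\]
which is exactly (\ref{eq:SetExpectEquality}). If one further wants to recognize this common value as $\Expect\big[\sup_{s\in S}\langle\beta,s\rangle\big]$, one invokes a measurable-selection argument: the map $\omega\mapsto \sup_{s\in S(\omega)}\langle\beta,s\rangle$ is a measurable integrand, and for each $\varepsilon>0$ one selects $g_\varepsilon(\omega)\in S(\omega)$ with $\langle\beta,g_\varepsilon(\omega)\rangle \geq \sup_{s\in S(\omega)}\langle\beta,s\rangle-\varepsilon$, so that monotone convergence pushes the pointwise supremum through the expectation.

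The substantive step is convexity of $\Expect S$, for which the plan is to realize an arbitrary convex combination of two expected selections as a single expected selection, using Lyapunov's convexity theorem. Given selections $g_0,g_1$ and $\lambda\in[0,1]$, I would seek a measurable set $B$ and define the spliced selection $g = g_0\mathbf{1}_B + g_1\mathbf{1}_{B^c}$, which is again a selection since at each $\omega$ its value is one of $g_0(\omega),g_1(\omega)\in S(\omega)$. Then $\Expect g = \Expect g_1 + \int_B (g_0-g_1)\,dP$, so it suffices to choose $B$ with $\int_B(g_0-g_1)\,dP = \lambda\,\Expect(g_0-g_1)$. The $\reals^n$-valued measure $\nu(B):=\int_B(g_0-g_1)\,dP$ is non-atomic because $P$ is non-atomic and $g_0-g_1$ is integrable; Lyapunov's theorem then asserts that its range $\{\nu(B):B\text{ measurable}\}$ is a convex compact subset of $\reals^n$. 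Since this range contains $\nu(\emptyset)=0$ and $\nu(\Omega)=\Expect(g_0-g_1)$, convexity produces a $B$ with $\nu(B)=\lambda\,\Expect(g_0-g_1)$, giving $\Expect g = \lambda\Expect g_0 + (1-\lambda)\Expect g_1 \in\Expect S$.

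I expect the main obstacle to be the convexity argument rather than the identity: it is exactly here that non-atomicity is indispensable, since for a finite or atomic space the spliced selection cannot realize arbitrary convex combinations and $\Expect S$ need not be convex unless the summand sets are already convex. Making Lyapunov's theorem applicable requires checking that $\nu$ is a genuine finite-dimensional non-atomic vector measure, which follows from integrable boundedness; and the auxiliary measurable-selection claims used both for non-emptiness of $\Expect S$ and for attaining the pointwise supremum are the remaining points demanding care, though they are standard once $S$ is a measurable random compact set.
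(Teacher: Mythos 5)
Your proof is correct, but it is worth knowing that the paper never proves this lemma: it is quoted verbatim as Theorem A.15 of \cite{bai2018subgradient}, and what the paper actually proves by hand is the special instance it needs (the preceding lemma computing $\sup_{b \in {\cal B}_0} \langle \beta, b\rangle$), where the probability space is the \emph{discrete} space of supports $J$, the set expectation is defined directly as a closed weighted Minkowski sum $\sum_J \pi_J S_J$, and the identity is verified by exhibiting the explicit maximizer $b^* = \Expect_{I}[\,\beta_I/\|\beta_I\|_2 \mid 0 \not\in I\,]$ and checking optimality termwise with Cauchy--Schwarz. Your route is the standard general one: Aumann selection expectation, Kuratowski--Ryll-Nardzewski for nonemptiness and for the $\varepsilon$-maximizing selections that give $\sup_{g}\Expect\langle\beta,g\rangle = \Expect\big[\sup_{s\in S}\langle\beta,s\rangle\big]$, and Lyapunov's convexity theorem for the spliced-selection argument establishing convexity of $\Expect S$. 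You correctly isolate where non-atomicity enters (only the convexity claim; the support-function identity is linearity plus measurable selection), and the Lyapunov step is sound --- the one assertion you leave implicit, that $\nu(B)=\int_B (g_0-g_1)\,dP$ inherits non-atomicity from $P$, is standard (any atom $A$ of $\nu$ has $P(A\cap\{g_0\neq g_1\})>0$, and splitting that set within $A$ contradicts atomicity). What each approach buys: yours proves the general statement as literally stated, making the paper's citation self-contained; the paper's computation is elementary and avoids all selection machinery, because in its application the values $B_J$ are convex balls and the outcome space is countable --- indeed note the small irony that the paper's use case is \emph{atomic}, so it is not covered by the lemma's non-atomicity hypothesis at all, and there convexity of $\Expect S$ follows instead from convexity of the summand sets $B_J$, exactly the dichotomy your last paragraph identifies.
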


Define $\beta_{(0)} = \beta \cdot 1_{\{0\}^c}$ as the part of $\beta$ supported away from $0$.

\section{ Main Result $1$ and Its Proof: Phase Transition}
\subsection{KKT Condition for Exact Recovery}
\paragraph{KKT condition for $\be_0$ to be optimal solution.}
Given the tool defined above, we can calculate KKT rigorously. We first state the overview.

Let $\psi^\star$ be the solution of the optimization problem:
\[
 \begin{array}{ll}
 \underset{\psi}{\mbox{minimize}}   & \E_I \|\psi\cdot I\|_2\\
\mbox{subject to}  
& \langle   \tbe, \psi\rangle =1
\end{array}
\tag{$Q_1(\tbe)$ }
  \]
  
We claim that to prove that $\be_0$ solves ($Q_1(\tbe)$), we calculate the directional finite difference at $\be_0$. Then $\be_0$ solves this convex problem if the directional finite difference at $\psi = \be_0$ is non-negative at every direction $\beta$ on unit sphere where $\tbe^T \beta =0$:
	 \[
	 \Expect\|(\be_0+\beta)\cdot I\|_2 - \Expect\|(\be_0)\cdot I\|_2 \geq 0.
	 \]
We decompose the objective conditioning on whether $I_0 = 1_{\{X_0\neq 0\}}$ is zero or not:
	 \[
	 \begin{array}{ll}
	    \Expect\|(\be_0+\beta)\cdot I\|_2 - \Expect\|(\be_0)\cdot I\|_2
	    &= p (1+\beta_0)  + (1-p) \nabla_{\beta}{\Expect_I[\|(\be_0+\beta)'\|_{\ell_2(I-\{0\})} \mid I_0 =0]} -p\\
	    &= p \beta_0  + (1-p) \Expect_{I'}[\|\beta'\|_{\ell_2(I')} ].
	 \end{array}
	\]
This will be non-negative in case either $\beta_{0} > 0$, or else $\beta_0 < 0$ but
	\[
	\frac{p}{1-p} \leq  \frac{\Expect_{I'}\|\beta'\|_{\ell_2(I')}}{ |\beta_0|}
	\]
	for all $\beta $ that satisfy $ \tbe^T \beta =0$.
	
	In the following, we rigorously prove the last two claims this KKT condition using calculating directional derivative.
	\paragraph{KKT condition in the form of directional derivative and projected subgradient}
\begin{lemma}[Equivalent forms of KKT condition]
For the following optimization problem 
\[
 \begin{array}{ll}
 \underset{\psi}{\mbox{minimize}}   & \E_I \|\psi\cdot I\|_2\\
\mbox{subject to}  
& \langle   \tbe, \psi\rangle =1
\end{array}
\tag{$Q_1(\tbe)$ }
  \]
Let $P_{\tbe}^{\perp}$ be the projection onto the hyperplane as the orthogonal complement of  $\tbe$. The following are equivalent forms of KKT condition for $\be_0$ to be the optimal solution:   
  \BIT
  \item
  The directional derivative at $ \be_0$ along every direction $\beta$ on unit sphere where $\tbe^T \beta =0$ is non-negative:
	 \[
	\lim_{t\rightarrow 0^+} \frac{1}{t}[\Expect\|(\be_0+t\beta)\cdot I\|_2 - \Expect\|(\be_0)\cdot I\|_2] \geq 0.
	 \]
\item
  \[
 \begin{array}{ll}
0\in P_{\tbe}^{\perp} \Expect_I [N((\be_0)_I) ].
\end{array}
  \]
  \item
Equivalently, there exists a subgradient $g\in  P_{\tbe}^{\perp} \Expect_I [N((\be_0)_I) ]$ such that for all $\beta$ satisfying  $\tbe^T \beta = 0$,
\[
\beta^T g \geq 0.
\]
\item
Also equivalently, for all $\beta$,
\[
 \begin{array}{ll}
\sup_{g\in  [N((\be_0)_I)]}\Expect_I[ \langle   P_{\tbe}^{\perp}  \beta, g\rangle] \quad \geq 0.
\end{array}
  \]
  \EIT
\end{lemma}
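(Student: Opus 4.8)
The plan is to treat $f(\psi) := \E_I\|\psi\cdot I\|_2$ as a finite convex function on $\reals^n$ and to route all four conditions through a single object, the support function of the compact convex set $K := P_{\tbe}^{\perp}\,\Expect_I[N((\be_0)_I)]$. First I would record the two structural facts that make $\be_0$ a legitimate candidate. Feasibility holds because $\langle\tbe,\be_0\rangle = \tbe_0 = 1$ by the normalization $\tbe_0=1$ established in the change-of-variable reduction. Moreover the feasible directions at $\be_0$ are exactly the hyperplane $H := \{\beta : \tbe^{T}\beta = 0\}$, since $\langle\tbe,\be_0+t\beta\rangle = 1$ iff $\langle\tbe,\beta\rangle=0$. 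As $f$ is convex and finite, its restriction to the affine feasible set is convex, so $\be_0$ is a global minimizer iff the one-sided directional derivative $f'(\be_0;\beta)\geq 0$ for every $\beta\in H$; positive homogeneity of $\beta\mapsto f'(\be_0;\beta)$ then lets me restrict attention to the unit sphere inside $H$. This is precisely statement (1), which I take as the anchor of the equivalence.

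Next I would invoke the classical identity $f'(\be_0;\beta) = \sup_{g\in\partial f(\be_0)}\langle g,\beta\rangle$ together with the earlier subgradient computation $\partial f(\be_0) = \Expect_I[N((\be_0)_I)]$. Since $P_{\tbe}^{\perp}$ is a self-adjoint projection and every feasible $\beta$ obeys $P_{\tbe}^{\perp}\beta=\beta$, I can write $\langle g,\beta\rangle = \langle P_{\tbe}^{\perp}g,\beta\rangle$, so (1) becomes: $\sup_{h\in K}\langle h,\beta\rangle\geq 0$ for all $\beta\in H$. For statement (4) the same self-adjointness gives $\langle P_{\tbe}^{\perp}\beta, g\rangle = \langle\beta, P_{\tbe}^{\perp}g\rangle$, while the interchangeability lemma (equation \eqref{eq:SetExpectEquality}, equivalently the explicit $\mathcal B_0$ computation in Lemma \ref{eq:ComputeExpect}) rewrites $\sup_{g\in N((\be_0)_I)}\Expect_I\langle\cdot,g\rangle$ as the support function of $\Expect_I[N((\be_0)_I)]$; hence (4) reads $\sup_{h\in K}\langle\beta,h\rangle\geq 0$ for all $\beta\in\reals^n$. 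Because $K\subseteq H$, the pairing only sees $P_{\tbe}^{\perp}\beta\in H$, so quantifying over all $\beta$ is the same as quantifying over $H$, giving (1) $\iff$ (4).

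To reach (2) I would note that $K$ is compact and convex, since each $N((\be_0)_J)$ lies in a unit ball, the set-valued expectation of convex sets is convex, and $P_{\tbe}^{\perp}$ is linear. If $0\in K$ then $\sup_{h\in K}\langle\beta,h\rangle\geq\langle\beta,0\rangle=0$ trivially, recovering the common reformulation of (1)/(4). Conversely, if $0\notin K$, the separating hyperplane theorem supplies a $v$ with $\sup_{h\in K}\langle v,h\rangle<0$; replacing $v$ by $P_{\tbe}^{\perp}v\in H$ leaves the pairing unchanged (as $K\subseteq H$) and contradicts the reformulation, so $0\in K$. This yields (1)/(4) $\iff$ (2). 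Finally (3) follows from a subspace trick: any $g\in K\subseteq H$ with $\beta^{T}g\geq 0$ for all $\beta\in H$ must satisfy $\beta^{T}g=0$ for all such $\beta$ (apply the inequality to $\beta$ and $-\beta$), so $g\in H^{\perp}=\mathrm{span}(\tbe)$; combined with $g\in H$ this forces $g=0$, so (3) is exactly the assertion $0\in K$, i.e. (2).

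The hard part will not be the separation or homogeneity arguments, which are routine, but rather the two set-valued inputs I am leaning on: that $\partial f(\be_0)$ genuinely equals the set-valued expectation $\Expect_I[N((\be_0)_I)]$, and that the supremum and expectation may be interchanged in passing to form (4). The subdifferential of an expectation of convex functions equals the (closed) expectation of subdifferentials only under a constraint qualification, here supplied by finiteness of $f$ on all of $\reals^n$; and since the law of $I$ is finitely supported rather than non-atomic, I would justify the interchange not through the general Theorem \textbf{A.15} but through the concrete identity already proved in Lemma \ref{eq:ComputeExpect}, whose direct argument does not require non-atomicity. I would therefore cite those earlier lemmas for these two facts and treat the remainder — self-adjointness of $P_{\tbe}^{\perp}$, compactness and convexity of $K$, and the separating-hyperplane step — as elementary.
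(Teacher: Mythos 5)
Your proposal is correct, and it is worth noting that the paper never actually writes out a proof of this lemma: the lemma is stated and the text immediately proceeds to evaluate the directional derivative at $\be_0$ via the support-function computation (Lemma \ref{eq:ComputeExpect}), implicitly treating the four equivalences as standard convex analysis. Your writeup supplies exactly the missing chain, and along the lines the paper intends: anchor at (1) by convexity of $\psi \mapsto \E_I\|\psi\cdot I\|_2$ restricted to the affine feasible set, identify the directional derivative with the support function of $\partial f(\be_0)=\Expect_I[N((\be_0)_I)]$, then use self-adjointness of $P_{\tbe}^{\perp}$, the inclusion $K = P_{\tbe}^{\perp}\Expect_I[N((\be_0)_I)]\subseteq \{\beta:\tbe^T\beta=0\}$, and strict separation to obtain (2), with the $\pm\beta$ trick showing that the $g$ in (3) is forced to be $0$, so (3) collapses to (2). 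Two of your side remarks add value beyond what the paper records: the observation that the Bernoulli-subset probability space is \emph{atomic}, so the general interchangeability theorem (Theorem A.15 of the cited reference) does not literally apply and the concrete identity of Lemma \ref{eq:ComputeExpect} should be invoked instead — a point the paper glosses over — and the justification of $\partial\,\Expect = \Expect\,\partial$ via finiteness (for finitely many atoms this is just Moreau--Rockafellar for a finite sum). The only point to tighten is setting: your compactness and separating-hyperplane steps are carried out in $\reals^n$, whereas the population problem nominally lives on bisequences, where $\Expect_I[N((\be_0)_I)]$ is defined with a closure and Heine--Borel is unavailable; there you would either invoke weak compactness or, more simply, use the paper's explicit decomposition $\Expect_I[N((\be_0)_I)] = p\,\be_0 + (1-p)\,{\cal B}_0$ to run the separation argument directly. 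Since the paper's own subgradient lemmas are stated for supports $J\subset[n]$, this is a presentational mismatch inherited from the paper, not a gap in your argument. One micro-detail worth a sentence in a final writeup: in the converse separation step, if $P_{\tbe}^{\perp}v=0$ the pairing vanishes identically on $K$, which already contradicts $\sup_{h\in K}\langle v,h\rangle<0$ because $K$ is nonempty; this guarantees the projected separator is a legitimate nonzero direction in the constraint hyperplane.
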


\paragraph{KKT condition in directional derivative}
The following upper bound $ 1- \frac{|\tbe|_{(2)}}{|\tbe|_{(1)}}$ of $p^\star$ generalized the previous special case of exponential decay filter in theorem~\ref{thm: single root PT} with $p^\star = 1-|s|$:
\begin{lemma}
 The directional derivative at $e_0$ along $\beta$ 
 evaluates to the following:
 \BEAS
 \lim_{t\rightarrow 0^+}\frac{1}{t}(\Expect_I [\|(\be_0+t\beta)_I\|_2 - \|(\be_0)_I\|_2]) 
 &=&  p \cdot \langle\beta, \be_0\rangle + (1-p) \cdot \Expect_{I}[ \|\beta_{I}\|_2 | 0 \not \in I ]
 \EEAS
\end{lemma}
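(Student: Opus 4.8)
The plan is to reduce the expectation of a convex function to a pointwise directional-derivative computation, interchanging limit and expectation by a uniform Lipschitz bound. First I would fix the direction $\beta$ (which we may take to lie in $\ell_2$, so $\|\beta\|_2 < \infty$) and, for each realization of the random support $I$, introduce the scalar function $f_I(t) := \|(\be_0 + t\beta)_I\|_2$. Because $t \mapsto (\be_0 + t\beta)_I$ is affine and the Euclidean norm is convex, each $f_I$ is convex in $t$; moreover the reverse triangle inequality gives $|f_I(t) - f_I(0)| \le \|t\beta_I\|_2 = t\,\|\beta_I\|_2$, so the difference quotient $g_t(I) := t^{-1}(f_I(t) - f_I(0))$ satisfies $|g_t(I)| \le \|\beta_I\|_2 \le \|\beta\|_2$ uniformly in $t > 0$ and in $I$.

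The uniform bound $\|\beta\|_2$ is a constant, hence integrable against the law of $I$, so the dominated convergence theorem licenses the interchange
\[
\lim_{t \to 0^+} \Expect_I\, g_t(I) = \Expect_I\Big[\lim_{t \to 0^+} g_t(I)\Big],
\]
reducing the claim to a pointwise evaluation of the one-sided derivative $f_I'(0^+)$ for each fixed $I$. Convexity of $f_I$ already guarantees that this one-sided limit exists and that the difference quotient decreases to it as $t \downarrow 0$, which is an alternative route to the same interchange via monotone convergence.

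The pointwise limit splits according to whether $0 \in I$. When $0 \in I$ we have $(\be_0)_I = e_0 \ne 0$, so $f_I$ is differentiable at $t = 0$ with $f_I'(0^+) = \langle e_0, \beta_I\rangle / \|e_0\|_2 = \beta_0 = \langle \beta, \be_0\rangle$, independently of which other coordinates happen to lie in $I$. When $0 \notin I$ we have $(\be_0)_I = 0$, so $f_I(t) = t\,\|\beta_I\|_2$ for $t > 0$ and $f_I'(0^+) = \|\beta_I\|_2$. Conditioning on the event $\{0 \in I\}$, which has probability $p$, and its complement, which has probability $1-p$, and averaging gives
\[
\Expect_I\big[f_I'(0^+)\big] = p \cdot \langle \beta, \be_0\rangle + (1-p)\cdot \Expect_I\big[\|\beta_I\|_2 \mid 0 \notin I\big],
\]
which is exactly the asserted value.

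The only genuinely delicate point is the interchange of limit and expectation in the second step; everything else is an elementary one-variable computation, and I would guard the interchange by recording the uniform Lipschitz bound explicitly. As a cross-check, the same value can be obtained from the convex-analysis identity that the directional derivative equals the support function of the subgradient set, applied to the subgradient expectation $\Expect_I[N((\be_0)_I)] = p\,\be_0 + (1-p)\,{\cal B}_0$ computed earlier, together with the support-function evaluation $\sup_{b \in {\cal B}_0}\langle \beta, b\rangle = \Expect_I[\|\beta_I\|_2 \mid 0 \notin I]$; the two approaches agree.
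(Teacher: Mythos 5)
Your proof is correct, but it takes a genuinely different route from the paper's. The paper dispatches this lemma in a single chain of identities through convex-analysis machinery: it writes the directional derivative as the support function of the expected subgradient set, $\sup_{g \in \Expect_I[N((\be_0)_I)]}\langle \beta, g\rangle$, substitutes the previously derived decomposition $\Expect_I[N((\be_0)_I)] = p\,\be_0 + (1-p)\,{\cal B}_0$, and evaluates $\sup_{g \in {\cal B}_0}\langle \beta, g\rangle = \Expect_I[\,\|\beta_I\|_2 \mid 0 \notin I\,]$ via Lemma \ref{eq:ComputeExpect}, itself an instance of the interchange of set-valued expectation and support function cited from \cite{bai2018subgradient}. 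You instead work directly with difference quotients: pointwise in the realization of $I$ you compute the one-sided derivative of the convex map $t \mapsto \|(\be_0 + t\beta)_I\|_2$, splitting on $\{0 \in I\}$ versus $\{0 \notin I\}$ (your two case computations, $\beta_0$ and $\|\beta_I\|_2$, are both correct, including the degenerate cases $\beta_I = 0$), and you justify the limit--expectation interchange by the uniform bound $|g_t(I)| \leq \|\beta_I\|_2 \leq \|\beta\|_2$ together with dominated convergence, or alternatively by the monotonicity of convex difference quotients. Your argument is more elementary and self-contained: notably, it never needs the set-expectation/support-function interchange theorem, on which the first equality of the paper's proof silently relies; in effect your dominated-convergence step proves exactly that interchange for this particular instance. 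What the paper's route buys in exchange is reusability: the expected-subgradient representation $p\,\be_0 + (1-p)\,{\cal B}_0$ is the same object that appears in the equivalent forms of the KKT condition, so casting the derivative as a support function plugs directly into the phase-transition analysis, whereas your computation delivers the same numerical value without producing that reusable subdifferential representation. Your closing cross-check against the subgradient formula correctly identifies this correspondence between the two approaches.
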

 \begin{proof}

% Let $\langle b, v \rangle  \equiv b^T v$.

Applying Lemma \ref{eq:ComputeExpect}, we proceed as follows:
\[
 \begin{array}{ll}
\lim_{t\rightarrow 0^+}\frac{1}{t}(\Expect_I [\|(\be_0+t\beta)_I\|_2 - \|(\be_0)_I\|_2])
&= \sup_{g\in \Expect_I [N((\be_0)_I)]} \langle\beta, g\rangle\\
% &= \sup_{g\in  [N((\be_0)_I)]} \Expect_I\langle\beta, g\rangle\\
&= p \cdot \langle\beta, \be_0\rangle + (1-p) \cdot  \sup_{g\in {\cal B}_0 } \langle\beta, g\rangle\\

&= p \cdot \langle\beta, \be_0\rangle + (1-p) \cdot \Expect_{I}[ \|\beta_{I}\|_2 | 0 \not \in I ] 
\end{array}
  \]
  \end{proof}

\subsection{Formula for Phase Transition Parameter}  
\paragraph{Reduction to $val(Q_1 (\tbe')).$}	
	We normalize the direction sequence $\beta$ so that $\beta_0= -1$; 
	using $\tilde{e}(0)=1$, we obtain a lower bound:
\[
 \inf_{\beta_0=-1, \langle  \tbe, \beta\rangle  =0} \Expect_{I'}\|\beta'\|_{\ell_2(I')}
 =\inf_{ \beta_0=-1, \beta_0 \tilde{e}_0-\langle  \tbe', \beta'\rangle  =0} \Expect_{I'}\|\beta'\|_{\ell_2(I')}
= \inf_{\langle   \tbe', \beta'\rangle  =1} \E_{I'} \|\beta'\|_{\ell_2(I')} = val(Q_1 (\tbe'))
\]	
Here $Q_1(\tbe')$ is the optimization problem:
\[
 \begin{array}{ll}
 \underset{\beta \in l_1(\integers)}{\mbox{minimize}}   &\E_{I'} \|\beta'\|_{\ell_2(I')}\\
\mbox{subject to}  
& \langle  \tbe', \beta'\rangle  =1
\end{array}
  \]
  
  Now we have rigorously proved that there exists a threshold $p^\star > 0$, so that for all 
\BIT
\item  $\bw^\star$ is $\ba^{-1}$ up to time shift and rescaling provided $p<p^\star$; and
\item   $\bw^\star$ is not $\ba^{-1}$ up to time shift and rescaling, provided $p> p^\star$.
\EIT
The threshold $p^\star$ obeys
 \[
\frac{p}{1-p}= val(Q_1(\tbe')).
\] 
  
  \section{Supplementary: Tight Upper and Lower Bound of Phase Transition Parameter}
  We have shown the existence of  $p^\star$ so that for all $p<p^\star$, the KKT condition is satisfied.   The threshold $p^\star$ determined by
 \[
\frac{p}{1-p}= val(Q_1(\tbe')).
\]
We have represented $p^\star$ as the optimal value of a derived optimization problem $val(Q_1 (\tbe'))$. From now on, we find upper and lower bounds of it.
\subsection{Upper and Lower Bound from Optimization Point of View}
	\begin{lemma}[Explicit phase transition condition with upper bound]
	\label{lemma:Condition phase transition upper}
$val(Q_1(\tbe'))$
obeys an upper bound and lower
\[
 \frac{p}{\|\tbe'\|_\infty}  \geq val(Q_1(\tbe'))  
\]
where $\|\tbe'\|_\infty  = \frac{|\tbe|_{(2)}}{|\tbe|_{(1)}}  $.
	Additionally, the upper bound 
is sharp if and only if
	\[
	\frac{p}{1-p} \leq val(Q_1(\frac{\tbe'}{\|\tbe'\|_\infty}))=val(Q_1(\tbe''))/\|\tbe'\|_\infty
	\]
	therefore, the upper bound 
holds with equality 
	\[
	p^\star= 1- \frac{|\tbe|_{(2)}}{|\tbe|_{(1)}}
	\]
if 
		\[
	p\leq 1- \frac{|\tbe|_{(3)}}{|\tbe|_{(2)}}
	\]
	Therefore, if 
		\[
	\frac{|\tbe|_{(3)}}{|\tbe|_{(2)}} \leq \frac{|\tbe|_{(2)}}{|\tbe|_{(1)}}
	\]
	then
	\[
	p^\star= 1- \frac{|\tbe|_{(2)}}{|\tbe|_{(1)}}.
	\]
	
	\end{lemma}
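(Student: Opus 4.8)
The plan is to treat $val(Q_1(\cdot))$ as a positively homogeneous functional of its constraint vector and to read off every claim from one explicit feasible point together with the KKT reduction $\frac{p^\star}{1-p^\star}=val(Q_1(\tbe'))$ already established above. First I would record the scaling identity $val(Q_1(c\bv))=\tfrac1c\,val(Q_1(\bv))$ for $c>0$, which is immediate because the objective $\E_I\|\bw\cdot I\|_2$ is $1$-homogeneous in $\bw$ and rescaling the constraint vector rescales the feasible set. For the upper bound I would exhibit the one-sparse feasible point: letting $j$ be the index with $|\tbe'_j|=\|\tbe'\|_\infty$ and taking $\bw=\mathrm{sign}(\tbe'_j)\,\be_j/\|\tbe'\|_\infty$, the constraint $\langle\tbe',\bw\rangle=1$ holds and the objective equals $\E_I\|\be_j\cdot I\|_2/\|\tbe'\|_\infty=p/\|\tbe'\|_\infty$, since $\|\be_j\cdot I\|_2=I_j\sim\mathrm{Bern}(p)$. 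Hence $val(Q_1(\tbe'))\le p/\|\tbe'\|_\infty$, and because $\tbe_0=|\tbe|_{(1)}$ is the normalized largest entry we have $\|\tbe'\|_\infty=|\tbe|_{(2)}/|\tbe|_{(1)}$. Substituting into the KKT reduction at $p=p^\star$ gives $\frac{p^\star}{1-p^\star}\le\frac{p^\star}{\|\tbe'\|_\infty}$, i.e. $p^\star\le 1-\|\tbe'\|_\infty=1-|\tbe|_{(2)}/|\tbe|_{(1)}$.

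Next I would characterize sharpness. Equality in the value bound means precisely that the one-sparse point above is optimal for $Q_1(\tbe')$, so I would rerun the directional-derivative computation of the previous subsection, but anchored at $\be_j$ rather than $\be_0$. Normalizing the feasible direction by $\delta_j=-1$ and using $\langle\tbe',\delta\rangle=0$ turns the constraint into $\langle(\tbe')',\delta'\rangle=\|\tbe'\|_\infty$, and the directional derivative is nonnegative in every feasible direction iff
\[
\|\tbe'\|_\infty\cdot val(Q_1(\tbe''))\ \ge\ \frac{p}{1-p},\qquad \tbe'':=(\tbe')',
\]
where homogeneity converts the infimum over $\{\langle(\tbe')',\delta'\rangle=\|\tbe'\|_\infty\}$ into $\|\tbe'\|_\infty\,val(Q_1(\tbe''))$. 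This is the promised "if and only if": the value bound, and with it the threshold bound $p^\star\le 1-|\tbe|_{(2)}/|\tbe|_{(1)}$, is tight exactly when this inequality holds at $p=p^\star$. The "only if" direction is clean, since if the one-sparse point is not optimal then $val(Q_1(\tbe'))<p/\|\tbe'\|_\infty$ strictly, which by the KKT reduction forces $p^\star<1-\|\tbe'\|_\infty$ strictly.

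To pass from this value condition to the entrywise condition $p\le 1-|\tbe|_{(3)}/|\tbe|_{(2)}$, I would apply the identical one-sparse analysis one level deeper, to $Q_1(\tbe'')$: the one-sparse point at the largest coordinate of $\tbe''$ yields $val(Q_1(\tbe''))\le p/\|\tbe''\|_\infty=p/|\tbe|_{(3)}$ (in normalized units), attained with equality exactly when the corresponding second-level optimality test holds. Assuming attainment, substituting $val(Q_1(\tbe''))=p/|\tbe|_{(3)}$ and $\|\tbe'\|_\infty=|\tbe|_{(2)}$ into the sharpness inequality gives $|\tbe|_{(2)}\cdot\frac{p}{|\tbe|_{(3)}}\ge\frac{p}{1-p}$, which rearranges to $1-p\ge|\tbe|_{(3)}/|\tbe|_{(2)}$, i.e. $p\le 1-|\tbe|_{(3)}/|\tbe|_{(2)}$. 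This delivers the statement "if $p\le 1-|\tbe|_{(3)}/|\tbe|_{(2)}$ then $p^\star=1-|\tbe|_{(2)}/|\tbe|_{(1)}$." Finally, to close the loop, if $|\tbe|_{(3)}/|\tbe|_{(2)}\le|\tbe|_{(2)}/|\tbe|_{(1)}$ then $1-|\tbe|_{(2)}/|\tbe|_{(1)}\le 1-|\tbe|_{(3)}/|\tbe|_{(2)}$, so at the candidate threshold $p=1-|\tbe|_{(2)}/|\tbe|_{(1)}$ the sufficient condition is automatically met, pinning down $p^\star=1-|\tbe|_{(2)}/|\tbe|_{(1)}$.

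The step I expect to be the main obstacle is the lower-bound half of sharpness: the one-sparse upper bound on $val$ is trivial, but certifying that it is \emph{attained} at the second level requires the second-level optimality test, and in full generality this unwinds into a recursion down the ordered entries $|\tbe|_{(1)}\ge|\tbe|_{(2)}\ge\cdots$. The recursion terminates immediately when $\tbe''$ is supported on a single coordinate (e.g. filters with only three nonzero entries in $\tbe$, or geometric-decay filters where the successive ratios $|\tbe|_{(k+1)}/|\tbe|_{(k)}$ are constant, recovering Theorem \ref{thm: single root PT}). I would therefore need to verify that the ordering hypothesis $|\tbe|_{(3)}/|\tbe|_{(2)}\le|\tbe|_{(2)}/|\tbe|_{(1)}$, together with its natural continuation down the chain, is exactly what keeps each successive test satisfied, so that the single-step conclusion is legitimate; alternatively, replacing the $\ell^\infty$ certificate by the coarser bound $val(Q_1(\tbe''))\ge p/\|\tbe''\|_2$ (equivalently the $\cot$ lower bound of Theorem \ref{thm:Population phase transition}) yields a safe but more restrictive sufficient condition $p\le 1-\|\tbe''\|_2/|\tbe|_{(2)}$, which I would note as the non-recursive fallback.
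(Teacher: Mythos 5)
Your proposal is correct and, in outline, it is the paper's own proof: the upper bound comes from the same one-sparse feasible point $\beta = \be_{i_m}/\|\tbe'\|_\infty$ with objective value $p/\|\tbe'\|_\infty$, and sharpness is characterized exactly as you do, by re-running the directional-derivative (KKT) test anchored at that one-sparse point, producing the second-level condition $\frac{p}{1-p} \le \|\tbe'\|_\infty \cdot val(Q_1(\tbe''))$. Your explicit homogeneity identity $val(Q_1(c\bv)) = \frac{1}{c}\,val(Q_1(\bv))$ also silently corrects two slips in the lemma as printed, which writes $\tbe'$ where $\tbe''$ is meant and divides by $\|\tbe'\|_\infty$ where the scaling in fact multiplies.

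The obstacle you flag in your final paragraph is not excess caution: it is precisely the soft spot in the paper's own proof. To descend from the exact sharpness condition to the entrywise condition $p \le 1-|\tbe|_{(3)}/|\tbe|_{(2)}$, the paper substitutes the one-sparse \emph{upper} bound $val(Q_1(\tbe'')) \le p/\|\tbe''\|_\infty$ into an inequality that logically requires a \emph{lower} bound on $val(Q_1(\tbe''))$ --- its proof even announces ``to simplify with upper bound on $val(Q_1(\tbe''/\|\tbe'\|_\infty))$'' --- and this is valid only when that second-level bound is attained, i.e.\ when the second-level one-sparse point is itself optimal. So the lemma's single-step ``therefore'' stands only under a condition of the kind your recursion down the ordered entries supplies (e.g.\ constant or decreasing successive ratios $|\tbe|_{(k+1)}/|\tbe|_{(k)}$, as in the geometric-filter case of Theorem~\ref{thm: single root PT}, where every level's test holds simultaneously), and your non-recursive fallback $p \le 1-\|\tbe''\|_2/|\tbe|_{(2)}$ via the $p\cot\angle$ lower bound of Theorem~\ref{thm:Population phase transition} is a legitimate, if coarser, repair. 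Either of your two patches should accompany that final step; as written in the paper it is an unproved leap, and you were right to isolate it.
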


	\begin{lemma}[Explicit phase transition condition with lower bound]
\label{lemma:Condition phase transition lower}
\[
 val(Q_1(\tbe')) \geq  \E_I\frac{1}{\|\tbe\cdot I\|_{2} }\geq p^{-1/2} \|\tbe'\|_{2}^{-1}=p^{-1/2} \cot{\angle( \tbe, \be_0)}
\]
	\end{lemma}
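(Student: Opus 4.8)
The plan is to bound $val(Q_1(\tbe'))$ from below by pairing an elementary per-support Cauchy--Schwarz estimate with the Bernoulli-masked moment bounds already established for the functionals $V_1$ and $V_{-1}$. First I would record the geometry behind the right-hand side and normalize the feasible set. Writing $\tbe = \be_0 + \tbe'$ with $\tbe'$ supported away from the origin, we have $\langle \tbe, \be_0\rangle = 1$ and $\|\tbe\|_2^2 = 1 + \|\tbe'\|_2^2$, so $\cos\angle(\tbe,\be_0) = \|\tbe\|_2^{-1}$ and $\sin\angle(\tbe,\be_0) = \|\tbe'\|_2/\|\tbe\|_2$; dividing gives $\cot\angle(\tbe,\be_0) = \|\tbe'\|_2^{-1}$, which is the last equality in the statement. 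I would also note that in $Q_1(\tbe')$ we may assume $\beta_0 = 0$ without loss of generality, since the $0$-th coordinate neither enters the constraint $\langle \tbe',\beta\rangle = 1$ nor lowers the objective $\E_I\|\beta\cdot I\|_2$; on this set the constraint is equivalent to $\langle\tbe,\beta\rangle = 1$.

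For the first inequality I would use a pointwise Cauchy--Schwarz bound valid for every realization of the random support $I$ and every feasible $\beta$:
\[
\|\beta\cdot I\|_2 \;\geq\; \frac{\langle \beta\cdot I,\, \tbe\cdot I\rangle}{\|\tbe\cdot I\|_2} \;=\; \frac{\langle \beta,\, \tbe\cdot I\rangle}{\|\tbe\cdot I\|_2}.
\]
Taking $\E_I$ and invoking the constraint to strip the numerator down to the deterministic quantity $\E_I\|\tbe\cdot I\|_2^{-1}$ is the crux of this step. The honest difficulty is that $\langle\beta,\tbe\cdot I\rangle$ is coupled to the very same random $I$ appearing in the denominator, so the constraint $\langle\tbe,\beta\rangle = 1$ cannot simply be pulled out of the expectation; controlling this coupling uniformly over feasible $\beta$ is where I expect the main obstacle to lie. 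A cleaner route that avoids the coupling entirely is to apply the lower bound $\E_I\|\beta\cdot I\|_2 \geq p\,\|\beta\|_2$ (the bound $V_1(\beta)\geq p$ from Theorem~\ref{thm:BGPropLemmaNorm}) and then minimize the right-hand side over the affine constraint, using $\min\{\|\beta\|_2 : \langle\tbe',\beta\rangle = 1\} = \|\tbe'\|_2^{-1}$, attained at $\beta = \tbe'/\|\tbe'\|_2^2$; this yields $val(Q_1(\tbe')) \geq p\,\|\tbe'\|_2^{-1}$ directly.

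For the second inequality I would apply the harmonic moment bound $V_{-1}\geq p^{-1/2}$ from Theorem~\ref{thm: TaylorSqrtInvExpect} to the masked vector, giving $\E_I\|\tbe'\cdot I\|_2^{-1} \geq p^{-1/2}\|\tbe'\|_2^{-1} = p^{-1/2}\cot\angle(\tbe,\be_0)$, and then combine with the first step. A final sanity check I would carry out is the exponential-decay special case $\tbe = (1,-s)$, where $\tbe' = -s\,\be_1$, $\|\tbe'\|_2 = |s|$, and a direct computation gives $val(Q_1(\tbe')) = p/|s|$; this pins down the correct power of $p$ in the overall bound and is exactly the test that reconciles the two moment inequalities ($V_1\geq p$ versus $V_{-1}\geq p^{-1/2}$) and identifies which constant is tight. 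I expect verifying this consistency, and thereby fixing the exact exponent of $p$ in the intermediate expression, to be the subtle accounting step of the proof.
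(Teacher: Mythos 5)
Your proposal is sound as far as it goes, but it cannot establish the lemma as printed, because the printed bound is false --- and your own sanity check is precisely what refutes it. For $\tbe' = -s\,\be_1$, monotonicity of $\E_I\|\beta\cdot I\|_2$ in each $|\beta_j|$ forces the optimizer of $Q_1(\tbe')$ to be $\beta = -\be_1/s$, so $val(Q_1(\tbe')) = p/|s|$, which lies strictly below the claimed lower bound $p^{-1/2}/|s|$ for every $p<1$. The middle quantity is also problematic: $\E_I\|\tbe\cdot I\|_2^{-1} = +\infty$ whenever the Bernoulli support misses $\mathrm{supp}(\tbe)$ with positive probability (e.g.\ any finitely supported $\tbe$), so the first claimed inequality fails as well. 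The paper's own proof of this lemma commits exactly the coupling error you flagged in your Cauchy--Schwarz step: it asserts $val(Q_1(\tbe')) \geq \E_I\, val(Q_2(\tbe'_I))$, implicitly treating the masked vector $\beta\cdot I$ as feasible for the per-support problem $Q_2(\tbe'_I)$; but $\langle \tbe'_I, \beta\rangle$ is random with expectation $p$, not $1$, so the constraint does not survive masking and the inf--expectation interchange is unjustified. The subsequent appeal to $V_{-1}\geq p^{-1/2}$ cannot be transferred through that broken step.

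Your ``cleaner route'' is the correct repair, and it recovers what the paper actually uses elsewhere: from $V_1(\beta)\geq p$, i.e.\ $\E_I\|\beta\cdot I\|_2 \geq p\|\beta\|_2$, together with
\[
\min\{\|\beta\|_2 :\ \langle\tbe',\beta\rangle = 1\} = \|\tbe'\|_2^{-1},
\]
attained at $\beta = \tbe'/\|\tbe'\|_2^2$, one gets $val(Q_1(\tbe')) \geq p\,\|\tbe'\|_2^{-1} = p\cot\angle(\tbe,\be_0)$. This is exactly the lower bound proved in Theorem~\ref{thm:PT_lower} (there via $\inf V_1(\beta_{(0)})\cdot\inf\tan\angle(\beta,-\be_0)$) and the one quoted in the main phase-transition theorem, where it yields $p^\star \geq 1-\tan\angle(\tbe,\be_0)$ through the equivalence $\frac{p}{1-p}\leq p\cot\theta \iff p \leq 1-\tan\theta$; the lemma's exponent $p^{-1/2}$ is incompatible with that chain. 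Your geometric identity $\cot\angle(\tbe,\be_0) = \|\tbe'\|_2^{-1}$ (using $\tbe_0=1$) and the reduction to $\beta_0=0$ are both correct. In short: you correctly diagnosed the one genuinely defective step, proved the corrected inequality by the argument the paper itself relies on in Theorem~\ref{thm:PT_lower}, and your one-sparse test pins the correct power of $p$ at $1$, not $-1/2$.
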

% \EIT

  	\begin{proof}[Proof of Lemma~\ref{lemma:Condition phase transition upper}]
	
 Here $Q_1(\tbe')$ is the optimization problem:
\[
 \begin{array}{ll}
 \underset{\beta \in l_1(\integers)}{\mbox{minimize}}   &\E_{I'} \|\beta'_{I'}\|_{2}\\
\mbox{subject to}  
& \langle  \tbe', \beta'\rangle  =1
\end{array}
  \]
  
The upper bound is achieved at $\beta = e_{i_m}/|\tbe'_{i_m}| =e_{i_m}/\|\tbe'\|_\infty $, where $i_m = \arg\max_i |\tbe'|$.

The upper bound is tight (takes equality) if 
the projection pursuit problem $Q_1(\tbe')$  lead to one-sparse solution
$\beta = e_{i_m}/|\tbe'_{i_m}| =e_{i_m}/\|\tbe'\|_\infty $.
Using the previous condition, it require
\[
	val(Q_1(\frac{\tbe''}{\|\tbe'\|_\infty}))\geq \frac{p}{1-p}
	\]
	therefore, the upper bound 
holds with equality 
	\[
	p^\star= 1- \frac{|\tbe|_{(2)}}{|\tbe|_{(1)}}
	\]
if 
		\[
	p\leq val(Q_1(\frac{\tbe''}{\|\tbe'\|_\infty}))
	\]
	To simplify with upper bound on $val(Q_1(\frac{\tbe''}{\|\tbe'\|_\infty}))$, if 
		\[
	\frac{|\tbe|_{(3)}}{|\tbe|_{(2)}} \leq \frac{|\tbe|_{(2)}}{|\tbe|_{(1)}}
	\]
	then
	\[
	p^\star= 1- \frac{|\tbe|_{(2)}}{|\tbe|_{(1)}}.
	\]
		\end{proof}
	%%%%%%

	%%%%%%
	
		\begin{proof}[Proof of Lemma~\ref{lemma:Condition phase transition lower}]
In general, we define $Q_2(\tbe_J)$ for a fixed subset $J$:
\[
 \begin{array}{ll}
 \underset{\beta}{\mbox{minimize}}   &\|\beta\|_{2}\\
\mbox{subject to}  
&  \langle  \tbe_{J}, \beta\rangle  =1
\end{array}
\tag{$Q_2(\tbe_{J})$ }
  \] 	
then $val(Q_2(\tbe_{J})) = \|\tbe_J\|_2^{-1}$, where the optimal is achieved when 
  \[
  \beta_J = \tbe_J/\|\tbe_J\|_2^2 
  \]
then $val(Q_1 )$ has an lower bound:
\[
  val(Q_1(\tbe') ) \geq \E_I val(Q_2(\tbe'_I)) =\E_I\|\tbe_I\|_{2}^{-1} \geq p^{-1/2} \|\tbe'\|_{2}^{-1} 
  \]
  The last inequality is based on $V_{-1}\geq p^{-1/2}$ from theorem~\ref{thm:TaylorSqrtInvExpect}.
	\end{proof}
  \subsection{Upper and Lower Bound from Geometric Point of View}
  \paragraph{Geometric bound}
Let $\theta = \angle({\tbe, \be_0})$, for $\beta$ need to satisfy a constraint  $\tbe^T \beta =0$, we get
\[
 \begin{array}{ll}
\lim_{t\rightarrow 0^+}\frac{1}{t}(\Expect_I [\|(\be_0+t\beta)_I\|_2 - \|(\be_0)_I\|_2])
% &= \sup_{g\in \Expect_I [N((\be_0)_I)]} <\beta, g\rangle\\
&= [p \cos{\angle(\beta, \be_0)} + (1-p)  V_1(\beta_{(0)}) \sin{\angle(\beta, \be_0)}] \|\beta\|_2.
\end{array}
  \]
Here $\angle(\beta, \be_0)\in [0, \pi], \sin{\angle(\beta, \be_0)}\in [0,1]$. 

\begin{figure}[!tb]
\centering
\includegraphics[width=.55\textwidth]{./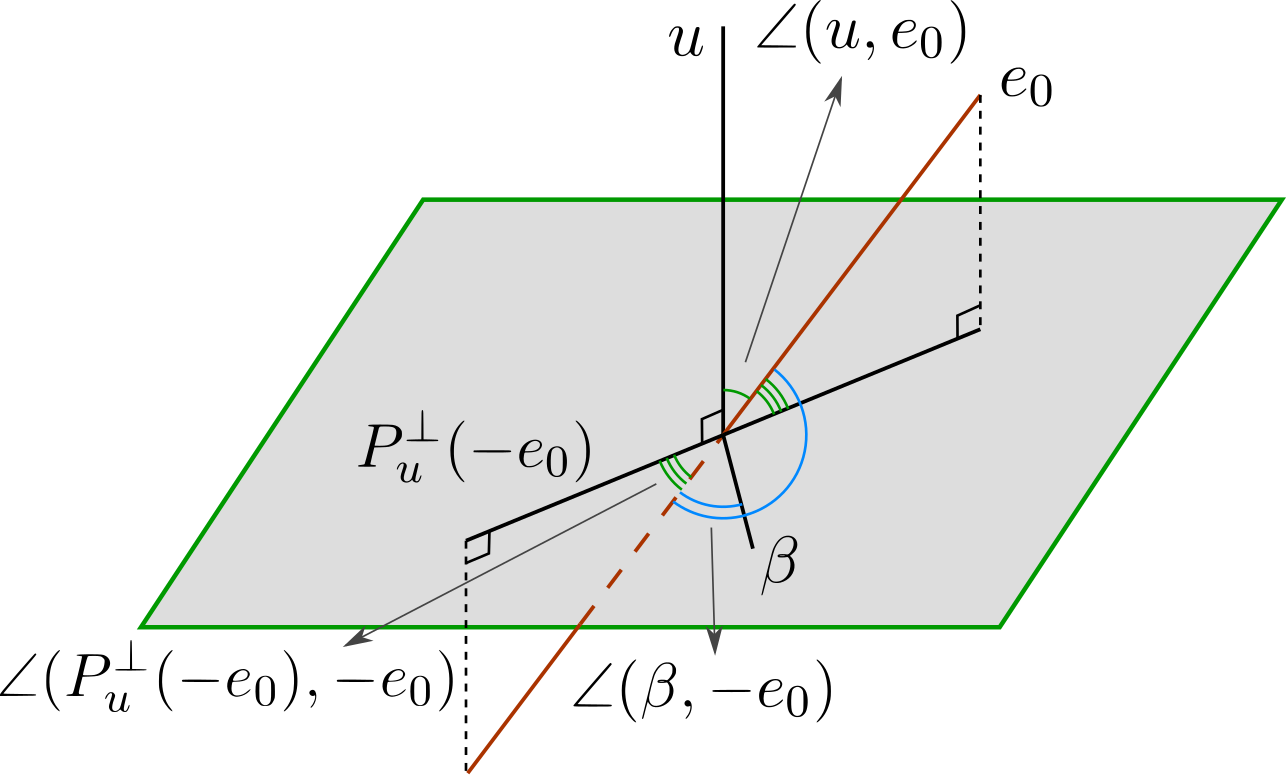}
\caption{Demonstration of the relation between
  $\angle( \beta, -\be_0)$ and ${\angle(\tbe, \be_0)}$. In the figure, $u=\tbe/\| \tbe\|_2$.}
\label{fig-Angle}
\end{figure}
  
\begin{lemma}
  \label{Angle_Inequality}
   \[
  \angle(  \beta, - \be_0) \geq \angle(P_{\tbe}^\perp (-\be_0), -\be_0)= \frac{\pi}{2} - \theta
  \]
  
    \[
  \tan(\angle(  \beta, - \be_0)) \geq \tan\angle(P_{\tbe}^\perp (-\be_0), -\be_0) =\cot{\theta}
  \]
\end{lemma}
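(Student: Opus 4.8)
The plan is to reduce both inequalities to a single elementary fact about projecting onto a hyperplane: among all nonzero vectors $\beta$ lying in the orthogonal complement $H := \{\beta : \langle \tbe, \beta\rangle = 0\}$, the direction making the smallest angle with $-\be_0$ is exactly $P_{\tbe}^\perp(-\be_0)$, the orthogonal projection of $-\be_0$ onto $H$. First I would fix the orthogonal decomposition $-\be_0 = P_{\tbe}^\perp(-\be_0) + c\,\hat{\tbe}$, where $\hat{\tbe} = \tbe/\|\tbe\|_2$ and $c = \langle -\be_0, \hat{\tbe}\rangle = -\cos\theta$, using $\theta = \angle(\tbe, \be_0)$ and $\|\be_0\|_2 = 1$.

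Next I would establish the two equalities on the right-hand sides. Since $\|\be_0\|_2 = 1$ and the two summands are orthogonal, $\|P_{\tbe}^\perp(-\be_0)\|_2^2 = 1 - \cos^2\theta = \sin^2\theta$, so $\|P_{\tbe}^\perp(-\be_0)\|_2 = \sin\theta$. Then $\langle P_{\tbe}^\perp(-\be_0), -\be_0\rangle = \|P_{\tbe}^\perp(-\be_0)\|_2^2 = \sin^2\theta$, because the $\hat{\tbe}$-part is annihilated, whence $\cos\angle(P_{\tbe}^\perp(-\be_0), -\be_0) = \sin\theta = \cos(\tfrac{\pi}{2} - \theta)$, giving $\angle(P_{\tbe}^\perp(-\be_0), -\be_0) = \tfrac{\pi}{2} - \theta$ and $\tan(\tfrac{\pi}{2}-\theta) = \cot\theta$.

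For the inequality itself, take any nonzero $\beta \in H$. Because $\beta \perp \hat{\tbe}$, the $\hat{\tbe}$-component of $-\be_0$ contributes nothing: $\langle \beta, -\be_0\rangle = \langle \beta, P_{\tbe}^\perp(-\be_0)\rangle$. Cauchy--Schwarz then gives $\cos\angle(\beta, -\be_0) = \langle \beta, P_{\tbe}^\perp(-\be_0)\rangle / \|\beta\|_2 \le \|P_{\tbe}^\perp(-\be_0)\|_2 = \sin\theta$. Since $\arccos$ is decreasing on $[-1,1]$, this yields $\angle(\beta, -\be_0) \ge \arccos(\sin\theta) = \tfrac{\pi}{2} - \theta$, which is the first claimed inequality.

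The one genuine subtlety --- and the step I would flag as the main obstacle --- is passing from the angle inequality to the tangent inequality, because $\tan$ is not monotone on all of $[0,\pi]$. The resolution is that the tangent form is only needed, and only meaningful, in the regime relevant to the KKT computation, namely when $\beta$ has a strictly negative $\be_0$-component (the case $\beta_0 < 0$ isolated in the preceding KKT analysis), equivalently $\angle(\beta, -\be_0) \in [\tfrac{\pi}{2} - \theta, \tfrac{\pi}{2})$. On this acute branch $\tan$ is increasing, so $\angle(\beta, -\be_0) \ge \tfrac{\pi}{2} - \theta$ immediately gives $\tan\angle(\beta, -\be_0) \ge \tan(\tfrac{\pi}{2}-\theta) = \cot\theta$. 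I would state this restriction explicitly to keep the tangent claim well-posed, and note that it is exactly the situation in which the directional-derivative lower bound is invoked.
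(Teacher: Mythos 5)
Your proof is correct and follows the same geometric route as the paper, whose entire argument is to invoke the fact that the projection $P_{\tbe}^\perp(-\be_0)$ minimizes the angle with $-\be_0$ among all $\beta$ in the hyperplane orthogonal to $\tbe$, and then pass to tangents. You improve on it in two ways: you actually prove the minimal-angle property (via the orthogonal decomposition $-\be_0 = P_{\tbe}^\perp(-\be_0) - \cos\theta\,\hat{\tbe}$ and Cauchy--Schwarz), and you correctly flag and repair the tangent step --- $\tan$ is not monotone on $[0,\pi]$, so the second inequality needs the restriction to the $\beta_0<0$ (acute-angle) regime used in the KKT analysis, a point the paper's proof passes over silently.
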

  
\begin{proof}
   
%   Let  $e_{0,u} = N(P_{\tbe}^{\perp} \be_0)$ be the unit vector along $P_{\tbe}^{\perp} \be_0$, then 
We know that geometrically, using the property that projection of $e_0$ on the hyperplane with normal vector  $\tbe$ has the smallest angle among all the $\beta$ in that hyperplane, we have that if
   \[
  \angle(  \beta, - \be_0) \geq \angle(P_{\tbe}^\perp (-\be_0), -\be_0) = \frac{\pi}{2} - \angle(\tbe, \be_0) = \frac{\pi}{2} - \theta,
  \]
then
  \[
  \tan\angle(  \beta, - \be_0) \geq \tan\angle(P_{\tbe}^\perp (-\be_0), -\be_0)= \cot{\theta}.
  \]
 
\end{proof}
  
% Let $H_u=\{\beta : \quad u^T \beta =0 \}$. We define the phase transition threshold as
%   \[
%  p^\star:= 
%  \sup_{p\in [0,1]}\{\inf_{\|\beta\|_2=1, u^T \beta =0 }(-p \cdot \sqrt{1-\| \beta_{(0)}\|_2^2} + (1-p) \cdot  \| \beta_{(0)}\|_2 V_1(\beta_{(0)}))> 0\}.
%  \]
%  \begin{lemma}
% Then 
%  \[
% \frac{p^\star}{1-p^\star}=  \inf_{\| \beta\|_2=1, u^T \beta =0} \tan(\angle( \beta_{(0)}, -\be_0)) V_1(\beta_{(0)}).
% \]
% For all $p<p^\star$ the convex program finds the correct inverse filter  $\ba^{-1}$ and hence the sparse signal $X$.
% \end{lemma}

% \begin{proof}
 
% We know that the KKT condition of the convex program is
%   \[
%   p \cos{\angle(\beta, \be_0)} + (1-p)  V_1(\beta_{(0)}) \sin{\angle(\beta, \be_0)}>0,
%   \]
%   if and only if 
%   $\cos{\angle(\beta, \be_0)}\geq 0,$ or $\cos{\angle(\beta, \be_0)}<0$ and 
%   \[
%       \frac{1-p}{ p}  V_1(\beta_{(0)}) \tan{(\angle(\beta, -\be_0))}>1.
%   \]
%   If $\cos{\angle(\beta, \be_0)}<0, \cos{\angle(\beta, -\be_0)}>0, \angle(\beta, -\be_0) \in [0, \frac{\pi}{2}).$ 
  
%   Therefore,
%   \[
%       \frac{1-p}{ p}  \inf_{\beta\in H_u} [V_1(\beta_{(0)}) \tan\angle(  \beta, - \be_0)>1,
%   \]
  
%   \[
% \frac{p^\star}{1-p^\star}=  \inf_{\| \beta\|_2=1, u^T \beta =0} \tan(\angle( \beta_{(0)}, -\be_0)) V_1(\beta_{(0)}).
% \]
%   \end{proof}

\begin{theorem}
 \label{thm:PT_lower}
The threshold $p^\star$ satisfies 
		\[ 
 p \cot\angle({\tbe, \be_0})   \leq val(Q_1(\tbe')) \leq	  \cot\angle({\tbe, \be_0})  V_1(\tbe').
		\]
\end{theorem}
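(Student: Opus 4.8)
The plan is to reduce both inequalities to two facts already available in the excerpt: the exact expectation identity $\E_I\|\psi\cdot I\|_2 = V_1(\psi)\,\|\psi\|_2$ established in the landscape section (recall $V_1(\psi)=\E_I\|\psi\|_{\ell_2(I)}/\|\psi\|_{\ell_2(\cT)}$), together with its universal sandwich $p \le V_1(\psi)\le\sqrt p$; and the elementary geometric identity $\cot\angle(\tbe,\be_0)=1/\|\tbe'\|_2$. The latter follows at once from $\tbe=\be_0+\tbe'$ with $\tbe_0=1$ and $\tbe'$ supported off the origin, so that $\cos\angle(\tbe,\be_0)=\tbe_0/\|\tbe\|_2=1/\sqrt{1+\|\tbe'\|_2^2}$ and $\sin\angle(\tbe,\be_0)=\|\tbe'\|_2/\sqrt{1+\|\tbe'\|_2^2}$, whence $\cot\angle(\tbe,\be_0)=1/\|\tbe'\|_2$. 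Once these are in place, each bound on $val(Q_1(\tbe'))=\min_{\langle\tbe',\beta\rangle=1}\E_I\|\beta\cdot I\|_2$ drops out of a single short computation.

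For the upper bound I would exhibit an explicit feasible point. Take $\beta^\star=\tbe'/\|\tbe'\|_2^2$; then $\langle\tbe',\beta^\star\rangle=1$, so $\beta^\star$ is feasible, and by the expectation identity together with the scale-invariance of the ratio $V_1$ (it is homogeneous of degree zero, so $V_1(\beta^\star)=V_1(\tbe')$),
\[
\E_I\|\beta^\star\cdot I\|_2=V_1(\beta^\star)\,\|\beta^\star\|_2=V_1(\tbe')\cdot\frac{1}{\|\tbe'\|_2}=\cot\angle(\tbe,\be_0)\,V_1(\tbe').
\]
Since $val(Q_1(\tbe'))$ is the minimum over all feasible directions, it is at most the value at $\beta^\star$, giving $val(Q_1(\tbe'))\le\cot\angle(\tbe,\be_0)\,V_1(\tbe')$.

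For the lower bound I would argue over an arbitrary feasible $\beta$. By the expectation identity and $V_1(\beta)\ge p$, every feasible $\beta$ satisfies $\E_I\|\beta\cdot I\|_2 = V_1(\beta)\|\beta\|_2\ge p\|\beta\|_2$, while Cauchy-Schwarz applied to the constraint gives $1=\langle\tbe',\beta\rangle\le\|\tbe'\|_2\|\beta\|_2$, i.e. $\|\beta\|_2\ge 1/\|\tbe'\|_2$. Combining these, $\E_I\|\beta\cdot I\|_2\ge p/\|\tbe'\|_2=p\cot\angle(\tbe,\be_0)$ uniformly in feasible $\beta$, and taking the infimum yields the claimed lower bound. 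No step here is a genuine obstacle; the only points demanding care are invoking the correct normalization $\tbe_0=|\tbe|_{(1)}=1$ (so that $\be_0$ really is the relevant peak direction) and the scale-invariance of $V_1$, both inherited from the change-of-variables reduction and the landscape lemmas. The real content of the theorem is just that these two sides pinch $val(Q_1(\tbe'))$ through the universal constants $p$ and $V_1(\tbe')$ that bracket $V_1$ at the optimizer.
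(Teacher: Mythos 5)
Your proposal is correct and takes essentially the same route as the paper: the lower bound via $V_1(\beta)\ge p$ combined with a norm bound from the constraint, and the upper bound by evaluating at a feasible point in the direction of $\tbe'$. Your Cauchy--Schwarz step $1=\langle \tbe',\beta\rangle \le \|\tbe'\|_2\|\beta\|_2$ is exactly the paper's projection-angle lemma $\inf \tan\angle(\beta,-\be_0)=\cot\angle(\tbe,\be_0)$ in algebraic form, and your test vector $\tbe'/\|\tbe'\|_2^2$ coincides, up to the irrelevant $0$-coordinate and a rescaling under the scale-invariance of $V_1$, with the paper's choice $P_{\tbe}^{\perp}(-\be_0)$.
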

 
%   \begin{theorem}
%  \label{thm:PT_upper}
%  Let $p^\star_2$ be the solution of 
%   \[
% \tan{\theta}  = 
%  (1-p^\star)(\sqrt{2}+(1-\sqrt{2})p^\star)
%  \]

%  \end{theorem}
\begin{proof}
  First, we prove the lower bound. Let $H_u=\{\beta : \quad u^T \beta =0 \}$,
    \BEAS
      val(Q_1(\tbe')) &=&  \inf_{\beta\in H_u} [V_1(\beta_{(0)}) \tan\angle(  \beta, - \be_0)]\\
      &\geq& \inf_{\beta\in H_u} V_1(\beta_{(0)}) \inf_{\beta\in H_u}\tan(\angle(\beta, -\be_0))\\
      &=& p \tan{(\angle(P_{\tbe}^\perp (-\be_0), -\be_0))}\\
       &=& p \cot\angle({\tbe, \be_0}) 
  \EEAS
 
Moreover, the lower bound is achieved when $\tbe'$ is one-sparse.  
 
For the upper bound, we plug in $\beta = P_{\tbe}^\perp (-\be_0)$, then
	\[ 
	\inf_{\| \beta\|_2=1, u^T \beta =0} 
	\tan(\angle( \beta, -\be_0))V_1(\beta_{(0)}) \leq
	\tan(\angle( P_{\tbe}^\perp (-\be_0), -\be_0))
	V_1(P_{\tbe}^\perp (-\be_0)).
	\]
It is worth commenting that since $V_1(P_{\tbe}^\perp (-\be_0))\leq \sqrt{p}$, and $\tan(\angle( P_{\tbe}^\perp (-\be_0), -\be_0))=\cot(\angle(\tbe, \be_0))=\cot\theta$, we have
	\[ 
	 \tan(\angle( P_{\tbe}^\perp (-\be_0), -\be_0))
 V_1(P_{\tbe}^\perp (-\be_0)) 
	\leq \cot(\angle(\tbe, \be_0)) \sqrt{p}.
	\]
 
  \end{proof}
 \subsection{Tighter Upper and Lower Bound from Refined Analysis} 
 \paragraph{Optimality by support}
Assume that $\tbe$ has $n_e$ non-zero entry on support $S_{\tbe}$, we can rank the absolute value of entries of $\tbe$ to be $|\tbe|_{(1)}, |\tbe|_{(2)}, |\tbe|_{(3)},\ldots,|\tbe|_{(n_e)}$, then the entries of $\tbe'$ will be ranked as $|\tbe|_{(2)}, |\tbe|_{(3)},\ldots,|\tbe|_{(n_e)}$. 

We know the optimal solution of $(\beta')^\star$ of $\inf_{\langle   \tbe', \beta'\rangle  =1} \E_{I'} \|\beta'\|_{\ell_2(I')}$ must have support $S_{\beta^\star}$ that satisfy $S_{\star}\subset S_{\tbe'}$. From the symmetry of objective, we know if $(\beta')^\star$ is $m$ sparse, then $m\leq n_e-1$ and, its support must be on the top $m$ entries $|\tbe|_{(2)}, |\tbe|_{(3)},\ldots,|\tbe|_{(m+1)}$, we call this support $S_m$, and we know the corresponding entries of $(\beta')^\star$ would have the same sign as entries of $\tbe$. 

We can define the $m$ sparse optimization problem for a random support function on the subset of $S_m$: $J_m \subset S_m$. Let $\beta$ be supported on $S_m$ and each entry non-negative, then
\[
val(Q_1^{S_m}) :=\inf_{ \beta: \sum_{j=1}^{m}  |\tbe|_{(j+1)} \beta_j  =1, \beta_j> 0} \E_{J_m} \|\beta \cdot J_m\|_{2} 
\]
Let $z_j:= \beta_j |\tbe|_{(j+1)}$, from the symmetry of objective and the order on $|\tbe|_{(j+1)}$, we know the solution must satisfy $0<z_m \leq z_{m-1} \leq \ldots \leq z_1$. Then we can recast the optimization problem as 
\[
val(Q_1^{S_m}) := 
\inf_{ z: \sum_{j=1}^{m}  z_j=1, 0<z_m \leq z_{m-1} \leq \ldots \leq z_1 } \E_{B} \sqrt{\sum_{j}B_j \frac{ 1}{|\tbe|^2_{(j+1)}} z_j^2 } 
\]
As a special case, when $m=1$, $val(Q_1^{S_1}) = \frac{p}{|\tbe|_{(2)}}$ as discussed above.

Then
\[
val(Q_1(\tbe'))= \inf_{ m\in \{1,2,\ldots, n_e-1\}}val(Q_1^{S_m})
\]
\paragraph{Tighter upper bound on $val(Q_1)$}
Now we can prove a more refine upper bound:
\begin{lemma}[Tighter upper bound on $val(Q_1)$]
% \ref{lemma: tigher upper}
    \[
val(Q_1(\tbe'))= \inf_{ m\in \{1,2,\ldots, n_e-1\}}val(Q_1^{S_m})
\leq \inf \{ \frac{p}{|\tbe|_{(2)}}, \frac{V_1(\tbe'_{S_2})}{\|\tbe'_{S_2}\|_2}, \frac{V_1(\tbe'_{S_3})}{\|\tbe'_{S_3}\|_2}, \ldots, \frac{V_1(\tbe')}{\|\tbe'\|_2}   \}
\]
where
\[
\cot\angle(\tbe_{S_m}, \be_0) =  \frac{1}{\|\tbe'_{S_m}\|_2}.
\]
\end{lemma}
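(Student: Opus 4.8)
The plan is to establish the stated inequality one support at a time. Since the paper has already shown $val(Q_1(\tbe')) = \inf_{m} val(Q_1^{S_m})$, it suffices to prove that for each fixed $m \in \{1,\ldots,n_e-1\}$ one has $val(Q_1^{S_m}) \le V_1(\tbe'_{S_m})/\|\tbe'_{S_m}\|_2$; taking the infimum over $m$ on both sides then yields the claim, because $a_m \le b_m$ for all $m$ forces $\inf_m a_m \le \inf_m b_m$. Each termwise bound is obtained simply by substituting a single convenient feasible point into the minimization defining $val(Q_1^{S_m})$.

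For the feasible point I would take the direction that solves the deterministic $Q_2$ problem on $S_m$, namely $\beta = \tbe'_{S_m}/\|\tbe'_{S_m}\|_2^2$, i.e. $\beta_j = |\tbe|_{(j+1)}/\|\tbe'_{S_m}\|_2^2$ in the positive-sign formulation. These coordinates are strictly positive for $j \le m \le n_e-1$, and they satisfy the normalization constraint since $\sum_{j=1}^{m} |\tbe|_{(j+1)}\beta_j = \|\tbe'_{S_m}\|_2^{-2}\sum_{j=1}^m |\tbe|_{(j+1)}^2 = 1$. Evaluating the objective and pulling the scalar out of the Euclidean norm gives $\E_{J_m}\|\beta\cdot J_m\|_2 = \|\tbe'_{S_m}\|_2^{-2}\,\E_I\|\tbe'_{S_m}\cdot I\|_2$, and by the definition $V_1(\psi) = \E_I\|\psi\cdot I\|_2/\|\psi\|_2$ this equals $V_1(\tbe'_{S_m})/\|\tbe'_{S_m}\|_2$. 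Since $val(Q_1^{S_m})$ is the infimum over feasible points, $val(Q_1^{S_m}) \le V_1(\tbe'_{S_m})/\|\tbe'_{S_m}\|_2$, as needed. The case $m=1$ recovers the first listed term $p/|\tbe|_{(2)}$, because $V_1$ of a one-sparse vector is $p$ and $\|\tbe'_{S_1}\|_2 = |\tbe|_{(2)}$.

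For the accompanying geometric identity I would use the orthogonal decomposition $\tbe_{S_m} = \be_0 + \tbe'_{S_m}$, where $\tbe_{S_m}$ denotes $\tbe$ restricted to the origin together with $S_m$. Because $\tbe'_{S_m}$ is supported away from the origin, it is orthogonal to $\be_0$, so $\langle \tbe_{S_m}, \be_0\rangle = 1$ while the component of $\tbe_{S_m}$ perpendicular to $\be_0$ has length $\|\tbe'_{S_m}\|_2$; hence $\cot\angle(\tbe_{S_m}, \be_0) = 1/\|\tbe'_{S_m}\|_2$. I do not expect a genuine obstacle: the whole argument is a feasible-point substitution followed by the $V_1$ identity. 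The only points requiring care are the positivity and normalization bookkeeping in the sign-reduced problem $Q_1^{S_m}$, and the (easy) verification that the chosen $\beta$ is supported exactly on $S_m$ so that the constraint $\langle \tbe', \beta\rangle = 1$ reduces to the displayed one.
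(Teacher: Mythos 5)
Your proposal is correct and is essentially the paper's own argument: the paper likewise bounds each $val(Q_1^{S_m})$ by substituting the feasible point in the direction of $\tbe'_{S_m}$ (its unit vector $u^m$, equivalently your $\beta = \tbe'_{S_m}/\|\tbe'_{S_m}\|_2^2$ after scaling to meet the constraint $\sum_j |\tbe|_{(j+1)}\beta_j = 1$), identifies the resulting value with $V_1(\tbe'_{S_m})/\|\tbe'_{S_m}\|_2$, and reads off the angle identity from the orthogonal decomposition $\tbe_{S_m} = \be_0 + \tbe'_{S_m}$ with $\tbe_0 = 1$. Your handling of the $m=1$ case via $V_1(e_i)=p$ and $\|\tbe'_{S_1}\|_2 = |\tbe|_{(2)}$ also matches the paper's special case.
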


\begin{proof}[Proof of lemma~\ref{lemma: tigher upper}]
From 
\[
val(Q_1^{S_m}) :=\inf_{ \beta: \sum_{j=1}^{m}  |\tbe|_{(j+1)} \beta_j  =1, \beta_j> 0} \E_{J_m} \|\beta\cdot J_m\|_{2} 
\]
we explore geometric point of view to find bounds.

For a fixed $m$, after re-ranking the entries by absolute value, let $S_m$ be the support so that only the top $m$ entries $|\tbe|_{(2)}, |\tbe|_{(3)},\ldots,|\tbe|_{(m+1)}$ are non-zero,  let 
\[
|\tbe'|_{S_m} = (0, |\tbe|_{(2)}, |\tbe|_{(3)},\ldots,|\tbe|_{(m+1)}, 0, \ldots, 0)
\]
\[
\|\tbe'_{S_m}\|_2^2 = |\tbe|^2_{(2)}+ |\tbe|^2_{(3)}+\ldots+|\tbe|^2_{(m+1)} 
\]
let the unit vector along the direction of $|\tbe'|_{S_m}$ be 
\[
u^m = |\tbe'|_{S_m}/\|\tbe'_{S_m} \|_2
\]
then 
\[
val(Q_1^{S_m}) := \frac{1}{\|\tbe'_{S_m}\|_2}\inf_{ \beta:  \beta_j> 0}  \E_{J_m} \| (\beta/\|\beta \|_{2}) \cdot J_m\|_{2} \cdot \frac{1}{\sum_{j=1}^{m}  u^m_j \frac{\beta_j }{\|\beta \|_{2}}}
\]
From previous definition, since $\beta$ is supported on $S_m$, we denote it as $\beta^m$, then
$V_1(\beta^m)= \E_{J_m} \| (\beta/\|\beta \|_{2}) \cdot J_m\|_{2} $,
$V_1(\tbe'_{S_m}) = V_1(|\tbe'|_{S_m}) = V_1(u^m)$.

Geometrically,
\[
\cos\angle(\beta^m, u^m) = \sum_{j=1}^{m}  u^m_j \frac{\beta_j }{\|\beta \|_{2}}
\]
\[
\cot\angle(\tbe_{S_m}, \be_0) =  \frac{1}{\|\tbe'_{S_m}\|_2}
\]
Since $\beta^m = u^m$ is a feasible point of the constraint, we have an upper bound 
\[
val(Q_1^{S_m}) \leq \frac{V_1(\tbe'_{S_m})}{\|\tbe'_{S_m}\|_2} 
= \cot\angle(\tbe_{S_m}, \be_0) V_1(\tbe'_{S_m})
\]
\BEAS
val(Q_1^{S_m}) 
&=& \cot\angle(\tbe_{S_m}, \be_0) \inf_{ \beta:  \beta_j> 0} V_1(\beta^m) \cdot \frac{1}{\cos\angle(\beta^m, u^m)}
\\
&=& \cot\angle(\tbe_{S_m}, \be_0)  V_1(|\tbe'|_{S_m}) \inf_{ \beta:  \beta_j> 0} \frac{V_1(\beta^m)}{V_1(u^m)} \cdot \frac{1}{\cos\angle(\beta^m, u^m)}
\EEAS
The lower bound is given by finding the lower bound of
\[
C(u^m):=\inf_{ \beta:  \beta_j> 0} \frac{V_1(\beta^m)}{V_1(u^m)} \cdot \frac{1}{\cos\angle(\beta^m, u^m)}
\]
We know upper bound $C(u^m)\leq 1$, and lower bound $C(u^m) \geq \frac{1}{p^{1/2}}$ based on the fact that $V_1 \in [p, \sqrt{p}]$.

\end{proof}

\section{Technical Tool: Tight Bound for Finite Difference of Objective}
We study the upper and lower bound of: 
\[
B(\be_0, \phi) := \frac{\E_I  \| (\be_0+\phi)_I \|_2 - \E_I  \| (\be_0)_I \|_2}{\|\phi\|_2} 
\]

This upper and lower bound allows us to connect objective $\E_I  \| \psi_I \|_2$ and the $2-$norm of $ \psi - \be_0$:
\[
\E_I  \| \psi_I \|_2 - \E_I  \| (\be_0)_I \|_2 = \E_I  \| \psi_I \|_2 - p=B(\be_0, \psi - \be_0) \| \psi - \be_0\|_2
\]

\paragraph{Bi-Lipschitzness of finite difference of objective near $e_0$ for linear constraint}
We normalize the problem by defining $t:=\|\phi\|_2, \beta:=\frac{\phi}{\|\phi\|_2}$.
After normalization, and taking into account the linear constraint  $\tbe^T \phi = 0$, we will study, for $\beta\in \mathcal{B}_t :=\{\|\beta\|_2=1, u^T \beta = 0,  \|e_0+t\beta\|_\infty\leq 1 \}$ for finite $t$, the upper and lower bound of 
\[
\frac{\E_I  \| (\be_0+t\beta)_I \|_2 - \E_I  \| (\be_0)_I \|_2}{t}.
\]

First, if $t\rightarrow 0^+$, then we get the directional derivative along direction of $\beta$. 
\BEAS
\lim_{t\rightarrow 0^+}\frac{\E_I  \| (\be_0+t \beta)_I \|_2 - \E_I  \| (\be_0)_I \|_2}{t}.
\EEAS

Due to convexity of the function $\beta \rightarrow \E_I  \| (\be_0+t \beta)_I \|_2$, we have the finite difference lower bounded by directional derivative:
\BEAS
\frac{\E_I  \| (\be_0+t \beta)_I \|_2 - \E_I  \| (\be_0)_I \|_2}{t} 
\geq 
 \lim_{t\rightarrow 0^+}\frac{\E_I  \| (\be_0+t \beta)_I \|_2 - \E_I  \| (\be_0)_I \|_2}{t} .
\EEAS

% where  $\|\beta_{(0)}\|_2 = \sqrt{1-\beta_0^2} $ since  $\|\beta\|_2 = 1$, and
% \[
% V_1(\beta_{{(0)}})= \frac{\Expect_{I}[ \|\beta_{I}\|_2 | 0 \not \in I ]}{\|\beta_{(0)}\|_2}.
% \]
% We know that for 
% \BEAS
% p \leq V_1(\beta_{{(0)}})
% \leq \sqrt{p}
% \EEAS
% Where  
% \[
% \beta^a = ( \beta_0,\sqrt{1-\beta_0^2}, 0,\ldots,0 )
% \]
% achieves lower bound,
% and 
% \[
% \beta^b = (\beta_0,\frac{1}{\sqrt{N-1}}\sqrt{1-\beta_0^2},\ldots, \frac{1}{\sqrt{N-1}}\sqrt{1-\beta_0^2} ) = \beta_0 \be_0 + \frac{1}{\sqrt{N-1}}\sqrt{1-\beta_0^2} (\sum_{i\neq 0} e_i)
% \]
% achieves upper bound.

\begin{theorem}[Bi-Lipschitzness of finite difference of objective near $e_0$ for linear constraint]
% \label{thm:bound_diff}
We have upper and lower bound
\BEAS
0 \leq \frac{\E_I  \| (\be_0+t \beta)_I \|_2 - \E_I  \| (\be_0)_I \|_2}{t} -\lim_{t\rightarrow 0^+}\frac{\E_I  \| (\be_0+t \beta)_I \|_2 - \E_I  \| (\be_0)_I \|_2}{t}
\leq 
  \frac{t}{2}p\Big( 
   \beta_0^2 
 +
 p  (1-\beta_0^2) \Big) \leq \frac{pt}{2}.
\EEAS
This leads to
\[
0\leq B(\be_0, \phi) -	 \nabla_{\phi}{\E_I  \| (\be_0+\phi)_I \|_2}\mid_{\phi=0}\leq \frac{p}{2}\|\phi \|_2.
\]
% and lower bound 
% \BEAS
% \frac{\E_I  \| (\be_0+t \beta)_I \|_2 - \E_I  \| (\be_0)_I \|_2}{t} 
% \geq 
% (1-p)\Big( V_1(\beta_{{(0)}}) \sqrt{1-\beta_0^2}\Big) + 
% p\Big(\frac{1}{t} (\left| 1 + t \beta_0 \right|-1 \Big)
% \EEAS
% Specially, when $t\leq 1$, $\frac{1}{t} (\left| 1 + t \beta_0 \right|-1) = \beta_0$, we have the finite difference with $t\leq 1$ lower bounded by directional derivative:
% \BEAS
% \frac{\E_I  \| (\be_0+t \beta)_I \|_2 - \E_I  \| (\be_0)_I \|_2}{t} 
% \geq 
%  \lim_{t\rightarrow 0^+}\frac{\E_I  \| (\be_0+t \beta)_I \|_2 - \E_I  \| (\be_0)_I \|_2}{t} 
% \EEAS
\end{theorem}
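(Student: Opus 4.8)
The plan is to reduce the statement to a scalar inequality by conditioning on whether the random support $I$ contains the origin, and then to settle that scalar inequality by completing a square. Write $\beta$ for the unit direction, $I'=I\setminus\{0\}$ for the Bernoulli-$p$ support away from the origin, and $s=\|\beta_{I'}\|_2$. Conditioning on $\{0\in I\}$ (probability $p$) versus $\{0\notin I\}$ (probability $1-p$) gives $\E_I\|(\be_0+t\beta)_I\|_2 = p\,\E_{I'}\sqrt{(1+t\beta_0)^2+t^2s^2} + (1-p)\,t\,\E_{I'}s$, since on $\{0\in I\}$ the restricted vector is $(1+t\beta_0)\be_0+t\beta_{I'}$ while on $\{0\notin I\}$ it is $t\beta_{I'}$, and $I'$ has the same law in both cases. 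The $(1-p)$ term is exactly affine in $t$, so it cancels between the finite difference and its $t\to0^+$ limit; using the directional-derivative formula already established (the limit equals $p\beta_0+(1-p)\E_{I'}s$), the whole left-hand quantity collapses to $p\,\E_{I'}[\mathrm{gap}_{I'}]$, where $\mathrm{gap}_{I'}=\bigl(\sqrt{(1+t\beta_0)^2+t^2s^2}-(1+t\beta_0)\bigr)/t$. Each $\mathrm{gap}_{I'}\ge 0$ (equivalently, by convexity of $t\mapsto\E_I\|(\be_0+t\beta)_I\|_2$ the secant slope dominates the tangent slope), so the lower bound $\ge 0$ is immediate.

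Next I would remove the randomness in the numerator. Viewed as a function of $\xi=s^2$, the map $\xi\mapsto\bigl(\sqrt{(1+t\beta_0)^2+t^2\xi}-(1+t\beta_0)\bigr)/t$ is concave and increasing, so Jensen's inequality yields $\E_{I'}[\mathrm{gap}_{I'}]\le \mathrm{gap}(\E_{I'}\xi)$. Here $\E_{I'}\xi=\E_{I'}\|\beta_{I'}\|_2^2=p\sum_{i\ne0}\beta_i^2=p(1-\beta_0^2)=:q$, using $\|\beta\|_2=1$. Writing $c=1+t\beta_0$ and $M=\beta_0^2+q$, it then remains to prove the deterministic bound $\mathrm{gap}(q)=\bigl(\sqrt{c^2+t^2q}-c\bigr)/t\le \tfrac t2 M$, which after multiplying by $p$ produces exactly $\tfrac{pt}{2}\bigl(\beta_0^2+p(1-\beta_0^2)\bigr)$.

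The crux is this last scalar inequality, and the key observation is that it is equivalent to a perfect square. Assuming $c=1+t\beta_0\ge 0$ (which holds when $t\le 1$ since $|\beta_0|\le1$, and is otherwise enforced by the $\ell_\infty$ constraint in the relevant range of $t$), both sides of $\sqrt{c^2+t^2q}\le c+\tfrac{t^2}{2}M$ are nonnegative, so squaring is legitimate and the claim reduces to $q\le cM+\tfrac{t^2}{4}M^2$. Expanding $cM=M+t\beta_0 M$ and using $M=\beta_0^2+q$ together with $\beta_0=-|\beta_0|$ (the constraint forces $\beta_0\le0$), the surplus is
\[
cM+\tfrac{t^2}{4}M^2-q \;=\; \beta_0^2+t\beta_0 M+\tfrac{t^2}{4}M^2 \;=\; \Bigl(|\beta_0|-\tfrac t2 M\Bigr)^2 \;\ge\; 0,
\]
which closes the argument. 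I expect this completing-the-square step to be the main obstacle: the naive estimate $\sqrt{c^2+t^2q}-c\le t^2q/(2c)$ is too lossy precisely when $c=1+t\beta_0$ is small (that is, $\beta_0$ negative and $t$ not tiny), and only the exact quadratic identity above recovers the clean constant.

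Finally, the upper bound $\le \tfrac{pt}{2}$ follows since $\beta_0^2+p(1-\beta_0^2)=p+(1-p)\beta_0^2\le 1$. The stated consequence for $B(\be_0,\phi)$ is then just the change of variables $t=\|\phi\|_2$, $\beta=\phi/\|\phi\|_2$: the finite difference $B(\be_0,\phi)$ equals $\bigl(\E_I\|(\be_0+t\beta)_I\|_2-p\bigr)/t$, the subtracted term is the directional derivative $\nabla_\phi\E_I\|(\be_0+\phi)_I\|_2\mid_{\phi=0}$, and $\tfrac{pt}{2}=\tfrac p2\|\phi\|_2$.
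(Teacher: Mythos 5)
Your proof is correct and follows essentially the same route as the paper's: condition on $0\in I$ so the affine $(1-p)$-term cancels, apply Jensen via concavity in $\xi=s^2$ to replace $\E_{I'}\|\beta_{I'}\|_2^2$ by $q=p(1-\beta_0^2)$, and finish with a scalar bound --- and your completed square $(\beta_0+\tfrac{t}{2}M)^2$ is exactly the slack $x^2/4$ in the paper's inequality $\sqrt{1+x}\leq 1+x/2$ applied with $x=2t\beta_0+t^2\beta_0^2+t^2 q$, so the two final steps are algebraically identical. One cosmetic note: your side conditions $c=1+t\beta_0\geq 0$ and $\beta_0\leq 0$ are unnecessary (and the claim that the $\ell_\infty$ constraint forces $c\geq 0$ is not quite right, since it only gives $c\geq -1$), because the right-hand side satisfies $c+\tfrac{t^2}{2}M\geq 1+t\beta_0+\tfrac{t^2\beta_0^2}{2}=\tfrac{1}{2}\big((1+t\beta_0)^2+1\big)\geq\tfrac{1}{2}>0$, so squaring is always legitimate, and the surplus is the perfect square $(\beta_0+\tfrac{t}{2}M)^2$ irrespective of the sign of $\beta_0$.
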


\begin{proof}
As mentioned before, the lower bound is derived from convexity.

In the proof, for finite $t$, we calculate the finite difference condition on $I_0$:
\BEAS
\frac{\E_I  \| (\be_0+t \beta)_I \|_2 - \E_I  \| (\be_0)_I \|_2}{t}
&=&
(1-p)  \Expect_{I}[ \|\beta_{I}\|_2 | 0 \not \in I ] + p\big(\frac{\E_I [ \| \be_0+t\beta_I \|_2 -   1 | 0 \in I ]}{t} \big).
\EEAS
And
\BEAS
\lim_{t\rightarrow 0^+}\frac{\E_I  \| (\be_0+t \beta)_I \|_2 - \E_I  \| (\be_0)_I \|_2}{t}
&=&
(1-p)  \Expect_{I}[ \|\beta_{I}\|_2 | 0 \not \in I ] + p\beta_0.
\EEAS
In the following, we prove an upper bound 
\BEAS
 \frac{\E_I [ \| \be_0+t\beta_I \|_2 -   1 | 0 \in I ]}{t}- \beta_0
\leq   \frac{t}{2} p \| \beta_{{(0)}} \|_2^2  + \frac{t}{2} \beta_0^2.
\EEAS

For finite $t$,
\BEAS
\frac{\E_I [ \| \be_0+t\beta_I \|_2 -   1 | 0 \in I ]}{t}
&=&
\Expect_{I}[ \left( (\frac{1}{t}+ \beta_0)^2+\| \beta_{I_{(0)}} \|_2^2 \right)^{1/2} -\frac{1}{t}| 0 \in I ]\\
&=& \frac{1}{t}\Expect_{I}[ \left( 1+ 2t\beta_0 + t^2 \beta_0^2+t^2\| \beta_{I_{(0)}} \|_2^2 \right)^{1/2} -1| 0 \in I ].
\EEAS

% and a higher order upper bound that is tighter for $x\leq 2,$
% \[
% \sqrt{1+x}-1 - \frac{x}{2}\leq -\frac{1}{8}x^2+1/16x^3
% \]
% for any $x\geq 0$, we also have lower bound:
% \[
%   -\frac{1}{8}x^2 \leq \sqrt{1+x}-1-\frac{x}{2}
% \]
Additionally, using concavity of the function $\sqrt{1+x}$, we have
\BEAS
 && \frac{1}{t}\Expect_{I}[ \left( 1+ 2t\beta_0 + t^2 \beta_0^2+t^2\| \beta_{I_{(0)}} \|_2^2 \right)^{1/2} -1| 0 \in I ]\\
&\leq&  \frac{1}{t}[ \left( 1+ 2t\beta_0 + t^2 \beta_0^2+t^2\Expect_{I}\| \beta_{I_{(0)}} \|_2^2| 0 \in I ] \right)^{1/2} -1]\\
&=&  \frac{1}{t}[ \left( 1+ 2t\beta_0 + t^2 \beta_0^2+t^2 p \| \beta_{{(0)}} \|_2^2 \right)^{1/2} -1]\\
&=&  \frac{1}{t}[ \left( 1+ 2t\beta_0 + t^2 \beta_0^2+t^2 p (1-\beta_0^2) \right)^{1/2} -1].
\EEAS
Now, apply the inequality: 
for any $x\geq -1$:
\[
\sqrt{1+x} -1 \leq \frac{x}{2},
\]
we have
\BEAS
 \frac{1}{t}[ \left( 1+ 2t\beta_0 + t^2 \beta_0^2+t^2 p \| \beta_{I_{(0)}} \|_2^2 \right)^{1/2} -1]
&\leq&  \beta_0 + \frac{t}{2} (p \| \beta_{{(0)}} \|_2^2  +  \beta_0^2).
\EEAS
Therefore, 
\BEAS
\frac{\E_I  \| (\be_0+t \beta)_I \|_2 - \E_I  \| (\be_0)_I \|_2}{t} -\lim_{t\rightarrow 0^+}\frac{\E_I  \| (\be_0+t \beta)_I \|_2 - \E_I  \| (\be_0)_I \|_2}{t}
\leq 
  \frac{t}{2}p\Big( 
   \beta_0^2 
 +
 p  (1-\beta_0^2) \Big) \leq \frac{pt}{2}.
\EEAS
The last inequality is due to $\beta_0^2 \leq 1$.
% On the other hand,
% \BEAS
% \frac{1}{t} \Expect_{I}[ \left( (1+ t\beta_0)^2+t^2\| \beta_{I_{(0)}} \|_2^2 \right)^{1/2} -\frac{1}{t}| 0 \in I ]
% &\geq&  \frac{1}{t} (\left| 1 + t \beta_0 \right|-1 )
% \EEAS
% when $\beta_0 \geq -\frac{1}{t}$,
% \BEAS
% \frac{1}{t} (\left| 1 + t \beta_0 \right|-1 )
% = \beta_0 
% \EEAS
% also, for $t<1$, we also have the expansion
% \BEAS
% && \Expect_{I}[ \left( \| \beta_{I_{(0)}} \|_2^2+(\frac{1}{t}+ \beta_0)^2 \right)^{1/2} -\frac{1}{t}| 0 \in I ]\\
% &=& \beta_0 + \frac{t}{2} \Expect_{I}[ \| \beta_{I_{(0)}} \|_2^2 | 0 \in I ] + \frac{t}{2} \beta_0^2 - 
% \frac{1}{8t} (2t\beta_0 +t^2 (\beta_0^2 + \| \beta_{I_{(0)}} \|_2^2) )^2 + \frac{1}{16t} (2t\beta_0 +t^2 (\beta_0^2 + \| \beta_{I_{(0)}} \|_2^2) )^3 + \ldots \\
% &=& \beta_0 + \frac{t}{2} \Expect_{I}[ \| \beta_{I_{(0)}} \|_2^2 | 0 \in I ] + \frac{t}{2} \beta_0^2 - \frac{t}{2} \beta_0^2 
% -  \frac{t^2}{2} \beta_0 \Expect_{I}[ \| \beta_{I_{(0)}} \|_2^2 | 0 \in I ] 
% - \frac{t^2}{2} \beta_0^3  +  \frac{t^2}{2} \beta_0^3 
% + O(t^3)\\
% &=& \beta_0 + \frac{t}{2} p \| \beta_{{(0)}} \|_2^2 -  \frac{t^2}{2} \beta_0 p \| \beta_{{(0)}} \|_2^2 + O(t^3)
% \EEAS
% When $\beta_0 <0$, then
% \BEAS
% && \Expect_{I}[ \left( \| \beta_{I_{(0)}} \|_2^2+(\frac{1}{t}+ \beta_0)^2 \right)^{1/2} -\frac{1}{t}| 0 \in I ]\\
% &=& \beta_0 + \frac{t}{2} p \| \beta_{{(0)}} \|_2^2 -  \frac{t^2}{2} \beta_0 p \| \beta_{{(0)}} \|_2^2 + O(t^3)\\
% &\geq& \beta_0 + \frac{t}{2} p \| \beta_{{(0)}} \|_2^2
% \EEAS

\end{proof}

\section{Supplementary: Main Result $2$ Proof: Guarantee for Finite Observation Window, Finite-Length Inverse }
% \subsection{ Finite sample guarantee theorem for blind deconvolution}
\subsection{Phase Transition in  Finite Observation Window, Finite-Length Inverse Setting}
\paragraph{Finite sample setting}
\BIT
\item
Let $(X_t)_{t \in \integers}$ be IID sampled from  $pN(0,1) + (1-p) \delta_0$.
 \item
The filter $a\in l_1(\integers), (\mathcal{F} a)(\omega) \neq 0, \forall \omega\in T.$ Additionally, in finite sample setting, we assume $a^{-1}$ is zero outside of a centered window of radius $k$.
% namely, the non-zero part can be written as $(\beta_{-k}, \ldots, \beta_0, \ldots, \beta_k)\in V$. 
\item
Let $Y= a* X$ be a linear process, we are given a series of observations $(Y_t)_{t\in [-(T+k),T+k]}$ from a centered window of radius $T+k$.
\item
 We denote $X^{(i)}$ as the sequence being flipped and shifted for $i$ step from $X$ so that $X^{(i)}_k = X_{-k +i}$. Let $J_N$ be the support of $X$.
 We define the length $N=2T+1$ window as $\cT := [-T, T].$
\EIT
Now we want to find an inverse filter 
$w = (w_{-k}, \ldots, w_0, \ldots, w_k) \in V$ such that the convolution $w*Y$ is sparse,
\BEQ
 \begin{array}{ll}
 \underset{w\in V}{  \mbox{minimize} }   & \frac{1}{N}\|w*Y\|_{1,\cT}\\
 \mbox{subject to}  
& (\widetilde{a}* w)_0 =1.
\end{array}
\tag{${BD}_{\widetilde{a},N,k}$}
\EEQ
\paragraph{Finite sample directional derivative}
Now to study the finite sample convex problem, we need to calculate the directional derivative at $ \psi = e_0$.
Then
\begin{theorem}
\label{thm:fin_deriv}
For any $X$ sequence with support  $J_N$,
\BEAS
& & \lim_{t\rightarrow 0^+}\frac{1}{t}[(\hat{\E}  \| (e_0+t\beta)*X \|_1 - \hat{\E}  \| e_0*X \|_1)
\\
  &=& 
 \langle  \beta, {\frac{1}{N}\sum_{|i|\in \cT} \nabla_{\psi}L_i}|_{e_0} \rangle \\
 &=& 
 \frac{1}{N}\sum_{|i|\in \cT}  \mbox{sign}( X_{-i})X^{(i),T}\beta
 \\
  &=& 
 \frac{1}{N} \left(\|X\|_{1,J_N} \beta_0  +
 \|X*\beta_{(0)}\|_{1,\cT-J_N}+ 
\sum_{i\in J_N} (1_{X_{-i}> 0}-1_{X_{-i}< 0})  \cdot X^{(i),T}\beta_{(0)} \right).
% \\
%   &=& 
% p_N   ( \frac{\|X\|_{1,J_N}}{|J_N|}) \beta_0  +
% (1-p_N ) \frac{1}{|\cT-J_N|}  \|X*\beta_{(0)}\|_{1,\cT-J_N}+ 
% \frac{1}{N}\sum_{i\in J_N} (1_{X_{-i}> 0}-1_{X_{-i}< 0})  \cdot X^{(i),T}\beta_{(0)}  
%  \\
%  &\approx& 
% \sqrt{\frac{2}{\pi}}(p_N \cdot \langle\beta, e_0\rangle + (1-p_N) \cdot \hat{\Expect}_{I}[ \|\beta_{I}\|_2 | 0 \not \in I ])
 \EEAS
 
\end{theorem}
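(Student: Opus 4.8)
The plan is to treat the windowed empirical objective as a \emph{finite} average of convex pieces and differentiate it term by term. Write $\hat{\E}\|\psi\star X\|_1 = \frac{1}{N}\sum_{i\in\cT} L_i(\psi)$, where $L_i(\psi) := |(\psi\star X)_i| = |X^{(i),T}\psi|$ is the absolute value of the linear functional $\psi\mapsto X^{(i),T}\psi$, and is therefore convex in $\psi$. Because a finite sum of convex functions admits a one-sided directional derivative equal to the sum of the summands' one-sided directional derivatives, the first step is to interchange the limit $t\to 0^+$ with the finite sum over $i\in\cT$. This reduces everything to computing, for each fixed coordinate $i$, the right directional derivative at $\be_0$ of the scalar map $t\mapsto |\,r_i + t\,X^{(i),T}\beta\,|$, where $r_i := X^{(i),T}\be_0 = X_{-i}$ is the central data tap contributed at position $i$ (this identification is exactly where the $X^{(i)}$ shift convention enters).

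Next I would compute each per-coordinate derivative by splitting on whether the base value $r_i = X_{-i}$ vanishes, i.e.\ on whether $i$ lies in the support set $J_N$. When $r_i\neq 0$ (so $i\in J_N$) the map $z\mapsto|z|$ is differentiable at $z=r_i$, and the right derivative is $\mathrm{sign}(X_{-i})\,X^{(i),T}\beta$. When $r_i=0$ (so $i\in\cT\setminus J_N$) we are at a kink: $L_i(\be_0)=0$ and $L_i(\be_0+t\beta)=|\,t\,X^{(i),T}\beta\,|$ for all $t$, so the right directional derivative is simply $|X^{(i),T}\beta|$. Since the basepoint value $X_{-i}$ is zero in this case, the $\beta_0$ contribution to $X^{(i),T}\beta$ drops out upon restricting to the support structure, leaving $|(X\star\beta_{(0)})_i|$.

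The final step is to assemble and regroup using the decomposition $\beta = \beta_0\be_0 + \beta_{(0)}$. On the differentiable coordinates $i\in J_N$, I expand $X^{(i),T}\beta = \beta_0 X_{-i} + X^{(i),T}\beta_{(0)}$, and note that $\mathrm{sign}(X_{-i})\,\beta_0 X_{-i} = \beta_0\,|X_{-i}|$; summing the first piece over $i\in J_N$ yields $\beta_0\,\|X\|_{1,J_N}$, while the second piece gives $\sum_{i\in J_N}(1_{X_{-i}>0}-1_{X_{-i}<0})\,X^{(i),T}\beta_{(0)}$. Summing the kink coordinates over $i\in\cT\setminus J_N$ yields $\|X\star\beta_{(0)}\|_{1,\cT-J_N}$. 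Collecting these three contributions and dividing by $N$ reproduces the stated formula.

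The one genuinely delicate point is the \emph{one-sided} nature of the derivative at the kink coordinates: one must use the right directional derivative (not a two-sided gradient, which does not exist there) and verify that the $\beta_0$ term truly vanishes at those coordinates because the basepoint tap is zero. The interchange of the limit with the sum is then routine, since the sum is finite and each summand's one-sided derivative exists by convexity; no domination or uniform-integrability argument is needed in this finite-$N$, fixed-realization setting.
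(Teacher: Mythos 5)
Your proposal is correct and follows essentially the same route as the paper's proof: both differentiate the finite sum $\frac{1}{N}\sum_{i}|X^{(i),T}\psi|$ term by term at $\be_0$, split coordinates according to the sign of $X_{-i}$ (with the kink case $X_{-i}=0$ contributing the one-sided derivative $|X^{(i),T}\beta_{(0)}|$), and regroup via $\beta=\beta_0\be_0+\beta_{(0)}$ into the three stated terms. Your write-up is in fact slightly more careful than the paper's chain of equalities, since you explicitly justify the limit--sum interchange by convexity and finiteness and flag the one-sided nature of the derivative at the kinks, which the paper glosses over with its $\mbox{sign}$ notation.
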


\subsection{Concentration of Objective}
 \paragraph{Concentration of $\|\psi*X\|_{1}$ }
 \begin{lemma}
 \label{lem:mu_min}
 Let 
 \[
\mu_{\mbox{min}}:=\frac{1}{\sqrt{2}\pi}\lambda_{m}(a)\sqrt{\frac{p}{k}} \|w\|_{1},
\]
then it is the lower bound of $\E \frac{1}{N}\|w*a*X\|_{1,\cT}$:
\[
\E \frac{1}{N}\|w*a*X\|_{1,\cT}  \geq \mu_{\mbox{min}}.
\]
 \end{lemma}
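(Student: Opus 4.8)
The plan is to collapse the finite-window average into a single population expectation, bound that expectation from below by the symmetric Khintchine inequality, and then turn the resulting $\ell_2$ quantity into the advertised product $\lambda_m(a)\sqrt{p/k}\,\|w\|_1$ using two standard norm inequalities. The appearance of the constant $\tfrac{1}{\sqrt2}$ and of $\sqrt{p}$ (rather than $p$) in $\mu_{\min}$ is the tell-tale sign that the intended probabilistic tool is Lemma~\ref{lem:sym_Khintchine_inequality}, not the exact Bernoulli--Gaussian identity $\E_X|\langle X,\psi\rangle|=\sqrt{2/\pi}\,\E_I\|\psi\cdot I\|_2$ combined with $V_1\ge p$ (the latter would only deliver a $p$-scaling).

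First I would introduce the end-to-end filter $b=w*a$, so that $w*a*X=b*X$ is a stationary linear process. Because every coordinate $(b*X)_t$ has the same law, linearity of expectation collapses the window average (here $|\cT|=N$):
\[
\E\frac{1}{N}\|w*a*X\|_{1,\cT}=\frac{1}{N}\sum_{t\in\cT}\E|(b*X)_t|=\E|(b*X)_0|=\E_X|\langle X,b^{\dagger}\rangle|,
\]
using $(b*X)_0=\sum_u b_{-u}X_u=\langle X,b^{\dagger}\rangle$ and $\|b^{\dagger}\|_2=\|b\|_2$.

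Next I would lower bound this inner-product expectation. Each $X_i$ is symmetric with variance $p$, so writing $X_i=\sqrt{p}\,\Xi_i$ with $\Xi_i$ IID symmetric of unit variance and applying Lemma~\ref{lem:sym_Khintchine_inequality} to the coefficient vector $b^{\dagger}$ gives
\[
\E_X|\langle X,b^{\dagger}\rangle|=\sqrt{p}\,\E_{\Xi}\Big|\sum_i \Xi_i b^{\dagger}_i\Big|\geq \frac{\sqrt{p}}{\sqrt{2}}\,\|b^{\dagger}\|_2=\frac{\sqrt{p}}{\sqrt{2}}\,\|w*a\|_2.
\]
It then remains to convert $\|w*a\|_2$ into $\|w\|_1$. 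Viewing the convolution through the circular matrix $C_a$, whose smallest singular value is $\lambda_m(a)$, yields $\|w*a\|_2=\|C_a w\|_2\geq \lambda_m(a)\|w\|_2$; and since $w\in\ell_1^k$ is supported in a window of size at most $2k+1$, Cauchy--Schwarz gives $\|w\|_2\geq \frac{1}{\sqrt{2k+1}}\|w\|_1\geq \frac{1}{\pi\sqrt{k}}\|w\|_1$, the last step using $\pi^2 k\geq 2k+1$ for $k\geq 1$. Chaining the three estimates produces
\[
\E\frac{1}{N}\|w*a*X\|_{1,\cT}\geq \frac{\sqrt{p}}{\sqrt{2}}\,\lambda_m(a)\,\frac{1}{\pi\sqrt{k}}\,\|w\|_1=\frac{1}{\sqrt{2}\pi}\lambda_m(a)\sqrt{\frac{p}{k}}\,\|w\|_1=\mu_{\min}.
\]

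I expect the main obstacle to be the norm-conversion step rather than the probabilistic one. Specifically, one must justify the identity $\|w*a\|_2=\|C_a w\|_2$ together with the singular-value bound in the presence of truncation and end effects in the finite observation window, and confirm that $\lambda_m(a)$, the smallest singular value of the circular embedding of $a$, is genuinely the quantity controlling $\|w*a\|_2$ from below; the generous $\tfrac1\pi$ slack in the target constant is precisely what absorbs the passage from $\sqrt{2k+1}$ to $\sqrt{k}$ and any circular-versus-linear convolution discrepancy. By contrast, the stationarity reduction and the Khintchine lower bound are comparatively routine once Lemma~\ref{lem:sym_Khintchine_inequality} is in hand.
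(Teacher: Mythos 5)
Your reduction to a single population expectation, your use of $\lambda_m(a)$ via the circulant embedding, and the $\ell_1$--$\ell_2$ conversion on the window all match the paper in spirit; but the probabilistic step --- the one you call ``comparatively routine'' --- is where the proof breaks. You apply Lemma~\ref{lem:sym_Khintchine_inequality} to $\Xi_i = X_i/\sqrt{p}$ to conclude $\E_X|\langle X, b^{\dagger}\rangle| \geq \frac{\sqrt{p}}{\sqrt{2}}\|b\|_2$ for every end-to-end filter $b = w*a$. This intermediate inequality is false whenever $p < \pi/4$: take $b = e_0$, so the left side is exactly $\E|X_0| = p\sqrt{2/\pi}$ while the right side is $\sqrt{p}/\sqrt{2}$, and $p\sqrt{2/\pi}\geq \sqrt{p}/\sqrt{2}$ forces $p \geq \pi/4$. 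The root cause is that the Khintchine constant $1/\sqrt{2}$ is only available conditionally on the magnitudes (equivalently, on the nonzero coordinates); folding the Bernoulli atom at zero into the variance and then invoking a distribution-free Khintchine lower bound is illegitimate. Indeed Lemma~\ref{lem:sym_Khintchine_inequality} as stated does not survive your rescaled distribution (its proof silently replaces $\E\sqrt{Z}$ by $\sqrt{\E Z}$, and Jensen runs the other way, exactly failing for laws with a large atom at zero). The correct uniform statement is the exact identity $\E_X|\langle X, b\rangle| = \sqrt{2/\pi}\,\E_I\|b_I\|_2$ combined with $\E_I\|b_I\|_2 \geq p\,\|b\|_2$ (the bound $V_1 \geq p$, tight at one-sparse $b$), which scales like $p$, not $\sqrt{p}$; so your route can at best prove the lemma with $p$ in place of $\sqrt{p}$, and cannot recover the stated $\mu_{\min}$ uniformly in $w$.

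The paper manufactures the $\sqrt{p}$ by a structurally different move: after normalizing $\|w\|_1 = 1$ it first argues (by a symmetry claim) that the infimum of the expected objective over the $\ell_1$ sphere is attained at the flat filter $w = (\frac{1}{k},\ldots,\frac{1}{k})$, and only then symmetrizes with Rademacher signs and applies Khintchine, obtaining $\E\sqrt{\textstyle\sum_{i=1}^{k}(a*X)^2_{j-i}} \geq \lambda_m(a)\,\E\sqrt{\textstyle\sum_{i=1}^{k}X^2_{j-i}}$ and evaluating the latter as of order $\sqrt{pk}$. For the flat filter the effective coefficient vector is spread over $k$ coordinates, and it is this spreading --- not a per-coordinate variance count --- that upgrades $p$ to $\sqrt{p}$. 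Your $\|w\|_2 \geq \|w\|_1/\sqrt{2k+1}$ step spends the $\sqrt{k}$ on an inequality that is tight at flat $w$, but then asks the probabilistic step to deliver $\sqrt{p}$ even at one-sparse $b$, where it is unattainable. (For completeness: the paper's own proof has rough edges too --- the ``by symmetry the minimizer is flat'' reduction is asserted rather than proved, and the final estimate $\E\sqrt{\sum_i X^2_{j-i}}\gtrsim\sqrt{pk}$ implicitly needs $pk \gtrsim 1$ --- but the flat-filter reduction is the essential source of the $\sqrt{p}$ gain, and your proposal bypasses it.)
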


Now we define the intersection of $1-$norm ball of the $k-$dimensional subspace $V_k$ as $B_1(V_k)$, then we consider $w\in B_1(V_k).$ 
 
\begin{lemma}
\label{lem:tail_W}
Hence we consider $\psi = a*w \in a*B_1(V_k),$ and we define
\[
W:= \sup_{\psi \in a*B_1(V_k)}
|\frac{1}{N}\sum_{j\in [-T,T] } (|(\psi*X)_{j}| - \E |(\psi*X)_{j}|) |.
\]

For all $q\geq \max(2, \log(k)),$ there exists constant $C$,
\[
\E W^q\leq 2^{2q}(\frac{\lambda_{M}(a)}{N})^q   k C^q(\sqrt{Npq}+q)^q,
\]
so
\BEAS
 \|W\|_q \leq \frac{4eC}{N} \lambda_{M}(a)  
 (\sqrt{Np(q+\log(k))}+q +\log(k)),
\EEAS
and
\BEAS
P( W > \frac{4eC\lambda_{M}(a) }{N}  
 (\sqrt{Np\log(\frac{k}{\delta})}+\log(\frac{k}{\delta})))\leq \delta.
\EEAS
\end{lemma}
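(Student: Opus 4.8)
The plan is to recognize $W$ as the uniform deviation of a convex (seminorm) empirical process indexed by the $\ell_1$-ball, and to control its $q$-th moments by \textbf{symmetrization followed by contraction}, which linearizes the absolute value and collapses the supremum over $B_1(V_k)$ into a maximum over only $k$ coordinates. Writing $(w*Y)_j=\langle w, Y^{(j)}\rangle$, where $Y^{(j)}$ denotes the appropriate shift-reversal of $Y$ into the length-$k$ filter window at index $j$, the map $w\mapsto \frac1N\sum_{j\in\cT}|\langle w,Y^{(j)}\rangle|$ is a seminorm, and $W$ is the supremum over $\|w\|_1\le 1$ of its deviation from its mean. The obstruction to a naive vertex reduction is that this deviation is a \emph{difference of two seminorms}, hence neither convex nor concave, so its supremum need not be attained at an extreme point of the ball.

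First I would apply the moment form of symmetrization (Gin\'e--Zinn), introducing i.i.d. Rademacher signs $\epsilon_j$, to obtain
\[
\E W^q \le 2^q\,\E_Y\E_\epsilon \sup_{\|w\|_1\le 1}\Big|\tfrac1N\sum_{j\in\cT}\epsilon_j\,|\langle w, Y^{(j)}\rangle|\Big|^q .
\]
Since $x\mapsto|x|$ is $1$-Lipschitz with $|0|=0$, the Rademacher contraction principle removes the absolute value at the cost of another factor $2^q$, and the remaining linear supremum over the $\ell_1$-ball equals the dual $\ell_\infty$ norm over the $k$ coordinates:
\[
\E_\epsilon \sup_{\|w\|_1\le1}\Big|\tfrac1N\sum_j\epsilon_j\langle w,Y^{(j)}\rangle\Big|^q
= \E_\epsilon\Big\|\tfrac1N\sum_j\epsilon_j Y^{(j)}\Big\|_\infty^q
= \E_\epsilon\max_{|i|\le k}\Big|\tfrac1N\sum_j\epsilon_j Y_{j-i}\Big|^q .
\]
These two steps together produce exactly the prefactor $2^{2q}=4^q$ appearing in the claim and, crucially, convert the difference-of-seminorms supremum into a maximum over $k$ linear statistics. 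A union bound in $L^q$, namely $\E\max_{i}|\cdot|^q\le (2k{+}1)\max_i\E|\cdot|^q$, then yields the factor $k$ (the harmless constant absorbed into $C^q$).

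It remains to bound, for each fixed coordinate $i$, the $q$-th moment of the weighted Rademacher sum $\xi_i=\frac1N\sum_{j}\epsilon_j Y_{j-i}$. Conditioning on $Y$, Khintchine's inequality gives $\E_\epsilon|\xi_i|^q\le (C\sqrt q)^q\,\big(\tfrac1N\|Y_{\cdot-i}\|_{\ell_2(\cT)}\big)^q$, after which I would bound the moments of the linear-process norm $\|Y_{\cdot-i}\|_{\ell_2(\cT)}$. Since $Y=\ba*X$ with $X$ Bernoulli$(p)$--Gaussian, one has $\E\|Y_{\cdot-i}\|_2^2\approx Np\,\|\ba\|_2^2$, and a Bernstein/Hanson--Wright-type concentration for this quadratic functional, controlled through $\lambda_{M}(\ba)$, delivers a sub-exponential moment bound of order $\tfrac{\lambda_{M}(\ba)}{N}(\sqrt{Npq}+q)$. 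Substituting back assembles
\[
\E W^q \le 2^{2q}\Big(\tfrac{\lambda_{M}(\ba)}{N}\Big)^q k\,C^q(\sqrt{Npq}+q)^q .
\]
Taking $q$-th roots, using $k^{1/q}\le e$ for $q\ge\log k$ and then shifting $q\mapsto q+\log k$, yields the stated bound on $\|W\|_q$; a final application of Markov's inequality at $q=\log(k/\delta)$ converts the moment bound into the probability tail.

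I expect the per-coordinate moment estimate to be the main obstacle: the summands $Y_{j-i}$ are neither bounded nor uniformly sub-Gaussian---they form a Bernoulli--Gaussian linear process whose typical magnitude scales like $\sqrt p$---so one must marry the sub-Gaussianity in the Rademacher signs (Khintchine) with a sharp moment bound on $\|Y\|_{\ell_2(\cT)}$ that simultaneously captures the $\sqrt{Np}$ bulk scaling and the sub-exponential $+q$ correction, \emph{uniformly} for all $q\ge\max(2,\log k)$. The symmetrization-plus-contraction reduction, which linearizes the absolute value and replaces the $\ell_1$-ball supremum by a $k$-fold maximum (thereby generating both the $2^{2q}$ and the factor $k$), is the structural heart of the argument; what follows it is a careful but essentially routine moment computation.
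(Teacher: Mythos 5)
Your proposal follows essentially the same route as the paper's proof: symmetrization and Talagrand contraction (producing the $2^{2q}$), reduction of the $\ell_1$-ball supremum to a maximum over the $k$ extreme points with a union bound (producing the factor $k$), and a per-coordinate moment bound of the form $C^q(\sqrt{Npq}+q)^q$, followed by $q$-th roots with $k^{1/q}\leq e$ and Markov's inequality at $q=\log(k/\delta)$. The only divergence is cosmetic: the paper pulls out $\lambda_{M}(\ba)$ via the convolution operator norm and then invokes the moment version of Bernstein's inequality directly on the Rademacher--Bernoulli--Gaussian sums $\sum_j \varepsilon_j X_{j-s}$, whereas you condition on $Y$, apply Khintchine in the signs, and control $\E\|Y\|_{\ell_2(\cT)}^q$ by Hanson--Wright --- an interchangeable way of obtaining the same $\frac{\lambda_{M}(\ba)}{N}(\sqrt{Npq}+q)$ estimate.
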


\begin{theorem}
\label{thm:concen_W}
Let the solution of finite sample convex optimization be $w^\star$, 
for small $\epsilon>0$ and concentration level $\delta>0$, 
there exists universal constant $C$, if there are 
% \[
% N=2T+1 \geq k\log(\frac{k}{\delta}) (2^7 \pi e^2 \frac{C^2}{\epsilon^2} \kappa_a^2 )
% \]
\[
N \geq k\log(\frac{k}{\delta}) ( \frac{C\kappa_a}{\epsilon}  )^2,
\]
and  $ \frac{1}{N}\leq p$, we have:
\BEAS
P( W > \epsilon \mu_{\mbox{min}})\leq \delta.
\EEAS
% Therefore with probability at least $1-\delta$, $a*w^\star =e_0$ and we recover the sparse vector exactly.

% \textbf{ToDO:}
% or
% $\|a*w^\star -e_0\|_1<\epsilon_1$ provided $p< p_N^\star$ for some $p_N^\star\leq p^\star$.
  \end{theorem}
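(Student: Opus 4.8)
The plan is to prove Theorem~\ref{thm:concen_W} as a direct quantitative comparison between the high-probability deviation bound on $W$ supplied by Lemma~\ref{lem:tail_W} and the lower bound $\mu_{\mbox{min}}$ on the population objective supplied by Lemma~\ref{lem:mu_min}. Since $W$ is a supremum over the normalized ball $a*B_1(V_k)$, I first normalize so that $\|w\|_1 = 1$, in which case Lemma~\ref{lem:mu_min} reads $\mu_{\mbox{min}} = \frac{1}{\sqrt 2\pi}\lambda_m(a)\sqrt{p/k}$. The target $P(W > \epsilon\mu_{\mbox{min}}) \leq \delta$ then reduces to the purely deterministic claim that, under the stated hypotheses on $N$ and $p$, the deviation level appearing in Lemma~\ref{lem:tail_W} is at most $\epsilon\mu_{\mbox{min}}$; once this is shown, the conclusion is immediate from the tail bound of that lemma.

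So the core of the argument is to verify
\[
\frac{4eC\lambda_M(a)}{N}\Big(\sqrt{Np\log(k/\delta)} + \log(k/\delta)\Big) \leq \epsilon\cdot\frac{1}{\sqrt2\pi}\lambda_m(a)\sqrt{\frac{p}{k}}.
\]
Dividing through by $\lambda_m(a)$ turns the ratio $\lambda_M(a)/\lambda_m(a)$ into the condition number $\kappa_a$, so the inequality becomes scale-free and depends on $a$ only through $\kappa_a$. I would then treat the two terms on the left separately. For the leading (subgaussian) term $4eC\lambda_M(a)\sqrt{p\log(k/\delta)/N}$, cancelling the common factor $\sqrt p$ reduces the requirement to $\sqrt{\log(k/\delta)/N} \lesssim \epsilon/(\kappa_a\sqrt k)$, i.e. exactly $N \gtrsim \kappa_a^2 k\log(k/\delta)/\epsilon^2$, which is the stated sample-complexity threshold (with $C$ absorbing the numerical constants $4e$ and $\sqrt2\pi$).

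The remaining, and more delicate, term is the linear (subexponential) term $4eC\lambda_M(a)\log(k/\delta)/N$. Here the hypothesis $p \geq 1/N$ is essential: it guarantees $Np \geq 1$, hence $\sqrt p \geq N^{-1/2}$, which lets me bound $\sqrt{k/p} \leq \sqrt{Nk}$ and thereby compare the linear term against $\epsilon\mu_{\mbox{min}}$. Equivalently, $Np\geq 1$ certifies that the subgaussian term is not dominated by the subexponential term: I would note that $\log(k/\delta)/N \leq \sqrt{p\log(k/\delta)/N}$ precisely when $Np \geq \log(k/\delta)$, and handle the residual small-$Np$ regime using $p\geq 1/N$ together with the largeness of $N$ already forced by the leading-term analysis. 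Folding all numerical constants into a single universal $C$ then yields the claimed bound.

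The step I expect to be the main obstacle is controlling this linear term cleanly under only the assumption $p \geq 1/N$: the naive comparison produces an extra logarithmic factor (a $\log^2(k/\delta)$ rather than a $\log(k/\delta)$ requirement on $N$), and the honest route is to argue that in the sample-complexity regime forced by the leading term the subexponential term is genuinely lower order, so it can be absorbed into $C$ without strengthening the dependence on $N$. All other steps are routine arithmetic once Lemmas~\ref{lem:mu_min} and \ref{lem:tail_W} are in hand.
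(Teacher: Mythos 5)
Your route is exactly the paper's: the paper's proof of Theorem~\ref{thm:concen_W} is a one-line assertion that Lemma~\ref{lem:tail_W} together with ``the conditions'' yields $P(W>\epsilon\mu_{\mbox{min}})\leq\delta$, with none of the verifying arithmetic spelled out, so your writeup is in fact more detailed than the paper's own. Your treatment of the subgaussian term is correct and recovers the stated threshold: cancelling $\sqrt{p}$ against $\mu_{\mbox{min}}=\frac{1}{\sqrt{2}\pi}\lambda_m(a)\sqrt{p/k}$ and dividing by $\lambda_m(a)$ gives precisely $N\gtrsim k\log(k/\delta)\,(\kappa_a/\epsilon)^2$ with numerical factors absorbed into $C$.

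The gap is the one you flag yourself, and your proposed patch does not close it. In the regime $1\leq Np<\log(k/\delta)$, which the hypothesis $p\geq 1/N$ permits, the subexponential term requires
\[
\frac{4eC\kappa_a\log(k/\delta)}{N}\;\leq\;\frac{\epsilon}{\sqrt{2}\pi}\sqrt{\frac{p}{k}},
\qquad \sqrt{p}\geq N^{-1/2},
\]
which forces $N\gtrsim k\log^2(k/\delta)\,(\kappa_a/\epsilon)^2$. The ``largeness of $N$ already forced by the leading term'' is only $N\asymp k\log(k/\delta)\,(\kappa_a/\epsilon)^2$; substituting that value at $p=1/N$ leaves the ratio of the linear term to $\epsilon\mu_{\mbox{min}}$ of order $\sqrt{\log(k/\delta)}/C$, which exceeds $1$ for $k/\delta$ large no matter which universal constant $C$ is chosen. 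So at the boundary $p=1/N$ the subexponential term is genuinely not lower order and cannot be absorbed into $C$; closing the argument needs either the stronger hypothesis $p\gtrsim\log(k/\delta)/N$ (under which the subgaussian term dominates, exactly as you observe) or a $\log^2(k/\delta)$ sample complexity. Since the paper's proof never performs this comparison, the defect lies in the theorem as derived from Lemma~\ref{lem:tail_W} rather than in something the paper resolves and you missed; but as a proof of the statement as written, your proposal (like the paper's) is incomplete at exactly this point.
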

 
\subsection{Concentration for Directional Derivatives} 
\paragraph{Uniform bound of directional derivatives}
We  define the uniform bound of directional derivatives:
\[
W_D  =\sup_{\|\beta\|_2=1, u^T \beta =0, \beta +e_0 \in a*B_1(V_k)}|\langle  \beta, {\frac{1}{N}\sum_{|i|\in \cT} \nabla_{\phi}L_i}|_{\phi = 0} \rangle  -   \langle  \beta, \E \nabla_{\phi}L|_{\phi = 0} \rangle  |.
\]
Notice that
\BEAS
& &\langle  \beta, {\frac{1}{N}\sum_{|i|\in \cT} \nabla_{\phi}L_i}|_{\phi = 0} \rangle  -   \langle  \beta, \E \nabla_{\phi}L|_{\phi = 0} \rangle  
\\
&=& 
\left[\frac{1}{N} \left(\|X\|_{1,J_N} \beta_0  +
 \|X*\beta_{(0)}\|_{1,\cT-J_N}+ 
\sum_{i\in J_N} (1_{X_{-i}> 0}-1_{X_{-i}< 0})  \cdot X^{(i),T}\beta_{(0)}\right)\right]-
\\
& &
\left[\sqrt{\frac{2}{\pi}}(p \cdot \beta_0 + (1-p) \cdot \Expect_{I}[ \|\beta_{I}\|_2 | 0 \not \in I ])\right].
\EEAS

Combining the following three uniform bounds on
\[
|(\frac{1}{N} \|X\|_{1,J_N}-\sqrt{\frac{2}{\pi}}p)|,
\]
\[
|\frac{1}{N}\|X*\beta_{(0)}\|_{1,\cT-J_N}-\sqrt{\frac{2}{\pi}}(1-p) \cdot \Expect_{I}[ \|\beta_{I}\|_2 | 0 \not \in I ]|,
\]
\[
|\frac{1}{N}\sum_{i\in J_N} (1_{X_{-i}> 0}-1_{X_{-i}< 0})  \cdot X^{(i),T}\beta_{(0)}|,
\]
with high probability, we have  
\[
W_D  < \epsilon \mu_{\mbox{min}}.
\]

We define $\delta_p(N,\epsilon)$ the one side finite $N$ band such that for all $p<p^\star -\delta_p(N,\epsilon)$, 
\[
\langle  \beta, \E \nabla_{\phi}L|_{\phi = 0} \rangle >\epsilon \mu_{\mbox{min}}.
\]
for any direction $\beta \in V,$ the directional derivative at $\phi = 0$
\[
\langle  \beta, {\frac{1}{N}\sum_{|i|\in \cT} \nabla_{\phi}L_i}|_{\phi = 0} \rangle  >0,
\]
then using the convexity argument, $a*w^\star-e_0=0$. 

\paragraph{Finite sample guarantee}
From previous concentration of finite sample directional derivative, we have the following theorem.
	\begin{theorem}
	\label{thm:fin_guarantee_supp}
		When $a^{-1}$ is length $k$, for small constant $\epsilon>0, \delta>0$, when the number of observation $N$ satisfies
		\[
N \geq  k\log(\frac{k}{\delta}) ( \frac{C\kappa_a}{\epsilon}  )^2,
\]
then 
\[
W_D  < \epsilon \mu_{\mbox{min}}.
\]
For $\frac{1}{N}\leq p \leq p^\star - \delta_p(N,\epsilon)$,
\[
\langle  \beta, \E \nabla_{\phi}L|_{\phi = 0} \rangle >\epsilon \mu_{\mbox{min}}.
\]
therefore, 	with probability $1-\delta$, for any direction $\beta \in V,$ the directional derivative at $\phi = 0$
\[
\langle  \beta, {\frac{1}{N}\sum_{|i|\in \cT} \nabla_{\phi}L_i}|_{\phi = 0} \rangle  >0,
\]

		the solution of finite sample convex optimization $w^\star$ is $a^{-1}$ up to scaling and shift 
		with probability $1-\delta$.
	\end{theorem}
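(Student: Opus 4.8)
The plan is to reduce exact recovery to strict positivity of the empirical directional derivative at $\be_0$, and then to certify that positivity by combining the population-level phase transition (Main Result $1$) with the uniform concentration bounds already assembled in Lemma~\ref{lem:tail_W} and Theorem~\ref{thm:concen_W}. First I would invoke the change of variables $\psi = (\ba*\bw)^{\dagger}$ and $\tbe = \widetilde{\ba}*\ba^{-1}$, under which the finite-sample program $({BD}_{\widetilde{a},N,k})$ becomes minimization of $\frac{1}{N}\|\psi*X\|_{1}$ over the affine slice $\langle\tbe,\psi\rangle = 1$; exact recovery ($\bw^\star = \ba^{-1}$ up to shift and scaling) is then precisely the assertion that $\be_0$ is the unique minimizer of this convex program. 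Because the objective is convex, $\be_0$ is optimal iff every feasible direction $\beta$ (unit norm, $\tbe^{T}\beta = 0$, $\be_0+\beta\in \ba*B_1(V_k)$) has nonnegative directional derivative, and I want \emph{strict} positivity uniformly over this direction set in order to obtain uniqueness, hence exact recovery.

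The second step is the additive decomposition
\[
\langle\beta, \tfrac{1}{N}\textstyle\sum_{|i|\in\cT}\nabla_\phi L_i|_{\phi=0}\rangle
= \langle\beta,\E\nabla_\phi L|_{\phi=0}\rangle
+ \Big(\langle\beta, \tfrac{1}{N}\textstyle\sum_{|i|\in\cT}\nabla_\phi L_i|_{\phi=0}\rangle - \langle\beta,\E\nabla_\phi L|_{\phi=0}\rangle\Big),
\]
separating a population term from a fluctuation. The population term is controlled by Main Result $1$: the expected directional derivative equals $\sqrt{2/\pi}\,(p\,\beta_0 + (1-p)\,\E_{I}[\|\beta_I\|_2 \mid 0\notin I])$, which the phase-transition analysis shows is bounded strictly away from zero whenever $p<p^\star$. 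Quantitatively, for $p\le p^\star-\delta_p(N,\epsilon)$ this lower bound is at least $\epsilon\mu_{\mbox{min}}$, where $\mu_{\mbox{min}}$ is the objective-scale quantity of Lemma~\ref{lem:mu_min}; this is exactly the role of the finite-$N$ margin $\delta_p(N,\epsilon)$ in the statement. The fluctuation term is bounded in absolute value by the uniform deviation $W_D$ defined immediately before the theorem.

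The third step is to show $W_D < \epsilon\mu_{\mbox{min}}$ with probability at least $1-\delta$ under the hypothesis $N \ge k\log(k/\delta)(C\kappa_{\ba}/\epsilon)^2$. Here I would split $W_D$ into the three pieces indicated in the text --- the fluctuation of $\frac{1}{N}\|X\|_{1,J_N}$ around $\sqrt{2/\pi}\,p$, the fluctuation of $\frac{1}{N}\|X*\beta_{(0)}\|_{1,\cT-J_N}$ around its mean, and the cross term $\frac{1}{N}\sum_{i\in J_N}\mbox{sign}(X_{-i})\,X^{(i),T}\beta_{(0)}$ --- and bound each using the moment/tail estimates of Lemma~\ref{lem:tail_W} together with the covering argument behind Theorem~\ref{thm:concen_W}. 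Combining the two steps, for every feasible $\beta$,
\[
\langle\beta, \tfrac{1}{N}\textstyle\sum_{|i|\in\cT}\nabla_\phi L_i|_{\phi=0}\rangle \ge \epsilon\mu_{\mbox{min}} - W_D > 0,
\]
so $\be_0$ is the strict minimizer and $\ba*\bw^\star = \be_0$, i.e. $\bw^\star = \ba^{-1}$ up to shift and scaling, with probability $1-\delta$.

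The hard part will be the \emph{uniform} concentration of the fluctuation over the entire $k$-dimensional feasible direction set rather than for one fixed $\beta$. This is an empirical-process step: a pointwise sub-exponential tail for each $\beta$ must be upgraded to a supremum bound, which is where the $\log k$ factor and the covering entropy of $B_1(V_k)$ enter, and where the condition number $\kappa_{\ba}$ of the circulant matrix of $\ba$ governs the geometric distortion between the $\beta$-geometry and the $X$-geometry --- it appears both in the numerator covering estimate and in the denominator scale $\mu_{\mbox{min}}$, producing the ratio $\kappa_{\ba}/\epsilon$. Since this is exactly the content of Lemma~\ref{lem:tail_W} and Theorem~\ref{thm:concen_W}, the remaining work in this theorem is the comparatively routine assembly of those uniform bounds with the population positivity margin.
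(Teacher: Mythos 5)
Your proposal is correct and follows essentially the same route as the paper's own proof: the same reduction via convexity to uniform strict positivity of the empirical directional derivative at $\be_0$, the same decomposition into the population term (lower-bounded by $\epsilon \mu_{\mbox{min}}$ for $p \leq p^\star - \delta_p(N,\epsilon)$, which is exactly how the paper uses the margin) plus a fluctuation bounded by $W_D$, the same three-way split of $W_D$ controlled by Lemma~\ref{lem:tail_W} and Theorem~\ref{thm:concen_W}, and the same concluding step $\epsilon\mu_{\mbox{min}} - W_D > 0$ forcing $\ba*\bw^\star = \be_0$. Your closing remark that the uniform (empirical-process) concentration over the feasible direction set is the substantive ingredient matches the paper, which likewise delegates that work to the symmetrization/contraction/moment-Bernstein machinery of Lemma~\ref{lem:tail_W} and treats the assembly at the same level of detail you do.
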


\subsection{Proof of Main Result $2$}	
% \paragraph{Proofs for finite sample guarantee}	
	\begin{proof}[Proof of theorem~\ref{thm:fin_deriv}]
 \BEAS
& & \lim_{t\rightarrow 0^+}\frac{1}{t}[(\hat{\E}  \| (e_0+t\beta)*X \|_1 - \hat{\E}  \| e_0*X \|_1)
\\
  &=& 
 \langle  \beta, {\frac{1}{N}\sum_{|i|\in \cT} \nabla_{\psi}L_i}|_{e_0} \rangle \\
 &=& 
 \frac{1}{N}\sum_{|i|\in \cT}  \mbox{sign}( X_0^{(i)})X^{(i),T}\beta
 \\
 &=& 
 \frac{1}{N}\sum_{|i|\in \cT}  \mbox{sign}( X_{-i})X^{(i),T}\beta
 \\
 &=& 
\frac{1}{N}\sum_{|i|\in \cT} 1_{X_{-i}> 0}  \cdot X^{(i),T}\beta  +
\frac{1}{N}\sum_{|i|\in \cT} 1_{X_{-i}< 0}  \cdot (-X^{(i),T}\beta)  +
\frac{1}{N}\sum_{|i|\in \cT} 1_{X_{-i}= 0} \cdot |X^{(i),T}\beta|
 \\
 &=& 
\frac{1}{N}\sum_{|i|\in \cT} 1_{X_{-i}> 0}  \cdot X^{(i),T}\beta  +
\frac{1}{N}\sum_{|i|\in \cT} 1_{X_{-i}< 0}  \cdot (-X^{(i),T}\beta)  +
\frac{1}{N}\sum_{|i|\in \cT}  |X^{(i),T}\beta_{(0)}|
 \\
 &=& 
\frac{1}{N}\sum_{|i|\in \cT} 1_{X_{-i}> 0}  \cdot X_{-i}\beta_0  +
\frac{1}{N}\sum_{|i|\in \cT} 1_{X_{-i}< 0}  \cdot (-X_{-i}\beta_0)  + 
\frac{1}{N}  \|X*\beta_{(0)}\|_{1,\cT-J_N}+ 
\\
& &
\frac{1}{N}\sum_{|i|\in \cT} 1_{X_{-i}> 0}  \cdot X^{(i),T}\beta_{(0)}  +
\frac{1}{N}\sum_{|i|\in \cT} 1_{X_{-i}< 0}  \cdot (-X^{(i),T}\beta_{(0)})  
 \\
  &=& 
\frac{1}{N}\sum_{|i|\in \cT} 1_{X_{-i}\neq 0}  \cdot |X_{-i}|\beta_0  +
\frac{1}{N}  \|X*\beta_{(0)}\|_{1,\cT-J_N}+ 
\frac{1}{N}\sum_{i\in J_N} (1_{X_{-i}> 0}-1_{X_{-i}< 0})  \cdot X^{(i),T}\beta_{(0)}   
 \\
  &=& 
 \frac{1}{N} \left(\|X\|_{1,J_N} \beta_0  +
 \|X*\beta_{(0)}\|_{1,\cT-J_N}+ 
\sum_{i\in J_N} (1_{X_{-i}> 0}-1_{X_{-i}< 0})  \cdot X^{(i),T}\beta_{(0)} \right).
 \EEAS
\end{proof}

\begin{proof}[Proof of lemma~\ref{lem:mu_min}]
Without loss of generality, we rescale $w$ so that $\|w\|_1 =1$, now
\BEAS
\E_X Z 
&=& \frac{1}{N}\sum_{j\in [-T,T] }\E_X |\sum_{i=1}^k w_i (a*X)_{j-i}|\\
&\geq& \inf_{\|w\|_{1} = 1}\frac{1}{N}\sum_{j\in [-T,T] }\E_X |\sum_{i=1}^k w_i (a*X)_{j-i}|.
\EEAS
By symmetry, we know the minimizer is $w = (\frac{1}{k}, \ldots,\frac{1}{k}),$
then we use the symmetrization trick to insert a sequence of $k$ IID Rademacher ($\pm 1$)  random variables 
$\varepsilon_1, \ldots, \varepsilon_k$ in the second inequality, and then apply Khintchine inequality as the third inequality,
\BEAS
\E_X Z 
&\geq&  \frac{1}{k} \frac{1}{N}\sum_{j\in [-T,T] }\E_X |\sum_{i=1}^k (a*X)_{j-i}|\\
&\geq& \frac{1}{2} \frac{1}{k} \frac{1}{N}\sum_{j\in [-T,T] }\E_X |\sum_{i=1}^k \varepsilon_i (a*X)_{j-i}|\\
&\geq&  \frac{1}{2} \frac{1}{k} \frac{1}{N}\sum_{j\in [-T,T] }\frac{1}{\sqrt{2}}\E_X \sqrt{ \sum_{i=1}^k  (a*X)^2_{j-i}}\\
&\geq&  \frac{1}{2} \frac{1}{k} \frac{1}{N}\sum_{j\in [-T,T] }\lambda_{m}(a)\frac{1}{\sqrt{2}}\E_X \sqrt{ \sum_{i=1}^k  X^2_{j-i}}\\
&\geq& \frac{1}{2}  \frac{1}{k} \frac{1}{\sqrt{2}}\frac{1}{N}\sum_{j\in [-T,T] }\lambda_{m}(a)\frac{2}{\pi}\sqrt{pk}\\
&=&  \frac{1}{\sqrt{2}\pi}\lambda_{m}(a)\sqrt{\frac{p}{k}}.\\
\EEAS
\end{proof}
 
\begin{proof}[Proof of Lemma~\ref{lem:tail_W}]
Let $\varepsilon_1, \ldots, \varepsilon_N$ be a sequence of IID Rademacher ($\pm 1$)  random variables independent of $X$.  By the symmetrization inequality, see e.g. Lemma~$6.3$ of book \cite{ledoux2013probability}, we have the first inequality. Then since the function $t \rightarrow |t| $ is a contraction, an application of Talagrand’s contraction principle (see Lemma~$8$ of \cite{adamczak2016note}) with $F (x) = |x|^q$ 
conditionally on $X$ gives the second inequality.
\BEAS
\E W^q 
&\leq& 2^q \E \sup_{\psi \in a*B_1(V_k)} 
|\frac{1}{N}\sum_{j\in [-T,T] } \varepsilon_j |(\psi*X)_{j}| |^q\\
&\leq&
2^{2q}\E \sup_{\psi \in a*B_1(V_k)} |\frac{1}{N}\sum_{j\in [-T,T] } \varepsilon_j (\psi*X)_{j} |^q\\
&=&
2^{2q}\frac{1}{N^q}\E \sup_{\psi \in a*B_1(V_k)} |\sum_{i\in [-T,T] }\psi_{i}  \sum_{j\in [-T,T] } \varepsilon_j X_{j-i} |^q\\
&=&
2^{2q}\frac{1}{N^q}\E \sup_{w \in B_1(V_k)} |\sum_{i\in [-T,T] }(w*a)_{i}  \sum_{j\in [-T,T] } \varepsilon_j X_{j-i} |^q\\
&\leq&
2^{2q}\frac{1}{N^q}\E \max_{s\in [k]}|\sum_{i\in [-T,T] }a_{i-s}( \sum_{j\in [-T,T] } \varepsilon_j X_{j-i}) |^q\\
&\leq&
2^{2q}(\frac{\lambda_{M}(a)}{N})^q   \sum_{s\in [k] }\E| \sum_{j\in [-T,T] } \varepsilon_j X_{j-s} |^q.
\EEAS
Now by the moment version of Bernstein’s inequality (see Lemma~$7$, equation $(18)$ of~\cite{adamczak2016note}), we know that there exists a universal constant $C$,
\BEAS
\E| \sum_{j\in [-T,T] } \varepsilon_j X_{j-i} |^q \leq C^q(\sqrt{Npq}+q)^q.
\EEAS
Therefore, 
\BEAS
\E W^q &\leq& 2^{2q}(\frac{\lambda_{M}(a)}{N})^q    
\sum_{s\in [k] }\E| \sum_{j\in [-T,T] } \varepsilon_j X_{j-s} |^q \\
&\leq& 2^{2q}(\frac{\lambda_{M}(a)}{N})^q   k C^q(\sqrt{Npq}+q)^q.
\EEAS
Then
\BEAS
 \|W\|_q \leq \frac{4eC}{N} \lambda_{M}(a)  
 (\sqrt{Np(q+\log(k))}+q +\log(k)).
\EEAS
Therefore, we could get the tail bound using the Chebyshev inequality for the moments,
\BEAS
P( W > \frac{4eC\lambda_{M}(a) }{N}  
 (\sqrt{Np(q+\log(k))}+q +\log(k)))\leq e^{-q}.
\EEAS
We set $q = \log(\frac{1}{\delta})$, obtaining an upper bound for $W$ which is satisfied with probability at least $1-\delta$:
\BEAS
P( W > \frac{4eC\lambda_{M}(a) }{N}  
 (\sqrt{Np\log(\frac{k}{\delta})}+\log(\frac{k}{\delta})))\leq \delta.
\EEAS
\end{proof}

\begin{proof}[Proof of Theorem~\ref{thm:concen_W}]
From previous lemma, we know that with the conditions, 
\BEAS
P( W > \epsilon \mu_{\mbox{min}})\leq \delta.
\EEAS
Therefore, with probability at least $1-\delta$,
\BEAS
W = \sup_{\psi \in a*B_1(V_k)}
|\frac{1}{N}\sum_{j\in [-T,T] } (|(\psi*X)_{j}| - \E |(\psi*X)_{j}|) | \leq \epsilon \mu_{\mbox{min}}.
\EEAS
When this is true, for all $\psi \in a*B_1(V_k)$, we have a uniform bound:
\[
 \E |(\psi*X)_{j}|-\epsilon\mu_{\mbox{min}} \leq \frac{1}{N}\sum_{j\in [-T,T] } (|(\psi*X)_{j}| \leq \E |(\psi*X)_{j}|+\epsilon\mu_{\mbox{min}}.
\]
\end{proof}

\begin{proof}[Proof of Theorem~\ref{thm:fin_guarantee}]
Notice that
\BEAS
& &\langle  \beta, {\frac{1}{N}\sum_{|i|\in \cT} \nabla_{\phi}L_i}|_{\phi = 0} \rangle  -   \langle  \beta, \E \nabla_{\phi}L|_{\phi = 0} \rangle  
\\
&=& 
\left[\frac{1}{N} \left(\|X\|_{1,J_N} \beta_0  +
 \|X*\beta_{(0)}\|_{1,\cT-J_N}+ 
\sum_{i\in J_N} (1_{X_{-i}> 0}-1_{X_{-i}< 0})  \cdot X^{(i),T}\beta_{(0)}\right)\right]-
\\
& &
\left[\sqrt{\frac{2}{\pi}}(p \cdot \beta_0 + (1-p) \cdot \Expect_{I}[ \|\beta_{I}\|_2 | 0 \not \in I ])\right].
\EEAS
From the previous uniform bound, with high probability, we have the following three uniform bounds on
\[
|(\frac{1}{N} \|X\|_{1,J_N}-\sqrt{\frac{2}{\pi}}p)|,
\]
\[
|\frac{1}{N}\|X*\beta_{(0)}\|_{1,\cT-J_N}-\sqrt{\frac{2}{\pi}}(1-p) \cdot \Expect_{I}[ \|\beta_{I}\|_2 | 0 \not \in I ]|.
\]
The uniform bound on 
\[
|\frac{1}{N}\sum_{i\in J_N} (1_{X_{-i}> 0}-1_{X_{-i}< 0})  \cdot X^{(i),T}\beta_{(0)}|
\]
comes from symmetric distribution assumption.

Combining all three uniform bounds, with high probability, we have:
\[
W_D  < \epsilon \mu_{\mbox{min}}.
\]

We define $\delta_p(N,\epsilon)$ the one-side finite $N$ band such that for all $p<p^\star -\delta_p(N,\epsilon)$, 
\[
\langle  \beta, \E \nabla_{\phi}L|_{\phi = 0} \rangle >\epsilon \mu_{\mbox{min}}.
\]
for any direction $\beta \in V,$ the directional derivative at $\phi = 0$
\[
\langle  \beta, {\frac{1}{N}\sum_{|i|\in \cT} \nabla_{\phi}L_i}|_{\phi = 0} \rangle  >0,
\]
then using the convexity argument, $a*w^\star-e_0=0$.

\end{proof}

% \begin{coro}
% When $\sigma_1 = \sigma_2 = 0,$
% \[
% \E_I  \| (\psi^\star)_I \|_2 -p\leq p\mathcal{R}(\frac{\mu(e_0)}{\sqrt{1+ \sigma^2_n(e_0)}}) - p
% \]
%  \[
%  \begin{array}{ll}
% \|\psi^\star -e_0\|_2\leq 
%  \frac{p\mathcal{R}(\frac{\mu(e_0)}{\sqrt{1+ \sigma^2_n(e_0)}}) - p}{p (
% (1-p)- \tan(\theta) )\cos{\theta} }
% \end{array}
%   \]  

% \end{coro}

% \begin{coro}
% When $\eta_1 = \eta_2 = 0,$
% \[
% \E_I  \| (\psi^\star)_I \|_2 -p \leq  (1-p) \sqrt{\sigma_1^2 + \sigma_2^2\|a^{-1}\|_2^2} +
%  p\sqrt{1+ \sigma_1^2 + \sigma_2^2\|a^{-1}\|_2^2}  - p
% \]
%  \[
%  \begin{array}{ll}
%  \|\psi^\star -e_0\|_2\leq 
%  \frac{ (1-p) \sigma_n(e_0)^2 +  
%  p\sqrt{1+ \sigma^2_n(e_0)}  - p}{p (
% (1-p)- \tan(\theta) )\cos{\theta} }
% \end{array}
%   \]  

% \end{coro}

\section{Supplementary: Main Result 3 Proof: Stability Guarantee with Finite Length Approximation to Infinite Length Inverse  }
\subsubsection{Finite Length Approximation to Infinite Length Inverse Filter}
Now we consider a setting where there is a kernel whose corresponding inverse kernel has infinite support, and we give a finite-length approximation. 

Within the space $V_{-\infty,\infty}$ of bilaterally infinite real-valued sequences $(h(i))_{i \in \integers}$, consider the affine subspace $V_{N_{-}, N_{+}} = \{ (...,0,0,h_{-N_{-}},\ldots, h_{-1}, 1, h_1, h_2, \ldots, h_{N_{+}},0,0,...)\}$, an $N_{-}+N_{+}$-dimensional subspace of bilateral sequences with support at most $N_{-}+N_{+}+1$.
The coordinate that is fixed to one is 
located at index $i=0$.
Each coordinate is zero outside of a window
of size $N_{-}$ on the left of zero 
and  size $N_{+}$ on the right of zero.
The special sequence $e_0 = (\dots,0,0,1,0,0,\dots)$,
vanishing everywhere except the origin,
belongs to $V_{-\infty,\infty}$ and to every
$V_{N_-,N_+}$.

Let $a \in V_{N_-,N_+}$.
Then $a =  (...,0,0,a_{-N_{-}},\ldots, a_{-1}, 1, a_1, a_2, \ldots, a_{N_{+}},0,0,...)$.
We also write $a = e_0 + a_{(0)}$,
where $a_{(0)} \equiv (1-e_0) \cdot a$ 
denotes the `part of $a$ supported away from location
$i=0$'. We also write $a = e_0 + a_L + a_R$,
where $a_L = (a_L(i))_{i \in \integers} 
= (a(i) 1_{\{i < 0\}})_{i \in \integers}$ 
denotes the `part of $a$ supported to
the left of $i=0$' and 
where $a_R = (a_R(i))_{i \in \integers} 
= (a(i) 1_{\{i > 0\}})_{i \in \integers}$
denotes the `part of $a$
supported to the right of $i=0$'.
Finally, we say that $a \in V_{N_{-}, N_{+}}$ 
is a length $ L= N_{-}+N_{+}+1$ filter.

\textbf{First example}: let's look at a simple example of infinite length inverse filter approximation.
Let $s\in (-1,1)$. For $a=( 0, 1, -s)$, then $a^{-1} = (\ldots, 0, 1, s, s^2, s^3, \ldots)$ is an infinite length inverse filter. 

If we choose a length $r$ approximation $w^{r}= (\ldots, 0, 1, s, s^2, s^3, \ldots, s^{r-1})$ then
\[
\|a*w^{r}-e_0\|_2^2 = |s|^{2r}.
\]

Now we consider general finite-length forward filter with infinite-length inverse filter.

\subsection{Finite Length Approximation based on Z Transform}
We construct the finite-length approximation filter explicitly by truncation of Z-transform.

Let the Z-transform of $a$ be 
\[
 \begin{array}{ll}
A(z) =  \sum_{i=-s}^t a_{i} z^{-i}.
\end{array}
  \]
Then Z-transform of the inverse kernel $a^{-1}$ is $1/A(z)$.

\begin{theorem}
% \label{thm:RootNorm}
Assuming we have a finite length forward filter $a$ with its Z-transform having roots inside the unit circle, namely $s_{k}:= e^{-\rho_k + i \varphi_k }$ with $|s_k|<1$ and $\rho_k>0$
for $k \in \{-N_{-}, \ldots -1, 1, \ldots N_{+}\}$. Let $\cI = \{-N_{-},\ldots, -1, 1, \ldots, N_{+} \}$ as the set of all the possible indicies.
\[
 \begin{array}{ll}
A(z) &=  \sum_{i=-N_{-}}^{N_{+}} a_{i} z^{-i}
\\
&= c_0\prod_{j=1}^{N_{-}} (1- s_{-j} z) \prod_{i=1}^{N_{+}} (1- s_{i} z^{-1}),
\end{array}
  \]
Where $c_0$ is a constant to make sure that the coefficient $a_0=1$.

Then for a vector index $r=(r_{-N_-}, \ldots, r_{-1}, r_{1},\ldots, r_{N_+} ) $, we could construct an approximate inverse filter $w^r$ with Z-transform
\[
 \begin{array}{ll}
W(z) 
&=  \frac{1}{c_0}\prod_{j=1}^{N_{-}} (\sum_{\ell_j=0}^{r_{-j}-1} s^{\ell_j}_{-j} z^{-\ell_j} ) \prod_{i=1}^{N_{+}} (\sum_{\ell_i=0}^{r_{-j}-1} s^{\ell_i}_{i} z^{\ell_i} )\\
&=  \frac{1}{c_0}\prod_{j=1}^{N_{-}} (1- (s_{-j} z)^{r_{-j}})(1- s_{-j} z)^{-1} \prod_{i=1}^{N_{+}}(1- (s_{i}  z^{-1})^{r_i}) (1- s_{i} z^{-1})^{-1}.
\end{array}
\]
Let $\phi^r = w^r*a-e_0$, then
\[
 \begin{array}{ll}
\|\phi^r \|_2^2 = \|a*w^r-e_0\|_2^2
&= \sum_{n=1}^{|\cI|}\sum_{ k_1, \ldots, k_n  \in \cI } 
\exp \Big(- 2 \big(r_{k_1} \rho_{k_1}+ \ldots+  r_{k_n} \rho_{k_n} \big) \Big).
\end{array}
  \]
%   Let $\gamma^m := \min_{i}\big(r_{i} |\rho_{i}| \big) =  \min\{r_{-N_-}|\rho_{-N_-}|, \ldots, r_{-1}|\rho_{-1}|, r_{1}|\rho_{1}|, \ldots, r_{N_+}|\rho_{N_+}|\}.$
When $\min_j r_j \rightarrow \infty$, it converges to zero at an exponential rate. The convergence rate is determined by the slowest decaying exponential term as a function of $\min_i (r_i \rho_i)$.
\[
 \begin{array}{ll}
\|\phi^r \|_2
&= O(
\exp \Big(-  \min_i (r_i \rho_i) \Big) ), \quad \min_i r_i \rightarrow \infty.
\end{array}
  \]
 \end{theorem}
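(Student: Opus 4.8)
The plan is to pass to the Z-transform (equivalently the discrete-time Fourier transform), use the explicit root factorization to collapse the product $W(z)A(z)$, and then read off $\|\phi^r\|_2$ from Parseval's identity.

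First I would compute $W(z)A(z)$. Writing each truncated geometric sum as $\sum_{\ell=0}^{r-1}x^\ell = (1-x^r)/(1-x)$, the factors $(1-s_{-j}z)^{-1}$ and $(1-s_i z^{-1})^{-1}$ appearing in $W(z)$ cancel exactly against the matching factors $(1-s_{-j}z)$ and $(1-s_i z^{-1})$ in $A(z)$, while the constants $1/c_0$ and $c_0$ also cancel. This gives the clean telescoping identity
\[
W(z)A(z) = \prod_{j=1}^{N_-}\bigl(1-(s_{-j}z)^{r_{-j}}\bigr)\prod_{i=1}^{N_+}\bigl(1-(s_i z^{-1})^{r_i}\bigr).
\]
Since $\phi^r = w^r*a - e_0$ has Z-transform $W(z)A(z)-1$, I next expand this product over subsets. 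Setting, for each index $k\in\cI$, the monomial $g_k(z)=(s_k z)^{r_k}$ when $k<0$ and $g_k(z)=(s_k z^{-1})^{r_k}$ when $k>0$, the product is $\prod_{k\in\cI}(1-g_k(z))$, so that
\[
W(z)A(z)-1 = \sum_{\emptyset\neq S\subseteq\cI}(-1)^{|S|}\prod_{k\in S}g_k(z).
\]
Each summand is a single monomial $c_S z^{m_S}$ with coefficient magnitude $|c_S|=\prod_{k\in S}|s_k|^{r_k}=\exp\!\bigl(-\sum_{k\in S}r_k\rho_k\bigr)$ (using $|s_k|=e^{-\rho_k}$) and integer exponent $m_S=\sum_{k\in S,\,k<0}r_k-\sum_{k\in S,\,k>0}r_k$.

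I would then apply Parseval's identity $\|\phi^r\|_2^2 = \frac{1}{2\pi}\int_0^{2\pi}\bigl|W(e^{i\omega})A(e^{i\omega})-1\bigr|^2\,d\omega$. When the exponents $m_S$ are pairwise distinct, the monomials $z^{m_S}$ are orthogonal, the cross terms integrate to zero, and the squared norm is exactly the sum of squared coefficient magnitudes, giving the claimed formula
\[
\|\phi^r\|_2^2 = \sum_{\emptyset\neq S\subseteq\cI}\prod_{k\in S}|s_k|^{2r_k} = \sum_{n=1}^{|\cI|}\ \sum_{k_1,\ldots,k_n\in\cI}\exp\!\Bigl(-2\bigl(r_{k_1}\rho_{k_1}+\cdots+r_{k_n}\rho_{k_n}\bigr)\Bigr),
\]
where the inner sum runs over the $n$-element subsets $\{k_1,\dots,k_n\}$ of $\cI$. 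For the asymptotic claim I would avoid any genericity assumption and instead invoke the triangle inequality in $\ell_2$: $\|\phi^r\|_2\le\sum_{\emptyset\neq S\subseteq\cI}\prod_{k\in S}|s_k|^{r_k}$, a finite sum of $2^{|\cI|}-1$ terms. As $\min_i r_i\to\infty$ each product is maximized by taking $S$ a singleton (adding factors all of modulus $<1$ only shrinks it), so the sum is dominated by $\max_k|s_k|^{r_k}=\exp(-\min_k r_k\rho_k)$, yielding $\|\phi^r\|_2 = O\!\bigl(\exp(-\min_i r_i\rho_i)\bigr)$.

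The main obstacle is that two distinct subsets $S\neq S'$ can share an exponent $m_S=m_{S'}$; this already occurs for uniform $r$ (two left roots truncated at equal length both land at exponent $r$). In that case Parseval produces interference cross terms $2\,\mathrm{Re}(c_S\overline{c_{S'}})$ and the displayed equality holds only up to such corrections. These coincidences, however, contribute at order $\exp\!\bigl(-(r_k+r_{k'})\rho\text{-sums}\bigr)$, matching the higher-order diagonal terms and hence strictly subdominant to the leading singleton decay, so the triangle-inequality bound still delivers the stated $O(\exp(-\min_i r_i\rho_i))$ rate. Confirming that the cross terms are genuinely lower order — and that they are absent in the generic case, making the displayed equality exact — is the only delicate point; the telescoping cancellation and the Parseval step are otherwise routine.
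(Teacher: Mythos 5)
Your proposal follows essentially the same route as the paper's proof: cancel the truncated geometric sums against the root factors to get $W(z)A(z)=\prod_{k\in\cI}(1-g_k(z))$, expand over nonempty subsets, and apply Parseval (the paper phrases this as Fourier isometry after writing $s_k=e^{-\rho_k+i\varphi_k}$ in polar form). The one substantive difference is in your favor. The paper's proof asserts that $|\Phi(e^{2\pi i t})|^2$ is \emph{independent of $t$}, justifying the exact equality by claiming the oscillation integral of the cross terms equals $1$; but the displayed integral there only pairs each subset's phase with itself, i.e.\ it verifies the diagonal terms. You correctly identify that distinct subsets $S\neq S'$ can produce the same monomial exponent $m_S=m_{S'}$ --- already for uniform $r$ with two distinct roots on the same side, both singletons contribute at exponent $r$ --- and then the cross term $2\,\mathrm{Re}(c_S\overline{c_{S'}})$ does \emph{not} integrate to zero, so the stated exact formula for $\|\phi^r\|_2^2$ holds only generically (distinct exponents), a caveat the paper's proof silently skips. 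Your triangle-inequality argument for the asymptotic claim is the right repair: it yields $\|\phi^r\|_2=O(\exp(-\min_i r_i\rho_i))$ unconditionally, without relying on the exact equality, whereas the paper's asymptotic conclusion as written inherits the gap in its equality step. In short: same decomposition and Parseval skeleton, but your treatment of the degenerate-exponent cross terms is more careful than the paper's own proof and actually patches a real (if minor) flaw in it.
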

\subsection{Stability Theorem}
 \begin{theorem} 
%  \label{thm:bd_stab}
 Let $a\in V_{N_-, N_+} $ be a forward filter with all the roots of Z-transform strictly in the unit circle. Let $w^\star \in V_{(r-1)N_-, (r-1)N_+} $ be the solution of the convex optimization problem. Let $w^r$ be the constructed filter in previous theorem with a uniform vector index $(r, \ldots, r, r,\ldots, r)$.
  Let $\phi^\star = w^\star*a-e_0$, 
 then the solution satisfy
  \[
 \begin{array}{ll}
  B(e_0,\phi^\star) \|\phi^\star \|_2 \leq B(e_0,\phi^r) \|\phi^r \|_2.
\end{array}
  \]
  Additionally, as $p<p^\star$, $B(e_0,\phi^\star)$ and $B(e_0,\phi^r)$ are both upper and lower bounded. 
  Therefore, using the previous asymptotic exponential convergence bound on $\|\phi^r \|_2$ as 
  $r \rightarrow \infty$, it converges to zero at an exponential rate
  \[
 \begin{array}{ll}
   \|\phi^\star \|_2 \leq \frac{B(e_0,\phi^r)}{B(e_0,\phi^\star)} \|\phi^r \|_2
   \leq
   O(
\exp \Big(-  r \min_i (\rho_i) \Big) ), \quad r \rightarrow \infty.
\end{array}
  \]
%   where $B(e_0,\phi^\star)$
%   \[
% B(p, \phi) := \frac{\E_I  \| (e_0+\phi)_I \|_2 - \E_I  \| (e_0)_I \|_2}{\|\phi\|_2} 
% \]
% is upper and lower bounded.  
% by the bi-Liptiz property of the directional finite differencee of objective
 \end{theorem}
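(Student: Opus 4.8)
The plan is to reduce the claim to the identity connecting the population objective to the bi-Lipschitz functional $B(e_0,\cdot)$ from the Technical Tool section, namely $\E_I\|\psi_I\|_2 - \E_I\|(e_0)_I\|_2 = B(e_0,\psi-e_0)\,\|\psi-e_0\|_2$, combined with an optimality/feasibility comparison between the minimizer $w^\star$ and the explicit truncated inverse $w^r$. Writing $\psi^\star := a*w^\star$ and $\psi^r := a*w^r$, and recalling $\E_I\|(e_0)_I\|_2 = p$ (Lemma~\ref{lem:BGPropLemmaSub_2}-type computation), the identity gives $\E_I\|\psi^\star_I\|_2 - p = B(e_0,\phi^\star)\,\|\phi^\star\|_2$ and $\E_I\|\psi^r_I\|_2 - p = B(e_0,\phi^r)\,\|\phi^r\|_2$. (The time-reversal $\dagger$ in the objective is irrelevant under $\E_I$ because the Bernoulli support is reversal-invariant.)

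First I would verify that $w^r$ is feasible for the optimization over $V_{(r-1)N_-,(r-1)N_+}$. From the product form $W(z)A(z) = \prod_{j}(1-(s_{-j}z)^{r})\prod_i(1-(s_iz^{-1})^{r})$ established in Theorem~\ref{thm:RootNorm}, the $z^0$ coefficient is $1$, so $(a*w^r)_0 = 1$ and $w^r$ meets the constraint; moreover its support lies in the window $[-(r-1)N_-,\,(r-1)N_+]$, the same subspace over which $w^\star$ is optimal. Hence $\E_I\|\psi^\star_I\|_2 \leq \E_I\|\psi^r_I\|_2$. Subtracting $p$ from both sides and substituting the two identities yields exactly the first displayed inequality $B(e_0,\phi^\star)\,\|\phi^\star\|_2 \leq B(e_0,\phi^r)\,\|\phi^r\|_2$.

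Next I would produce $r$-independent bounds on the two values of $B$. Both $\phi^\star$ and $\phi^r$ have vanishing zeroth coordinate (since $(a*w)_0 = 1$ for both), so their normalized directions are admissible for the constraint. For the \emph{lower} bound, convexity of $\phi\mapsto \E_I\|(e_0+\phi)_I\|_2$ gives, as in the Bi-Lipschitz theorem, that $B(e_0,\phi^\star)$ dominates the directional derivative along $\beta^\star := \phi^\star/\|\phi^\star\|_2$; by the directional-derivative formula this equals $(1-p)\,\E_I[\|\beta^\star_I\|_2 \mid 0\notin I] = (1-p)\,V_1(\beta^\star) \geq (1-p)p > 0$, using $V_1\geq p$. (In the general, unmatched case, $p<p^\star$ is precisely what keeps the infimum of this directional derivative over admissible unit directions a strictly positive constant depending only on $p$ and $\tbe$.) For the \emph{upper} bound, the Bi-Lipschitz theorem gives $B(e_0,\phi^r) \leq (1-p)V_1(\beta^r) + \tfrac{p}{2}\|\phi^r\|_2 \leq (1-p)\sqrt{p} + \tfrac{p}{2}\sup_r\|\phi^r\|_2 < \infty$, using $V_1\leq\sqrt{p}$. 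Crucially, neither bound depends on $r$.

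Finally, combining the pieces, $\|\phi^\star\|_2 \leq \frac{B(e_0,\phi^r)}{B(e_0,\phi^\star)}\,\|\phi^r\|_2 \leq \frac{B_{\max}}{(1-p)p}\,\|\phi^r\|_2$, and invoking Theorem~\ref{thm:RootNorm} with the uniform index $(r,\dots,r)$ gives $\|\phi^r\|_2 = O(|s|_{(1)}^{r}) = O(\exp(-r\min_i\rho_i))$, since $|s|_{(1)} = \max_i e^{-\rho_i} = e^{-\min_i\rho_i}$. This delivers the stated exponential rate. The \textbf{main obstacle} is the uniform positive lower bound on $B(e_0,\phi^\star)$: one must ensure the directional derivative stays bounded away from zero even as the support of $\phi^\star$ grows with $r$, which is exactly where $p<p^\star$ (through $V_1\geq p$ and the phase-transition analysis) enters. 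Verifying the exact feasibility of the constructed $w^r$ is the other small but essential check.
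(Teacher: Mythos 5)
Your proposal follows essentially the same route as the paper's own proof: the objective-difference identity $\E_I\|\psi_I\|_2 - \E_I\|(e_0)_I\|_2 = B(e_0,\psi-e_0)\,\|\psi-e_0\|_2$, the optimality comparison of $w^\star$ against the truncated inverse $w^r$ inside the same subspace, $r$-independent positive lower and finite upper bounds on the two $B$-values, and finally Theorem~\ref{thm:RootNorm} for the exponential decay of $\|\phi^r\|_2$. Where the paper merely asserts boundedness of $B$ ``due to the bi-Lipschitz property'' and $p<p^\star$, you make this explicit via the directional-derivative lower bound $(1-p)V_1(\beta^\star)\geq (1-p)p$ and the bi-Lipschitz upper bound $(1-p)\sqrt{p}+\tfrac{p}{2}\sup_r\|\phi^r\|_2$; that is a genuine improvement in rigor over the paper's terse step, and your identification of the uniform lower bound on $B(e_0,\phi^\star)$ as the main obstacle is exactly where the paper's argument is thinnest.

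One step of your write-up is, however, incorrect as stated: the $z^0$ coefficient of $A(z)W(z)=\prod_{j}\bigl(1-(s_{-j}z)^r\bigr)\prod_{i}\bigl(1-(s_i z^{-1})^r\bigr)$ is \emph{not} exactly $1$ when roots occur on both sides of the origin, because cross terms pair $z^{r}$ factors with $z^{-r}$ factors. Already for $N_-=N_+=1$,
\[
(a*w^r)_0 \;=\; 1+(s_{-1}s_1)^r \;\neq\; 1,
\]
so $w^r$ is not exactly feasible for the constraint. The patch is routine: rescale $w^r$ by $(a*w^r)_0$ (or by $(\widetilde{\ba}*w^r)_0$ in the unmatched case, which tends to $\tbe_0=1$), which multiplies $\|\phi^r\|_2$ by $1+O(e^{-2r\min_k\rho_k})$ and leaves the claimed rate $O(e^{-r\min_k\rho_k})$ intact. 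The paper's own proof silently skips this feasibility check altogether, so your instinct to verify it was sound --- only the verification itself needs this correction.
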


\subsection{Proof of Stability Theorem}	

\begin{proof}[Proof of Theorem~\ref{thm:RootNorm}]
Now let $\psi^r$ have Z-transform $\Psi(z)$, $\phi = a*w-e_0$ with Z-transform $\Phi(z)$,
\[
 \begin{array}{ll}
\Psi(z): = A(z)W(z) 
&=   \prod_{j=1}^{N_{-}} (1- (s_{-j} z)^{r_{-j}}) \prod_{i=1}^{N_{+}}(1- (s_{i} z^{-1})^{r_i}).
\end{array}
\]
Therefore,
\[
 \begin{array}{ll}
\Phi(z): = A(z)W(z) -1 
&=  \prod_{j=1}^{N_{-}} (1- (s_{-j} z)^{r_{-j}}) \prod_{i=1}^{N_{+}}(1- (s_{i} z^{-1})^{r_i}) -1.
\end{array}
  \]
Let $\cI = \{-N_{-},\ldots, -1, 1, \ldots, N_{+} \}$ as the set of all the possible indicies.
And use polar representation of complex roots:
for any $k \in \cI = \{-N_{-},\ldots, -1, 1, \ldots, N_{+} \}$
\[
s_{k} := |s_{k}| e^{i \varphi_k } = e^{-\rho_k + i \varphi_k },
\]
where 
\[
 \begin{array}{ll}
\varphi_k := \operatorname{Im}(\log( s_{k} )) = -i \log(\frac{s_{k}}{|s_{k}|}),
\end{array}
\]
and since $|s_{k}|<1$ we have 
\[
\rho_k = -\operatorname{Re}(\log( s_{k} ))>0.
\]

Now we consider all $z = e^{2 \pi i t}$ on the unit circle for $t\in [\frac{-1}{2},\frac{1}{2}]$,
then for any $k \in \cI$
\[
 \begin{array}{ll}
- (s_{k} z^{-\mbox{sign}(k)})^{r_{k}} 
&= - |s_{k}|^{r_{k}} \exp \big(i [\varphi_k-  2 \pi t \mbox{sign}(k)]\cdot r_{k} \big)\\
&=   \exp \Big( -r_{k} \rho_k+ i \cdot \big( r_{k} (\varphi_k- 2 \pi t \mbox{sign}(k) ) + \pi  \big) \Big).
\end{array}
  \]
 Now we simplify the notation by defining 
 \[
 \upsilon_k(t) :=r_{k} (\varphi_k- 2 \pi t \mbox{sign}(k) ) + \pi,
 \]
then
 \[
 \begin{array}{ll}
\Phi(z)
&=  \prod_{j=1}^{N_{-}} (1- (s_{-j} z)^{r_{-j}}) \prod_{i=1}^{N_{+}}(1- (s_{i} z^{-1})^{r_{i}}) -1\\
&=  \prod_{k \in \cI} \Big(1- |s_{k}|^{r_{k}} \exp \big(i [\varphi_k-  2 \pi t \mbox{sign}(k)]\cdot r_{k} \big)\Big)  -1\\
&=  \prod_{k \in \cI} \bigg( 1+ \exp \Big( -r_{k} \rho_k+ i \cdot \big( r_{k} (\varphi_k- 2 \pi t \mbox{sign}(k) ) + \pi  \big) \Big) \bigg)     -1\\
&=  \prod_{k \in \cI} \bigg( 1+ \exp \Big( -r_{k} \rho_k+ i \cdot \upsilon_k(t) \Big) \bigg) -1.
\end{array}
  \]
  
% \begin{lemma}
 For any $z = \exp{(2 \pi i t )}$ on the unit circle,
 \[
 \begin{array}{ll}
\Phi\big(\exp{(2 \pi i t )}\big)
&=  \prod_{k \in \cI} \bigg( 1+ \exp \Big(- r_{k} \rho_k+ i \cdot \upsilon_k(t) \Big) \bigg)     -1\\
&= \sum_{n=1}^{|\cI|}\sum_{ k_1, \ldots, k_n  \in \cI } 
\exp \Big( -\big(r_{k_1} \rho_{k_1}+ \ldots+  r_{k_n} \rho_{k_n} \big)
+ i \cdot \big( \upsilon_{k_1}(t) + \ldots+  \upsilon_{k_n}(t)  \big) \Big).
\end{array}
  \]
%  \end{lemma}
 
%  \begin{lemma}
Additionally, for any $t \in [\frac{-1}{2},\frac{1}{2}]$,
\[
 \begin{array}{ll}
 |\Phi\big(\exp{(2 \pi i t )}\big)|^2  
= 
\sum_{n=1}^{|\cI|}\sum_{ k_1, \ldots, k_n  \in \cI } 
\exp \Big( -2 \big(r_{k_1} \rho_{k_1}+ \ldots+  r_{k_n} \rho_{k_n} \big) \Big)
\end{array}
  \]
is independent of $t$, since the oscillation integral over the imaginary part is 
  \[
 \begin{array}{ll}
\int_{\frac{-1}{2}}^\frac{1}{2}
\exp  \bigg( i \cdot \Big(  \big(\upsilon_{k_1}(t) - \upsilon_{k_1}(t) \big) + \ldots+  \big(\upsilon_{k_n}(t) -\upsilon_{k_n}(-t)\big)  \Big) \bigg)  dt
= 1.
\end{array}
  \]
Therefore, using Fourier isometry,
 \[
 \begin{array}{ll}
 \|\phi\|_2^2 &= \int_{\frac{-1}{2}}^\frac{1}{2} |\Phi\big(\exp{(2 \pi i t )}\big)|^2 dt \\
&= \sum_{n=1}^{|\cI|}\sum_{ k_1, \ldots, k_n  \in \cI } 
\exp \Big( -2 \big(r_{k_1} \rho_{k_1}+ \ldots+  r_{k_n} \rho_{k_n} \big) \Big).
\end{array}
  \]
When $\min_j r_j \rightarrow \infty$, it converges to zero at an exponential rate. The convergence rate is determined by the slowest decay exponential term as a function of $\min_i (r_i \rho_i)$.
\[
 \begin{array}{ll}
\|\phi^r \|_2
&= O(
\exp \Big(-  \min_i (r_i \rho_i) \Big) ), \quad \min_i r_i \rightarrow \infty.
\end{array}
  \]
  
%  \end{lemma}
\end{proof}

\begin{proof}[Proof of Theorem~\ref{thm:bd_stab}]
Following the previous phase transition analysis, 
let the directional finite difference of the objective at $e_0$ be $D(w) = |(w*Y)_0| - |X_0|$, $I$ be the support of $\tilde{X}$, let $\phi = w*a-e_0$,  then we have 
\[
 \begin{array}{ll}
\Expect D(w) &= \sqrt{\frac{2}{\pi}}\Expect_I [\|(w*a)_I\|_2 - \|(e_0)_I\|_2]\\
&= \sqrt{\frac{2}{\pi}}\Expect_I [\|(e_0+\phi)_I\|_2 - \|(e_0)_I\|_2]\\
&= \sqrt{\frac{2}{\pi}}\|{\phi}\|_2 B(p, \phi) . \\
\end{array}
  \]
Now using the optimality of $w^\star \in V_{(r-1)N_-, (r-1)N_+} $, for all $w^r in V_{(r-1)N_-, (r-1)N_+}$,
we have
 \[
 \begin{array}{ll}
\Expect_I [\|(w^\star*a)_I\|_2 \leq \Expect_I [\|(w^r*a)_I\|_2,
\end{array}
  \]
  therefore,
  \[
 \begin{array}{ll}
B(p, \phi^\star) \|{\phi^\star}\|_2 \leq B(p, \phi)\|{\phi}\|_2.
\end{array}
  \]
Due to the bi-Lipschitz property of $\Expect_I \|(e_0+\phi)_I\|_2$ around $e_0$, we know when $p<p^\star$, $B(p, \phi)$ is positive and bounded by a constant.

Therefore, as $p<p^\star$, $B(e_0,\phi^\star)$ and $B(e_0,\phi^r)$ are both upper and lower bounded. 
 \[
 \begin{array}{ll}
   \|\phi^\star \|_2 \leq \frac{B(e_0,\phi^r)}{B(e_0,\phi^\star)} \|\phi^r \|_2.
\end{array}
  \]
 Using the previous theorem, 
 \[
 \begin{array}{ll}
   \|\phi^\star \|_2 \leq \frac{B(e_0,\phi^r)}{B(e_0,\phi^\star)} \sum_{n=1}^{|\cI|}\sum_{ k_1, \ldots, k_n  \in \cI } 
\exp \Big( -2 r\big(\rho_{k_1}+ \ldots+ \rho_{k_n} \big) \Big),
\end{array}
  \]
 using asymptotic exponential convergence bound on $\|\phi^r \|_2$ as 
  $r \rightarrow \infty$, it converges to zero at an exponential rate
  \[
 \begin{array}{ll}
   \|\phi^\star \|_2 
   \leq
   O(
\exp \Big(-  r \min_i (\rho_i) \Big) ), \quad r \rightarrow \infty.
\end{array}
  \]
\end{proof}

\section{Supplementary: Main Result $4$ Proof: Robustness Guarantee against Stochastic and Adversarial  Noises }
\subsection{Robustness Theorem against Stochastic Noise: Moving Average Gaussian Noise }
Consider a convoluted Gaussian noise with an average standard deviation $\sigma$ and a forward moving average filter $b$ with unit norm $\|b\|_2=1$, then
 \[
\begin{array}{ll}
Y&= a* (X+ \sigma b*G).
\end{array}
\]

This model would include the case of IID mixture of sparse Gaussian and small Gaussian $X_t \sim p N(0,1) +(1-p) N(0,\sigma_1)$ with $\sigma = \sigma_1$ and $b=e_0 $.
It also includes the Gaussian observation noise model
\[
\begin{array}{ll}
Y&= a* X+  \sigma_2 G_2 = a* (X+  \sigma_2 a^{-1}*G_2),
\end{array}
\]
with $b = a^{-1}/\|a^{-1}\|_2$, and $\sigma = \sigma_2 \|a^{-1}\|_2$.

 \begin{theorem}
%  \label{thm:rand_exact_rob}
 Let $\psi^\star$ be the solution of the convex optimization problem in Eq.(4.1) for the moving average random noisy model
  \[
\begin{array}{ll}
Y&= a* (X+ \sigma b*G),
\end{array}
\]
 then
% \label{thm:Bound_Normal_Noise}
\[
B(e_0, \psi^\star-e_0) \|\psi^\star -e_0\|_2 = \E_I  \| (\psi^\star)_I \|_2 -\E_I  \| (e_0)_I \|_2 \leq  (1-p) \sigma +
 p(\sqrt{1+ \sigma^2 }-1).
\]
When $\sigma\leq 1,$
\[
(1-p) \sigma +
 p(\sqrt{1+ \sigma^2 }-1)  \leq \sigma,
\]
therefore, when $p<p^\star$, $\sigma\leq 1,$ there exists a constant $C$,
\[
 \|\psi^\star -e_0\|_2 \leq C \sigma.
\]
\end{theorem}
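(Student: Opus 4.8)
The plan is to exploit the change of variables already in place: after setting $\psi := (\ba \star \bw)^{\dagger}$, the population problem for the noisy observations $Y = \ba \star (X + Z)$ becomes minimization of $\Expect_{X,Z}|\langle X + Z, \psi\rangle|$ subject to $\langle \tbe, \psi\rangle = 1$, where $\tbe = \tba \star \ba^{-1}$ satisfies $\tbe_0 = 1$, so that $\be_0$ is feasible and corresponds to the exact answer $\bw = \ba^{-1}$. The noise enters as $Z = \sigma\, \bb \star G$, which is independent of $X$ and for which $\langle Z, \psi\rangle$ is mean-zero Gaussian for every fixed $\psi$. The whole argument rests on the ``objective value at $\be_0$ is an upper bound'' principle: since $\be_0$ is feasible, optimality of $\psi^\star$ gives $\Expect_{X,Z}|\langle X+Z,\psi^\star\rangle| \le \Expect_{X,Z}|\langle X+Z,\be_0\rangle|$, and I will sandwich both sides to extract the deltaness gap, then convert that gap into an $\ell_2$ bound.

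Two short computations supply the sandwich. First, to lower bound the left side by the \emph{noiseless} objective, I condition on $X$ and integrate out $Z$: since $|\cdot|$ is convex and $\Expect_Z\langle Z,\psi\rangle = 0$, Jensen gives $\Expect_Z|\langle X,\psi\rangle + \langle Z,\psi\rangle| \ge |\langle X,\psi\rangle|$, whence $\Expect_{X,Z}|\langle X+Z,\psi^\star\rangle| \ge \Expect_X|\langle X,\psi^\star\rangle| = \sqrt{2/\pi}\,\Expect_I\|\psi^\star\cdot I\|_2$ by the Bernoulli--Gaussian identity (\ref{eq:startingPoint}). Second, I evaluate the right side at $\be_0$ by conditioning on whether $0$ lies in the support of $X$: with probability $p$ the entry $X_0 + Z_0$ is $N(0,\,1+\sigma^2)$ and with probability $1-p$ it is $N(0,\,\sigma^2)$ (using $\|\bb\|_2 = 1$, so $\mathrm{Var}(Z_0) = \sigma^2$), giving $\Expect_{X,Z}|\langle X+Z,\be_0\rangle| = \sqrt{2/\pi}\,[\,p\sqrt{1+\sigma^2} + (1-p)\sigma\,]$.

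Combining the two displays through optimality, cancelling $\sqrt{2/\pi}$, and subtracting $\Expect_I\|\be_0\cdot I\|_2 = p$ yields exactly the stated gap bound $\Expect_I\|\psi^\star\cdot I\|_2 - \Expect_I\|\be_0\cdot I\|_2 \le (1-p)\sigma + p(\sqrt{1+\sigma^2}-1)$. To pass to the Euclidean bound I write the gap as $B(\be_0,\psi^\star-\be_0)\,\|\psi^\star-\be_0\|_2$ using the finite-difference functional $B$, and invoke the bi-Lipschitz analysis near $\be_0$: by convexity $B(\be_0,\phi)$ dominates the directional derivative along $\phi$, which the phase-transition KKT computation shows is bounded below by a positive constant $c$ uniformly over the feasible cone $\{\langle\tbe,\phi\rangle = 0\}$ whenever $p < p^\star$. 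Dividing gives $\|\psi^\star-\be_0\|_2 \le c^{-1}[(1-p)\sigma + p(\sqrt{1+\sigma^2}-1)]$, and the elementary inequality $\sqrt{1+\sigma^2}-1 \le \sigma$ collapses the bracket to at most $\sigma$, producing $\|\bw^\star\star\ba-\be_0\|_2 = \|\psi^\star-\be_0\|_2 \le C\sigma$ with $C = 1/c$ (the last equality by reversal isometry).

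The routine parts are the two Gaussian computations and the Jensen step; the step demanding genuine care is the final conversion, where I must produce a \emph{uniform} strictly positive lower bound on $B(\be_0,\cdot)$ over all admissible directions. Pointwise positivity of the directional derivative for $p<p^\star$ is immediate from the KKT analysis, but upgrading it to a uniform constant $c>0$ over the feasible directions---in the infinite-dimensional setting where the feasible sphere is not norm-compact---is the real obstacle, and is precisely what ties the robustness constant $C$ to the distance $p^\star - p$ from the phase boundary.
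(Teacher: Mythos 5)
Your proposal is correct and follows essentially the same route as the paper: the feasibility/optimality sandwich $F_\sigma(\be_0) \geq F_\sigma(\psi^\star) \geq F_0(\psi^\star) \geq F_0(\be_0)$, the exact evaluation $F_\sigma(\be_0) = \sqrt{2/\pi}\,[\,p\sqrt{1+\sigma^2} + (1-p)\sigma\,]$, and the conversion of the objective gap to $\|\psi^\star - \be_0\|_2$ via a uniform positive lower bound on $B(\be_0,\cdot)$ for $p < p^\star$. The only cosmetic difference is that you justify $F_\sigma(\psi^\star) \geq F_0(\psi^\star)$ by Jensen's inequality (which would work for any mean-zero noise), whereas the paper computes $F_\sigma(\psi) = \sqrt{2/\pi}\,\E_I\bigl(\|\psi_I\|_2^2 + \sigma^2\|C_{\bb}^T\psi\|_2^2\bigr)^{1/2}$ exactly and reads off the monotonicity in $\sigma$; your closing caveat about the uniformity of the lower bound on $B$ is a fair flag, but the paper's own proof is no more detailed on that point.
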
  

\subsection{Robustness Theorem against Adversarial Noise }
In the adversarial noise setting, we observe
\[
	\begin{array}{ll}
	Y&= a* (X+  \zeta) 
	\end{array}
	\]
where $\zeta$ is a sequence chosen by an adversary under constraint:	
\[
	\| \zeta\|_\infty \leq \eta.
	\]
% \paragraph{Robustness of convex blind deconvolution}
 \begin{theorem}
% \label{thm:adv_exact_rob}
	Under adversarial noise, 
% 	when the samples
% 		$
% N \geq  O(k\log(k)),
% $ and $p<p^\star - \epsilon_N$,
Let $w^\star$ be the solution of the population convex optimization
	\[
 \begin{array}{ll}
 \underset{w }{\mbox{minimize}}   &\E \frac{1}{N}\|w*Y\|_{\ell_1(\cT_N)}\\
\mbox{subject to}  
& (\alpha*w)_0 =1.
\end{array}
  \]
Define $\psi^\star = a*w^\star$; then $\psi^\star$ satisfies the following bound:
 \[
 \begin{array}{ll}
B(e_0, \psi^\star-e_0) \|\psi^\star -e_0\|_2 = \E_I  \| (\psi^\star)_I \|_2 -\E_I  \| (e_0)_I \|_2 \leq 
 p\sqrt{\frac{2}{\pi}}(\mathcal{R}(\eta) - 1)+(1-p)\eta.
\end{array}
  \]  
  Here $\mathcal{R}(\eta)$
is the folded Gaussian mean, for standard Gaussian $G$:
\[
\mathcal{R}(\eta) :=  \sqrt{\frac{\pi}{2}} E_G |\eta+  G| = \exp\{-\eta^2/2\} + \sqrt{\frac{\pi}{2}}\eta \left(1 - 2\Phi\left(-\eta\right) \right).
\]  
$\mathcal{R}(\eta)-1$ is an even function that is monotonically non-decreasing for $\eta \geq 0$ with quadratic upper and lower bound: there exists constants $C_1\leq C_2$,
  \[
C_1 \eta^2 \leq \mathcal{R}(\eta)-1 \leq C_2 \eta^2, \forall \eta.
\]
  
Therefore, when $p<p^\star$, $B(e_0, \psi^\star-e_0)$ is a bounded positive constant, there exists a constant $C$, so that
 \[
 \|\psi^\star -e_0\|_2  \leq C  \eta, \forall \eta > 0.
 \]
\end{theorem}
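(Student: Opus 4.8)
The plan is to exploit the fact that $\be_0$ is feasible for the noisy problem, so that the optimal noisy objective at $\psi^\star$ is trapped between the \emph{noiseless} objective at $\psi^\star$ and the \emph{noisy} objective at $\be_0$. First I would pass to the reduced variables $\psi := \ba*\bw$ and $\tbe := \widetilde{\ba}*\ba^{-1}$ exactly as in the change-of-variable reduction: the constraint $(\widetilde{\ba}*\bw)_0=1$ becomes $\langle \tbe, \psi\rangle = 1$, and $\psi=\be_0$ (i.e. $\bw=\ba^{-1}$) is feasible because $\tbe_0=1$. Writing $F_\eta(\psi):=\E\,|(\psi*(X+\bc))_0|$ for the population objective and $F_0$ for its noiseless counterpart, optimality of $\psi^\star=\ba*\bw^\star$ together with feasibility of $\be_0$ gives $F_\eta(\psi^\star)\le F_\eta(\be_0)$.

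The two ingredients I would then establish are a monotonicity inequality and an explicit evaluation at $\be_0$. For the monotonicity inequality, note that for every fixed $\psi$ the random variable $(\psi*X)_0$ is symmetric, and adding the deterministic shift $(\psi*\bc)_0$ can only increase the expected absolute value, since $\tfrac12(|Z+c|+|Z-c|)\ge|Z|$ holds pointwise for symmetric $Z$; hence $F_\eta(\psi)\ge F_0(\psi)$ for all $\psi$, and in particular $F_0(\psi^\star)\le F_\eta(\psi^\star)$. For the evaluation at $\be_0$, I would use $(\be_0*(X+\bc))_t=X_t+\bc_t$ and condition on the Bernoulli support: with probability $p$ the coordinate equals $G+\bc_t$, contributing $\sqrt{2/\pi}\,\mathcal{R}(\bc_t)$ by the definition of the folded Gaussian mean $\mathcal{R}$, and with probability $1-p$ it equals $|\bc_t|$. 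Subtracting the clean value (using $\mathcal{R}(0)=1$) and bounding term by term via $|\bc_t|\le\eta$ and monotonicity of $\mathcal{R}$ on $[0,\infty)$ yields $F_\eta(\be_0)-F_0(\be_0)\le p\sqrt{2/\pi}(\mathcal{R}(\eta)-1)+(1-p)\eta$, the claimed bound.

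Chaining these gives $F_0(\psi^\star)-F_0(\be_0)\le F_\eta(\be_0)-F_0(\be_0)\le p\sqrt{2/\pi}(\mathcal{R}(\eta)-1)+(1-p)\eta$, and by definition of the finite-difference quotient the left side is (up to the objective's normalization) $B(\be_0,\psi^\star-\be_0)\,\|\psi^\star-\be_0\|_2$. Because both $\psi^\star$ and $\be_0$ satisfy $\langle\tbe,\cdot\rangle=1$, the displacement $\psi^\star-\be_0$ is orthogonal to $\tbe$, so I can invoke the lower bound on $B$: by convexity the finite difference dominates the directional derivative of $\E_I\|(\cdot)_I\|_2$ at $\be_0$, and the phase-transition (KKT) analysis shows that for $p<p^\star$ this directional derivative is bounded below by a strictly positive constant $c_0=c_0(p,\ba,\widetilde{\ba})$ uniformly over feasible unit directions $\beta\perp\tbe$. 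Thus $c_0\,\|\psi^\star-\be_0\|_2\le p\sqrt{2/\pi}(\mathcal{R}(\eta)-1)+(1-p)\eta$. Finally, since $\mathcal{R}(\eta)-1$ grows at most linearly in $\eta$ (and quadratically near $0$, by the remark), the right-hand side is $O(\eta)$, giving $\|\bw^\star*\ba-\be_0\|_2=\|\psi^\star-\be_0\|_2\le C'\eta$ for all $\eta>0$.

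The main obstacle is the uniform positive lower bound $c_0$ on $B(\be_0,\psi^\star-\be_0)$: one must confirm that for $p<p^\star$ the directional derivative at $\be_0$ stays bounded \emph{away} from zero over all admissible $\beta\perp\tbe$, not merely nonnegative, and that convexity transfers this to the finite difference at the a priori large displacement $\psi^\star-\be_0$. This is precisely the strict-inequality side of the phase transition, and it explains why the robustness constant degrades as $p\uparrow p^\star$ (i.e. $c_0\to0$). By comparison, the symmetrization monotonicity and the folded-Gaussian evaluation at $\be_0$ are routine.
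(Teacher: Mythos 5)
Your proposal follows essentially the same route as the paper's proof: the sandwich $F_{\eta}(\be_0)\geq F_{\eta}(\psi^\star)\geq F_{0}(\psi^\star)\geq F_{0}(\be_0)$ obtained from optimality of $\psi^\star$ and feasibility of $\be_0$, the folded-Gaussian evaluation of the noisy objective at $\be_0$ giving $p\sqrt{\frac{2}{\pi}}(\mathcal{R}(\eta)-1)+(1-p)\eta$, and the conversion to $\|\psi^\star-\be_0\|_2\leq C\eta$ via the positive lower bound on $B(\be_0,\psi^\star-\be_0)$ for $p<p^\star$. Your one deviation --- proving $F_{\bc}(\psi)\geq F_{0}(\psi)$ for each fixed disturbance $\bc$ from the symmetry inequality $\E|Z+c|\geq \E|Z|$, rather than working with the worst-case objective whose supremum includes $\zeta=0$ as the paper does --- is a slightly more careful treatment of the fixed-adversary setting, and you correctly identify the step the paper itself treats only superficially, namely that the directional derivative at $\be_0$ must be bounded away from zero (not merely nonnegative) uniformly over feasible directions when $p<p^\star$.
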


% 	\begin{theorem}
% 	\label{thm:adv_relax_rob}
% 	Under adversarial noise $\| \zeta\|_2 \leq \eta$, 
% % 	when the samples
% % 		$
% % N \geq  O(k\log(k)),
% % $ and $p<p^\star - \epsilon_N$,
% the solution of the population convex optimization
% 	\[
%  \begin{array}{ll}
%  \underset{w, z }{\mbox{minimize}}   & \E \frac{1}{N}\|w*Y -z\|_{\ell_1(\cT_N)}\\
% \mbox{subject to}  
% &  \|z\|_2^2 \leq \eta^2\\
% & (\alpha*w)_0 =1
% \end{array}
%   \]
%   is $w^\star$ and $z^\star$,then
% $x^\star:=w^\star*Y -z^\star$	is $(X_t)_{t\in [N]}$ up to shift and scaling.	
% 	\end{theorem}

\subsection{Proof of Robustness Theorem against Stochastic Noise }
\begin{proof}[Proof of Theorem~\ref{thm:rand_exact_rob}]
Consider the noisy model
 \[
\begin{array}{ll}
Y&= a* (X+  \sigma b*G).
\end{array}
\]
 First, 
\[
 \begin{array}{ll}
 F_{\sigma}(\psi)= \Expect_{G} \Expect_X|\psi^T X+\sigma \psi^T (b* G) | = 
   \Expect_{G} \Expect_I | \sqrt{ \|\psi_I\|_2^2  + \sigma^2\|C_{b}^T \psi\|_2^2}G |,
\end{array}
  \]
where $C_{b}$ is the Topelitz matrix with the first column being $b$.
  
Let $\psi^\star $ be the optimization solution, when $p<p^\star$,  we have  a chain of inequality:
    \[
 \begin{array}{ll}
F_{\sigma}(e_0)
 \geq F_{\sigma}(\psi^\star ) 
 \geq F_{0}(\psi^\star )
 \geq F_{0}(e_0).
\end{array}
  \]
  We have 
\[
 \begin{array}{ll}
 F_{\sigma}(e_0)
 &= 
\Expect_{G} \Expect_I | \sqrt{ \|(e_0)_I\|_2^2  + \sigma^2\|b\|_2^2}G | \\
&= 
\Expect_{G} \Expect_I | \sqrt{ \|(e_0)_I\|_2^2  + \sigma^2}G | \\
  &=
 \sqrt{\frac{2}{\pi}}[ 
 (1-p) \sigma +
 p(\sqrt{1+ \sigma^2 }) ].
\end{array}
  \]
  
Therefore, when $p<p^\star$
   \[
 \begin{array}{ll}
F_{\sigma}(e_0)-F_{0}(e_0) 
 \geq F_{0}(\psi^\star )-F_{0}(e_0) \geq 0.
\end{array}
  \]
  \[
B(e_0, \psi^\star-e_0) \|\psi^\star -e_0\|_2 = \E_I  \| (\psi^\star)_I \|_2 -\E_I  \| (e_0)_I \|_2 \leq  (1-p) \sigma +
 p(\sqrt{1+ \sigma^2 }-1). 
\]
When $\sigma\leq 1,$
\[
(1-p) \sigma +
 p(\sqrt{1+ \sigma^2 }-1)  \leq \sigma,
\]
therefore, when $p<p^\star$, $\sigma\leq 1,$ there exists a constant $C$,
\[
 \|\psi^\star -e_0\|_2 \leq C \sigma.
\]
\end{proof}

\subsection{Proof of Robustness Theorem against  Adversarial Noise }
Our data generative model is
\[
	\begin{array}{ll}
	Y&= a* (X+  \zeta),
	\end{array}
	\]
where 	
\[
	\| \zeta\|_\infty \leq \eta.
	\]

% \begin{proof}[Proof of theorem~\ref{thm:adv_exact_rob} and theorem~\ref{thm:rand_exact_rob} jointly]

\begin{proof}[Proof of Theorem~\ref{thm:adv_exact_rob}]
The population convex optimization is
	\[
 \begin{array}{ll}
 \underset{w, z }{\mbox{minimize}}   & \E \frac{1}{N}\|w*Y\|_{\ell_1(\cT_N)}\\
\mbox{subject to}  
& (\alpha*w)_0 =1.
\end{array}
  \]
 By change of variable $\tilde{\psi} = w*a$, it could be reduced to a simpler problem 
  \[
 \begin{array}{ll}
 \mbox{minimize}_{w,s}   & \Expect|\psi^T (X+  \zeta)| \\
\mbox{subject to}  
& u^T \psi =1.
\end{array}
  \]
% Let $\psi(\zeta)$ be the optimization solution of problem with respect to a fixed $\zeta$.
% Let the objective be
% \[
% F(\psi,\zeta):=\Expect|\psi^T (X+ \zeta)| = \Expect_I  | \|\psi_I\|_2 G  +  \psi^T\zeta  |
% \]
 Let $\psi^\star $ be the optimization solution of the worst-case objective  over all possible $\zeta$, defined as $F_{\eta}(\psi)$:
 \[
 \begin{array}{ll}
 F_{\eta}(\psi)
 &= \sup_{\zeta: \| \zeta\|_\infty \leq \eta } \Expect|\psi^T (X+ \zeta)| \\
 &= \sup_{\zeta: \| \zeta\|_\infty \leq \eta } \Expect_I  | \|\psi_I\|_2 G  + \psi^T \zeta  |,
\end{array}
  \] 
   when $p<p^\star$,  we have a chain of inequality:
    \[
 \begin{array}{ll}
F_{\eta}(e_0)
 \geq F_{\eta}(\psi^\star ) 
%  \geq F(\psi(\zeta),\zeta)
 \geq F_{0}(\psi^\star )
 \geq F_{0}(e_0).
\end{array}
  \]
%   and 
%   \[
%  \begin{array}{ll}
% F_{\eta}(e_0)
%  \geq F_{\eta}(\psi^\star ) 
%  \geq F(\psi(\zeta),\zeta)
%  \geq F_{0}(\psi(\zeta) )
%  \geq F_{0}(e_0) 
% \end{array}
%   \]
   Therefore, when $p<p^\star$
   \[
 \begin{array}{ll}
F_{\eta}(e_0)-F_{0}(e_0) 
\geq F(\psi(\zeta),\zeta)-F_{0}(e_0)
 \geq F_{0}(\psi^\star )-F_{0}(e_0) \geq 0.
\end{array}
  \]
  Therefore,
 \[
 \begin{array}{ll}
B(e_0, \psi^\star-e_0) \|\psi^\star -e_0\|_2 = \E_I  \| (\psi^\star)_I \|_2 -\E_I  \| (e_0)_I \|_2 .
\end{array}
  \]   
  
From the folded Gaussian mean formula, let $G$ be scalar standard Gaussian, we have 
\[
 \begin{array}{ll}
 F_{\eta}(e_0)
  &= \sup_{\zeta: \| \zeta\|_\infty \leq \eta } \Expect_I  \E_G| \|(e_0)_I\|_2 G  +  \zeta_0  |\\
&= 
\sup_{\zeta: \| \zeta\|_\infty \leq \eta} [p\E_G | G  +  \zeta_0  |+ (1-p)|\zeta_0|]\\
  &=
\sup_{\zeta: \| \zeta\|_\infty \leq \eta}[  p \sqrt{\frac{2}{\pi}} \mathcal{R}(\zeta_0) + (1-p)|\zeta_0|]
\\
 &=
p \sqrt{\frac{2}{\pi}} \mathcal{R}(\eta) + (1-p)\eta.
\end{array}
  \]
 The last inequality comes from the fact that $\mathcal{R}(\zeta_0)$ is an even function that is monotonically non-decreasing for $\zeta_0 \geq 0$. 
 We will prove this conclusion below:
 \[
 \begin{array}{ll}
B(e_0, \psi^\star-e_0) \|\psi^\star -e_0\|_2 = \E_I  \| (\psi^\star)_I \|_2 -\E_I  \| (e_0)_I \|_2 \leq 
 (1-p)\eta + p \sqrt{\frac{2}{\pi}} (\mathcal{R}(\eta) - 1).
\end{array}
  \]

\end{proof}

% \subsection{Tool: folded Gaussian mean formula}
\paragraph{Tool: folded Gaussian mean formula}
From general theory of folded Gaussian,
% \footnote{see: \url{ https://en.wikipedia.org/wiki/Folded_normal_distribution}.}
 $ |\mu+ \sigma G |, \quad  G \sim N(0, 1).$
Then its mean is
\beq
\label{FoldedGaussianMean}
E_G |\mu+ \sigma G| =  \sqrt{\frac{2}{\pi}} \sigma \exp\{-\mu^2/(2\sigma^2)\} + \mu \left(1 - 2\Phi\left(\frac{-\mu}{\sigma}\right) \right),
\eeq
where $\Phi$  is the normal cumulative distribution function:
\[
\Phi(x) = \frac 1 {\sqrt{2\pi}} \int_{-\infty}^x e^{-t^2/2} \, dt.
\]
As a related remark, its variance is
\beq
\label{FoldedGaussianVar}
\Var_G |\mu+ \sigma G| = E_G( |\mu+ \sigma G|-E_G |\mu+ \sigma G|)^2 =\sigma^2+ \mu^2 -(E_G |\mu+ \sigma G|)^2.
\eeq  
% The variance will be useful for concentration with finite sample.  

 \begin{lemma}
\label{lem:FoldedGaussian}
We define the ratio of folded Gaussian mean as 
\[
\mathcal{R}(\gamma) := \frac{E_G |\mu+ \sigma G|}{E_G |\sigma G|} = \frac{E_G |\mu+ \sigma G|}{\sqrt{\frac{2}{\pi}}\sigma} = \sqrt{\frac{\pi}{2}} E_G |\gamma+  G| = \exp\{-\gamma^2/2\} + \sqrt{\frac{\pi}{2}}\gamma \left(1 - 2\Phi\left(-\gamma\right) \right).
\]

$\mathcal{R}(\gamma)$ is an even function that is monotonically non-decreasing for $\gamma \geq 0$.

We have three different expressions (asymptotic expansion around  $\gamma$) for $\mathcal{R}(\gamma)$:
\BEA
\label{FoldedGaussianMean_ratio}
\mathcal{R}(\gamma)
&=&  \exp\{-\gamma^2/2\} + \sqrt{\frac{\pi}{2}}\gamma \left(1 - 2\Phi\left(-\gamma\right) \right)\\
&=&  \exp\{-\gamma^2/2\}\{1 + \frac{1}{2}\gamma \left[ \gamma + \frac{ \gamma^3}{3} + \frac{ \gamma^5}{3\cdot 5} + \cdots + \frac{ \gamma^{2n+1}}{(2n+1)!!} + \cdots\right]\}\\
&=& 1 + \frac{3}{2}\gamma^2 + \frac{1}{24} \gamma^4   - \frac{ \gamma^6}{120}  + O(\gamma^8).
\EEA
When $\gamma<1,$ 
\[
 (\frac{3}{2} + \frac{1}{24}) \gamma^2 \geq  \mathcal{R}(\gamma) -1\geq  \frac{3}{2}\gamma^2.
\]
When $\gamma<M$ for $M>1$, this lemma can be generalized:
there exists constants $C'\leq C$,
\[
C' \gamma^2 \leq \mathcal{R}(\gamma)-1 \leq C \gamma^2.
\]
 \end{lemma}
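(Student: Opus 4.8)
The plan is to reduce every assertion to elementary calculus on the explicit closed form of $\mathcal{R}$. First I would specialize the folded Gaussian mean identity \eqref{FoldedGaussianMean} to $\mu=\gamma$, $\sigma=1$; dividing by $E_G|G|=\sqrt{2/\pi}$ (equivalently multiplying by $\sqrt{\pi/2}$) yields at once
\[
\mathcal{R}(\gamma)=\sqrt{\tfrac{\pi}{2}}\,E_G|\gamma+G|=\exp\{-\gamma^2/2\}+\sqrt{\tfrac{\pi}{2}}\,\gamma\,(1-2\Phi(-\gamma)),
\]
which is the first displayed identity and gives $\mathcal{R}(0)=1$. From here on $\mathcal{R}$ is a single smooth scalar function, and each remaining claim is read off by differentiation.

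Evenness follows by substituting $-\gamma$ into $\mathcal{R}(\gamma)=\sqrt{\pi/2}\,E_G|\gamma+G|$ and using that $G$ and $-G$ share the standard normal law, so $E_G|-\gamma+G|=E_G|\gamma+G|$. For monotonicity I would differentiate under the expectation: the map $\gamma\mapsto|\gamma+G|$ is almost surely differentiable with derivative $\mbox{sign}(\gamma+G)$ bounded by $1$, so dominated convergence gives $\mathcal{R}'(\gamma)=\sqrt{\pi/2}\,E_G[\mbox{sign}(\gamma+G)]=\sqrt{\pi/2}\,(2\Phi(\gamma)-1)$, which is nonnegative exactly when $\gamma\ge 0$. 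The only point needing care is the interchange of derivative and expectation across the single kink of $|\cdot|$, which is harmless since it sits on a null event.

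For the series expansions I would write $1-2\Phi(-\gamma)=\sqrt{2/\pi}\int_0^\gamma e^{-t^2/2}\,dt$, substitute the power series of $e^{-t^2/2}$, integrate termwise, multiply by $\sqrt{\pi/2}\,\gamma$, and add the series of $\exp\{-\gamma^2/2\}$; collecting powers of $\gamma$ then produces the stated asymptotic expansions. This step is purely mechanical, and it is where the coefficient and sign bookkeeping must be tracked most carefully.

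Finally, for the two-sided quadratic bound I would differentiate once more. Since $\frac{d}{d\gamma}(2\Phi(\gamma)-1)=2\phi(\gamma)$ with $\phi$ the standard normal density, one finds $\mathcal{R}''(\gamma)=e^{-\gamma^2/2}$; in particular $\mathcal{R}$ is convex, $0<\mathcal{R}''(\gamma)\le 1$ on all of $\bR$, and $\mathcal{R}''(\gamma)\ge e^{-M^2/2}$ on $[-M,M]$. Using $\mathcal{R}(0)=1$ and $\mathcal{R}'(0)=0$, Taylor's theorem with integral remainder gives $\mathcal{R}(\gamma)-1=\int_0^\gamma(\gamma-t)\mathcal{R}''(t)\,dt$, and inserting the upper and lower bounds on $\mathcal{R}''$ sandwiches this remainder between constant multiples of $\gamma^2$, producing constants $0<C'\le C$ on $|\gamma|\le M$ as claimed. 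I expect this convexity argument to be the cleanest part; the main obstacle overall is merely the careful termwise manipulation of the error-function series in the expansion step, together with the routine justification of differentiating through the absolute value.
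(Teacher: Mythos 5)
Your strategy is sound, and for the monotonicity and the quadratic bounds it is genuinely different from --- and cleaner than --- the paper's argument. The paper's proof is pure series manipulation (the integration-by-parts expansion of $\Phi$, the auxiliary quantity $C_\Phi(\gamma^2)$, and a residual $\mbox{Res}(\gamma^2)$), and it never actually proves the monotonicity claim. Your identities $\mathcal{R}'(\gamma)=\sqrt{\pi/2}\,(2\Phi(\gamma)-1)$ and $\mathcal{R}''(\gamma)=e^{-\gamma^2/2}$ (both correct, and the interchange of derivative and expectation is indeed harmless at the null-set kink) deliver monotonicity, convexity, and, via the integral-remainder formula $\mathcal{R}(\gamma)-1=\int_0^\gamma(\gamma-t)e^{-t^2/2}\,dt$, two-sided quadratic bounds with explicit constants: $C=1/2$ valid globally and $C'=e^{-M^2/2}/2$ on $|\gamma|\leq M$. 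The first identity and the evenness argument coincide with the paper's.

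There is, however, a concrete problem you should have caught, and your own second-derivative computation exposes it. Since $\mathcal{R}(0)=1$, $\mathcal{R}'(0)=0$, $\mathcal{R}''(0)=1$, the quadratic coefficient of $\mathcal{R}$ at the origin is $1/2$, not $3/2$. Carrying out the ``purely mechanical'' termwise expansion you defer actually yields
\[
\mathcal{R}(\gamma)=e^{-\gamma^2/2}\Bigl(1+\gamma\Bigl[\gamma+\frac{\gamma^3}{3}+\frac{\gamma^5}{15}+\cdots\Bigr]\Bigr)
=1+\frac{\gamma^2}{2}-\frac{\gamma^4}{24}+\frac{\gamma^6}{240}+O(\gamma^8),
\]
i.e.\ the prefactor $\frac{1}{2}\gamma$ in the lemma's second display should be $\gamma$, and the coefficients $\frac{3}{2}$, $\frac{1}{24}$, $-\frac{1}{120}$ in the third display are wrong. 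Consequently the claimed bound $\mathcal{R}(\gamma)-1\geq\frac{3}{2}\gamma^2$ for $\gamma<1$ is false; your own remainder formula gives $\mathcal{R}(\gamma)-1\leq\frac{1}{2}\gamma^2$ for all $\gamma$, and numerically $\mathcal{R}(0.1)-1\approx 0.0050$ versus the claimed lower bound $0.015$. So your proposal cannot, as promised, ``produce the stated asymptotic expansions'': the deferred bookkeeping step would fail, and your two halves taken together are inconsistent with the lemma as stated. The slip originates in the paper itself (its $C_\Phi$ is mis-normalized by a factor of $2$ against its own definition, which propagates into the displayed coefficients and bounds). What your argument does prove is the corrected statement $\frac{e^{-M^2/2}}{2}\gamma^2\leq\mathcal{R}(\gamma)-1\leq\frac{1}{2}\gamma^2$ on $|\gamma|\leq M$ --- which is precisely the two-sided bound $C'\gamma^2\leq\mathcal{R}(\gamma)-1\leq C\gamma^2$ that the adversarial-robustness theorem downstream actually uses, so the correction is harmless to the rest of the paper; but a blind proof claiming to verify the displayed coefficients without flagging this inconsistency has a genuine gap.
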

 
 \subsection{Technical Tool: Folded Gaussian Mean Formula }
\paragraph{Proof for folded Gaussian mean}
\begin{proof}[Proof of lemma~\ref{lem:FoldedGaussian}]
Plug in the formula for mean of folded Gaussian \ref{FoldedGaussianMean}, 
we have
\[
E_G |\mu+ \sigma G| = \sqrt{\frac{2}{\pi}} \sigma \exp\{-\gamma^2/2\} + \mu \left(1 - 2\Phi\left(-\gamma\right) \right).
\]
Divide it by 
\[
E_G |\sigma G| =   \sqrt{\frac{2}{\pi}} \sigma,
\]
we get the first equality.

Since the CDF of the standard normal distribution can be expanded by integration by parts into a series:
\[
\Phi( \gamma)=\frac{1}{2} + \frac{1}{\sqrt{2\pi}}\cdot e^{- \gamma^2/2} \left[ \gamma + \frac{ \gamma^3}{3} + \frac{ \gamma^5}{3\cdot 5} + \cdots + \frac{ \gamma^{2n+1}}{(2n+1)!!} + \cdots\right],
\]
where $!!$ denotes the double factorial. 
% \footnote{see: \href{https://en.wikipedia.org/wiki/Normal_distribution#Cumulative_distribution_function}{en.wikipedia.org/wiki/Normal_distributionCumulative_distribution_function} }

\[
1 - 2\Phi(-\gamma)  = \frac{1}{\sqrt{2\pi}}\cdot e^{- \gamma^2/2}\left[ \gamma + \frac{ \gamma^3}{3} + \frac{ \gamma^5}{3\cdot 5} + \cdots + \frac{ \gamma^{2n+1}}{(2n+1)!!} + \cdots\right].
\]
This gives the second inequality.

Additionally,
 \[
 \begin{array}{ll}
  e^{- \gamma^2/2} &=\sum_{n=0}^{\infty} \frac{ \gamma^{2n}}{(-2)^n (n)!}\\
  &=  1- \frac{ \gamma^2}{2} + \sum_{n=2}^{\infty} \frac{ \gamma^{2n}}{(-2)^n (n)!},
\end{array}
  \] 
then
 \[
 \begin{array}{ll}
& \exp\{-\gamma^2/2\}\{1 + \frac{1}{2}\gamma \left[ \gamma + \frac{ \gamma^3}{3} + \frac{ \gamma^5}{3\cdot 5} + \cdots + \frac{ \gamma^{2n+1}}{(2n+1)!!} + \cdots\right]\}\\
=& (\sum_{n=0}^{\infty} \frac{ \gamma^{2n}}{(-2)^n (n)!})\{1 + \frac{1}{2}\gamma \left[ \sum_{n=0}^{\infty} \frac{ \gamma^{2n+1}}{(2n+1)!!} \right]\}\\

=& 1 + \frac{3}{2}\gamma^2 + \frac{1}{24} \gamma^4   - \frac{ \gamma^6}{120}  + O(\gamma^8). 
\end{array}
  \] 
  
Let $C_\Phi( \gamma^2)$ be a function of $ \gamma$, then it is a composition with the inner function being $ \gamma^2,$ defined as follows: 
\[
 \begin{array}{ll}
C_\Phi( \gamma^2) 
&:= (2\Phi( \gamma)-1)/\left(\frac{ \gamma}{\sqrt{2\pi}}\right) \\
&= e^{- \gamma^2/2} \left[1 + \frac{ \gamma^2}{3} + \frac{ \gamma^4}{3\cdot 5} + \cdots + \frac{ \gamma^{2n}}{(2n+1)!!} + \cdots\right]\\
&= [\sum_{n=0}^{\infty} \frac{ \gamma^{2n}}{(2n+1)!!} ]/[\sum_{n=0}^{\infty} \frac{ \gamma^{2n}}{2^n (n)!}] \\
&= [\sum_{n=0}^{\infty} \frac{ \gamma^{2n}}{(2n+1)!!} ]/[\sum_{n=0}^{\infty} \frac{ \gamma^{2n}}{(2n)!!}] \\
&\leq 1.
\end{array}
\]
We remark further that
\[
 \begin{array}{ll}
C_\Phi( \gamma^2) 
&= [\sum_{n=0}^{\infty} \frac{ \gamma^{2n}}{(2n+1)!!} ]/[\sum_{n=0}^{\infty} \frac{ \gamma^{2n}}{(2n)!!}] \\
&= [1 + \frac{ \gamma^2}{3}+ +  \frac{ \gamma^4}{15}+ O( \gamma^6) ]/[1 + \frac{ \gamma^2}{2}+ +  \frac{ \gamma^4}{8}+O( \gamma^6)] \\
&= 1 -\frac{ \gamma^2}{6}+  \frac{ \gamma^4}{40}+ O( \gamma^6).
\end{array}
\]
% \[
%  \begin{array}{ll}
% C_\Phi( \gamma^2) &:= e^{- \gamma^2/2} \left[1 + \frac{ \gamma^2}{3} + \frac{ \gamma^4}{3\cdot 5} + \cdots + \frac{ \gamma^{2n}}{(2n+1)!!} + \cdots\right]\\
% &= \frac{1}{ \gamma+1}\exp{( -\frac{ ^2}{2})} 
% \end{array}
% \]
Then
 \[
 \begin{array}{ll}
\mu \left(1 - 2\Phi\left(-\gamma\right) \right)
&= \frac{1}{\sqrt{2\pi}}\cdot \mu^2 \frac{1}{\sigma} C_\Phi(\gamma^2)\\
&= \frac{1}{\sqrt{2\pi}}\cdot \mu^2 \frac{1}{\sigma} (1-\gamma^2/6 +O(\gamma^4)).\\
\end{array}
\] 
Combining the above:
\[
 \begin{array}{ll}
E_G |\mu+ \sigma G|
&=   \left[ \sqrt{\frac{2}{\pi}} \sigma \exp\{-\gamma^2/2\} + \frac{1}{\sqrt{2\pi}}\cdot \mu^2 \frac{1}{\sigma} C_\Phi(\gamma^2)\right]\\
&=   \left[ \sqrt{\frac{2}{\pi}} \sigma \{ 1- \frac{\gamma^2}{2} + \sum_{n=2}^{\infty} \frac{\gamma^{2n}}{(-2)^n (n)!}\} + \frac{1}{\sqrt{2\pi}}\cdot \mu^2 \frac{1}{\sigma} C_\Phi(\gamma^2)\right]\\
&= \sqrt{\frac{2}{\pi}}   \left[ \sigma 
\{ 1 + (1+\frac{1}{2}\cdot C_\Phi(\gamma^2))\gamma^2 +   \sum_{n=2}^{\infty} \frac{\gamma^{2n}}{(-2)^n (n)!}\} \}\right].\\
\end{array}
\] 
Define the residual of this expansion as
\[
 \begin{array}{ll}
\mbox{Res}(\gamma^2) 
&:= \frac{ (2\Phi( \gamma)-1)/\left(\frac{ \gamma}{\sqrt{2\pi}}\right)-1}{2}\cdot \gamma^2 +  (\exp\{-\gamma^2/2\} - 1 + \gamma^2/2)
\\
&= \frac{ C_\Phi(\gamma^2)-1}{2}\cdot \gamma^2 +  (\exp\{-\gamma^2/2\} - 1 + \gamma^2/2) \\
&=    \frac{ C_\Phi(\gamma^2)-1}{2}\cdot \gamma^2  +   \{ \sum_{n=2}^{\infty} \frac{\gamma^{2n}}{(-2)^n (n)!}\} \} \\
&=    \frac{-\gamma^4 }{12}  +    \frac{ \gamma^6}{80} +\frac{\gamma^{4}}{(-2)^2 (2)!} +\frac{\gamma^{6}}{(-2)^3 (3)!}+
O(\gamma^8)\\
&=    \frac{\gamma^4 }{24}  - \frac{ \gamma^6}{120}  + O(\gamma^8).
\end{array}
\] 
We know that if $\gamma^2<1,$ 
\[
 \begin{array}{ll}
\mbox{Res}(\gamma^2) 
&=    \frac{\gamma^4 }{24}  - \frac{ \gamma^6}{120}  + O(\gamma^8)\\
&\geq 0.
\end{array}
\] 

When $\gamma<1,$ 
\[
 (\frac{3}{2} + \frac{1}{24}) \gamma^2 \geq  \mathcal{R}(\gamma) -1\geq  \frac{3}{2}\gamma^2.
\]
When $\gamma<M$ for $M>1$, this lemma can be generalized:
there exist constants $C'\leq C$,
\[
C' \gamma^2 \leq \mathcal{R}(\gamma)-1 \leq C \gamma^2.
\]
\end{proof}

% \begin{proof}[Proof of  theorem~\ref{thm:adv_relax_rob} ]
% The population convex optimization for a general choice $\epsilon \in [0, \eta]$ is
% 	\[
%  \begin{array}{ll}
%  \underset{w, z }{\mbox{minimize}}   & \E \frac{1}{N}\|w*Y -z\|_{\ell_1(\cT_N)}\\
% \mbox{subject to}  
% &  \|z\|_2^2 \leq \epsilon^2\\
% & (\alpha*w)_0 =1
% \end{array}
%   \]
%  by change of variable $\tilde{\psi} = w*a$, it could be reduced to a simpler problem 
%   \[
%  \begin{array}{ll}
%  \mbox{minimize}_{w,s}   & \Expect|\psi^T (X+  h*\zeta) -  \epsilon s| \\
% \mbox{subject to}  
% & u^T \psi =1,\\
% & s^2 \leq 1
% \end{array}
% % \tag{$P^c_{1,\epsilon}$}
%   \]
%   When $\epsilon =0$, we have the original problem.

% \input{sec/appendix_projection_pursuit}

\bibliographystyle{apalike}
 
 \bibliography{BD}
\end{document}